
\documentclass[fleqn,reqno,11pt,fleqn]{amsart}
\usepackage[T1]{fontenc}
\usepackage[latin9]{inputenc}
\usepackage{geometry}
\geometry{verbose,tmargin=1in,bmargin=1in,lmargin=1in,rmargin=1in}
\synctex=-1
\usepackage{prettyref}
\usepackage{bbding}
\usepackage{mathrsfs}
\usepackage{enumitem}
\usepackage{amstext}
\usepackage{amsthm}
\usepackage{amssymb}

\makeatletter
\numberwithin{equation}{section}
\numberwithin{figure}{section}
\theoremstyle{plain}
\newtheorem{thm}{\protect\theoremname}[section]
\theoremstyle{plain}
\newtheorem{cor}[thm]{\protect\corollaryname}
\theoremstyle{plain}
\newtheorem{prop}[thm]{\protect\propositionname}
\theoremstyle{plain}
\newtheorem{fact}[thm]{\protect\factname}
\theoremstyle{plain}
\newtheorem{lem}[thm]{\protect\lemmaname}
\theoremstyle{definition}
\newtheorem{defn}[thm]{\protect\definitionname}
\theoremstyle{remark}
\newtheorem{claim}[thm]{\protect\claimname}
\theoremstyle{plain}
\newtheorem*{thm*}{\protect\theoremname}
\theoremstyle{plain}
\newtheorem*{lem*}{\protect\lemmaname}
\theoremstyle{plain}
\newtheorem*{prop*}{\protect\propositionname}

\usepackage{algorithmic}
\usepackage{amsmath}
\usepackage{sherstov}
\usepackage{small-caption}
\usepackage{xcolor}
\usepackage{setspace}
\usepackage{centernot}
\usepackage{bbm}
\usepackage{geometry}



\newtheoremstyle{myplain}      {10pt}{10pt}{\itshape}{}{\scshape}{.}{.5em}{}
\newtheoremstyle{mydefinition} {10pt}{10pt}{}{}{\scshape}{.}{.5em}{}
\newtheoremstyle{myremark} {10pt}{10pt}{}{}{\itshape}{.}{.5em}{}

\@namedef{thm}{\@thm{\let \thm@swap \@gobble \th@myplain }{thm}{Theorem}}
\@namedef{lem}{\@thm{\let \thm@swap \@gobble \th@myplain }{thm}{Lemma}}
\@namedef{cor}{\@thm{\let \thm@swap \@gobble \th@myplain }{thm}{Corollary}}
\@namedef{prop}{\@thm{\let \thm@swap \@gobble \th@myplain }{thm}{Proposition}}
\@namedef{claim}{\@thm{\let \thm@swap \@gobble \th@myplain }{thm}{Claim}}
\@namedef{fact}{\@thm{\let \thm@swap \@gobble \th@myplain }{thm}{Fact}}
\@namedef{defn}{\@thm{\let \thm@swap \@gobble \th@mydefinition }{thm}{Definition}}
\@namedef{rem}{\@thm{\let \thm@swap \@gobble \th@mydefinition }{thm}{Remark}}
\@namedef{example}{\@thm{\let \thm@swap \@gobble \th@mydefinition }{thm}{Example}}

\@namedef{thm*}{\@thm {\th@myplain }{}{Theorem}}
\@namedef{lem*}{\@thm {\th@myplain }{}{Lemma}}
\@namedef{cor*}{\@thm {\th@myplain }{}{Corollary}}
\@namedef{prop*}{\@thm {\th@myplain }{}{Proposition}}
\@namedef{claim*}{\@thm {\th@myplain }{}{Claim}}
\@namedef{fact*}{\@thm {\th@myplain }{}{Fact}}
\@namedef{defn*}{\@thm {\th@mydefinition }{}{Definition}}
\@namedef{rem*}{\@thm {\th@mydefinition }{}{Remark}}
\@namedef{example*}{\@thm {\th@mydefinition }{}{Example}}

\renewcommand{\mathcal}[1]{\mathscr{#1}}



\makeatletter
\def\@seccntformat#1{%
  \protect\textup{%
    \protect\@secnumfont
    \expandafter\protect\csname format#1\endcsname 
    \csname the#1\endcsname
    \protect\@secnumpunct
  }%
}


\makeatletter
\newcommand \SparseDotfill {\leavevmode \leaders \hb@xt@ .7em{\hss .\hss }\hfill \kern \z@}
\makeatother	

\makeatletter
\def\@tocline#1#2#3#4#5#6#7{\relax
  \ifnum #1>\c@tocdepth 
  \else
    \par \addpenalty\@secpenalty\addvspace{\ifnum #1=1 2mm \else #2\fi}%
    \begingroup \hyphenpenalty\@M
    \@ifempty{#4}{%
      \@tempdima\csname r@tocindent\number#1\endcsname\relax
    }{%
      \@tempdima#4\relax
    }%
    \parindent\z@ \leftskip#3\relax \advance\leftskip\@tempdima\relax
    \rightskip\@pnumwidth plus4em \parfillskip-\@pnumwidth
          \ifnum #1=1 \bfseries #5\else #5\fi 
   \leavevmode\hskip-\@tempdima
      \ifcase #1
       \or\or \hskip 1em \or \hskip 2em \else \hskip 3em \fi%
#6     \nobreak\relax
{\ifnum #1=1\hfill \else \SparseDotfill\fi}
 \hbox to\@pnumwidth{\@tocpagenum{
    \ifnum #1=1 \bfseries \fi #7}}\par
    \nobreak
    \endgroup
  \fi}
\makeatother

\setlength{\arrayrulewidth}{1pt}

\usepackage{xcolor}

\usepackage{enumitem}
\setenumerate{label=\rm (\roman*),labelsep=5mm}

\usepackage{graphicx}


\providecommand{\noopsort}[1]{}

\DeclareMathOperator{\diag}{diag}

\renewcommand{\epsilon}{\varepsilon}

\makeatletter
\newcommand{\pushright}[1]{\ifmeasuring@#1\else\omit\hfill$\displaystyle#1$\fi\ignorespaces}
\newcommand{\pushleft}[1]{\ifmeasuring@#1\else\omit$\displaystyle#1$\hfill\fi\ignorespaces}
\makeatother

\usepackage{hyperref}
\hypersetup{
plainpages = false,
bookmarks = true,
bookmarksopen = true,
pdftex,
colorlinks=false,
citecolor = teal,
linkcolor = teal,
urlcolor  = teal,
}
\newrefformat{fact}{Fact~\ref{#1}}
\newrefformat{prop}{Proposition~\ref{#1}}
\newrefformat{lem}{Lemma~\ref{#1}}
\newrefformat{cor}{Corollary~\ref{#1}}
\newrefformat{thm}{Theorem~\ref{#1}}

\renewcommand{\phi}{\varphi}
\newcommand{\RANK}{\operatorname{RANK}}
\newcommand{\INTERSECT}{\operatorname{INTERSECT}}
\newcommand{\SUM}{\operatorname{SUM}}
\DeclareMathOperator{\ssign}{\widetilde{sgn}}

\DeclareMathOperator{\GL}{GL}
\DeclareMathOperator{\SL}{SL}

\DeclareMathOperator{\DET}{DET}
\DeclareMathOperator{\RANKDET}{RANKDET}

\DeclareMathOperator{\cost}{cost}

\DeclareMathOperator{\BLQ}{BLQ}
\DeclareMathOperator{\Rank}{rank}
\DeclareMathOperator{\Rankdet}{rankdet}
\DeclareMathOperator{\Det}{det}

\AtBeginDocument{
  
}

\makeatother

\providecommand{\claimname}{Claim}
\providecommand{\corollaryname}{Corollary}
\providecommand{\definitionname}{Definition}
\providecommand{\factname}{Fact}
\providecommand{\lemmaname}{Lemma}
\providecommand{\propositionname}{Proposition}
\providecommand{\theoremname}{Theorem}

\begin{document}
\title[The Communication Complexity of Approximating Matrix Rank]{The Communication Complexity of Approximating \\ Matrix Rank}
\author{Alexander A. Sherstov and Andrey A. Storozhenko}
\thanks{$^{*}$ Computer Science Department, UCLA, Los Angeles, CA~90095.
Supported by NSF grant CCF-2220232.\\
 {\large{}\Envelope ~}\texttt{\{sherstov,storozhenko\}@cs.ucla.edu }}
\begin{abstract}
We fully determine the communication complexity of approximating matrix
rank, over any finite field $\mathbb{F}$. We study the most general
version of this problem, where $0\leq r<R\leq n$ are given integers,
Alice and Bob's inputs are matrices $A,B\in\mathbb{F}^{n\times n}$,
respectively, and they need to distinguish between the cases $\rk(A+B)=r$
and $\rk(A+B)=R$. We show that this problem has randomized communication
complexity $\Omega(1+r^{2}\log|\mathbb{F}|)$. This is optimal in
a strong sense because $O(1+r^{2}\log|\mathbb{F}|)$ communication
is sufficient to determine, for arbitrary $A,B$, whether $\rk(A+B)\leq r$.
Prior to our work, lower bounds were known only for \emph{consecutive}
integers $r$ and $R$, with no implication for the approximation
of matrix rank. Our lower bound holds even for quantum protocols and
even for error probability $\frac{1}{2}-\frac{1}{4}|\mathbb{F}|^{-r/3}$,
which too is virtually optimal because the problem has a two-bit classical
protocol with error $\frac{1}{2}-\Theta(|\mathbb{F}|^{-r})$.

As an application, we obtain an $\Omega(\frac{1}{k}\cdot n^{2}\log|\mathbb{F}|)$
space lower bound for any streaming algorithm with $k$ passes that
approximates the rank of an input matrix $M\in\mathbb{F}^{n\times n}$
within a factor of $\sqrt{2}-\delta$, for any $\delta>0$. Our result
is an exponential improvement in $k$ over previous work.

We also settle the randomized and quantum communication complexity
of several other linear-algebraic problems, for all settings of parameters.
This includes the \emph{determinant problem} (given matrices $A$
and $B$, distinguish between the cases $\det(A+B)=a$ and $\det(A+B)=b$,
for fixed field elements $a\ne b)$ and the \emph{subspace sum} and
\emph{subspace intersection problem} (given subspaces $S$ and $T$
of known dimensions $m$ and $\ell$, respectively, approximate the
dimensions of $S+T$ and $S\cap T$).
\end{abstract}

\maketitle
\thispagestyle{empty}

\newpage\thispagestyle{empty}
\hypersetup{linkcolor=black} 
\setstretch{0.9}\tableofcontents{}

\setstretch{1}\newpage{}

\hypersetup{linkcolor=teal} 
\thispagestyle{empty}

\hyphenation{com-po-nent-wise}

\section{Introduction}

The exact and approximate computation of matrix rank is a fundamental
problem in theoretical computer science, studied for its intrinsic
importance as well as its connections to other algorithmic and complexity-theoretic
questions. In particular, a large body of research has focused on
the communication complexity of the matrix rank problem in Yao's two-party
model~\cite{yao79cc,yao93quantum}, with both classical and quantum
communication. In this problem, the two parties Alice and Bob receive
matrices $A,B\in\mathbb{F}^{n\times n}$, respectively, over a finite
field $\mathbb{F}$ and are tasked with determining the rank of $A+B$
using minimal communication. The first result in this line of research
was obtained three decades ago by Chu and Schnitger~\cite{chu-schnitger89matrix-integer},
who proved a lower bound of $\Omega(kn^{2})$ for the deterministic
communication complexity of computing the rank of $A+B$ when the
matrix entries are $k$-bit integers. Several years later, Chu and
Schnitger~\cite{chu-schnitger95matrix-finite} further showed that
this communication problem has deterministic complexity $\Omega(n^{2}\log p)$
when the matrix entries are in $\mathbb{F}_{p}$, the finite field
with $p$ elements. The first result on the \emph{randomized} communication
complexity of the matrix rank problem was obtained by Sun and Wang~\cite{sunwang12communication-linear},
who proved that determining whether $A+B$ is singular requires $\Omega(n^{2}\log p)$
bits of communication for matrices $A,B$ over the finite field $\mathbb{F}_{p}$
for prime $p$. In a follow-up paper, Li, Sun, Wang, and Woodruff~\cite{LiSWW14communication-linear}
showed that this $\Omega(n^{2}\log p)$ lower bound holds even for
a promise version of the matrix rank problem, where the matrix $A+B$
is guaranteed to have rank either $n-1$ or $n$. The lower bounds
of~\cite{sunwang12communication-linear,LiSWW14communication-linear}
further apply to quantum communication. 

Despite these exciting developments, no progress has been made on
lower bounds for \emph{approximating} matrix rank. Our main contribution
is the complete resolution of the approximate matrix rank problem.
In what follows, we state our results for matrix rank and several
other approximation problems, and present applications of our work
to streaming complexity.

\subsection{Matrix rank problem}

We study the problem of approximating matrix rank in its most general
form. Specifically, let $\mathbb{F}$ be any finite field. For integer
parameters $n,m,R,r$ such that $\min\{n,m\}\geq R>r\geq0,$ we consider
the promise communication problem defined on pairs of matrices $A,B\in\mathbb{F}^{n\times m}$
by
\[
\RANK_{r,R}^{\mathbb{F},n,m}(A,B)=\begin{cases}
-1 & \text{if }\rk(A+B)=r,\\
1 & \text{if }\rk(A+B)=R,\\
* & \text{otherwise,}
\end{cases}
\]
where the asterisk indicates that the communication protocol is allowed
to exhibit arbitrary behavior when $\rk(A+B)\notin\{r,R\}$. In words,
the problem amounts to distinguishing input pairs with $\rk(A+B)=r$
from those with $\rk(A+B)=R.$ The corresponding \emph{total} communication
problem is given by
\[
\RANK_{r}^{\mathbb{F},n,m}(A,B)=\begin{cases}
-1 & \text{if }\rk(A+B)\leq r,\\
1 & \text{otherwise.}
\end{cases}
\]
Clearly, the total problem $\RANK_{r}^{\mathbb{F},n,m}$ is more challenging
than the promise problem $\RANK_{r,R}^{\mathbb{F},n,m}$. Prior to
our work, the strongest known result was the $\Omega(n^{2}\log p)$
lower bound of \cite{LiSWW14communication-linear} on the bounded-error
quantum communication complexity of $\RANK_{n-1,n}^{\mathbb{F}_{p},n,n}$
for fields $\mathbb{F}_{p}$ of prime order. Unfortunately, this lower
bound has no implications for the approximation of matrix rank because
the ratio $(n-1)/n$ rapidly tends to $1$. We resolve this question
in full in the following theorem.
\begin{thm}[Lower bound for rank problem]
\label{thm:RANK-lowerbound-intro}There is an absolute constant $c>0$
such that for all finite fields $\mathbb{F}$ and all integers $n,m,R,r$
with $\min\{n,m\}\geq R>r\geq0,$
\[
Q_{\frac{1}{2}-\frac{1}{4|\mathbb{F}|^{r/3}}}^{*}(\RANK_{r,R}^{\mathbb{F},n,m})\geq c(1+r^{2}\log|\mathbb{F}|).
\]
In particular, 
\[
Q_{1/4}^{*}(\RANK_{r,R}^{\mathbb{F},n,m})\geq c(1+r^{2}\log|\mathbb{F}|).
\]
\end{thm}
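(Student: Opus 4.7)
The plan is to apply the standard discrepancy method for quantum communication. This method says that any quantum protocol computing $f : X \times Y \to \{-1, +1\}$ with error at most $\tfrac{1}{2} - \epsilon$ uses at least $\Omega(\log(\epsilon / \operatorname{disc}_\mu(f)))$ qubits, where $\operatorname{disc}_\mu(f)$ is the discrepancy of $f$ under a distribution $\mu$. Since addition makes $G = \mathbb{F}^{n \times m}$ into an abelian group and $\RANK_{r,R}^{\mathbb{F},n,m}(A, B) = f(A + B)$ is an ``XOR-type'' function, the discrepancy of $f \circ +$ under the uniform distribution on $G \times G$ is controlled by the Fourier $\ell_\infty$-norm of an appropriate signed measure on $G$, via a standard Fourier-analytic calculation. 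The goal is therefore to exhibit a signed measure $\Phi$ on $G$ whose sign matches the target function on the promise domain and whose nontrivial Fourier coefficients all have magnitude at most $q^{-\Omega(r^2)}$, with $q = |\mathbb{F}|$. Instantiating the quantum discrepancy bound with $\epsilon = \tfrac{1}{4} q^{-r/3}$ costs only $O(r \log q)$, which is absorbed into the $\Omega(r^2 \log q)$ main term; the ``$+1$'' summand of the bound handles the trivial $r = 0$ case.

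The natural candidate is
\[
\Phi \;=\; -\frac{\mathbf{1}_{\mathcal{R}_r}}{|\mathcal{R}_r|} \;+\; \frac{\mathbf{1}_{\mathcal{R}_R}}{|\mathcal{R}_R|},
\]
where $\mathcal{R}_k \subseteq G$ denotes the set of matrices of rank $k$. Its Fourier coefficient at a character $\chi_Y(X) = \psi(\operatorname{tr}(Y^\top X))$ is a linear combination of the character sums $c_k(Y) := \sum_{X \in \mathcal{R}_k} \chi_Y(X)$ for $k \in \{r, R\}$, where $\psi$ is a fixed nontrivial additive character of $\mathbb{F}$. Because rank is preserved by the left-right $\GL_n \times \GL_m$ action on $G$ (which is transitive on each rank class), $c_k(Y)$ depends only on $k$ and $s := \rk(Y)$. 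Writing $\gamma(k, s) := c_k(Y)/|\mathcal{R}_k|$, the task reduces to proving
\[
|\gamma(R, s) - \gamma(r, s)| \;\leq\; q^{-\Omega(r^2)} \quad \text{for every integer } s \geq 1.
\]

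The closed-form evaluation of $\gamma(k, s)$ is classical and can be written using Gaussian binomial coefficients (equivalently, the $q$-Krawtchouk polynomial family). From these formulas I expect the following cancellation pattern: $\gamma(k, s)$ admits a ``$q^{-1}$-expansion'' whose leading $\Omega(r)$ terms depend only on $s$ (and not on $k$) provided $k \geq r$, so that $\gamma(R, s) - \gamma(r, s)$ has magnitude at most $q^{-\Omega(r s)}$; combined with the direct estimate $|\gamma(k, s)| \leq q^{-\Omega(r s)}$ valid when $s$ is large, this delivers the uniform $q^{-\Omega(r^2)}$ bound via a case split on $s$.

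The main obstacle will be establishing the telescoping cancellation between $\gamma(r, s)$ and $\gamma(R, s)$ with sufficient precision, especially when $s \ll r$, where a naive triangle inequality is far too weak (each of $|\gamma(r,s)|, |\gamma(R,s)|$ is close to $1$ for small $s$). This will likely require pushing through explicit $q$-analogue identities for rank-class character sums over $\mathbb{F}^{n \times m}$ rather than soft arguments. Once the Fourier bound is secured, the conversion to a discrepancy bound via Fourier analysis on $G$ and then the invocation of the quantum discrepancy lower bound are standard, losing only the $O(r \log q)$ factor already accounted for above.
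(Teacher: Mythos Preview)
Your framework is right—the approximate trace norm (equivalently, discrepancy) method via Fourier analysis on the additive group of matrices is exactly what the paper uses—but the two-point witness $\Phi = -\mathbf{1}_{\mathcal{R}_r}/|\mathcal{R}_r| + \mathbf{1}_{\mathcal{R}_R}/|\mathcal{R}_R|$ cannot deliver the required Fourier bound, and the failure is concrete. For a rank-$1$ character ($s = 1$), one computes directly (in the square $n \times n$ case) that $\gamma(k,1) = (q^{2n-k} - 2q^n + 1)/(q^n-1)^2$, so $|\gamma(R,1) - \gamma(r,1)| \asymp q^{-r}$, not $q^{-\Omega(r^2)}$. More generally, even granting your hoped-for cancellation $|\gamma(R,s)-\gamma(r,s)| \le q^{-\Omega(rs)}$, the maximum over $s \ge 1$ is only $q^{-\Omega(r)}$; no case split on $s$ can turn per-$s$ bounds of the form $q^{-\Omega(rs)}$ into a uniform $q^{-\Omega(r^2)}$. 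Consequently this witness gives at best $Q^* = \Omega(r\log q)$, off by a factor of $r$ from the target.

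The paper's fix is to abandon the two-point witness. The structural fact you are close to is that $\gamma(k,s)$ is, for each fixed $s$, a polynomial in $q^{-k}$ of degree at most $s$. Rather than hoping for cancellation between two evaluations of such a polynomial, the paper constructs a signed measure $\phi$ on $\{0,\ldots,n\}$, supported on \emph{many} ranks, that is \emph{exactly orthogonal} to all such polynomials of degree up to some $d = \Theta(r)$; this is engineered via the Cauchy binomial theorem multiplied by a carefully chosen correcting factor. Orthogonality forces the Fourier coefficients at all $s \le d$ to vanish identically; for $s > d$, the support of $\phi$ is confined to ranks at least $\Theta(r)$, so the pointwise bound $|\gamma(k,s)| = O(q^{-sk/2})$ already gives $q^{-\Omega(r^2)}$. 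The real tension—and the content of the construction—is achieving this orthogonality while keeping all but a $q^{-\Theta(r)}$ fraction of $\|\phi\|_1$ on the two points $\{r,R\}$, so that the correlation term survives even at error $\tfrac12 - \tfrac14 q^{-r/3}$. Your two-point $\phi$ has perfect $\ell_1$ concentration but zero orthogonality; the paper trades a controlled amount of the former to obtain enough of the latter.
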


In the statement above, $Q_{\epsilon}^{*}$ denotes $\epsilon$-error
quantum communication complexity with arbitrary prior entanglement,
which is the most powerful model of probabilistic computation. Clearly,
all our lower bounds apply to the randomized (classical) model as
well. Two other remarks are in order. Even in the special case of
$r=n-1$ and \textbf{$R=n$}, our result is a significant improvement
on previous work because our theorem is proved in the \emph{large-error
regime,} with the error probability exponentially close to $1/2$.
This should be contrasted with the communication lower bounds of~\cite{sunwang12communication-linear,LiSWW14communication-linear},
which were proved for error probability $1/3$. Moreover, \prettyref{thm:RANK-lowerbound-intro}
is the first result of its kind because it allows for an arbitrary
gap between $r$ and $R$. In particular, \prettyref{thm:RANK-lowerbound-intro}
shows for the first time that approximating the matrix rank to any
constant factor requires $\Omega(n^{2}\log|\mathbb{F}|)$ bits of
communication, even for protocols that succeed with exponentially
small probability (take $R=n$ and $r=cn$ for a small constant $c>0$). 

\prettyref{thm:RANK-lowerbound-intro} is optimal in a strong sense.
Specifically, we have the following matching upper bound, which we
prove by adapting Clarkson and Woodruff's streaming algorithm for
matrix rank~\cite{clarkson-woodruff09}. In the statement below,
$R_{\epsilon}$ denotes randomized (classical) communication complexity
with error $\epsilon$.
\begin{thm}[Upper bound for rank problem]
\label{thm:RANK-upperbound}There is an absolute constant $c>0$
such that for all finite fields $\mathbb{F}$ and all integers $n,m,r$
with $\min\{n,m\}>r\geq0,$
\begin{align*}
 & R_{1/3}(\RANK_{r}^{\mathbb{F},n,m})\leq c(1+r^{2}\log|\mathbb{F}|),\\
 & R_{\frac{1}{2}-\frac{1}{32|\mathbb{F}|^{r}}}(\mathrm{RANK}_{r}^{\mathbb{F},n,m})\leqslant2.
\end{align*}
\end{thm}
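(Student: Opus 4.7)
The approach is to implement, in the two-party setting, the Clarkson--Woodruff sketch for matrix rank. Alice and Bob use public randomness to sample matrices $S\in\mathbb{F}^{k\times n}$ and $T\in\mathbb{F}^{m\times k}$ with i.i.d.\ uniform entries, for a sketch dimension $k=r+c_{0}$ with $c_{0}$ an absolute constant to be chosen. Alice computes the $k\times k$ matrix $SAT$ and transmits it to Bob, at a cost of $k^{2}\lceil\log_{2}|\mathbb{F}|\rceil=O(r^{2}\log|\mathbb{F}|)$ bits. Bob forms $S(A+B)T=SAT+SBT$ and outputs $-1$ iff its rank is at most $r$. Correctness is one-sided: multiplication never increases rank, so $\operatorname{rk}(A+B)\leq r$ forces $\operatorname{rk}(S(A+B)T)\leq r$ deterministically. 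Conversely, write $A+B=UV^{\top}$ with $U\in\mathbb{F}^{n\times\rho}$ and $V\in\mathbb{F}^{m\times\rho}$ of full column rank $\rho:=\operatorname{rk}(A+B)$; linear independence of the columns of $U$ and of $V$ implies that $SU$ and $V^{\top}T$ are distributed as independent uniform matrices in $\mathbb{F}^{k\times\rho}$ and $\mathbb{F}^{\rho\times k}$ respectively. A finite-field rank-preservation calculation then yields $\operatorname{rk}((SU)(V^{\top}T))=\min(k,\rho)\geq r+1$ with probability at least $2/3$ once $c_{0}$ is large enough (independently of $\mathbb{F}$). The additive $+1$ in the stated bound $c(1+r^{2}\log|\mathbb{F}|)$ absorbs the small-$r$ regime---in particular the case $r=0$, which amounts to testing $A=-B$ and is handled by any standard $O(1)$-bit public-coin equality protocol.

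\textbf{Plan for the two-bit protocol.} Using public randomness, Alice and Bob sample $u\in\mathbb{F}^{n}$ and $v\in\mathbb{F}^{m}$ uniformly, a uniformly random hash $h\colon\mathbb{F}\to\{0,1\}$, and an independent biased coin $\xi\in\{0,1\}$ with $\Pr[\xi=1]=q$ for a parameter $q\in(\tfrac{1}{2},1)$ depending only on $r$ and $|\mathbb{F}|$. Alice sends the single bit $h(u^{\top}Av)$; writing $\alpha:=u^{\top}Av$ and $\beta:=u^{\top}Bv$, Bob outputs $-1$ if $\xi=1$ and $h(-\beta)=h(\alpha)$, and outputs $+1$ otherwise. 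Since $\Pr_{h}[h(x)=h(y)]$ equals $1$ when $x=y$ and $\tfrac{1}{2}$ when $x\ne y$, a short computation gives
\[
\Pr[\mathrm{output}=-1\mid A,B]=\tfrac{q}{2}\bigl(1+p_{0}(\operatorname{rk}(A+B))\bigr),
\]
where $p_{0}(\rho):=\Pr_{u,v}[u^{\top}(A+B)v=0]=\tfrac{1}{|\mathbb{F}|}+\tfrac{|\mathbb{F}|-1}{|\mathbb{F}|^{\rho+1}}$ is strictly decreasing in $\rho$. Choosing $q=2/(2+p_{0}(r)+p_{0}(r+1))$ places the decision threshold midway between the two worst cases $\rho=r$ and $\rho=r+1$, and a routine calculation then yields an advantage over a fair coin of at least $\tfrac{1}{8}(p_{0}(r)-p_{0}(r+1))=\tfrac{(|\mathbb{F}|-1)^{2}}{8|\mathbb{F}|^{r+2}}\geq\tfrac{1}{32|\mathbb{F}|^{r}}$, as required.

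\textbf{Main obstacle.} The main technical point is the finite-field rank-preservation underpinning the sketch: showing that the sketch dimension $k=r+O(1)$ suffices for $\operatorname{rk}(SMT)=\min(k,\operatorname{rk}(M))$ with probability $\geq 2/3$. The cleanest route is a two-stage analysis: first, the image of $T$ is a uniform random $k$-dimensional subspace of $\mathbb{F}^{m}$ and intersects $\ker(M)$ trivially---hence $MT$ has rank $\min(k,\rho)$---with probability $\prod_{i\geq c_{0}+1}(1-|\mathbb{F}|^{-i})\geq 5/6$ for $c_{0}$ a sufficiently large absolute constant; second, the same argument is applied to $S$ restricted to $\operatorname{Im}(MT)$. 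The remaining pieces---the one-sided correctness of the sketch, the computation of $p_{0}(\rho)$, and the optimization of $q$---are routine.
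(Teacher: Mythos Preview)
Your proposal is correct and takes essentially the same approach as the paper: the first bound is obtained via the Clarkson--Woodruff random-projection sketch (the paper applies its Lemma~2.9 twice to bound $\Pr[\rk(XMY)<\min\{\rk M,r+1\}]$, and likewise handles $r=0$ via equality), and the two-bit protocol is the same bilinear-form-plus-random-hash idea, with your biased coin $\xi$ playing exactly the role of the paper's generic probability-shifting step (Proposition~2.19). The only cosmetic difference is that the paper's two-bit protocol is symmetric (each party sends one hashed bit) whereas yours routes both bits through Bob; the analysis and the resulting advantage $\frac{(|\mathbb{F}|-1)^2}{8|\mathbb{F}|^{r+2}}\geq\frac{1}{32|\mathbb{F}|^r}$ are identical.
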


This result shows that the lower bound of \prettyref{thm:RANK-lowerbound-intro}
is tight not only for quantum protocols solving the partial problem
$\RANK_{r,R}^{\mathbb{F},n,m}$ but even for \emph{classical}, \emph{bounded-error}
protocols solving the \emph{total} problem $\RANK_{r}^{\mathbb{F},n,m}$.
Moreover, \prettyref{thm:RANK-upperbound} shows that the error regime
for which we prove our lower bound in~\prettyref{thm:RANK-lowerbound-intro}
is also optimal, in that the total rank problem has a classical protocol
with cost only $2$ bits and error probability $\frac{1}{2}-|\mathbb{F}|^{-\Theta(r)}$.

\prettyref{thm:RANK-lowerbound-intro} generalizes to multiparty communication,
as we discuss below in Section~\ref{subsec:Multiparty-intro}.

\subsection{Streaming complexity}

The streaming complexity of matrix rank has received extensive attention
in the literature~\cite{clarkson-woodruff09,sunwang12communication-linear,LiSWW14communication-linear,bury-schwiegelshohn15matchings,AKL17matching,AKSY20graph-streaming,CKPSSY21super-constant-pass}.
In this model, an algorithm with limited space is presented with a
matrix $M$ of order $n$ over a given field, in row-major order.
The objective is to compute or approximate the rank of $M$, using
either a single pass or multiple passes over $M$. Via standard reductions,
our \prettyref{thm:RANK-lowerbound-intro} implies an essentially
optimal lower bound on the streaming complexity of approximating matrix
rank. Unlike previous work, our result remains valid even for polynomially
many passes and even for correctness probability exponentially close
to $1/2.$ Stated in its most general form, our result is as follows.
\begin{thm}
\label{thm:RANK-streaming-intro} Let $n,r,R$ be nonnegative integers
with $n/2\leq r<R\leq n,$ and let $\mathbb{F}$ be a finite field.
Define $f\colon\mathbb{F}^{n\times n}\to\{-1,1,*\}$ by
\[
f(M)=\begin{cases}
-1 & \text{if }\rk M=r,\\
1 & \text{if }\rk M=R,\\
* & \text{otherwise.}
\end{cases}
\]
Let $\mathcal{A}$ be any randomized streaming algorithm for $f$
with error probability $\frac{1}{2}-\frac{1}{4}|\mathbb{F}|^{-(r-\lceil n/2\rceil)/3}$
that uses $k$ passes and space $s$. Then
\[
sk=\Omega\left(\left(r-\left\lceil \frac{n}{2}\right\rceil \right)^{2}\log|\mathbb{F}|\right).
\]
 
\end{thm}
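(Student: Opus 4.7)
The plan is a standard reduction of streaming to two-party communication, tailored with a block-matrix construction that lets us invoke \prettyref{thm:RANK-lowerbound-intro}.

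Set $n' = \lfloor n/2 \rfloor$ and, given an instance $(A, B) \in \mathbb{F}^{n' \times n'} \times \mathbb{F}^{n' \times n'}$ of the communication problem $\RANK^{\mathbb{F}, n', n'}_{r', R'}$ with $r' = r - \lceil n/2 \rceil$ and $R' = R - \lceil n/2 \rceil$, define $M \in \mathbb{F}^{n \times n}$ by
\[
M = \begin{pmatrix} A & I_{n'} \\ -B & I_{n'} \end{pmatrix} \text{ for even } n, \qquad
M = \begin{pmatrix} A & I_{n'} & 0 \\ -B & I_{n'} & 0 \\ 0 & 0 & 1 \end{pmatrix} \text{ for odd } n.
\]
Subtracting the top $n'$ rows from the middle $n'$ rows immediately gives $\rk M = \lceil n/2 \rceil + \rk(A+B)$ in both parities, so $\rk M \in \{r, R\}$ iff $\rk(A+B) \in \{r', R'\}$, and therefore $f(M) = \RANK^{\mathbb{F}, n', n'}_{r', R'}(A, B)$.

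The key structural feature of $M$ is that each of its rows is determined solely by $A$, solely by $B$, or is publicly known: rows $1, \dots, n'$ depend only on $A$, rows $n'+1, \dots, 2 n'$ only on $B$, and the extra row (when $n$ is odd) is public. Thus, given any $k$-pass, $s$-space streaming algorithm $\mathcal{A}$ for $f$, Alice and Bob simulate $\mathcal{A}$ in row-major order as follows: in each pass, Alice streams her rows and forwards the $s$-bit internal state of $\mathcal{A}$ to Bob, who streams his rows (and the public tail) and, between consecutive passes, returns the state to Alice. Over $k$ passes at most $2 k s$ bits are exchanged, and the simulation is exact, so the resulting protocol decides $\RANK^{\mathbb{F}, n', n'}_{r', R'}$ with exactly the same error probability $\tfrac{1}{2} - \tfrac{1}{4} |\mathbb{F}|^{-r'/3}$ as $\mathcal{A}$.

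Applying \prettyref{thm:RANK-lowerbound-intro} to this protocol yields
\[
2 k s \;\geq\; c \bigl(1 + (r')^{2} \log |\mathbb{F}|\bigr) \;=\; c \Bigl(1 + (r - \lceil n/2 \rceil)^{2} \log |\mathbb{F}|\Bigr),
\]
which is the claim. The main (still modest) obstacle is the parity bookkeeping: the extra diagonal $1$ in the odd-$n$ case is needed so that $\rk M$ can reach $n$ (which occurs when $R = n$), and one must verify that every row of the augmented matrix still fits cleanly into Alice's or Bob's stream under the row-major order. Once these details are confirmed, the reduction and the invocation of \prettyref{thm:RANK-lowerbound-intro} are mechanical.
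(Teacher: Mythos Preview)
Your proposal is correct and takes essentially the same approach as the paper: the paper uses the block matrix $M = \begin{bmatrix} A & -I_m & 0 \\ B & I_m & 0 \\ 0 & 0 & I_{n-2m} \end{bmatrix}$ with $m=\lfloor n/2\rfloor$ (handling both parities at once via the $I_{n-2m}$ block), verifies $\rk M = \rk(A+B)+\lceil n/2\rceil$, and then runs the identical streaming-to-communication simulation before invoking \prettyref{thm:RANK-lowerbound-intro}. The differences between your construction and theirs are purely cosmetic (placement of the minus sign, and your explicit parity split versus their $I_{n-2m}$ padding).
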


\noindent By way of notation, recall that $f$ in the above statement
is a \emph{partial} function, and $\mathcal{A}$ is allowed to exhibit
arbitrary behavior on matrices $M$ where $f(M)=*$.
\begin{cor}
\label{cor:streaming}Let $\mathbb{F}$ be a finite field, and let
$\delta\in(1/2,1)$ be any constant. Let $\mathcal{A}$ be a $k$-pass
streaming algorithm that takes as input a matrix $M\in\mathbb{F}^{n\times n}$
$($for any $n\geq\frac{5}{\delta-0.5})$ such that either $\rk M=n$
or $\rk M=\lfloor\delta n\rfloor,$ and determines which is the case
with probability of correctness at least $\frac{1}{2}+|\mathbb{F}|^{-(\delta-0.5)n/5}$.
Then $\mathcal{A}$ uses $\Omega(\frac{1}{k}\cdot n^{2}\log|\mathbb{F}|)$
space.
\end{cor}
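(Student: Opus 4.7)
The plan is to obtain Corollary~\ref{cor:streaming} as a direct specialization of Theorem~\ref{thm:RANK-streaming-intro}. I would set $r=\lfloor\delta n\rfloor$ and $R=n$. Since $\delta\in(1/2,1)$ and $n\geq 5/(\delta-0.5)$, these parameters satisfy $n/2\leq r<R\leq n$, so the theorem's preconditions on the ranks are met, and the corollary's partial function is exactly the one the theorem lower-bounds (up to the output label convention, which is immaterial for the asymptotic space bound).

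The next step is to reconcile the two error expressions. The corollary hypothesizes success probability at least $\tfrac{1}{2}+|\mathbb{F}|^{-(\delta-0.5)n/5}$, i.e.\ error at most $\tfrac{1}{2}-|\mathbb{F}|^{-(\delta-0.5)n/5}$, while the theorem requires error at most $\tfrac{1}{2}-\tfrac{1}{4}|\mathbb{F}|^{-(r-\lceil n/2\rceil)/3}$. It therefore suffices to verify
\[
|\mathbb{F}|^{-(\delta-0.5)n/5}\;\geq\;\tfrac{1}{4}\,|\mathbb{F}|^{-(\lfloor\delta n\rfloor-\lceil n/2\rceil)/3},
\]
which after taking logs reduces to showing that $(\lfloor\delta n\rfloor-\lceil n/2\rceil)/3-(\delta-0.5)n/5\geq -\log_{|\mathbb{F}|}4$. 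Using $\lfloor\delta n\rfloor\geq\delta n-1$ and $\lceil n/2\rceil\leq (n+1)/2$, I can bound the left-hand side from below by $\tfrac{2}{15}(\delta-0.5)n-\tfrac{1}{2}$, and the hypothesis $(\delta-0.5)n\geq 5$ makes this at least $1/6>0$. The mismatch between the exponents $1/3$ and $1/5$ in the theorem and the corollary is exactly the slack being exploited here: the weaker $1/5$ absorbs both the factor of $1/4$ and the small floor/ceiling losses.

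Finally, Theorem~\ref{thm:RANK-streaming-intro} gives $sk=\Omega\bigl((\lfloor\delta n\rfloor-\lceil n/2\rceil)^{2}\log|\mathbb{F}|\bigr)$. The same floor/ceiling estimate yields $\lfloor\delta n\rfloor-\lceil n/2\rceil\geq(\delta-0.5)n-\tfrac{3}{2}$, and the hypothesis $(\delta-0.5)n\geq 5$ upgrades this to at least $\tfrac{7}{10}(\delta-0.5)n=\Omega(n)$, with a constant depending on the fixed $\delta$. Squaring and dividing by $k$ delivers the claimed bound $s=\Omega\bigl(\tfrac{1}{k}n^{2}\log|\mathbb{F}|\bigr)$. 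No real obstacle arises; the argument is bookkeeping, and the only delicate point is the dual role played by the hypothesis $n\geq 5/(\delta-0.5)$ — first in verifying the error condition with the $1/4$ factor absorbed, and second in converting the rank gap $r-\lceil n/2\rceil$ back into $\Omega(n)$.
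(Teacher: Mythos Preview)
Your proposal is correct and follows exactly the paper's approach: the paper's proof is the one-liner ``Take $R=n$ and $r=\lfloor\delta n\rfloor$ in Theorem~\ref{thm:RANK-streaming-intro},'' and you have simply filled in the routine verifications (the precondition $n/2\leq r<R\leq n$, the error-probability comparison, and the conversion of $(r-\lceil n/2\rceil)^2$ to $\Omega(n^2)$) that the paper leaves implicit.
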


\begin{proof}
Take $R=n$ and $r=\lfloor\delta n\rfloor$ in~\prettyref{thm:RANK-streaming-intro}.
\end{proof}
The space lower bound in \prettyref{cor:streaming} is essentially
optimal since the rank of a matrix $M\in\mathbb{F}^{n\times n}$ can
be computed exactly by a trivial, single-pass algorithm with space
$O(n^{2}\log|\mathbb{F}|)$. Prior to our work, the strongest streaming
lower bound for approximating matrix rank was due to Chen et al.~\cite{CKPSSY21super-constant-pass}.
For any constants $\epsilon>0$ and $\delta>0,$ they proved that
no $o(\sqrt{\log n})$-pass algorithm with space $n^{2-\epsilon}$
can distinguish between the cases $\rk M=n$ and $\rk M\leq\delta n$
with probability $2/3$, where $M$ is an input matrix of order $n$
over a finite field of size $\omega(n)$. Our~\prettyref{cor:streaming}
shows that distinguishing between the cases $\rk M=n$ and $\rk M=\lfloor\delta n\rfloor$
requires $n^{2-\epsilon}\log|\mathbb{F}|$ space even with $k=\Theta(n^{\epsilon})$
passes, an exponential improvement on~\cite{CKPSSY21super-constant-pass}.
Moreover, \prettyref{cor:streaming} is valid for all finite fields
regardless of size, and holds even when the correctness probability
is exponentially close to $1/2$.

We now restate our streaming lower bound in more standard terminology.
Recall that an algorithm $\mathcal{A}$ with input $M\in\mathbb{F}^{n\times n}$
approximates, with probability $p$, the rank of $M$ within a factor
of $c\in[1,\infty)$\emph{ }if for every input matrix $M,$ the output
of $\mathcal{A}$ is in the range $[\frac{1}{c}\rk M,c\rk M]$ with
probability at least $p$. We have:
\begin{cor}
Let $\mathbb{F}$ be a finite field, and let $c\in[1,\sqrt{2})$ be
any constant. Let $\mathcal{A}$ be a $k$-pass streaming algorithm
with input $M\in\mathbb{F}^{n\times n}$ $($for any $n\geq\frac{40}{2-c^{2}})$
that approximates, with probability at least $\frac{1}{2}+|\mathbb{F}|^{-(2-c^{2})n/40},$
the rank of $M$ within a factor of $c$. Then $\mathcal{A}$ uses
$\Omega(\frac{1}{k}\cdot n^{2}\log|\mathbb{F}|)$ space.
\end{cor}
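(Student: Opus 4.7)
The plan is to reduce the rank-approximation problem to the gap distinguishing problem of the preceding \prettyref{cor:streaming} by calibrating a suitable ``gap parameter'' $\delta$ so that any algorithm approximating rank within a factor $c<\sqrt{2}$ is automatically strong enough to separate $\rk M=n$ from $\rk M=\lfloor\delta n\rfloor$.

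Concretely, I would set $\delta=\tfrac{1}{2}+\tfrac{2-c^{2}}{8}$. Since $1\leq c<\sqrt{2}$, this $\delta$ lies strictly in $(1/2,1)$. A brief algebraic check (equivalent to $(c^{2}-2)(c^{2}-4)>0$, which holds because $c^{2}<2<4$) shows $\delta<1/c^{2}$, which is the essential property. This choice also satisfies $\delta-\tfrac{1}{2}=\tfrac{2-c^{2}}{8}$, so the hypothesis $n\geq 40/(2-c^{2})$ in the statement becomes exactly the size condition $n\geq 5/(\delta-1/2)$ required by \prettyref{cor:streaming}, and the correctness probability $\tfrac{1}{2}+|\mathbb{F}|^{-(2-c^{2})n/40}$ becomes exactly $\tfrac{1}{2}+|\mathbb{F}|^{-(\delta-1/2)n/5}$.

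Next, I would build a $k$-pass distinguishing algorithm from $\mathcal{A}$: on input $M$, run $\mathcal{A}$, output ``$\rk M=n$'' if its answer is at least $n/c$, and ``$\rk M=\lfloor\delta n\rfloor$'' otherwise. Correctness follows from the approximation guarantee together with the disjointness of the intervals $[n/c,cn]$ and $[\lfloor\delta n\rfloor/c,\,c\lfloor\delta n\rfloor]$, which is in turn a consequence of the chain $c\lfloor\delta n\rfloor\leq c\delta n<n/c$, valid because $\delta<1/c^{2}$. Thus the threshold test succeeds whenever $\mathcal{A}$ does, and the new algorithm uses the same number of passes and the same space as $\mathcal{A}$.

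The resulting algorithm therefore distinguishes $\rk M=n$ from $\rk M=\lfloor\delta n\rfloor$ with probability at least $\tfrac{1}{2}+|\mathbb{F}|^{-(\delta-1/2)n/5}$, which is precisely the regime covered by \prettyref{cor:streaming}; invoking that corollary yields the desired $\Omega(\tfrac{1}{k}\cdot n^{2}\log|\mathbb{F}|)$ space lower bound. There is no substantial obstacle here: the only care needed is the calibration of $\delta$ so that the range condition $\delta\in(1/2,1/c^{2})$, the size threshold, and the probability exponent all line up with \prettyref{cor:streaming} simultaneously, and the explicit choice $\delta=\tfrac{1}{2}+\tfrac{2-c^{2}}{8}$ accomplishes all three at once.
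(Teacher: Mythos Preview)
Your proof is correct and follows essentially the same approach as the paper: both reduce to the gap-distinguishing lower bound by choosing a constant $\delta\in(1/2,1/c^{2})$ and thresholding $\mathcal{A}$'s output at $n/c$. The only cosmetic difference is the specific value of $\delta$: the paper takes $\delta=\tfrac{1}{2}(\tfrac{1}{2}+\tfrac{1}{c^{2}})$ and appeals directly to \prettyref{thm:RANK-streaming-intro}, whereas your choice $\delta=\tfrac{1}{2}+\tfrac{2-c^{2}}{8}$ is tailored so that the size threshold and probability exponent match \prettyref{cor:streaming} exactly.
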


\begin{proof}
Define $\delta=\frac{1}{2}(\frac{1}{2}+\frac{1}{c^{2}}).$ Since $\delta<1/c^{2},$
algorithm $\mathcal{A}$ can be used to distinguish, with correctness
probability at least $\frac{1}{2}+|\mathbb{F}|^{-(2-c^{2})n/40}$,
matrices $M\in\mathbb{F}^{n\times n}$ of rank $\lfloor\delta n\rfloor$
from those of rank $n$ (simply check if $\mathcal{A}$'s output is
$<n/c$ or $\geq n/c$). The correctness probability of this distinguisher
exceeds $\frac{1}{2}+|\mathbb{F}|^{-(\lfloor\delta n\rfloor-\lceil n/2\rceil)/3}$
due to $n\geq40/(2-c^{2})$. Therefore, it uses $\Omega(\frac{1}{k}\cdot n^{2}\log|\mathbb{F}|)$
space by \prettyref{thm:RANK-streaming-intro}.
\end{proof}

\subsection{Determinant problem}

Recall that a square matrix over a field $\mathbb{F}$ has full rank
if and only if its determinant is nonzero. As a result, the problem
of computing the determinant has received considerable attention in
previous work on matrix rank, e.g.,~\cite{chu-schnitger95matrix-finite,sunwang12communication-linear,LiSWW14communication-linear}.
We are interested in the most general form of the determinant problem,
where Alice and Bob receive as input matrices $A,B\in\mathbb{F}^{n\times n}$,
respectively, and need to determine whether the determinant of $A+B$
equals $a$ or $b$. The problem parameters $a$ and $b$ are distinct
field elements that are fixed in advance. Formally, the determinant
problem is the partial\emph{ }communication problem on matrix pairs
$(A,B)$ given by
\[
\DET_{a,b}^{\mathbb{F},n}(A,B)=\begin{cases}
-1 & \text{if }\det(A+B)=a,\\
1 & \text{if }\det(A+B)=b,\\
* & \text{otherwise.}
\end{cases}
\]

Prior to our work, the strongest result on the determinant problem
was due to Sun and Wang~\cite{sunwang12communication-linear}, who
proved a tight lower bound of $\Omega(n^{2}\log|\mathbb{F}|)$ for
the randomized and quantum communication complexity of $\DET_{a,b}^{\mathbb{F},n}$
for nonzero $a,b$ over any finite field $\mathbb{F}$ of prime order.
They conjectured the same lower bound for the case of arbitrary $a,b.$
To see why the case of nonzero $a,b$ is rather special, observe that
the number of matrices with determinant $a$ is always the same as
the number of matrices with determinant $b$, with natural bijections
between these two sets; but this is no longer true if one of $a,b$
is zero. This asymmetry suggests that the determinant problem requires
a substantially different approach when one of $a,b$ is zero. In
this work, we develop sufficiently strong techniques to solve the
determinant problem in full, thereby settling Sun and Wang's conjecture
in the affirmative.
\begin{thm}
\label{thm:DET-intro}There is an absolute constant $c>0$ such that
for every finite field $\mathbb{F},$ every pair of distinct elements
$a,b\in\mathbb{F},$ and all integers $n\geq2,$
\[
Q_{\frac{1}{2}-\frac{1}{4|\mathbb{F}|^{(n-1)/3}}}^{*}(\DET_{a,b}^{\mathbb{F},n})\geq cn^{2}\log|\mathbb{F}|.
\]
\end{thm}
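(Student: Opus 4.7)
The plan is to prove Theorem~\ref{thm:DET-intro} by reducing from the matrix rank lower bound of Theorem~\ref{thm:RANK-lowerbound-intro}. Starting from an instance $(A, B) \in \mathbb{F}^{n \times n} \times \mathbb{F}^{n \times n}$ of $\RANK_{n-1, n}^{\mathbb{F}, n, n}$, Alice and Bob use public randomness to sample vectors $u, v \in \mathbb{F}^{n}$ and a scalar $w \in \mathbb{F}$, and construct matrices $A', B' \in \mathbb{F}^{(n+1) \times (n+1)}$ with
\[
A' + B' = \begin{pmatrix} A + B & u \\ v^T & w \end{pmatrix},
\]
where the padding row, column, and corner are split between the two players (Alice contributes $u$ and one summand of $w$, Bob contributes $v^T$ and the other). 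The cofactor identity gives
\[
\det(A' + B') \;=\; w \det(A + B) \;-\; v^T \operatorname{adj}(A + B)\, u.
\]

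I would then analyze the distribution of $\det(A'+B')$ over the public randomness. When $\rk(A+B) = n$, the right-hand side equals $\det(A+B)\cdot(w - v^T(A+B)^{-1}u)$, which is uniform over $\mathbb{F}$ since the inner scalar is uniform whenever $w$ is. When $\rk(A+B) = n-1$, one has $\det(A+B)=0$ and $\operatorname{adj}(A+B)$ has rank one, say $xy^T$, so $\det(A'+B') = -(v^T x)(y^T u)$ is the product of two independent uniform elements of $\mathbb{F}$; this distribution places mass $(2|\mathbb{F}|-1)/|\mathbb{F}|^2$ at $0$ and mass $(|\mathbb{F}|-1)/|\mathbb{F}|^2$ at each nonzero field element. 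Consequently the two rank cases induce determinant distributions that differ by $\Theta(1/|\mathbb{F}|)$ at the point $0$ and by $\Theta(1/|\mathbb{F}|^2)$ at each nonzero value.

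For the subcase where one of $a, b$ is zero (say $a=0$, $b \neq 0$), the $\Theta(1/|\mathbb{F}|)$ discrepancy at $\det=0$ is enough to convert an advantage of $\eta = \tfrac{1}{4}|\mathbb{F}|^{-(n-1)/3}$ for $\DET_{0, b}^{\mathbb{F}, n+1}$ into a nontrivial advantage for $\RANK_{n-1, n}^{\mathbb{F}, n, n}$, whereupon Theorem~\ref{thm:RANK-lowerbound-intro} applied with $r=n-1$ gives the desired $\Omega(n^2 \log|\mathbb{F}|)$. To keep the loss benign I would symmetrize the reduction so that the DET protocol's unconstrained behavior on non-promise inputs averages out: for instance, apply DET on both $A'+B'$ and a publicly-coined transform preserving the promise, and combine the outputs; or compose with a uniformly random diagonal scaling to spread the non-promise determinants uniformly over $\mathbb{F}^*$ so that their contribution to the rank advantage cancels.

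The main obstacle is the subcase $a, b \neq 0$. There both ``yes'' and ``no'' instances of $\DET_{a,b}$ correspond to $A+B$ of full rank, and the induced distributions from the embedding above are symmetric between $a$ and $b$ in each rank case, so DET cannot separate them. My plan is a separate algebraic reduction showing that $\DET_{0, c}^{\mathbb{F}, n}$ (already hard by the previous subcase) reduces to $\DET_{a, b}^{\mathbb{F}, n+O(1)}$. The idea is to build a block embedding in which the determinant of the padded matrix is an \emph{affine} function of $\det(A+B)$ — mapping $0 \mapsto a$ and $c \mapsto b$ — by coupling $A+B$ with an auxiliary ``shift'' block via a rank-one perturbation together with a scaling chosen from the public coins. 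Realizing this affine map under the two-player split, while preserving the error regime $\tfrac{1}{2} - \tfrac{1}{4}|\mathbb{F}|^{-(n-1)/3}$, is the most delicate step: the padding must introduce an additive constant to $\det$ without blowing up the matrix size or the advantage loss beyond what the rank lower bound can absorb.
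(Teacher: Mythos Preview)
Your plan is a reduction from $\RANK_{n-1,n}$; the paper does neither subcase that way. For nonzero $a,b$ it gives a short direct argument: with $g_{a,b}=\1[\det=b]-\1[\det=a]$, one shows (Lemma~\ref{lem:DETab-fourier}) that $\widehat{g_{a,b}}$ vanishes on singular matrices and is constant on each $\SL$-coset, so Parseval gives $\|\widehat{g_{a,b}}\|_\infty\le|\SL(\mathbb F,n)|^{-1/2}$; then $\Phi=G_b-G_a$ is a dual witness with $\|\Phi\|_1=2$ and $\|\Phi\|\le q^{-\Theta(n^2)}$, and the approximate trace norm bound finishes. For $ab=0$ the paper observes that $\RANKDET_{n-1,b}^{\mathbb F,n}$ is a \emph{subproblem} of $\DET_{0,b}^{\mathbb F,n}$ and proves the former hard by a dual witness $E_\phi+\sum_{c\ne 0,b}\frac{\phi(n)}{q-1}(G_b-G_c)$, which simply reshuffles the $\ell_1$-mass of $E_\phi$ from ``rank $n$'' onto ``$\det=b$'' at no spectral cost.

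Your subcase $ab=0$ has a real gap. After any symmetrization that makes the conditional law of $(A',B')$ on each slice $\det(A'+B')=c$ depend only on $c$ (and on $\rk(A'+B')$ when $c=0$), the \emph{probabilities} of the off-promise slices still differ between the two rank cases: $\Pr_{1}[\det=c]-\Pr_{2}[\det=c]=-1/q^{2}$ for every $c\neq 0$. A black-box adversarial $\DET$ protocol that puts $p_0=1$ on all off-promise slices contributes $-(q-2)/q^{2}$ to your distinguishing advantage, while the on-promise slices contribute at most $(q-1)(1/2+\eta)/q^{2}-(1/2-\eta)/q^{2}$; the total is $(q\eta-q/2+1)/q^{2}$, which is \emph{negative} for every $q\ge 3$ when $\eta=\tfrac14 q^{-(n-1)/3}$. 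So a black-box reduction cannot work in this error regime; you would need a non-black-box argument that a low-cost $\DET$ protocol cannot realize that off-promise behavior, and you have not supplied one.

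Your subcase $a,b\neq 0$ rests on an affine embedding with $\det(A'+B')=\alpha\det(A+B)+\beta$, $\beta\neq 0$. This is not realizable for $n\ge 2$. Writing $L(X)=C+L_0(X)$ with $C$ invertible (forced by $X=0$), one needs $\det(I+C^{-1}L_0(X))=1+(\alpha/\beta)\det(X)$; already for $n=2$ this forces $\operatorname{tr}(C^{-1}L_0(X))\equiv 0$ (so $C^{-1}L_0$ maps into the $3$-dimensional $\mathfrak{sl}_2$) while simultaneously pulling back the rank-$4$ quadratic form $\det$ on $\mathbb F^{2\times 2}$ to a rank-$\le 3$ form, a contradiction. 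The randomized variant via the matrix determinant lemma gives $w\det(A+B)-v^{\top}\operatorname{adj}(A+B)u$, whose second term depends on $A+B$ through $\operatorname{adj}(A+B)$ and not through $\det(A+B)$ alone, so it does not realize the affine map either.
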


\noindent The communication lower bound of \prettyref{thm:DET-intro}
is best possible, up to the multiplicative constant~$c$. It matches
the trivial, deterministic protocol where Alice sends her input matrix
$A$ to Bob using $n^{2}\lceil\log|\mathbb{F}|\rceil$ bits, at which
point Bob computes $\det(A+B)$ and announces the output of the protocol.
Furthermore, the error regime in \prettyref{thm:DET-intro} is also
essentially optimal because, for example, the problem $\DET_{0,b}^{\mathbb{F},n}$
has a randomized protocol with only $2$ bits of communication and
error probability $\frac{1}{2}-\Theta(|\mathbb{F}|^{n-1})$, by taking
$r=n-1$ and $R=m=n$ in \prettyref{thm:RANK-upperbound}. Lastly,
we note that the requirement that $n\geq2$ in \prettyref{thm:DET-intro}
is also necessary because the determinant problem for $1\times1$
matrices reduces to the equality problem with domain $\mathbb{F}\times\mathbb{F}$
and therefore has randomized communication complexity $O(1)$.

We prove \prettyref{thm:DET-intro} for all $a,b$ from first principles,
without relying on the work of Sun and Wang~\cite{sunwang12communication-linear}.
In the case of nonzero $a,b$, we give a new proof that is quite short
and uses only basic Fourier analysis, unlike the rather technical
proof of \cite{sunwang12communication-linear}. To settle the complementary
case where one of $a,b$ is zero, we prove a stronger result of independent
interest. Here, we introduce a natural problem that we call $\RANKDET_{r,a}^{\mathbb{F},n}$,
which combines features of the matrix rank and determinant problems.
It is parameterized by a nonzero field element $a\in\mathbb{F}$ and
a nonnegative integer $r<n,$ and Alice and Bob's objective is to
distinguish input pairs $(A,B)$ with $\rk(A+B)=r$ from those with
$\det(A+B)=a.$ We prove the following.
\begin{thm}
\label{thm:RANKDET-intro}There is an absolute constant $c>0$ such
that for every finite field $\mathbb{F},$ every field element $a\in\mathbb{F}\setminus\{0\},$
and all integers $n>r\geq0,$ 
\[
Q_{\frac{1}{2}-\frac{1}{4|\mathbb{F}|^{r/3}}}^{*}(\RANKDET_{r,a}^{\mathbb{F},n})\geq c(1+r^{2}\log|\mathbb{F}|).
\]
\end{thm}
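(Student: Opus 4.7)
The plan is to prove this lower bound via the generalized discrepancy method for quantum communication, leveraging the fact that $\RANKDET_{r,a}^{\mathbb{F},n}(A,B)$ depends only on the sum $A+B$. Define $\psi\colon\mathbb{F}^{n\times n}\to\{-1,0,+1\}$ by $\psi(X)=-1$ if $\rk X=r$, $\psi(X)=+1$ if $\det X=a$, and $\psi(X)=0$ otherwise, so that $\RANKDET_{r,a}(A,B)=\psi(A+B)$ is a lifted function over the additive abelian group $G=(\mathbb{F}^{n\times n},+)$. The standard discrepancy-to-quantum-complexity inequality $Q^{*}_{\epsilon}(f)\ge\tfrac{1}{2}\log\tfrac{1-2\epsilon}{\operatorname{disc}_\pi(f)}$ then reduces the task to exhibiting a distribution $\pi$ on pairs under which $\operatorname{disc}_\pi(\psi(A+B))\le|\mathbb{F}|^{-\Omega(r^{2})}$; the stated error regime $\epsilon=\tfrac12-\tfrac14|\mathbb{F}|^{-r/3}$ is then absorbed into this discrepancy bound.

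I would take $\pi$ to be the balanced convolution distribution: draw $X$ from the balanced measure $\mu:=\tfrac{1}{2N_{r}}\mathbf{1}_{\{\rk=r\}}+\tfrac{1}{2N_{a}}\mathbf{1}_{\{\det=a\}}$ (with $N_{r},N_{a}$ the sizes of the two classes), independently draw $A$ uniformly from $\mathbb{F}^{n\times n}$, and set $B:=X-A$. Defining the signed measure $\nu:=\mu\cdot\psi$, the matrix $T[A,B]:=\nu(A+B)$ is $G$-invariant and hence diagonalized by the additive characters $\chi_{M}(X)=\omega^{\operatorname{tr}(M^{\mathsf{T}}X)}$ of $G$, and a Lindsey-type spectral bound yields
\[
\operatorname{disc}_\pi(\psi(A+B))\;\le\;\max_{M\in\mathbb{F}^{n\times n}}\bigl|\hat{\nu}(M)\bigr|.
\]
At $M=0$ this coefficient vanishes by the balancing of $\mu$. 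For $M\ne0$ of rank $k$, by $GL_{n}\times GL_{n}$-bi-invariance of rank and determinant (under $X\mapsto PXQ$, absorbing $\det P\det Q$ into $a$), we may reduce to $M=J_{k}$, the block-identity matrix with $I_{k}$ in the top-left corner, giving $\chi_{J_{k}}(X)=\omega^{X_{11}+\cdots+X_{kk}}$. It then remains to bound the two normalized character sums $\tfrac{1}{N_{r}}\sum_{\rk X=r}\chi_{J_{k}}(X)$ and $\tfrac{1}{N_{a}}\sum_{\det X=a}\chi_{J_{k}}(X)$ by $|\mathbb{F}|^{-\Omega(r^{2})}$ uniformly in $1\le k\le n$.

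The rank sum is classical: it follows from the Goldman--Rota expression for $\sum_{\rk X=r}\chi_{M}(X)$ in terms of Gaussian binomial coefficients, and is essentially the Fourier input already driving the lower bound in \prettyref{thm:RANK-lowerbound-intro}. The determinant sum is the genuinely new ingredient, and I would obtain it via Fourier inversion on the multiplicative group:
\[
\mathbf{1}[\det X=a]\;=\;\tfrac{1}{|\mathbb{F}|-1}\sum_{\lambda\in\widehat{\mathbb{F}^{*}}}\overline{\lambda(a)}\,\lambda(\det X),
\]
which reduces the sum to a family of matrix Gauss sums $\sum_{X\in GL_{n}(\mathbb{F})}\lambda(\det X)\chi_{J_{k}}(X)$ admitting Carlitz-type evaluations (with the principal-character term handled by direct rank stratification). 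The main obstacle is that, unlike rank strata, determinant level sets cut transversely across rank, so the decay in $k$ must be extracted from multiplicative-character orthogonality in $\lambda$ rather than from orbit counting; keeping this decay tight enough to deliver $|\mathbb{F}|^{-\Omega(r^{2})}$ (rather than the weaker $|\mathbb{F}|^{-\Omega(r)}$ that naive coordinate-by-coordinate analysis would give) is the delicate point. Once both normalized sums are controlled, the quantum discrepancy bound delivers $Q^{*}_{\epsilon}(\RANKDET_{r,a}^{\mathbb{F},n})\ge c(1+r^{2}\log|\mathbb{F}|)$ as claimed.
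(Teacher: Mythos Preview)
Your proposal has a genuine gap on what you call the ``classical'' side. The witness you choose is essentially the signed measure $\nu=\tfrac{1}{2N_a}\mathbf 1_{\det=a}-\tfrac{1}{2N_r}\mathbf 1_{\rk=r}$, and after diagonalization the spectral bound you need is $\max_{M\ne 0}|\hat\nu(M)|\le q^{-\Omega(r^2)}$. But the rank part of $\hat\nu(M)$ is, up to normalization, exactly $\Gamma_n(\rk M,r)=\Exp_{\rk A=\rk M,\,\rk B=r}\omega^{\langle A,B\rangle}$, and for $\rk M=1$ one computes $\Gamma_n(1,r)\approx q^{-r}$ (via $P_n(1,r,0)\approx q^{-r}$ in \prettyref{lem:gammaprop}\prettyref{enu:gammasum}). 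So your discrepancy bound can be no better than $q^{-O(r)}$, yielding only $Q^*\ge\Omega(r\log q)$, a quadratic loss. The determinant side cannot rescue this: even if the Carlitz-type sums gave you a perfect $q^{-\Omega(r^2)}$ bound for the $\det=a$ term, the spectral norm is dominated by the rank term at low-rank $M$, and your two terms do not cancel (they have no reason to, at $M=J_1$ the determinant character sum is tiny while the rank sum is $\approx q^{-r}$).

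The missing idea is that the dual witness must \emph{not} be supported on a single rank value. The paper uses a carefully weighted combination $\phi(0)\mathbf 1_{\rk=0}+\cdots+\phi(n)\mathbf 1_{\rk=n}$, where $\phi$ is built from the Cauchy binomial theorem (\prettyref{lem:phi}) so that $\sum_t\phi(t)\Gamma_n(s,t)=0$ for all $s\le k-\ell-m$; this algebraic orthogonality to low-degree ``hyperpolynomials'' is precisely what kills the large Fourier coefficients at low-rank $M$ and delivers the $q^{-\Omega(r^2)}$ spectral bound (see~\prettyref{eq:E-phi-spectral-norm-bound}). The paper then handles the determinant side by adding $\sum_{b\ne 0,a}\tfrac{\phi(n)}{q-1}(G_a-G_b)$, which shifts the $\ell_1$ mass from $\det\in\mathbb F\setminus\{0,a\}$ onto $\det=a$ without increasing the spectral norm (the $G_a-G_b$ matrices have tiny norm by Parseval, \prettyref{lem:DET-problem-witness}); no multiplicative-character analysis is needed. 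In short, the rank contribution is the hard part and requires the Cauchy-binomial dual object, while the determinant contribution is the easy part, which is essentially the reverse of how your proposal allocates difficulty.
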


\noindent Taking $r=n-1$ in this result settles \prettyref{thm:DET-intro}
when one of $a,b$ is zero, as desired. \prettyref{thm:RANKDET-intro}
is optimal in a strong sense: even the \emph{total }problem $\RANK_{r}^{\mathbb{F},n,n}$,
which subsumes $\RANKDET_{r,a}^{\mathbb{F},n}$, has bounded-error
classical communication complexity $O(1+r^{2}\log|\mathbb{F}|)$ by
\prettyref{thm:RANK-upperbound}. \prettyref{thm:RANKDET-intro} for
the $\RANKDET_{r,a}^{\mathbb{F},n}$ problem significantly strengthens
our main result, \prettyref{thm:RANK-lowerbound-intro}, for the matrix
rank problem $\RANK_{r,n}^{\mathbb{F},n,n}$ (in the former problem,
Alice and Bob distinguish rank $r$ from determinant $a\ne0$; in
the latter problem, they distinguish rank $r$ from rank $n$).

Theorems~\ref{thm:DET-intro} and~\ref{thm:RANKDET-intro} generalize
to multiparty communication, as we discuss below in Section~\ref{subsec:Multiparty-intro}.

\subsection{Subspace sum and intersection problems}

There are two natural ways to recast the computation of matrix rank
as a communication problem. One approach, discussed in detail above,
is to assign matrices $A$ and $B$ to Alice and Bob, respectively,
and require them to compute the rank of $A+B.$ Alternatively, one
can require Alice and Bob to compute the rank of the matrix $\begin{bmatrix}A & B\end{bmatrix}$.
This alternative approach is best described in the language of linear
subspaces: letting $S$ and $T$ stand for the column space of $A$
and $B,$ respectively, the rank of $\begin{bmatrix}A & B\end{bmatrix}$
is precisely the dimension of the linear subspace $S+T$ generated
by $S$ and $T$. Here, we may assume that the dimensions of $S$
and $T$ are known in advance because this information can be communicated
at negligible cost.

In this way, one arrives at the \emph{subspace sum problem} over a
finite field $\mathbb{F}$, where Alice receives as input an $m$-dimensional
linear subspace $S\subseteq\mathbb{F}^{n}$ and Bob receives an $\ell$-dimensional
linear subspace $T\subseteq\mathbb{F}^{n}$. The integers $m$ and
$\ell$ are part of the problem specification and are fixed in advance.
In the promise version of the subspace sum problem, the objective
is to distinguish subspace pairs with $\dim(S+T)=d_{1}$ from those
with $\dim(S+T)=d_{2},$ for distinct integers $d_{1},d_{2}$ fixed
in advance. This corresponds to the partial function given by
\[
\SUM_{d_{1},d_{2}}^{\mathbb{F},n,m,\ell}(S,T)=\begin{cases}
-1 & \text{if }\dim(S+T)=d_{1},\\
1 & \text{if }\dim(S+T)=d_{2},\\
* & \text{otherwise. }
\end{cases}
\]
The corresponding total communication problem is that of determining
whether $S+T$ has dimension at most $d$, for an integer $d$ fixed
in advance:
\[
\SUM_{d}^{\mathbb{F},n,m,\ell}(S,T)=\begin{cases}
-1 & \text{if }\dim(S+T)\leq d,\\
1 & \text{otherwise. }
\end{cases}
\]
The total problem is more challenging than the promise problem in
that $\SUM_{d_{1},d_{2}}^{\mathbb{F},n,m,\ell}$ is a restriction
of $\SUM_{d_{1}}^{\mathbb{F},n,m,\ell}$, for any integers $d_{1}<d_{2}$.
As noted by many authors, from the standpoint of communication complexity,
computing the dimension of the subspace sum $S+T$ is equivalent to
computing the dimension of the subspace intersection $S\cap T.$ This
equivalence follows from the identity $\dim(S+T)=\dim(S)+\dim(T)-\dim(S\cap T)$.

Despite the syntactic similarity between the matrix sum $A+B$ and
the corresponding subspace sum $S+T$, the subspace sum problem appears
to be significantly more subtle and technical. Previous work has focused
on a special case that we call \emph{subspace disjointness }(determining
whether Alice and Bob's subspaces have trivial intersection, $\{0\}$)
and the dual problem that we call \emph{vector space span} (determining
if the sum of Alice and Bob's subspaces is the entire vector space).
These two problems were studied in \cite{lovasz-saks88logrank,chu-schnitger95matrix-finite},
with an optimal lower bound of $\Omega(n^{2}\log p)$ on their deterministic
communication complexity over a field with $p$ elements. Sun and
Wang~\cite{sunwang12communication-linear} showed that the $\Omega(n^{2}\log p)$
lower bound for subspace disjointness remains valid even for randomized
and quantum communication. In follow-up work, Li, Sun, Wang, and Woodruff~\cite{LiSWW14communication-linear}
proved an $\Omega(n^{2}\log p)$ quantum lower bound for a promise
version of subspace disjointness, where Alice and Bob's inputs are
$n/2$-dimensional subspaces that either have trivial intersection
or intersect in a one-dimensional subspace. The authors of~\cite{MNSW95asymmetric-communication}
considered an asymmetric problem where Alice receives an $n$-bit
vector, Bob receives a subspace, and their objective is to determine
whether Alice's vector is contained in Bob's subspace. They showed
that in any randomized one-way protocol that solves this problem,
either Alice sends $\Omega(n)$ bits, or Bob sends $\Omega(n^{2})$
bits.

In summary, all previous lower bounds for two-way communication complexity
have focused on subspace disjointness or vector space span. The general
problem, where Alice and Bob need to distinguish between the cases
$\dim(S+T)=d_{1}$ and $\dim(S+T)=d_{2}$, is substantially harder
and has remained unsolved. The difficulty is that previous results~\cite{sunwang12communication-linear,LiSWW14communication-linear}
are based on a reduction from the matrix rank problem to subspace
disjointness, and this straightforward strategy does not produce optimal
results for the subspace sum problem with arbitrary parameters. In
this paper, we approach the subspace sum problem from first principles
and solve it completely. Our solution settles both the promise version
of subspace sum and the corresponding total version. For clarity,
we first state our result in the regime of constant error.
\begin{thm}
\label{thm:MAIN-subspace-sum-constant-error}Let $\mathbb{F}$ be
a finite field with $q=|\mathbb{F}|$ elements, and let $n,m,\ell,d,D$
be nonnegative integers with $\max\{m,\ell\}\leq d<D\leq\min\{m+\ell,n\}.$
If $m=\ell=d,$ then
\begin{align*}
 & R_{1/3}(\SUM_{d}^{\mathbb{F},n,m,\ell})=O(1).
\end{align*}
If $m,\ell,d$ are not all equal, then
\begin{align*}
 & Q_{1/3}^{*}(\SUM_{d,D}^{\mathbb{F},n,m,\ell})=\Theta((d-m+1)(d-\ell+1)\log q),\\
 & R_{1/3}(\SUM_{d}^{\mathbb{F},n,m,\ell})=\Theta((d-m+1)(d-\ell+1)\log q).
\end{align*}
\end{thm}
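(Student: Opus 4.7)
The plan is to treat three regimes separately, letting $r=d-m$, $s=d-\ell$, and assuming without loss of generality $m\le\ell$ so that $r\ge s\ge 0$. The trivial case $m=\ell=d$ (so $r=s=0$) forces $S=T$, and the problem reduces to equality testing on succinct descriptions of $d$-dimensional subspaces, which admits an $O(1)$-bit randomized protocol by standard fingerprinting.

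For the upper bound $O((r+1)(s+1)\log q)$ on $\SUM_{d}^{\mathbb{F},n,m,\ell}$, I would use a two-stage random projection followed by direct Grassmannian encoding. First, sample a uniformly random linear map $\pi_{1}\colon\mathbb{F}^{n}\to\mathbb{F}^{d+1}$ from shared randomness; with probability $1-O(q^{-1})$ it is injective on $S+T$ in both promise cases, so the test $\dim(\pi_{1}S+\pi_{1}T)\le d$ in $\mathbb{F}^{d+1}$ becomes whether the duals $(\pi_{1}S)^{\perp}$ and $(\pi_{1}T)^{\perp}$, of dimensions $r+1$ and $s+1$, have nontrivial intersection. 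Next, sample a further projection $\pi_{2}\colon\mathbb{F}^{d+1}\to\mathbb{F}^{r+s+2+C}$ for an absolute constant $C$; since the relevant dual sum has dimension at most $r+s+2$, this preserves the intersection-trivial question with probability $1-O(q^{-C})$. Alice then transmits her $(r+1)$-dimensional subspace $\pi_{2}((\pi_{1}S)^{\perp})\subseteq\mathbb{F}^{r+s+2+C}$ in reduced row-echelon form, using $O((r+1)(s+1)\log q)$ bits, and Bob outputs based on the intersection with his corresponding subspace. Constant-factor repetition amplifies the success probability to $2/3$.

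For the lower bound $\Omega((r+1)(s+1)\log q)$ on $Q_{1/3}^{*}(\SUM_{d,D}^{\mathbb{F},n,m,\ell})$, the plan combines two reductions. When $s=0$ (so $\ell=d$), I would reduce from the inner product problem on vectors in $\mathbb{F}^{r+1}$ by constructing Alice's subspace as a hyperplane whose normal encodes her IP input and Bob's subspace as a fixed piece augmented by a vector encoding his IP input; the standard $\Omega((r+1)\log q)$ quantum lower bound for inner product over $\mathbb{F}_{q}$ then yields the required bound. When $r\ge s\ge 1$, I would use the rank embedding in which $\dim(S+T)=\ell+\rk(A+B_{0})$ for matrices $A,B_{0}\in\mathbb{F}^{(n-\ell)\times m}$ encoding the RANK inputs, realized by taking Alice's subspace to be the column span of a block matrix with top block $A$, middle block $I_{m}$, and zero block at bottom, and Bob's to be the column span of a block matrix with top block $-B$ and bottom block $I_{\ell}$ (with $B_{0}$ the first $m$ columns of $B$); \prettyref{thm:RANK-lowerbound-intro} then yields $\Omega((s+1)^{2}\log q)$, matching the target when $r=s$. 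To close the remaining factor $(r+1)/(s+1)$ when $r>s$, I would place $k=\lceil(r+1)/(s+1)\rceil$ independent copies of this rank embedding block-diagonally inside Alice's extra $r-s$ columns, producing $\dim(S+T)=\ell+\sum_{i=1}^{k}\rk(A_{i}+B_{0,i})$, and invoke a direct-product strengthening of \prettyref{thm:RANK-lowerbound-intro} to obtain $\Omega(k(s+1)^{2}\log q)=\Omega((r+1)(s+1)\log q)$.

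The main obstacle is this direct-product step. Embedding $k$ parallel rank instances into a single SUM instance is routine, but converting $k$ independent quantum lower bounds into a joint $\Omega(k)$-factor bound against quantum protocols with arbitrary prior entanglement, in the constant-error regime, requires either a strong direct-product theorem for $\RANK_{s,s+1}$ or a self-contained discrepancy argument for the combined block-diagonal problem; I would approach this by applying the same analytic machinery that underlies \prettyref{thm:RANK-lowerbound-intro} to a tensor-product hard distribution.
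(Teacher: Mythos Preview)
Your upper-bound protocol is essentially the paper's: randomly project to dimension roughly $d+O(1)$, pass to orthogonal complements (now of dimensions $d-m+O(1)$ and $d-\ell+O(1)$), project once more, and transmit a basis for the smaller of the two resulting subspaces. That half is fine.

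The lower bound has a gap more fundamental than the one you flag. With $k$ copies embedded block-diagonally you get $\dim(S+T)=\ell+\sum_{i}\rk(A_i+B_{0,i})$, so for the ``all-small'' promise case to produce $\dim(S+T)=d$ you are forced into $\sum_i a_i=s$, where $s=d-\ell=\min\{d-m,d-\ell\}$ and $a_i$ is the smaller rank in copy~$i$. Even granting the most optimistic direct-sum bound $\Omega\bigl(\sum_i a_i^{2}\log q\bigr)$ for whichever two-point restriction of the $k$-fold problem you pick, the constraint $\sum_i a_i=s$ together with $\sum_i a_i^{2}\le(\sum_i a_i)^{2}=s^{2}$ caps you at $\Omega(s^{2}\log q)$, already achieved by $k=1$. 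For your stated choice of $k$ identical copies with the ``all-$a$ versus all-$b$'' restriction there is not even a hypothetical gain: the diagonal $(A,B)\mapsto((A,\ldots,A),(B,\ldots,B))$ shows this promise problem has \emph{exactly} the complexity of a single copy. Your approach therefore tops out at $\Omega((\min\{d-m,d-\ell\})^{2}\log q)$ and misses the target whenever $m\ne\ell$. This is structural: the bound of \prettyref{thm:RANK-lowerbound-intro} depends only on the rank parameter and is blind to the ambient matrix shape, so a reduction from $\RANK$ cannot manufacture the asymmetric factor $(d-m+1)(d-\ell+1)$. The paper makes this point explicitly in the introduction.

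The paper's lower bound proceeds entirely differently. After passing to the subspace intersection problem, it bounds the approximate trace norm of its characteristic matrix directly, using as dual witness a \emph{subspace matrix} $\overline{J}_{\psi}^{\,n,m,\ell}$ whose rows and columns are indexed by subspaces and whose $(S,T)$ entry depends only on $\dim(S\cap T)$. The technical core (Section~\ref{sec:Subspace-sum-and-intersection}) is a from-scratch determination of the singular values of every such matrix: they come out as $|\overline{\Lambda}_{\psi}^{\,n,m,\ell}(k)\cdot\overline{\Lambda}_{\psi}^{\,n,\ell,m}(k)|^{1/2}$, and it is precisely this product of two factors---one governed by $m$, the other by $\ell$---that yields the asymmetric $(d-m+1)(d-\ell+1)$. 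Applying the Fourier machinery behind \prettyref{thm:RANK-lowerbound-intro} to a tensor-product hard distribution, as you suggest, keeps you in the rank world and still sees only the square $s^{2}$.
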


Several remarks are in order. Recall that in $\mathbb{F}^{n}$, the
sum of an $m$-dimensional subspace and an $\ell$-dimensional subspace
has dimension between $\max\{m,\ell\}$ and $\min\{m+\ell,n\}$. This
justifies the above requirement that $d,D\in[\max\{m,\ell\},\min\{m+\ell,n\}]$.
\prettyref{thm:MAIN-subspace-sum-constant-error} shows that the promise
version of the subspace sum problem has the same communication complexity
as the total version, up to a constant factor. Moreover, the theorem
shows that this communication complexity is the same, up to a constant
factor, for quantum and classical communication protocols. Both the
lower and upper bounds in \prettyref{thm:MAIN-subspace-sum-constant-error}
require substantial effort. Lastly, the degenerate case $d=m=\ell$
of the subspace sum problem is easily seen to be equivalent to the
equality problem, which explains the $O(1)$ bound in the theorem
statement. 

In addition to the constant-error regime of \prettyref{thm:MAIN-subspace-sum-constant-error},
we are able to determine the communication complexity of subspace
sum for essentially all settings of the error parameter, as follows.
\begin{thm}
\label{thm:MAIN-subspace-sum-general}Let $\mathbb{F}$ be a finite
field with $q=|\mathbb{F}|$ elements, and let $n,m,\ell,d,D$ be
nonnegative integers with $\max\{m,\ell\}\leq d<D\leq\min\{m+\ell,n\}.$
If $m=\ell=d,$ then
\begin{align*}
 & R_{1/3}(\SUM_{d}^{\mathbb{F},n,m,\ell})=O(1).
\end{align*}
If $m,\ell,d$ are not all equal, then for all $\gamma\in[\frac{1}{3}q^{-(2d-m-\ell)/5},\frac{1}{3}],$
\begin{align*}
Q_{\frac{1-\gamma}{2}}^{*}(\SUM_{d,D}^{\mathbb{F},n,m,\ell}) & =\Theta((\log_{q}\lceil q^{d-m}\gamma\rceil+1)(\log_{q}\lceil q^{d-\ell}\gamma\rceil+1)\log q),\\
R_{\frac{1-\gamma}{2}}(\SUM_{d}^{\mathbb{F},n,m,\ell}) & =\Theta((\log_{q}\lceil q^{d-m}\gamma\rceil+1)(\log_{q}\lceil q^{d-\ell}\gamma\rceil+1)\log q)
\end{align*}
and moreover
\begin{equation}
R_{\frac{1}{2}-\frac{1}{16q^{2d-m-\ell+16}}}(\SUM_{d}^{\mathbb{F},n,m,\ell})\leq2.\label{eq:sum-4-bit}
\end{equation}
\end{thm}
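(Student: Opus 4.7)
The case $m=\ell=d$ reduces to testing equality of two $m$-dimensional subspaces of $\mathbb{F}^n$, which a standard shared-randomness equality protocol handles with $O(1)$ bits. Consider then the main case where $m,\ell,d$ are not all equal.

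For the upper bound, take basis matrices $M_S \in \mathbb{F}^{n\times m}$ and $M_T \in \mathbb{F}^{n\times \ell}$ and pad to $\tilde A = [M_S\mid 0]$, $\tilde B = [0\mid M_T]$ in $\mathbb{F}^{n\times(m+\ell)}$ so that $\rk(\tilde A+\tilde B) = \dim(S+T)$. The 2-bit rank protocol of \prettyref{thm:RANK-upperbound} then yields \eqref{eq:sum-4-bit} after tracking the relevant rank threshold against $2d-m-\ell$, while its $O(1+d^2\log q)$ protocol gives a crude $R_{1/3}$ upper bound. To refine this to the tight $(\log_q\lceil q^{d-m}\gamma\rceil+1)(\log_q\lceil q^{d-\ell}\gamma\rceil+1)\log q$, I would first reduce to a canonical instance via a shared random surjection $\pi:\mathbb{F}^n\to\mathbb{F}^{d+1}$ (which preserves the distinguishing event with high probability), then pass to orthogonal complements to rephrase the problem as subspace disjointness with dimensions $d-m+1$ and $d-\ell+1$ in $\mathbb{F}^{d+1}$. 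A tailored protocol for this canonical problem, using a grid of $\Theta(\log_q\lceil q^{d-m}\gamma\rceil)\times\Theta(\log_q\lceil q^{d-\ell}\gamma\rceil)$ independent $\mathbb{F}$-valued equality tests of cost $O(\log q)$ each, then matches the claimed bound for all $\gamma$ in the stated range.

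For the lower bound, I would reduce $\RANK_{r,R}$ to $\SUM_{d,D}^{\mathbb{F},n,m,\ell}$ by a direct subspace embedding. Fix a decomposition $\mathbb{F}^n = U\oplus(V+V')\oplus W$ with $V,V'\subseteq \mathbb{F}^n$ of prescribed dimensions and $\dim(V\cap V')=k$, and set $S := U\oplus\{(v,Av):v\in V\}$ (Alice's) and $T := U\oplus\{(v',-Bv'):v'\in V'\}$ (Bob's). A direct calculation shows $\dim(S+T)$ equals a fixed constant plus $\rk((A+B)|_{V\cap V'})$, so a rank problem on matrices of size $\dim W \times k$ embeds into SUM. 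After choosing $k$ appropriately, \prettyref{thm:RANK-lowerbound-intro} yields $Q^*_{(1-\gamma)/2}(\SUM_{d,D}^{\mathbb{F},n,m,\ell}) = \Omega(\min(d-m,d-\ell)^2\log q)$ with the correct $\gamma$-calibration.

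The main obstacle is closing the gap between this $\Omega(\min^2)$ bound and the target product $(d-m+1)(d-\ell+1)\log q$ in the asymmetric case $d-m\ne d-\ell$. A black-box reduction from \prettyref{thm:RANK-lowerbound-intro} provably cannot suffice, since that theorem gives only $\Omega(r^2\log q)$ regardless of matrix aspect ratio. To overcome this, I would adapt the analytic core of the proof of \prettyref{thm:RANK-lowerbound-intro} directly to the SUM communication matrix, whose rows and columns are indexed by subspaces of distinct dimensions $m$ and $\ell$. The conditional distributions corresponding to $\dim(S+T)=d$ and $\dim(S+T)=D$ inherit a bilinear product structure separately indexed by the two gaps $d-m+1$ and $d-\ell+1$, and a pattern-matrix or discrepancy argument exploiting this structure should yield the product bound. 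The $\gamma$-dependence is then tracked analogously to the error regime in \prettyref{thm:RANK-lowerbound-intro}.
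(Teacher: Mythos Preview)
Your proposal has genuine gaps in both directions.

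For the lower bound, you correctly diagnose that a black-box reduction from $\RANK_{r,R}$ yields only $\Omega(\min\{d-m,d-\ell\}^2\log q)$ and that the asymmetric case requires direct work on the subspace communication matrix. But your fix --- a ``pattern-matrix or discrepancy argument exploiting bilinear product structure'' --- is not a plan. The concrete obstacle is computing the spectral norm of the dual witness: for the rank problem this was done via the Fourier transform on the additive group $\mathbb{F}^{n\times n}$ (\prettyref{lem:E_phi-spectral}), but here the witness has rows indexed by $\Scal(\mathbb{F}^n,m)$ and columns by $\Scal(\mathbb{F}^n,\ell)$, with no group or character structure available. The paper develops a spectral theory for such \emph{subspace matrices} from scratch (Sections~\ref{subsec:Subspace-matrices}--\ref{subsec:Eigenvalues-and-eigenvectors}), adapting Knuth's analysis of combinatorial matrices to linear subspaces and, crucially, to the rectangular case $m\neq\ell$. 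The singular values are expressed through a function $\overline{\Lambda}_r^{\,n,m,\ell}(k)$ (\prettyref{thm:J-bar-singular-values}, \prettyref{lem:Lambda-r-bar}) whose algebraic and analytic properties separately track $m$ and $\ell$, and it is precisely this separation that makes the product $(d-m+1)(d-\ell+1)$ emerge in the trace-norm bound (\prettyref{thm:approx-norm-subspace}). This is a substantial new ingredient, not a light adaptation of the rank argument.

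For the $2$-bit upper bound~\eqref{eq:sum-4-bit}, your padding gives $\rk(\tilde A+\tilde B)=\dim(S+T)$, so \prettyref{thm:RANK-upperbound} at threshold $r=d$ yields error $\tfrac12-\Theta(q^{-d})$, not $\tfrac12-\Theta(q^{-(2d-m-\ell)})$. Since $2d-m-\ell=d-(m+\ell-d)$ can be arbitrarily smaller than $d$ (take $d=\max\{m,\ell\}$ with $\min\{m,\ell\}$ large), your bound is exponentially weaker than what is claimed. The paper's $2$-bit protocol (\prettyref{thm:subspace-upper-bound-large-error}) is not a rank reduction: it first applies random projections $X,Y$ to $S,T$ and then passes to orthogonal complements, producing subspaces $S',T'$ inside an ambient space of dimension $O(2d-m-\ell)$, and only then runs a random-vector membership test; this is what places the exponent at $2d-m-\ell$ rather than $d$. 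Your ``grid of equality tests'' sketch for the $\gamma$-dependent upper bound is likewise too vague to assess; the paper's protocol (again \prettyref{thm:subspace-upper-bound-large-error}) has Bob send a carefully chosen superspace $U\supseteq T'$ of controlled codimension and Alice respond with a random vector from $S'\cap U$, which is a different mechanism.
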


\prettyref{thm:MAIN-subspace-sum-general} determines the communication
complexity of subspace sum for every error probability in $[\frac{1}{3},\frac{1}{2}-\Theta(|\mathbb{F}|^{-(2d-m-\ell)/5})]$.
This is essentially the complete range of interest because by \prettyref{eq:sum-4-bit},
the communication cost drops to $2$ bits when the error probability
is set to $\frac{1}{2}-|\mathbb{F}|^{-(2d-m-\ell)-\Theta(1)}$. Analogous
to the constant-error regime, \prettyref{thm:MAIN-subspace-sum-general}
shows that the communication complexity of subspace sum for any error
in $[\frac{1}{3},\frac{1}{2}-\Theta(|\mathbb{F}|^{-(2d-m-\ell)/5})]$
is the same, up to a constant factor, for both the partial and total
versions of the problem, and for both quantum and classical communication.
Theorems~\ref{thm:MAIN-subspace-sum-constant-error} and~\ref{thm:MAIN-subspace-sum-general}
reveal a rather subtle dependence of the communication complexity
on the problem parameters $d,m,\ell$, particularly as one additionally
varies the error parameter. This explains why we were not able to
obtain these theorems via a reduction from the matrix rank problem,
as was done in previous work~\cite{sunwang12communication-linear,LiSWW14communication-linear}
in the special case of subspace disjointness. 

In view of the aforementioned identity $\dim(S+T)=\dim(S)+\dim(T)-\dim(S\cap T)$,
our results for subspace sum can be equivalently stated in terms of
subspace intersection. Formally, the \emph{subspace intersection problem}
requires Alice and Bob to distinguish subspace pairs $(S,T)$ with
$\dim(S\cap T)=d_{1}$ from those with $\dim(S\cap T)=d_{2},$ where
$S$ is an $m$-dimensional subspace given as input to Alice, $T$
is an $\ell$-dimensional subspace given to Bob, and $d_{1},d_{2}$
are distinct integers fixed in advance. This corresponds to the partial
function
\[
\INTERSECT_{d_{1},d_{2}}^{\mathbb{F},n,m,\ell}(S,T)=\begin{cases}
-1 & \text{if }\dim(S\cap T)=d_{1},\\
1 & \text{if }\dim(S\cap T)=d_{2},\\
* & \text{otherwise. }
\end{cases}
\]
The total version of the subspace intersection problem is given by
\[
\INTERSECT_{d}^{\mathbb{F},n,m,\ell}(S,T)=\begin{cases}
-1 & \text{if }\dim(S\cap T)\geq d,\\
1 & \text{otherwise,}
\end{cases}
\]
where $d$ is a problem parameter fixed in advance. \prettyref{thm:MAIN-subspace-sum-general}
fully settles the complexity of the subspace intersection problem,
as follows.
\begin{thm}
\label{thm:MAIN-subspace-intersection-general}Let $\mathbb{F}$ be
a finite field with $q=|\mathbb{F}|$ elements, and let $n,m,\ell,r,R$
be nonnegative integers with $\max\{0,m+\ell-n\}\leq r<R\leq\min\{m,\ell\}.$
If $m=\ell=R,$ then
\[
R_{1/3}(\INTERSECT_{R}^{\mathbb{F},n,m,\ell})=O(1).
\]
If $m,\ell,R$ are not all equal, then for all $\gamma\in[\frac{1}{3}q^{-(m+\ell-2R)/5},\frac{1}{3}],$
\begin{align*}
Q_{\frac{1-\gamma}{2}}^{*}(\INTERSECT_{r,R}^{\mathbb{F},n,m,\ell}) & =\Theta((\log_{q}\lceil q^{m-R}\gamma\rceil+1)(\log_{q}\lceil q^{\ell-R}\gamma\rceil+1)\log q),\\
R_{\frac{1-\gamma}{2}}(\INTERSECT_{R}^{\mathbb{F},n,m,\ell}) & =\Theta((\log_{q}\lceil q^{m-R}\gamma\rceil+1)(\log_{q}\lceil q^{\ell-R}\gamma\rceil+1)\log q)
\end{align*}
and moreover
\[
R_{\frac{1}{2}-\frac{1}{16q^{m+\ell-2R+16}}}(\INTERSECT_{R}^{\mathbb{F},n,m,\ell})\leq2.
\]
\end{thm}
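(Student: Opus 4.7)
The plan is to derive the theorem as a direct translation of Theorem~\ref{thm:MAIN-subspace-sum-general} via the classical dimension identity
\[
\dim(S+T) \;=\; \dim(S) + \dim(T) - \dim(S\cap T) \;=\; m + \ell - \dim(S\cap T).
\]
Because Alice and Bob already know the fixed dimensions $m$ and $\ell$, this identity establishes a complexity-preserving correspondence between the subspace intersection problem and the subspace sum problem on the same inputs $(S,T)$, with parameters reindexed by $d \mapsto m+\ell-D$ and $D \mapsto m+\ell-d$ (equivalently, $r \mapsto m+\ell-D$ and $R \mapsto m+\ell-d$ when going the other way).

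Concretely, I would set $d = m+\ell-R$ and $D = m+\ell-r$ and verify that the hypotheses of \prettyref{thm:MAIN-subspace-sum-general} hold: the assumption $\max\{0,m+\ell-n\} \leq r < R \leq \min\{m,\ell\}$ is exactly equivalent to $\max\{m,\ell\} \leq d < D \leq \min\{m+\ell,n\}$, and $m = \ell = R$ corresponds exactly to $m = \ell = d$. Under this correspondence, $\INTERSECT_R^{\mathbb{F},n,m,\ell}(S,T) = \SUM_d^{\mathbb{F},n,m,\ell}(S,T)$, and $\INTERSECT_{r,R}^{\mathbb{F},n,m,\ell}(S,T) = -\SUM_{d,D}^{\mathbb{F},n,m,\ell}(S,T)$; the sign flip in the partial case is harmless because a protocol for one function trivially yields one for the other at the same cost and error. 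Thus every randomized or quantum protocol for one problem yields a protocol for the other of the same cost, and vice versa.

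It remains to check that the parameter formulas transform correctly. Substituting $d = m+\ell-R$ gives $d - m = \ell - R$, $d - \ell = m - R$, and $2d - m - \ell = m + \ell - 2R$. Hence
\[
(\log_q \lceil q^{d-m}\gamma\rceil + 1)(\log_q \lceil q^{d-\ell}\gamma\rceil + 1) = (\log_q \lceil q^{\ell-R}\gamma\rceil + 1)(\log_q \lceil q^{m-R}\gamma\rceil + 1),
\]
the range $\gamma \in [\tfrac{1}{3} q^{-(2d-m-\ell)/5}, \tfrac{1}{3}]$ becomes $\gamma \in [\tfrac{1}{3} q^{-(m+\ell-2R)/5}, \tfrac{1}{3}]$, and the exponent $2d-m-\ell+16$ in the two-bit protocol becomes $m+\ell-2R+16$. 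Plugging these identities into the bounds of \prettyref{thm:MAIN-subspace-sum-general} yields exactly the bounds claimed in \prettyref{thm:MAIN-subspace-intersection-general}.

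There is no real obstacle here; the content is entirely in \prettyref{thm:MAIN-subspace-sum-general}, and the present result is a bookkeeping exercise. The only point requiring mild care is the degenerate case: when $m = \ell = R$, the identity gives $d = m = \ell$, so the $O(1)$ bound for the total problem follows from the equality-like subspace sum bound, and indeed $\dim(S\cap T) \geq m$ with $S,T$ both of dimension $m$ simply asks whether $S = T$.
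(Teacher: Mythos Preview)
Your reduction via the dimension identity is correct as a formal equivalence, and indeed the paper records exactly this in \prettyref{prop:SUM-INTERSECT-reductions}. However, your argument is circular in the context of this paper: the paper proves \prettyref{thm:MAIN-subspace-sum-general} \emph{from} \prettyref{thm:MAIN-subspace-intersection-general}, not the other way around. See the paragraph immediately following \prettyref{prop:SUM-INTERSECT-reductions}, which explicitly states that Theorems~\ref{thm:MAIN-subspace-sum-constant-error} and~\ref{thm:MAIN-subspace-sum-general} are derived assuming \prettyref{thm:MAIN-subspace-intersection-general}, and that the remainder of Section~\ref{sec:Subspace-sum-and-intersection} is devoted to proving \prettyref{thm:MAIN-subspace-intersection-general} directly.

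The paper's actual proof of \prettyref{thm:MAIN-subspace-intersection-general} is the substantive content of Section~\ref{sec:Subspace-sum-and-intersection}. The lower bounds come from \prettyref{thm:INTERSECT-cc-lower-bound-general}, which in turn rests on the spectral analysis of subspace matrices (Sections~\ref{subsec:Subspace-matrices}--\ref{subsec:Normalized-subspace-matrices}), the construction of a univariate dual object (\prettyref{lem:phi-subspace}), and the approximate trace norm bound of \prettyref{thm:approx-norm-subspace}. The upper bounds come from the random-projection protocols of \prettyref{thm:subspace-upper-bound-small-error} and \prettyref{thm:subspace-upper-bound-large-error}. Your proposal bypasses all of this by invoking a theorem whose only proof in the paper depends on the very statement you are trying to establish. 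To give a non-circular proof, you would need either to prove \prettyref{thm:MAIN-subspace-sum-general} independently (which would require essentially the same machinery) or to reproduce the direct argument for intersection.
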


A moment's reflection (see~\prettyref{prop:possible-intersections-and-sums})
shows that in $\mathbb{F}^{n},$ the intersection of an $m$-dimensional
subspace and an $\ell$-dimensional subspace is a subspace of dimension
between $\max\{0,m+\ell-n\}$ and $\min\{m,\ell\}$, hence the requirement
that $r,R\in[\max\{0,m+\ell-n\},\min\{m,\ell\}].$ Remarks analogous
to those for subspace sum apply to \prettyref{thm:MAIN-subspace-intersection-general}
as well. Specifically, \prettyref{thm:MAIN-subspace-intersection-general}
determines the $\epsilon$-error communication complexity of subspace
intersection for all $\epsilon\in[\frac{1}{3},\frac{1}{2}-\Theta(|\mathbb{F}|^{-(m+\ell-2R)/5})]$,
which is essentially the complete range of interest because the communication
cost drops to $2$ bits when the error probability is set to $\frac{1}{2}-|\mathbb{F}|^{-(m+\ell-2R)-\Theta(1)}$.
Also, \prettyref{thm:MAIN-subspace-intersection-general} shows that
in this range of interest, the $\epsilon$-error communication complexity
of subspace intersection is the same (up to a constant factor) for
both the partial and total versions of the problem, and for both quantum
and classical communication.

\subsection{\label{subsec:Multiparty-intro}Multiparty lower bounds}

Via a blackbox reduction which we will now describe, our lower bounds
for the matrix rank and determinant problems scale to multiparty communication.
We adopt the standard multiparty model known as the\emph{ number-in-hand
blackboard model}, which features $t$ communicating players and a
(possibly partial) function $F\colon X_{1}\times X_{2}\times\cdots\times X_{t}\to\{-1,1,*\}$
with $t$ arguments. An input $(x_{1},x_{2},\dots,x_{t})$ is partitioned
among the $t$ players by assigning $x_{i}$ to the $i$-th player.
The players communicate by writing on a shared blackboard. They also
have access to an unbounded supply of shared random bits, which they
can use in deciding what to do at any given point in the protocol.
In the end, they must all agree on a bit ($-1$ or $1$) that represents
the output of the protocol. The\emph{ cost }of a communication protocol
is the maximum number of bits written on the blackboard in the worst-case
execution. The\emph{ $\epsilon$-error randomized communication complexity
$R_{\epsilon}(F)$} of a given function $F$ is the least cost of
a protocol that computes $F$ with probability of error at most $\epsilon$
on every input. As usual, the standard setting of the error parameter
is $\epsilon=1/3$, which can be replaced with any other constant
in $(0,1/2)$ at the expense of a constant-factor change in communication
complexity. This model subsumes Yao's two-party randomized model as
a special case, which justifies our continued use of the notation
$R_{\epsilon}(F)$. We note that there are alternative number-in-hand
models, where instead of a shared blackboard, the parties communicate
via private channels (the \emph{message-passing model}) or through
an intermediary (the \emph{coordinator model}). The blackboard model
is more powerful than these alternative models, and lower bounds in
it are more widely applicable.

Phillips, Verbin, and Zhang~\cite{NIH12symmetrization} developed
a symmetrization technique that transforms two-party communication
lower bounds for a class of problems into multiparty lower bounds.
Our communication problems have a large symmetry group and are particularly
well-suited for the methods of~\cite{NIH12symmetrization}. Using
their technique, we prove the following. 
\begin{prop}
\label{prop:2party-multiparty}Let $(X,+)$ be a finite Abelian group,
and let $f\colon X\to\{-1,1,*\}$ be a given function. For $t\geq2,$
let $F_{t}\colon X^{t}\to\{-1,1,*\}$ be the $t$-party communication
problem given by $F_{t}(x_{1},x_{2},\ldots,x_{t})=f(x_{1}+x_{2}+\cdots+x_{t}).$
Then for all $t\geq2,$
\[
R_{1/6}(F_{t})\geq\frac{1}{12}tR_{1/3}(F_{2}).
\]
\end{prop}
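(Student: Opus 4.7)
The plan is to adapt the standard symmetrization technique. Given a $t$-party blackboard protocol $\Pi$ for $F_t$ with worst-case cost $s=R_{1/6}(F_t)$ and error $1/6$, I will construct a $2$-party protocol for $F_2$ with error at most $1/3$ and cost at most $12s/t$; rearranging will give the claimed bound. The intuition is that for a uniformly random ordered pair $(i,j)$ of distinct players in $[t]$, the combined bit-contribution of players $i$ and $j$ to the transcript has expectation at most $2s/t$, so Alice and Bob should be able to ``impersonate'' just these two players while delegating the remaining $t-2$ players to public randomness.

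More concretely, on inputs $x,y\in X$, Alice and Bob use shared randomness to sample a random ordered pair $(i,j)$ with $i\ne j$, group elements $(x_k)_{k\ne i,j}$ drawn uniformly subject to $\sum_{k\ne i,j} x_k=0$, and the internal randomness of $\Pi$. They then simulate $\Pi$ on the $t$-tuple of inputs with $x$ placed at coordinate $i$, $y$ at coordinate $j$, and $x_k$ at each remaining coordinate $k$. Since $X$ is Abelian the total sum equals $x+y$, so $\Pi$'s output matches $F_t=f(x+y)=F_2(x,y)$ with probability at least $5/6$. In the simulation, every blackboard message written by some player $k\ne i,j$ is a deterministic function of publicly shared data and can be computed locally by both parties; only player $i$'s messages must be transmitted by Alice, and only player $j$'s by Bob. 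Finally, Alice and Bob truncate the simulation (aborting with an arbitrary output) whenever the transmitted length exceeds $12s/t$ bits.

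For the analysis, let $s_k$ denote the number of bits player $k$ writes in a fixed execution of $\Pi$. By definition of worst-case cost, $\sum_{k=1}^{t}s_k\le s$ holds pointwise, so averaging over the uniformly random choice of $(i,j)$ yields
\[
\operatorname{\mathbb{E}}_{(i,j)}[s_i+s_j]\;=\;\tfrac{2}{t}\sum_{k=1}^{t}s_k\;\le\;\tfrac{2s}{t}.
\]
Markov's inequality then bounds the truncation probability (over all the shared randomness) by $1/6$ for every fixed input $(x,y)$, and a union bound with the error of $\Pi$ gives total error at most $1/6+1/6=1/3$. Consequently $R_{1/3}(F_2)\le 12s/t$, i.e., $R_{1/6}(F_t)\ge\frac{t}{12}R_{1/3}(F_2)$, as desired.

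The argument is essentially mechanical, with no deep obstacle; the only subtlety worth flagging is that $\sum_k s_k\le s$ must hold pointwise rather than just in expectation, which is why the cost $s$ is measured as worst-case total communication. With this convention, the Markov step goes through cleanly regardless of the adversarial choice of $(x,y)$, and the rest is just bookkeeping.
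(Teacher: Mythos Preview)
Your overall plan is right, but the cost analysis has a genuine gap. You assert $\operatorname{\mathbb{E}}_{(i,j)}[s_i+s_j]=\tfrac{2}{t}\sum_k s_k$ by averaging over $(i,j)$ ``in a fixed execution.'' But in your construction the input tuple depends on $(i,j)$: coordinate $i$ is pinned to Alice's input $x$ and coordinate $j$ to Bob's input $y$. Hence the execution---and with it the values $s_1,\dots,s_t$---also depends on $(i,j)$, so you cannot fix the execution and then average over $(i,j)$. This is not merely a technicality. For $t=3$ and a fixed nonzero $v\in X$, consider the blackboard protocol in which each player writes one bit indicating whether its input equals $v$, and then the lowest-indexed player holding $v$ (if any) writes $s-3$ further junk bits; the worst-case cost is $s$. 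In your simulation with $x=v$ and $y\neq v$, the single remaining coordinate is forced to equal $0\neq v$, so player $i$ is always the unique holder of $v$ and always contributes $s-2$ bits; thus $s_i+s_j=s-1$ for every choice of $(i,j)$, which is far larger than $2s/t=2s/3$ once $s$ is large.

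The paper's proof avoids this by an extra randomization that your construction is missing: rather than making the $t-2$ ``other'' coordinates sum to $0$, it adds a uniformly random tuple summing to $0$ to \emph{all} $t$ coordinates (including positions $i$ and $j$). The resulting input $\mathbf{x}$ is then uniform over tuples summing to $x+y$ for every fixed $(i,j)$, so $\mathbf{x}$ and $(i,j)$ are independent; Alice still knows every coordinate of $\mathbf{x}$ except the $j$-th and Bob every coordinate except the $i$-th, so the simulation remains feasible. With this independence in hand, the averaging step $\operatorname{\mathbb{E}}[s_i+s_j]=\tfrac{2}{t}\,\operatorname{\mathbb{E}}[\sum_k s_k]\le 2s/t$ becomes valid, and your Markov/truncation argument then goes through exactly as written.
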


In other words, as one transitions from two parties to $t$ parties,
the communication complexity scales by a factor of $\Omega(t)$. This
proposition, proved in Appendix~\ref{sec:2party-to-multiparty},
simplifies and generalizes an earlier result due to Li, Sun, Wang,
and Woodruff~\cite[Theorem~7]{LiSWW14communication-linear}. The
matrix rank problem, determinant problem, and rank versus determinant
problem all admit multiparty generalizations that fit perfectly into
the framework of~\prettyref{prop:2party-multiparty}, with the Abelian
group in all cases being the group of matrices under addition. To
begin with, the \emph{$t$-party matrix rank problem} is given by
$\RANK_{r,R}^{\mathbb{F},n,m,t}(M_{1},M_{2},\ldots,M_{t})=\Rank_{r,R}^{\mathbb{F},n,m}(\sum M_{i}),$
where the matrix function $\Rank_{r,R}^{\mathbb{F},n,m}\colon\mathbb{F}^{n\times m}\to\{-1,1,*\}$
outputs $-1$ on matrices of rank $r,$ outputs $1$ on matrices of
rank $R$, and outputs $*$ in all other cases. \prettyref{thm:RANK-lowerbound-intro}
and \prettyref{prop:2party-multiparty} imply the following.
\begin{thm}
\label{thm:multiparty-rank}For all finite fields $\mathbb{F},$ all
integers $n,m,R,r$ with $\min\{n,m\}\geq R>r\geq0,$ and all $t\geq2,$
\[
R_{1/3}(\RANK_{r,R}^{\mathbb{F},n,m,t})=\Omega(t+tr^{2}\log|\mathbb{F}|).
\]
\end{thm}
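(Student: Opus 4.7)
The plan is to derive \prettyref{thm:multiparty-rank} as a short corollary of the two-party lower bound of \prettyref{thm:RANK-lowerbound-intro} combined with the symmetrization reduction of \prettyref{prop:2party-multiparty}. The key observation is that $\RANK_{r,R}^{\mathbb{F},n,m,t}$ fits the template of \prettyref{prop:2party-multiparty} verbatim: take the finite Abelian group to be $X = \mathbb{F}^{n \times m}$ under matrix addition, and let $f = \Rank_{r,R}^{\mathbb{F},n,m}$ be the underlying partial matrix function. Then by definition $\RANK_{r,R}^{\mathbb{F},n,m,t}(M_1, \ldots, M_t) = f(M_1 + \cdots + M_t)$, and the two-party specialization $F_2$ is exactly the standard two-party rank problem $\RANK_{r,R}^{\mathbb{F},n,m}$.

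With this identification in hand, the proof is a short chain of invocations. First, \prettyref{prop:2party-multiparty} supplies
\[
R_{1/6}(\RANK_{r,R}^{\mathbb{F},n,m,t}) \;\geq\; \frac{t}{12}\, R_{1/3}(\RANK_{r,R}^{\mathbb{F},n,m}).
\]
Second, \prettyref{thm:RANK-lowerbound-intro} lower-bounds the quantum communication complexity of the two-party problem at error $1/4$ by $c(1 + r^2 \log|\mathbb{F}|)$; since classical randomized complexity dominates quantum complexity, and since $R_{1/3}$ and $R_{1/4}$ differ by only a constant factor via standard error reduction, this gives $R_{1/3}(\RANK_{r,R}^{\mathbb{F},n,m}) = \Omega(1 + r^2 \log|\mathbb{F}|)$. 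Third, the same constant-factor error reduction converts the $R_{1/6}$ lower bound on the $t$-party problem into an $R_{1/3}$ lower bound, since any $R_{1/3}$ protocol can be amplified to error $1/6$ by $O(1)$ independent repetitions and majority voting, so that $R_{1/3}(F) = \Omega(R_{1/6}(F))$ for any $F$. Chaining these three inequalities yields $R_{1/3}(\RANK_{r,R}^{\mathbb{F},n,m,t}) = \Omega\!\bigl(t(1 + r^2 \log|\mathbb{F}|)\bigr) = \Omega(t + tr^2 \log|\mathbb{F}|)$, as claimed.

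There is no genuine obstacle to overcome, since both of the heavy-lifting ingredients are already in hand: the symmetrization reduction is proved in Appendix~\ref{sec:2party-to-multiparty}, and the two-party rank lower bound is the main result of this paper. The entire proof is a matter of applying these two results in sequence and performing routine error-parameter bookkeeping; the only mild care required is to keep track of the constant-factor slack between the error thresholds $1/4$, $1/3$, and $1/6$ that appear in the two cited statements.
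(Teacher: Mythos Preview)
Your proposal is correct and matches the paper's approach exactly: the paper states that \prettyref{thm:multiparty-rank} is implied by \prettyref{thm:RANK-lowerbound-intro} and \prettyref{prop:2party-multiparty}, and you have spelled out precisely that derivation, including the routine error-parameter conversions.
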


\noindent Continuing, the \emph{$t$-party determinant problem} is
given by $\DET_{a,b}^{\mathbb{F},n,t}(M_{1},M_{2},\ldots,M_{t})=\Det_{a,b}^{\mathbb{F},n}(\sum M_{i}),$
where the matrix function $\Det_{a,b}^{\mathbb{F},n}\colon\mathbb{F}^{n\times n}\to\{-1,1,*\}$
outputs $-1$ on matrices with determinant $a,$ outputs $1$ on matrices
with determinant $b$, and outputs $*$ in all other cases. \prettyref{thm:DET-intro}
and \prettyref{prop:2party-multiparty} yield:
\begin{thm}
\label{thm:multiparty-det}For every finite field $\mathbb{F},$ every
pair of distinct elements $a,b\in\mathbb{F},$ and all integers $n\geq2$
and $t\geq2,$
\[
R_{1/3}(\DET_{a,b}^{\mathbb{F},n,t})=\Omega(tn^{2}\log|\mathbb{F}|).
\]
\end{thm}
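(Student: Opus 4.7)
The plan is to invoke the symmetrization-based reduction of \prettyref{prop:2party-multiparty} as a black box, with the Abelian group $(X,+) = (\mathbb{F}^{n\times n}, +)$ of square matrices under addition and the base function $f = \Det_{a,b}^{\mathbb{F},n}\colon \mathbb{F}^{n\times n} \to \{-1,1,*\}$ that indicates which of the values $a$ or $b$ the determinant of its argument takes. By construction, $f(M_1 + M_2 + \cdots + M_t) = \DET_{a,b}^{\mathbb{F},n,t}(M_1,M_2,\ldots,M_t)$, so the $t$-party function $F_t$ appearing in the proposition coincides with $\DET_{a,b}^{\mathbb{F},n,t}$, and its two-party version $F_2$ is exactly $\DET_{a,b}^{\mathbb{F},n}$. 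So the framework fits off-the-shelf; all that remains is to plug in the right two-party lower bound.

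With this identification in hand, \prettyref{prop:2party-multiparty} yields
\[
R_{1/6}(\DET_{a,b}^{\mathbb{F},n,t}) \;\geq\; \frac{t}{12}\, R_{1/3}(\DET_{a,b}^{\mathbb{F},n}).
\]
To lower bound the right-hand side, I would appeal to \prettyref{thm:DET-intro}, which gives $Q_{1/3}^{*}(\DET_{a,b}^{\mathbb{F},n}) = \Omega(n^2 \log|\mathbb{F}|)$ for every $n \geq 2$ and every pair of distinct $a,b \in \mathbb{F}$. Since any public-coin randomized protocol can be simulated by a quantum protocol with shared entanglement at no extra cost, this implies $R_{1/3}(\DET_{a,b}^{\mathbb{F},n}) \geq Q_{1/3}^{*}(\DET_{a,b}^{\mathbb{F},n}) = \Omega(n^2 \log|\mathbb{F}|)$. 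Combining the two bounds gives $R_{1/6}(\DET_{a,b}^{\mathbb{F},n,t}) = \Omega(tn^2 \log|\mathbb{F}|)$, and standard error reduction (run a constant number of independent copies and take the majority vote, which shows $R_{1/6}(F) \leq O(1)\cdot R_{1/3}(F)$ for any $F$) transfers the same asymptotic lower bound to $R_{1/3}(\DET_{a,b}^{\mathbb{F},n,t})$, as claimed.

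There is no genuine obstacle in this argument: the entire technical content of the theorem is packaged inside \prettyref{thm:DET-intro}, whose proof is the main two-party work of the paper, and inside \prettyref{prop:2party-multiparty}, which is established separately in the appendix. The only sanity checks are that $\mathbb{F}^{n\times n}$ is a finite Abelian group under addition (obvious) and that the multiparty problem as defined restricts to the two-party problem on two arguments (immediate from the definition $\DET_{a,b}^{\mathbb{F},n,t}(M_1,\ldots,M_t) = \Det_{a,b}^{\mathbb{F},n}(\sum M_i)$). Thus the proof reduces to a one-line composition of the two cited results together with a routine error-amplification adjustment.
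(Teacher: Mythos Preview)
Your proposal is correct and matches the paper's approach exactly: the paper simply states that \prettyref{thm:DET-intro} and \prettyref{prop:2party-multiparty} yield the result, and you have spelled out precisely this composition together with the routine error-amplification step.
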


\noindent Finally, the \emph{$t$-party rank versus determinant problem}
is given by $\RANKDET_{r,a}^{\mathbb{F},n,t}(M_{1},M_{2},\ldots,M_{t})=\Rankdet_{r,a}^{\mathbb{F},n}(\sum M_{i}),$
where the matrix function $\Rankdet_{r,a}^{\mathbb{F},n}\colon\mathbb{F}^{n\times n}\to\{-1,1,*\}$
outputs $-1$ on matrices of rank $r$, outputs $1$ on matrices with
determinant $a$, and outputs $*$ in all other cases. The following
multiparty result is immediate from \prettyref{thm:RANKDET-intro}
and \prettyref{prop:2party-multiparty}. 
\begin{thm}
\label{thm:multiparty-rankdet}For every finite field $\mathbb{F},$
every field element $a\in\mathbb{F}\setminus\{0\},$ and all integers
$n>r\geq0$ and $t\geq2,$ 
\[
R_{1/3}(\RANKDET_{r,a}^{\mathbb{F},n,t})=\Omega(t+tr^{2}\log|\mathbb{F}|).
\]
 
\end{thm}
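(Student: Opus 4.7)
The plan is to derive this directly by combining the two-party lower bound from \prettyref{thm:RANKDET-intro} with the generic two-party-to-multiparty lifting of \prettyref{prop:2party-multiparty}. Observe that $\RANKDET_{r,a}^{\mathbb{F},n,t}$ is defined on inputs $(M_1,\ldots,M_t) \in (\mathbb{F}^{n\times n})^t$ by $\RANKDET_{r,a}^{\mathbb{F},n,t}(M_1,\ldots,M_t) = \Rankdet_{r,a}^{\mathbb{F},n}(M_1+\cdots+M_t),$ which is precisely the format $F_t(x_1,\ldots,x_t) = f(x_1+\cdots+x_t)$ of \prettyref{prop:2party-multiparty} with Abelian group $X = (\mathbb{F}^{n\times n},+)$ and $f = \Rankdet_{r,a}^{\mathbb{F},n}\colon X\to\{-1,1,*\}$.

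For the base case $t=2$, the two-party problem $F_2$ is exactly the partial function $\RANKDET_{r,a}^{\mathbb{F},n}$ studied in \prettyref{thm:RANKDET-intro}. That theorem supplies, in particular, the bounded-error lower bound
\[
R_{1/3}(\RANKDET_{r,a}^{\mathbb{F},n}) \;\geq\; Q^*_{1/3}(\RANKDET_{r,a}^{\mathbb{F},n}) \;\geq\; c\,(1+r^2\log|\mathbb{F}|),
\]
using the fact that the error $\tfrac{1}{2}-\tfrac{1}{4}|\mathbb{F}|^{-r/3}$ in \prettyref{thm:RANKDET-intro} is at least $1/3$ (and that quantum lower bounds imply classical ones).

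Applying \prettyref{prop:2party-multiparty} with this choice of $f$ then gives
\[
R_{1/6}(\RANKDET_{r,a}^{\mathbb{F},n,t}) \;\geq\; \frac{1}{12}\, t\, R_{1/3}(\RANKDET_{r,a}^{\mathbb{F},n}) \;=\; \Omega\bigl(t + t r^2\log|\mathbb{F}|\bigr).
\]
Finally, standard error reduction by majority voting over $O(1)$ independent runs shows $R_{1/6}(F) = O(R_{1/3}(F))$ for every communication problem $F$, so the same lower bound (with a slightly smaller constant) holds for $R_{1/3}(\RANKDET_{r,a}^{\mathbb{F},n,t})$, completing the proof.

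There is no real obstacle here: once \prettyref{thm:RANKDET-intro} and \prettyref{prop:2party-multiparty} are in hand, the result is a one-line substitution. All the substantive work, namely the two-party lower bound for distinguishing rank $r$ from determinant $a\ne 0$ and the symmetrization-based multiparty lifting, has already been carried out in the preceding sections of the paper.
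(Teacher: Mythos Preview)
Your proof is correct and matches the paper's approach exactly: the paper states that the result ``is immediate from \prettyref{thm:RANKDET-intro} and \prettyref{prop:2party-multiparty},'' and you have filled in precisely those details. One tiny imprecision: the error $\tfrac{1}{2}-\tfrac{1}{4}|\mathbb{F}|^{-r/3}$ is only guaranteed to be at least $1/4$ (not $1/3$) when $r$ is small, but this is harmless since constant error rates in $(0,1/2)$ are interchangeable up to constant factors.
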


Theorems~\ref{thm:multiparty-rank} and \ref{thm:multiparty-rankdet}
are tight for every $r\geq1$ in a very strong sense: we give a $t$-party
protocol with error $1/3$ and communication cost $O(t(r^{2}+1)\log|\mathbb{F}|)$
for checking whether the sum of the players' matrices has rank at
most $r$ (see \prettyref{cor:RANK-upperbound-high-accuracy} in Section~\ref{subsec:communication-upper-bounds}).
\prettyref{thm:multiparty-det} is tight because the stated lower
bound matches the trivial, deterministic protocol where each party
announces their input. Since the blackboard model is more powerful
than the message-passing and coordinator models, Theorems~\ref{thm:multiparty-rank}\textendash \ref{thm:multiparty-rankdet}
are valid in those alternative models as well.

\subsection{Bilinear query complexity}

Our communication lower bounds additionally imply new results in query
complexity. We adopt the\emph{ bilinear query model} due to Rashtchian,
Woodruff, and Zhu~\cite{RWZ20bilinear-query}, which subsumes a large
number of other query models and is particularly well-suited for linear-algebraic
problems. Formally, let $f\colon\mathbb{F}^{n\times m}\to\{-1,1,*\}$
be a (possibly partial) Boolean function on matrices over a field
$\mathbb{F}$. In the bilinear query model, the query algorithm accesses
the input $X\in\mathbb{F}^{n\times m}$ in an adaptive manner with
\emph{bilinear queries}. Each such query reveals the value $u\tr Xv\in\mathbb{F}$
for a pair of vectors $u\in\mathbb{F}^{n},$ $v\in\mathbb{F}^{m}$
of the algorithm's choosing. As usual, a randomized query algorithm
is a probability distribution on deterministic query algorithms. The
\emph{cost} of a query algorithm is the maximum number of queries
in the worst-case execution. The \emph{$\epsilon$-error bilinear
query complexity of $f,$ }which we denote by $\BLQ_{\epsilon}(f),$
is the minimum cost of a bilinear query algorithm that computes $f$
with probability of error at most $\epsilon$ on every input. As always,
the algorithm may exhibit arbitrary behavior on inputs $X$ with $f(X)=*.$

Recall the matrix functions $\Rank_{r,R}^{\mathbb{F},n,m},\Det_{a,b}^{\mathbb{F},n},\Rankdet_{r,a}^{\mathbb{F},n}$
that correspond to the matrix rank problem, determinant problem, and
rank versus determinant problem and were formally defined in Section~\ref{subsec:Multiparty-intro}.
Our next result settles their bilinear query complexity for all settings
of the parameters $n,m,r,R,a,b$.
\begin{thm}
\label{thm:BLQ-lower}Let $\mathbb{F}$ be a finite field. Then:
\begin{enumerate}
\item \label{enu:query-rank}for all integers $n,m,R,r$ with $\min\{n,m\}\geq R>r\geq0,$
\[
\BLQ_{\frac{1}{2}-\frac{1}{4|\mathbb{F}|^{r/3}}}(\Rank_{r,R}^{\mathbb{F},n,m})=\Omega(r^{2}+1);
\]
\item \label{enu:query-det}for every pair of distinct elements $a,b\in\mathbb{F}$
and all integers $n\geq1,$
\[
\BLQ_{\frac{1}{2}-\frac{1}{4|\mathbb{F}|^{(n-1)/3}}}(\Det_{a,b}^{\mathbb{F},n})=\Omega(n^{2});
\]
\item \label{enu:query-rankdet}for every field element $a\in\mathbb{F}\setminus\{0\}$
and all integers $n>r\geq0,$
\[
\BLQ_{\frac{1}{2}-\frac{1}{4|\mathbb{F}|^{r/3}}}(\Rankdet_{r,a}^{\mathbb{F},n})=\Omega(r^{2}+1).
\]
 
\end{enumerate}
\end{thm}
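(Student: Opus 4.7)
The plan is to reduce each bilinear query problem to its corresponding two-party communication problem, exploiting the fact that every bilinear form splits additively across $A+B$. Concretely, suppose $\mathcal{Q}$ is a randomized bilinear query algorithm that, on input $X\in\mathbb{F}^{n\times m}$, makes $q$ queries and computes one of $\Rank_{r,R}^{\mathbb{F},n,m}$, $\Det_{a,b}^{\mathbb{F},n}$, or $\Rankdet_{r,a}^{\mathbb{F},n}$ with the error probability specified in the theorem. I will use $\mathcal{Q}$ to construct a public-coin classical two-party protocol for the corresponding $+$-form problem ($\RANK_{r,R}^{\mathbb{F},n,m}$, $\DET_{a,b}^{\mathbb{F},n}$, or $\RANKDET_{r,a}^{\mathbb{F},n}$) of communication cost $O(q\log|\mathbb{F}|)$ and the same error. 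Alice holds $A$, Bob holds $B$, and they use shared randomness to simulate $\mathcal{Q}$ in lockstep on the virtual input $X=A+B$. For each bilinear query $(u,v)\in\mathbb{F}^{n}\times\mathbb{F}^{m}$ produced by $\mathcal{Q}$, the identity
\[
u^{\top}(A+B)v \;=\; u^{\top}Av+u^{\top}Bv
\]
lets them jointly evaluate the query: Alice sends the element $u^{\top}Av\in\mathbb{F}$ to Bob in $\lceil\log|\mathbb{F}|\rceil$ bits, Bob adds his share, and announces the resulting field element $u^{\top}Xv$ back to Alice in another $\lceil\log|\mathbb{F}|\rceil$ bits. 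Both players then feed this answer into their common simulation and continue adaptively. After $q$ queries, they output whatever $\mathcal{Q}$ outputs; the total cost is $O(q\log|\mathbb{F}|)$ and the output distribution matches that of $\mathcal{Q}(A+B)$ exactly.

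Because the change of variables $X\mapsto(A,B)$ with $X=A+B$ preserves each promise, the constructed protocol solves the corresponding communication problem at the same error as $\mathcal{Q}$. Applying the two-party lower bounds in Theorems~\ref{thm:RANK-lowerbound-intro}, \ref{thm:DET-intro}, and \ref{thm:RANKDET-intro}, respectively, yields
\[
q\log|\mathbb{F}|=\Omega(1+r^{2}\log|\mathbb{F}|),\qquad q\log|\mathbb{F}|=\Omega(n^{2}\log|\mathbb{F}|),\qquad q\log|\mathbb{F}|=\Omega(1+r^{2}\log|\mathbb{F}|),
\]
and dividing through gives $\BLQ=\Omega(r^{2}+1)$, $\BLQ=\Omega(n^{2})$, and $\BLQ=\Omega(r^{2}+1)$, as claimed. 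Note that the quantum communication lower bounds in the cited theorems are, a fortiori, lower bounds against classical protocols, so the reduction is valid even though it produces a purely classical simulation. The ``$+1$'' summands in parts (i) and (iii) are trivial: at the stipulated error threshold $\tfrac12-\tfrac14|\mathbb{F}|^{-r/3}\le\tfrac14$ (since $r\ge 0$), no zero-query algorithm succeeds, because its output is independent of the input and therefore errs with probability at least $\tfrac12$ on any problem with both output values realized.

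There is no substantive obstacle in this argument: the only content is the observation that bilinear forms decompose additively across Alice's and Bob's matrices, which keeps the per-query communication at $O(\log|\mathbb{F}|)$ and lets the two-party lower bounds pass through cleanly. All the hard work has already been done in establishing Theorems~\ref{thm:RANK-lowerbound-intro}, \ref{thm:DET-intro}, and \ref{thm:RANKDET-intro}.
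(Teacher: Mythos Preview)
Your proposal is correct and follows essentially the same argument as the paper: reduce bilinear query complexity to two-party randomized communication via the identity $u^\top(A+B)v=u^\top Av+u^\top Bv$, incurring $O(\log|\mathbb{F}|)$ bits per query, then invoke Theorems~\ref{thm:RANK-lowerbound-intro}, \ref{thm:DET-intro}, \ref{thm:RANKDET-intro} together with the trivial one-query bound for nonconstant functions. Two small slips: your displayed inequality $\tfrac12-\tfrac14|\mathbb{F}|^{-r/3}\le\tfrac14$ has the wrong direction (the quantity is always in $[\tfrac14,\tfrac12)$, but that is all you need), and for part~\ref{enu:query-det} with $n=1$ you should also fall back on the trivial bound since Theorem~\ref{thm:DET-intro} assumes $n\ge 2$.
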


\begin{proof}
For a matrix function $f\colon\mathbb{F}^{n\times m}\to\{-1,1,*\}$,
consider the associated communication problem $F\colon\mathbb{F}^{n\times m}\times\mathbb{F}^{n\times m}\to\{-1,1,*\}$
given by $F(A,B)=f(A+B).$ As observed by the authors of~\cite{RWZ20bilinear-query},
a cost-$c$ randomized algorithm for $f$ in the bilinear query model
gives a randomized communication protocol for $F$ of cost $2\lceil\log|\mathbb{F}|\rceil c.$
Specifically, on input $A$ for Alice and $B$ for Bob, they simulate
the query algorithm on $A+B$. Computing a query $u\tr(A+B)v$ for
given vectors $u,v$ amounts to exchanging the field elements $u\tr Av$
and $u\tr Bv$. In summary, 
\[
R_{\epsilon}(F)\leq2\lceil\log|\mathbb{F}|\rceil\BLQ_{\epsilon}(f).
\]
Now the claimed query lower bounds in~\prettyref{enu:query-rank}\textendash \prettyref{enu:query-rankdet}are
immediate from our corresponding communication complexity results
(Theorems~\ref{thm:RANK-lowerbound-intro}, \ref{thm:DET-intro},
and~\ref{thm:RANKDET-intro}) as well as the trivial query lower
bound of $1$ for any nonconstant function.
\end{proof}
Every lower bound in \prettyref{thm:BLQ-lower} is tight, even for
computation with error probability $1/3$. To prove the tightness
of \prettyref{thm:BLQ-lower}\prettyref{enu:query-rank} and~\ref{thm:BLQ-lower}\prettyref{enu:query-rankdet},
we give a query algorithm with error probability $1/3$ and cost $O(r^{2}+1)$
for checking whether the input matrix has rank at most $r$ (see \prettyref{thm:BLQ-upper}
in Section~\prettyref{subsec:communication-upper-bounds}). Finally,
the lower bound in \prettyref{thm:BLQ-lower}\prettyref{enu:query-det}
matches the trivial, deterministic upper bound of $n^{2}$ queries.

The strongest result prior to our work was an $\Omega(n^{2})$ query
lower bound due to Rashtchian, Woodruff, and Zhu~\cite{RWZ20bilinear-query}
for distinguishing, with probability $2/3,$ matrices of rank $n-1$
from those of rank $n$. \prettyref{thm:BLQ-lower}\prettyref{enu:query-rank}
shows that the $\Omega(n^{2})$ query lower bound remains valid even
for distinguishing matrices of rank $cn$ (for any constant $c>0)$
from those of rank $n$, and even for correctness probability exponentially
close to $1/2.$ In particular, \prettyref{thm:BLQ-lower}\prettyref{enu:query-rank}
shows that $\Omega(n^{2})$ bilinear queries are needed to approximate
the rank of a matrix to any constant factor.

\subsection{Previous approaches}

A powerful tool for proving lower bounds on randomized and quantum
communication complexity is the \emph{approximate trace norm}~\cite{yao93quantum,kremer95thesis,razborov03quantum,linial07factorization-stoc,sherstov07quantum}.
In more detail, let $F\colon X\times Y\to\{-1,1\}$ be a given communication
problem, and let $M=[F(x,y)]_{x,y}$ be its characteristic matrix.
The \emph{$\delta$-approximate trace norm of $M$}, denoted $\|M\|_{\Sigma,\delta}$,
is the minimum trace norm of a real matrix $\widetilde{M}$ that approximates
$M$ entrywise within $\delta$. The approximate trace norm bound
states that
\begin{equation}
Q_{\epsilon}^{*}(F)\geq\frac{1}{2}\log\left(\frac{\|M\|_{\Sigma,2\epsilon}}{3\sqrt{|X||Y|}}\right)\label{eq:quantum-lower-intro-1}
\end{equation}
for all $\epsilon\geq0,$ making it possible to prove communication
lower bounds by analyzing the approximate trace norm of $M$. To bound
the approximate trace norm from below, it is useful to appeal to its
dual formulation as a maximization problem, whereby
\begin{equation}
\|M\|_{\Sigma,2\epsilon}\geq\frac{\langle M,\Phi\rangle-2\epsilon\|\Phi\|_{1}}{\|\Phi\|}\label{eq:approxtrace-lower-intro-total}
\end{equation}
for every nonzero real matrix $\Phi.$ As a result, proving a communication
lower bound reduces to constructing a matrix $\Phi$ whose spectral
norm and $\ell_{1}$ norm are small relative to the inner product
of $\Phi$ with the communication matrix $M$. The matrix $\Phi$
is often referred to as a \emph{dual matrix} or a \emph{witness. }The
lower bound~\prettyref{eq:quantum-lower-intro-1} remains valid for
partial functions $F\colon X\times Y\to\{-1,1,*\}$ and their characteristic
matrices $M$, in which case the dual characterization of the approximate
trace norm is given by
\begin{equation}
\|M\|_{\Sigma,2\epsilon}\geq\frac{1}{\|\Phi\|}\left(\sum_{\dom F}M_{x,y}\Phi_{x,y}-2\epsilon\|\Phi\|_{1}-\sum_{\overline{\dom F}}|\Phi_{x,y}|\right)\label{eq:approxtrace-lower-partial}
\end{equation}
for all $\Phi\ne0$. In this equation, $\dom F=\{(x,y):F(x,y)\ne*\}$
denotes the domain of the partial function $F$. Comparing this dual
characterization with the original one \prettyref{eq:approxtrace-lower-intro-total}
for total functions, we notice that the inner product is now restricted
to the domain of $F,$ and there is an additional penalty term for
any weight placed by $\Phi$ outside the domain of $F.$ For more
background on the use of duality in proving communication lower bounds,
we refer the reader to the surveys~\cite{dual-survey,lee-shraibman09cc-survey}.

\subsubsection*{Main idea in \emph{\cite{sunwang12communication-linear}, \cite{LiSWW14communication-linear}}}

Constructing a good witness $\Phi$ can be very challenging. Sun and
Wang~\cite{sunwang12communication-linear} studied the \emph{nonsingularity
problem} over fields $\mathbb{F}_{p}$ of prime order $p$\emph{,}
where Alice and Bob's inputs are matrices $A,B\in\mathbb{F}_{p}^{n\times n}$,
respectively, and they are required to output $1$ if $A+B$ is nonsingular
and $-1$ otherwise. Let $M$ be the characteristic matrix of this
communication problem. To analyze the approximate trace norm of $M,$
the authors of~\cite{sunwang12communication-linear} use the witness
$\Phi=[(-1)^{n}\widehat{g}(A+B)]_{A,B},$ where $\widehat{g}$ is
the Fourier transform of the function $g\colon\mathbb{F}_{p}^{n\times n}\to\{0,1\}$
given by $g(X)=1$ if and only if $X$ is nonsingular. The calculations
in~\cite{sunwang12communication-linear} reveal the following, where
$C\geq6$ is an absolute constant:
\begin{enumerate}
\item $\left\Vert \Phi\right\Vert =1$;
\item $\langle M,\Phi\rangle=2p^{n^{2}-n}\prod_{i=1}^{n}(p^{i}-1)$;
\item $\|\Phi\|_{1}\leq Cp^{n^{2}-n}\prod_{i=1}^{n}(p^{i}-1)$.
\end{enumerate}
Using this witness $\Phi$ in~\prettyref{eq:approxtrace-lower-intro-total}
with a sufficiently small error parameter $\epsilon$, Sun and Wang
obtain $\|M\|_{\Sigma,2\epsilon}=\Omega(p^{n^{2}}p^{n(n-1)/2})$,
which in view of~\prettyref{eq:quantum-lower-intro-1} gives an $\Omega(n^{2}\log p)$
lower bound on the bounded-error communication complexity of the nonsingularity
problem.

In follow-up work, Li, Sun, Wang, and Woodruff~\cite{LiSWW14communication-linear}
studied the partial communication problem $F=\RANK_{n-1,n}^{\mathbb{F}_{p},n,n}$.
Let $M'$ denote its characteristic matrix. The authors of~\cite{LiSWW14communication-linear}
used the same witness $\Phi$ as Sun and Wang~\cite{sunwang12communication-linear}
and proved the following:
\begin{enumerate}
\item $\|\Phi\|=1$;
\item $\sum_{\dom F}M_{A,B}'\Phi_{A,B}=p^{n^{2}-n}(1+\frac{p-p^{-n+1}}{p-1})\prod_{i=1}^{n}(p^{i}-1)$;
\item $\|\Phi\|_{1}=p^{n^{2}-n}\prod_{i=0}^{n-1}(1+p^{-i})\cdot\prod_{i=1}^{n}(p^{i}-1);$
\item $\sum_{\overline{\dom F}}|\Phi_{A,B}|=\|\Phi\|_{1}-\sum_{\dom F}M_{A,B}'\Phi_{A,B}$.
\end{enumerate}
Making these substitutions in~\prettyref{eq:approxtrace-lower-partial}
and setting $\epsilon$ to a sufficiently small constant, the authors
of~\cite{LiSWW14communication-linear} obtain $\|M'\|_{\Sigma,2\epsilon}=\Omega(p^{n^{2}}p^{n(n-1)/2})$,
which along with~\prettyref{eq:quantum-lower-intro-1} results in
an $\Omega(n^{2}\log p)$ lower bound on the quantum communication
complexity of $F=\RANK_{n-1,n}^{\mathbb{F}_{p},n,n}$. We note that
we have described the work of~\cite{sunwang12communication-linear,LiSWW14communication-linear}
in the framework that we adopt in our paper, which differs somewhat
from the original presentation in~\cite{sunwang12communication-linear,LiSWW14communication-linear}.
These differences do not affect any of the ideas or bounds in question.

Unfortunately, the above analyses rely heavily on $\epsilon$ being
set to a small constant. This is because $\|\Phi\|_{1}$ is too large
compared to the inner product $\langle M,\Phi\rangle$ and the correlation
$\sum_{\dom F}M_{A,B}'\Phi_{A,B}$, which makes setting $\epsilon$
close to $1/2$ impossible. Since the authors of~\cite{LiSWW14communication-linear}
determined $\|\Phi\|_{1}$ and $\sum_{\dom F}M_{A,B}'\Phi_{A,B}$
exactly, with equality, there is no room for improved analysis and
no possibility of setting $\epsilon$ close to $1/2$ with this choice
of witness $\Phi$. This rules out the use of $\Phi$ for proving
\prettyref{thm:RANK-lowerbound-intro} even in the special case of
$\RANK_{n-1,n}^{\mathbb{F},n,n}$. 

When it comes to the general problem $\RANK_{r,n}^{\mathbb{F},n,n}$,
the witness $\Phi$ produces no meaningful results at all for any
$r\leq n-3,$ regardless of the error parameter $\epsilon$. The issue
is that the $\ell_{1}$ norm of $\Phi$ is concentrated on matrix
pairs $(A,B)$ for which $A+B$ has rank $n$ or $n-1$, whereas the
domain of $\RANK_{r,n}^{\mathbb{F},n,n}$ consists of matrix pairs
whose sum has rank $n$ or $r$. Quantitatively, the domain of $\RANK_{r,n}^{\mathbb{F},n,n}$
supports less than half of the $\ell_{1}$ norm of $\Phi,$ which
causes the lower bound in~\prettyref{eq:approxtrace-lower-partial}
to be negative for every $\epsilon$. Our attempts at simple modifications
to $\Phi$ were not successful.

\subsection{Our approach}

Our techniques depart substantially from the previous work in~\cite{sunwang12communication-linear,LiSWW14communication-linear}.
Instead of attempting to guess a good witness $\Phi$ and analyzing
its metric and analytic properties, we determine how exactly those
properties depend on the choice of a witness. In this way, we are
able to construct essentially optimal witnesses for the matrix rank,
determinant, subspace sum, and subspace intersection problems. We
first discuss the matrix rank problem, over an arbitrary finite field
$\mathbb{F}$. In this overview, we focus on the canonical case $F=\RANK_{k,n}^{\mathbb{F},n,n}$,
where Alice and Bob receive square matrices $A,B\in\mathbb{F}^{n\times n}$,
respectively, and need to decide whether $\rk(A+B)=k$ or $\rk(A+B)=n$.
This special case captures the matrix rank problem in its full generality
via straightforward reductions.

\subsubsection*{Reducing the degrees of freedom}

We will call a witness $\Phi$ \emph{symmetric} if each entry $\Phi_{A,B}$
is fully determined by the rank of $A+B$. In searching for a good
witness for the matrix rank problem, we will only consider symmetric
witnesses $\Phi$. This restriction is without loss of generality:
since $F(A,B)$ depends only on the rank of $A+B$, it is not hard
to verify that any witness for $F$ can be ``symmetrized'' without
harming the corresponding value of the approximate trace norm bound,~\prettyref{eq:approxtrace-lower-partial}.
The resulting witness matrix $\Phi$ has only $n+1$ degrees of freedom,
corresponding to every possible value of the rank of $A+B$.

Let $i\in\{0,1,\ldots,n\}$ be given. Consider the matrix whose rows
and columns are indexed by elements of $\mathbb{F}^{n\times n},$
and whose $(A,B)$ entry is defined to be $1$ if $\rk(A+B)=i$ and
zero otherwise. Normalize this matrix to have $\ell_{1}$ norm $1,$
and call the resulting matrix $E_{i}$. Then any symmetric witness
matrix is a linear combination of $E_{0},E_{1},\ldots,E_{n}$. With
this in mind, for any real function $\phi\colon\{0,1,\ldots,n\}\to\Re,$
we define
\[
E_{\phi}=\phi(0)E_{0}+\phi(1)E_{1}+\cdots+\phi(n)E_{n}.
\]
Taking $\Phi=E_{\phi}$ in the approximate trace norm bound~\prettyref{eq:approxtrace-lower-partial}
and simplifying, we arrive at the following bound for the characteristic
matrix $M$ of $F$:
\begin{align}
\|M\|_{\Sigma,2\epsilon} & \geq\frac{1}{\|E_{\phi}\|}\left(\phi(n)-\phi(k)-2\epsilon\|\phi\|_{1}-\sum_{i\notin\{k,n\}}|\phi(i)|\right).\label{eq:approxtrace-lower-phi-intro}
\end{align}
Our challenge now is to understand how $\phi$ affects the spectral
norm of $E_{\phi}$. For matrices with a large symmetry group, it
is reasonable to expect algebraic structure in the spectrum. For example,
the so-called \emph{combinatorial matrices,} studied by Knuth~\cite{knuth01discreet}
and used for communication lower bounds by Razborov~\cite{razborov03quantum},
have all eigenvalues described in terms of Hahn polynomials. We will
similarly see that the spectrum of each $E_{\phi}$ has strong algebraic
structure and is described in terms of what we call \emph{hyperpolynomials}.

By analyzing the singular values of $E_{\phi}$, we prove that
\begin{equation}
\|E_{\phi}\|=q^{-n^{2}}\max_{s=0,1,\ldots,n}\left|\sum_{t=0}^{n}\phi(t)\Gamma_{n}(s,t)\right|,\label{eq:spectral-in-terms-of-phi}
\end{equation}
where $q$ is the order of the finite field $\mathbb{F},$ and $\Gamma_{n}$
is an auxiliary function. In more detail, we define
\[
\Gamma_{n}(s,t)=\Exp_{\substack{\rk A=s\\
\rk B=t
}
}\omega^{\langle A,B\rangle},
\]
where $\omega$ is a primitive root of unity of order equal to the
characteristic of $\mathbb{F},$ with the operation $x\mapsto\omega^{x}$
for field elements $x\in\mathbb{F}$ deferred to Section~\ref{sec:Fourier}.
An exact expression for $\Gamma_{n}(n,t)$ can be obtained from the
analysis of the Fourier spectrum of the nonsingularity function in
\cite{sunwang12communication-linear}. Understanding $\Gamma_{n}(s,t)$
for general $s,t,$ however, is rather nontrivial. To this end, we
derive the representation
\[
\Gamma_{n}(s,t)=\sum_{r=0}^{n}P_{n}(s,t,r)\Gamma_{n}(n,r),
\]
where $P_{n}(s,t,r)$ is the probability that the upper-left $s\times t$
quadrant of a uniformly random nonsingular matrix of order $n$ has
rank $r$. By explicitly calculating the probabilities $P_{n}(s,t,r)$
and combining them with the closed-form expression for $\Gamma_{n}(n,r)$,
we obtain the upper bound $|\Gamma_{n}(s,t)|\leq cq^{-st/2}$ for
an absolute constant $c$. In addition to this \emph{analytic} property,
we establish the following \emph{algebraic} result:\emph{ }for $n,s$
fixed, $\Gamma_{n}(s,t)$ as a function of $t\in\{0,1,\ldots,n\}$
is a polynomial in $q^{-t}$ of degree at most $s$. These two properties
play a central role in our analysis. In what follows, we will refer
to a polynomial in $q^{-t}$ as a \emph{hyperpolynomial in $t$}.

\subsubsection*{Univariate object for the rank problem}

Since \prettyref{eq:approxtrace-lower-phi-intro} is invariant under
multiplication of $\phi$ by a positive factor, we will normalize
$\phi$ such that $\phi(n)=1$. To achieve a large value on the right-hand
side of~\prettyref{eq:approxtrace-lower-phi-intro}, we will construct
a function $\phi$ that is negative at $k,$ has $\ell_{1}$ norm
concentrated on $\{k,n\}$, and results in $E_{\phi}$ having a small
spectral norm. In view of~\prettyref{eq:spectral-in-terms-of-phi},
the spectral norm requirement amounts to a bound on $\max_{s}\left|\sum_{t=0}^{n}\phi(t)\Gamma_{n}(s,t)\right|$.
Quantitatively speaking, to obtain an asymptotically optimal lower
bound for the matrix rank problem, we need $\phi$ to satisfy the
following constraints:
\begin{enumerate}
\item $\phi(n)=1;$
\item $\phi(k)<0;$
\item $\sum_{i\notin\{k,n\}}|\phi(i)|=q^{-\Omega(k)};$
\item $|\sum_{t=0}^{n}\phi(t)\Gamma_{n}(s,t)|=q^{-\Omega(k^{2})}$ for every
$s\in\{0,1,\ldots,n\}$.
\end{enumerate}
The last requirement states that $\phi$ needs to be almost orthogonal
to each $\Gamma_{n}(s,t)$, viewed as a function of $t$ with fixed
$s$. Recall from our earlier discussion that for $s$ and $n$ fixed,
$\Gamma_{n}(s,t)$ is a hyperpolynomial of low degree, namely, a polynomial
in $q^{-t}$ of degree at most $s$. To achieve orthogonality to hyperpolynomials
of low degree, we leverage the \emph{Cauchy binomial theorem}~\cite[eqn.~(1.87)]{enumerative-comb86stanley},
which implies that
\begin{equation}
\sum_{t=0}^{n}\gbinom[q]{n}{t}(-1)^{t}q^{\binom{t}{2}}g(q^{-t})=0\label{eq:cauchy-binom-intro}
\end{equation}
for every polynomial $g$ of degree less than $n$. In particular,
defining $\phi(t)=\gbinom nt(-1)^{t}q^{\binom{t}{2}}$ for $t=0,1,\ldots,n$
ensures that $\phi$ is exactly orthogonal to each hyperpolynomial
$\Gamma_{n}(s,t)$ for $s<n$. Unfortunately, this choice of $\phi$
does not satisfy our constraint on the distribution of the $\ell_{1}$
norm because most of it would be concentrated on the values $\phi(t)$
at points $t\approx n$. To overcome this difficulty, we apply a hyperpolynomial
of low degree to achieve the desired distribution of the $\ell_{1}$
norm. Specifically, we set
\[
\phi(t)=\gbinom nt(-1)^{t-n}q^{\binom{t}{2}-\binom{n}{2}}\zeta(q^{-t})
\]
for a carefully constructed  polynomial $\zeta$; the factor $(-1)^{-n}q^{-\binom{n}{2}}$
in this formula serves to normalize $\phi$ and ensure the proper
signs. As we increase the degree of $\zeta,$ we improve the distribution
of the $\ell_{1}$ norm of $\phi$ at the expense of a weaker orthogonality
guarantee, for now $\phi$ is orthogonal only to hyperpolynomials
of degree less than $n-\deg\zeta$. With an appropriate choice of
$\zeta,$ we are able to ensure all four desiderata (i)\textendash (iv)
for the univariate function $\phi$. The most technical part of the
analysis is the upper bound in~(iv). For $s$ small, our construction
guarantees (iv) as a consequence of the Cauchy binomial theorem, with
$\sum_{t=0}^{n}\phi(t)\Gamma_{n}(s,t)=0.$ For $s$ large, we use
the pointwise bounds for $\phi$ and $\Gamma_{n}$ and show that $\sum_{t=0}^{n}|\phi(t)|\,|\Gamma_{n}(s,t)|=q^{-\Omega(k^{2})}.$ 

By combining equations~\prettyref{eq:approxtrace-lower-phi-intro}
and~\prettyref{eq:spectral-in-terms-of-phi} with the properties
(i)\textendash (iv) of the univariate function $\phi$, we derive
the following bound on the approximate trace norm: $\|M\|_{\Sigma,2\epsilon}\geq(1-2\epsilon-q^{-\Omega(k)})q^{n^{2}}q^{\Omega(k^{2})}.$
Applying the approximate trace norm method~\prettyref{eq:quantum-lower-intro-1},
we obtain the sought lower bound of $\Omega(k^{2}\log q)$ on the
quantum communication complexity of $F$ for error $\epsilon=\frac{1}{2}-q^{-\Theta(k)}$.
To achieve the error probability as stated in \prettyref{thm:RANK-lowerbound-intro},
we derive bounds for $\phi$ with explicit constants, which we did
not discuss in this proof sketch.

\subsubsection*{The determinant problem}

To solve the determinant problem $\DET_{a,b}^{\mathbb{F},n}$ for
all field elements $a,b$, we combine our approach to the matrix rank
problem presented above with additional Fourier-theoretic ideas. Recall
that we tackle the determinant problem from first principles, without
relying on the partial solution for nonzero $a,b$ due to Sun and
Wang~\cite{sunwang12communication-linear}. With this in mind, we
will first discuss the case of nonzero $a,b.$ Consider the function
$g_{a,b}\colon\mathbb{F}^{n\times n}\to\{-1,1,0\}$ given by 
\[
g_{a,b}(X)=\begin{cases}
-1 & \text{if }\det X=a,\\
1 & \text{if }\det X=b,\\
0 & \text{otherwise.}
\end{cases}
\]
A simple argument reveals that the Fourier coefficients of $g_{a,b}$
corresponding to singular matrices are zero, whereas those corresponding
to nonsingular matrices $M$ depend only on $\det(M).$ By applying
Parseval's identity, we obtain a strong upper bound on the absolute
value of every Fourier coefficient of $g_{a,b}$:
\[
\|\widehat{g_{a,b}}\|_{\infty}\leq\frac{1}{\sqrt{\lvert\SL(\mathbb{F},n)\rvert}},
\]
where $\SL(\mathbb{F},n)$ denotes the special linear group of order-$n$
matrices over $\mathbb{F}.$ Consider now the matrix $\Phi_{a,b}$
whose rows and columns are indexed by elements of $\mathbb{F}^{n\times n}$
and whose entries are given by $\Phi_{a,b}(A,B)=g_{a,b}(A+B)$. The
spectral norm of $\Phi_{a,b}$ is governed by the Fourier coefficients
of $g_{a,b},$ with 
\[
\|\Phi_{a,b}\|=q^{n^{2}}\|\widehat{g_{a,b}}\|_{\infty}\leq\frac{q^{n^{2}}}{\sqrt{\lvert\SL(\mathbb{F},n)\rvert}}.
\]
Observe that $\Phi_{a,b}$ is precisely the characteristic matrix
of $\DET_{a,b}^{\mathbb{F},n}$ with the $*$ entries replaced with
zeroes. Using $\Phi_{a,b}$ as a witness in the approximate trace
norm method, we immediately obtain \prettyref{thm:DET-intro} for
nonzero $a,b$.

Consider now the complementary case when one of $a,b$ is zero, say,
$a\ne0$ and $b=0$. Here, we study the rank versus determinant problem
$\RANKDET_{k,a}^{\mathbb{F},n}$, which in this case is a subproblem
of the determinant problem. Its parameters are an integer $k\in\{0,1,\ldots,n-1\}$
and a nonzero field element $a\in\mathbb{F}$. Recall that in this
problem, Alice and Bob are given matrices $A,B\in\mathbb{F}^{n\times n}$,
respectively, and are called upon to distinguish between the cases
$\rk(A+B)=k$ and $\det(A+B)=a.$ To construct a witness for $\RANKDET_{k,a}^{\mathbb{F},n}$,
we combine our solutions to the matrix rank problem and the determinant
problem for nonzero field elements. In more detail, consider the witness
$\Phi$ for the problem $\RANK_{k,n}^{\mathbb{F},n,n}$ that we sketched
above. Recall that $\Phi_{A,B}$ depends only on the rank of $A+B,$
and moreover the $\ell_{1}$ norm of $\Phi$ is concentrated on matrix
pairs $(A,B)$ with $\rk(A+B)\in\{k,n\}$. To turn $\Phi$ into a
witness for $\RANKDET_{k,a}^{\mathbb{F},n}$, we form a \emph{linear
combination} of $\Phi$ with the matrices $\Phi_{a,b}$ for all $b\in\mathbb{F}\setminus\{0,a\}$,
constructed in the previous paragraph for the determinant problem
with nonzero field elements. The coefficients in this linear combination
are chosen so as to transfer the $\ell_{1}$ weight placed by $\Phi$
on matrix pairs with $\det(A+B)\notin\{0,a\}$ to the matrix pairs
with $\det(A+B)=a$, without affecting any other entries of $\Phi$.
The resulting dual witness has low spectral norm (being the sum of
matrices with low spectral norm) and has its $\ell_{1}$ norm concentrated
on matrix pairs $(A,B)$ for which $A+B$ has rank $k$ or determinant
$a,$ ensuring strong correlation with the partial function $\RANKDET_{k,a}^{\mathbb{F},n}.$
By applying the approximate trace norm method, we obtain the claimed
communication lower bounds for $\RANKDET_{k,a}^{\mathbb{F},n}.$

\subsubsection*{Subspace sum and intersection}

We now present the main ideas in our solution to the subspace sum
and subspace intersection problems. Since these problems are equivalent,
we will discuss the intersection problem alone. As before, we work
with an arbitrary finite field $\mathbb{F},$ whose order we denote
by $q.$ Also by way of notation, recall that $m$ and $\ell$ stand
for the dimensions of Alice's subspace $S$ and Bob's subspace $T,$
respectively. For simplicity, we will assume in this overview that
the dimension $n$ of the ambient vector space satisfies $n\geq m+\ell,$
which ensures that $\dim(S\cap T)$ takes on every possible value
in $\{0,1,2,\ldots,\min\{m,\ell\}\}$ as one varies the subspaces
$S,T.$ We will focus on the canonical case of the subspace intersection
problem where Alice and Bob need to distinguish subspace pairs with
$\dim(S\cap T)=0$ from those with $\dim(S\cap T)=R,$ for an integer
$R$ with $0<R\leq\min\{m,\ell\}.$ In what follows, we let $F=\INTERSECT_{0,R}^{\mathbb{F},n,m,\ell}$
stand for this communication problem of interest. The general case
of the subspace intersection problem, which we will not discuss in
this overview, reduces to this canonical case.

As before, the challenge is to construct a dual matrix $\Phi$ that
witnesses a strong lower bound on the approximate trace norm of the
characteristic matrix $M$ of $F$. Note that the rows of $\Phi$
are indexed by $m$-dimensional subspaces, and the columns are indexed
by $\ell$-dimensional subspaces. Analogous to the matrix rank problem,
we start with the methodological observation that the symmetry of
$F$ greatly reduces the number of degrees of freedom in $\Phi.$
Specifically, $F(S,T)$ by definition depends only on $\dim(S\cap T).$
A moment's thought now shows that any dual matrix $\Phi$ for the
subspace intersection problem can be ``symmetrized'' such that its
$(S,T)$ entry depends only on $\dim(S\cap T)$, and this symmetrization
can only improve the resulting lower bound on the approximate trace
norm in~\eqref{eq:approxtrace-lower-partial}. 

For $r=0,1,\ldots,\min\{m,\ell\},$ let $J_{r}^{n,m,\ell}$ stand
for the matrix whose rows are indexed by $m$-dimensional subspaces
of $\mathbb{F}^{n},$ whose columns are indexed by $\ell$-dimensional
subspaces of $\mathbb{F}^{n}$, and whose $(S,T)$ entry is $1$ if
$\dim(S\cap T)=r$ and zero otherwise. Put another way, $J_{r}^{n,m,\ell}$
is the characteristic matrix of subspace pairs whose intersection
has dimension $r$. For an arbitrary function $\psi\colon\{0,1,\ldots,\min\{m,\ell\}\}\to\Re,$
we define 
\[
J_{\psi}^{n,m,\ell}=\sum_{r=0}^{\min\{m,\ell\}}\psi(r)J_{r}^{n,m,\ell}.
\]
We refer to this family of matrices, whose $(S,T)$ entry depends
only on $\dim(S\cap T)$, as \emph{subspace matrices. }It will also
be helpful to have notation for normalized versions of these matrices,
as follows:
\[
\overline{J}_{r}^{n,m,\ell}=\frac{1}{\|J_{r}^{n,m,\ell}\|_{1}}\,J_{r}^{n,m,\ell},\qquad\qquad\overline{J}_{\psi}^{n,m,\ell}=\sum_{r=0}^{\min\{m,\ell\}}\frac{\psi(r)}{\|J_{r}^{n,m,\ell}\|_{1}}\,J_{r}^{n,m,\ell}.
\]
In this notation, we are looking to construct a dual witness of the
form $\Phi=\overline{J}_{\psi}^{n,m,\ell}$ for some function $\psi.$
This matrix has $\min\{m,\ell\}+1$ degrees of freedom, corresponding
to every possible value that $\dim(S\cap T)$ can take. Setting $\Phi=\overline{J}_{\psi}^{n,m,\ell}$
in the approximate trace norm bound~\prettyref{eq:approxtrace-lower-partial}
and simplifying, one obtains the following bound for the characteristic
matrix $M$ of $F$:
\begin{align}
\|M\|_{\Sigma,2\epsilon} & \geq\frac{1}{\|\overline{J}_{\psi}^{n,m,\ell}\|}\left(-\psi(0)+\psi(R)-2\epsilon\|\psi\|_{1}-\sum_{i\notin\{0,R\}}|\psi(i)|\right).\label{eq:approxtrace-lower-phi-intro-1}
\end{align}
At first glance, this equation looks similar to the corresponding
equation~\prettyref{eq:approxtrace-lower-phi-intro} for the matrix
rank problem. However, there is a major difference: the spectral norm
of $E_{\phi}$ is now replaced with the spectral norm of $\overline{J}_{\psi}^{n,m,\ell}$,
and there is no reason to expect that these quantities depend on their
corresponding univariate objects $\phi$ and $\psi$ in a similar
way. Indeed, our spectral analysis of $\overline{J}_{\psi}^{n,m,\ell}$
is quite different and significantly more technical than that of $E_{\phi}.$ 

\subsubsection*{Analyzing the spectrum of subspace matrices}

Symmetric subspace matrices $J_{\psi}^{n,m,m}$ are classical objects
whose eigenvectors and eigenvalues have been studied in numerous works,
e.g.,~\cite{delsarte76schemes,eisfeld99eigenspaces,BCIM17eigenvalues-graphs,cioaba-gupta21grassmann}.
However, these previous analyses do not seem to apply to the general,
asymmetric case of interest to us, namely, that of subspace matrices
$J_{\psi}^{n,m,\ell}$ for arbitrary $m,\ell$. One way to reduce
the analysis of the spectral norm of $J_{\psi}^{n,m,\ell}$ to the
symmetric case is to express the product $J_{\psi}^{n,m,\ell}(J_{\psi}^{n,m,\ell})\tr=J_{\psi}^{n,m,\ell}J_{\psi}^{n,\ell,m}$
as the sum of symmetric subspace matrices and then apply known results
for the symmetric case. Unfortunately, multiplying these subspace
matrices leads to expressions so unwieldy and complicated that this
is clearly not the method of choice.

Instead, our analysis is inspired by a result of Knuth~\cite{knuth01discreet}
on what he called \emph{combinatorial matrices}, which we briefly
mentioned above. Specifically, Knuth investigated the eigenvalues
of symmetric matrices of order $\binom{n}{t}$ whose rows and columns
are indexed by $t$-element subsets of $\{1,2,\ldots,n\}$ and whose
$(A,B)$ entry depends only on $|A\cap B|$. To determine the eigenvectors
of a combinatorial matrix, Knuth studied certain homogeneous linear
systems with variables indexed by subsets of a fixed cardinality $s$,
and the equations themselves corresponding to sets of cardinality
$s-1$. He showed that any solution to such a system for $s\in\{1,2,\ldots,t\}$
is an eigenvector for every combinatorial matrix of order $\binom{n}{t}$.
Knuth also proved that for any given $s,$ the space of solutions
has a basis supported on the variables indexed by what he called \emph{basic
sets}. These sets have a simple combinatorial description, which the
author of~\cite{knuth01discreet} used to prove that the eigenvectors
arising from the homogeneous systems for $s=1,2,\ldots,t$, together
with the all-ones vector, form an exhaustive description of the eigenvectors
of each combinatorial matrix. Once the eigenvectors are determined,
one readily calculates their associated eigenvalues and in particular
the spectral norm.

With some effort, we are able to adapt Knuth's ideas to the context
of subspaces. Along the way, we encounter several obstacles. To begin
with, counting problems that are straightforward for sets become challenging
for subspaces, and some intuitive combinatorial principles no longer
work. For example, the inclusion-exclusion formula $\dim(S+T)=\dim(S)+\dim(T)-\dim(S\cap T)$
has no analogue for three or more subspaces. Another obstacle is that
Knuth's notion of a basic set does not seem to have a meaningful analogue
for subspaces. For this reason, we reformulate Knuth's ideas in a
purely linear-algebraic way and sidestep much of the combinatorial
machinery in~\cite{knuth01discreet}. The final hurdle is extending
Knuth's analysis to the asymmetric case. Ultimately, we are able to
determine the singular values and spectral norm of every subspace
matrix $J_{\psi}^{n,m,\ell}$ and in particular its normalized version
$\overline{J}_{\psi}^{n,m,\ell}$. We prove that
\begin{equation}
\|\overline{J}_{\psi}^{n,m,\ell}\|=\max_{s=0,1,\ldots,\min\{m,\ell\}}\left|\sum_{r=0}^{\min\{m,\ell\}}\psi(r)\overline{\Lambda}_{r}^{\,n,m,\ell}(s)\right|^{1/2}\left|\sum_{r=0}^{\min\{m,\ell\}}\psi(r)\overline{\Lambda}_{r}^{\,n,\ell,m}(s)\right|^{1/2},\label{eq:J-psi-spectral-intro}
\end{equation}
where $\overline{\Lambda}_{r}^{\,n,m,\ell}$ and $\overline{\Lambda}_{r}^{\,n,\ell,m}$
are functions with algebraic and analytic properties analogous to
those of the $\Gamma_{n}$ function in our solution to the matrix
rank problem. Specifically, we have:
\begin{enumerate}
\item for $n,m,\ell,s$ fixed, $\overline{\Lambda}_{r}^{\,n,m,\ell}(s)$
as a function of $r\in\{0,1,\ldots,\min\{m,\ell\}\}$ is a polynomial
in $q^{r}$ of degree at most $s;$
\item $|\overline{\Lambda}_{r}^{\,n,m,\ell}(s)|\leq8\binom{n}{m}_{q}^{-1}q^{-s(m-r)/2}$
for $r=0,1,\ldots,\min\{m,\ell\}$.
\end{enumerate}
By swapping the roles of $m$ and $\ell$, one obtains analogous properties
for $\overline{\Lambda}_{r}^{\,n,\ell,m}(s)$.

This spectral result gives us fine-grained control over the spectrum
of $J_{\psi}^{n,m,\ell}$ via the univariate function $\psi$. Our
construction of $\psi$ is based on the Cauchy binomial theorem and
is conceptually similar to our univariate function $\phi$ in the
matrix rank problem. In particular, we use the algebraic property~(i)
to bound the product in~\prettyref{eq:J-psi-spectral-intro} for
small $s$, and the analytic property~(ii) to bound it for large
$s$. We further ensure that the $\ell_{1}$ norm of $\psi$ is highly
concentrated on $\{0,R\},$ with $\psi(0)<0$ and $\psi(R)>0.$ This
results in a strong lower bound in~\prettyref{eq:approxtrace-lower-phi-intro-1},
which in turn leads to an optimal lower bound on the communication
complexity of $F$ by virtue of the approximate trace norm method.

\section{Preliminaries}

\subsection{General notation}

We view Boolean functions as mappings $X\to\{-1,1\}$, where $X$
is a nonempty finite set and the range elements $-1,1$ correspond
to ``true'' and ``false,'' respectively. A \emph{partial} Boolean
function is a mapping $f\colon X\to\{-1,1,*\},$ whose \emph{domain
}is defined as $\dom f=\{x\in X:f(x)\ne*\}.$ Recall that for an arbitrary
function $f\colon X\to Y,$ the restriction of $f$ to a subset $X'\subseteq X$
is defined to be the mapping $f|_{X'}\colon X'\to Y$ given by $(f|_{X'})(x)=f(x).$ 

We adopt the shorthand $[n]=\{1,2,\ldots,n\}.$ We use the letters
$p$ and $q$ throughout this manuscript to refer to a prime number
and a prime power, respectively. As usual, $\mathbb{F}_{q}$ stands
for the Galois field GF$(q),$ the $q$-element field which is unique
up to isomorphism. For a given set $X,$ the \emph{Kronecker delta}
$\delta_{x,y}$ is defined for $x,y\in X$ by 
\[
\delta_{x,y}=\begin{cases}
1 & \text{if }x=y,\\
0 & \text{otherwise.}
\end{cases}
\]
For a function $f\colon X\to\mathbb{C},$ we use the familiar norms
$\|f\|_{1}=\sum_{x\in X}|f(x)|$ and $\|f\|_{\infty}=\max_{x\in X}|f(x)|.$
Similarly, for a real or complex matrix $M,$ one defines $\|M\|_{1}=\sum|M_{i,j}|$
and $\|M\|_{\infty}=\max|M_{i,j}|$. The norms $\|v\|_{1}$ and $\|v\|_{\infty}$
for a real or complex vector $v$ are defined analogously. The Euclidean
norm is given by $\|v\|_{2}=\sqrt{\sum|v_{i}|^{2}}.$ We denote the
base-$q$ logarithm of $x$ by $\log_{q}x.$ In the special case of
the binary logarithm, we write simply $\log x$ rather than $\log_{2}x$. 

\subsection{Linear-algebraic preliminaries}

Let $\mathbb{F}$ be a given field. We denote the set of $n\times m$
matrices over $\FF$ by $\FF^{n\times m}.$ We use the standard notation
$\rk A,$ $\ker A,$ and $A\tr$ for the rank, null space, and transpose
of the matrix $A.$ As usual, the determinant of $A\in\mathbb{F}^{n\times n}$
is denoted $\det A$. The trace of a matrix $A\in\mathbb{F}^{n\times n}$
is denoted $\trace A$ and defined as the sum of the diagonal elements
of $A.$ The commutativity of the trace operator is often helpful:
$\trace(AB)=\trace(BA)$ for square matrices $A,B.$ We let $\diag(a_{1},a_{2},\ldots,a_{n})$
denote the diagonal matrix of order $n$ with diagonal entries $a_{1},a_{2},\ldots,a_{n}.$
Recall that $I_{n}$ normally denotes the identity matrix of order
$n,$ whereas $I$ denotes the identity matrix whose order is to be
inferred from the context. We generalize the meaning of $I_{n}$ somewhat
by defining
\[
I_{n}=\diag(\underbrace{1,1,\ldots1}_{n},0,\ldots0),
\]
where the order of the matrix (and hence the number of zeroes on the
diagonal) will be clear from the context. We let $J$ and $\1$ denote
the all-ones matrix and all-ones vector, respectively, whose dimensions
will be clear from the context.
\begin{fact}
\label{fact:rank-of-product}For square matrices $A,B$ of order $n$
over a given field $\mathbb{F},$
\[
\rk AB\geq\rk A+\rk B-n.
\]
\end{fact}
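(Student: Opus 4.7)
The plan is to prove this via the rank-nullity theorem applied to the composition of linear maps. View $A$ and $B$ as linear maps $\mathbb{F}^n \to \mathbb{F}^n$. The key observation is that the image of $AB$ is precisely the image of $A$ restricted to the subspace $\operatorname{im}(B) \subseteq \mathbb{F}^n$. That is, if $A|_{\operatorname{im}(B)}$ denotes the restriction of $A$ to this subspace, then $\operatorname{im}(AB) = A(\operatorname{im}(B)) = \operatorname{im}(A|_{\operatorname{im}(B)})$.

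Next, I would apply rank-nullity to the restricted map $A|_{\operatorname{im}(B)}\colon \operatorname{im}(B) \to \mathbb{F}^n$. Its domain has dimension $\rk B$, and its kernel is $\ker(A) \cap \operatorname{im}(B)$, which has dimension at most $\dim \ker(A) = n - \rk A$. Hence
\[
\rk AB \;=\; \dim \operatorname{im}(A|_{\operatorname{im}(B)}) \;=\; \dim \operatorname{im}(B) - \dim(\ker(A) \cap \operatorname{im}(B)) \;\geq\; \rk B - (n - \rk A),
\]
which rearranges to the desired inequality $\rk AB \geq \rk A + \rk B - n$.

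There is essentially no obstacle here: the entire argument rests on two standard facts, namely that composition of linear maps commutes with taking images in the natural way, and the rank-nullity theorem for a linear map with prescribed domain. The only minor point to be careful about is that the kernel bound uses $\ker(A) \cap \operatorname{im}(B) \subseteq \ker(A)$, which gives $\dim \leq n - \rk A$ regardless of how $\operatorname{im}(B)$ sits inside $\mathbb{F}^n$. I would present the proof in two or three lines without further preliminaries, since everything needed is already in the standard linear-algebra toolkit invoked elsewhere in this preliminaries subsection.
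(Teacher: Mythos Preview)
Your proof is correct and is essentially the same as the paper's: both are one-line applications of rank-nullity. The paper phrases it via the kernel inequality $\dim\ker AB\le\dim\ker A+\dim\ker B$ and then invokes rank-nullity, whereas you apply rank-nullity directly to $A|_{\operatorname{im}(B)}$; the content is identical.
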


\begin{proof}
Recall that the dimension of $\ker AB$ is at most the sum of the
dimensions of $\ker A$ and $\ker B$. By the rank-nullity theorem,
this is equivalent to the claimed inequality.
\end{proof}
For $\mathbb{F}$ a finite field or the field of real numbers, the
inner product operation on vectors and matrices is defined as usual
by $\langle x,y\rangle=\sum x_{i}y_{i}$ and $\langle A,B\rangle=\sum A_{i,j}B_{i,j}.$
For $\mathbb{F}=\mathbb{C},$ the modified definitions $\langle x,y\rangle=\sum x_{i}\overline{y_{i}}$
and $\langle A,B\rangle=\sum A_{i,j}\overline{B_{i,j}}$ are used
instead. For complex-valued functions $f,g\colon X\to\mathbb{C},$
we write $\langle f,g\rangle=\sum_{x\in X}f(x)\overline{g(x)}.$ Again
for $\mathbb{F}=\mathbb{C},$ the conjugate transpose of a matrix
$A=[A_{i,j}]_{i,j}$ is denoted by $A^{*}=[\overline{A_{j,i}}]_{i,j},$
and a matrix $A\in\mathbb{C}^{n\times n}$ is called \emph{unitary
}if $A^{*}A=AA^{*}=I.$ The following useful fact relates the inner
product and trace operators.
\begin{fact}
\label{fact:inner-product-vs-trace}Let $A,B,C,D$ be matrices of
order $n$ over $\mathbb{R}$ or a finite field. Then:
\end{fact}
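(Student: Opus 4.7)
The fact appears to be a standard list of identities relating the Frobenius inner product $\langle A,B\rangle = \sum_{i,j} A_{i,j}B_{i,j}$ to the trace operator, most likely including things like $\langle A,B\rangle = \trace(AB^{\top}) = \trace(A^{\top}B)$ together with cyclic rearrangements such as $\langle AB,CD\rangle = \langle A, CDB^{\top}\rangle = \langle B, A^{\top}CD\rangle$ or $\langle A,BCD\rangle = \langle B^{\top}AD^{\top},C\rangle$. My plan is to reduce everything to two simple ingredients and let them combine freely.

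The first ingredient is the identity $\langle A,B\rangle = \trace(AB^{\top})$, which is immediate by expanding definitions: the $(i,i)$-entry of $AB^{\top}$ is $\sum_{j}A_{i,j}B_{i,j}$, so summing over $i$ gives $\sum_{i,j}A_{i,j}B_{i,j} = \langle A,B\rangle$. Transposing inside the trace (which does not change its value, since $\trace X = \trace X^{\top}$) immediately yields $\langle A,B\rangle = \trace(B^{\top}A) = \trace(A^{\top}B) = \trace(BA^{\top})$. This takes care of any claim that the inner product equals a trace of a product in any of the four equivalent orderings.

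The second ingredient is the cyclic property of trace, $\trace(XY) = \trace(YX)$, which the authors have already noted. Combined with the first ingredient, it handles the multi-factor claims mechanically: to prove, say, $\langle AB,CD\rangle = \langle A, CDB^{\top}\rangle$, I would write $\langle AB,CD\rangle = \trace(AB(CD)^{\top}) = \trace(ABD^{\top}C^{\top}) = \trace(A(CDB^{\top})^{\top}) = \langle A, CDB^{\top}\rangle$, using cyclicity once if needed to put factors in the right order; analogous one-line derivations cover every other equivalent form. Identities that pull a factor to the other side of the inner product (e.g. $\langle AB,C\rangle = \langle B,A^{\top}C\rangle$) follow the same template, and a single invocation of the cyclic property suffices in each case.

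There is essentially no obstacle: the entire proof is a bookkeeping exercise, and my plan is simply to verify $\langle A,B\rangle = \trace(AB^{\top})$ from the definitions once, and then derive every remaining identity in the list by a single application of transposition within the trace and/or one cyclic permutation. The only care needed is to confirm that the claim is stated over $\mathbb{R}$ or a finite field rather than $\mathbb{C}$, so that the inner product is bilinear (not sesquilinear) and no complex conjugation needs to intervene in the trace formulas; the hypothesis in the statement already guarantees this.
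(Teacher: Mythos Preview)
Your proposal is correct and matches the paper's approach exactly: the paper proves item~(i), $\langle A,B\rangle=\trace(AB^{\top})=\trace(A^{\top}B)$, directly from the definition of matrix multiplication, and then derives item~(ii), $\langle A,C_{1}BC_{2}\rangle=\langle C_{1}^{\top}AC_{2}^{\top},B\rangle$, by writing $\trace(AC_{2}^{\top}B^{\top}C_{1}^{\top})=\trace(C_{1}^{\top}AC_{2}^{\top}B^{\top})$ via one cyclic permutation.
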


\begin{enumerate}
\item \label{enu:inner-product-as-trace}$\langle A,B\rangle=\trace(AB\tr)=\trace(A\tr B),$
\item \label{enu:inner-product-transfer}$\langle A,C_{1}BC_{2}\rangle=\langle C_{1}\tr AC_{2}\tr,B\rangle.$
\end{enumerate}
\begin{proof}
Item \prettyref{enu:inner-product-as-trace} is immediate from the
definition of matrix multiplication, whereas~\prettyref{enu:inner-product-transfer}
follows from~\prettyref{enu:inner-product-as-trace} and the commutativity
of the trace operator: $\langle A,C_{1}BC_{2}\rangle=\trace(AC_{2}\tr B\tr C_{1}\tr)=\trace(C_{1}\tr AC_{2}\tr B\tr)=\langle C_{1}\tr AC_{2}\tr,B\rangle.$
\end{proof}
Over any field $\mathbb{F},$ we let $e_{1},e_{2},\ldots,e_{n}$ denote
as usual the vectors of the standard basis for $\mathbb{F}^{n}.$
For any subset $S\subseteq\mathbb{F}^{n},$ recall that its span over
$\mathbb{F}$ is denoted $\Span S.$ For a linear subspace $S,$ the
symbols $\dim S$ and $S^{\perp}$ refer as usual to the dimension
of $S$ and the orthogonal complement of $S,$ respectively. For a
linear transformation $M,$ we let $M(S)=\{Mx:x\in S\}$ denote the
image of $S$ under $M$. Recall that the \emph{sum} of linear subspaces
$S$ and $T$ is defined as $S+T=\{x+y:x\in S,y\in T\}$ and is the
smallest subspace that contains both $S$ and $T.$ In expressions
involving subspaces, we adopt the convention that the union $\cup$
and intersection $\cap$ operators have higher precedence than the
subspace sum operator $+$. For a vector space $V$ and an integer
$k,$ we adopt the notation $\Scal(V,k)$ for the set of all subspaces
of $V$ of dimension $k$. For arbitrary subspaces $S,T$ in a finite-dimensional
vector space, the following identity is well-known, and we use it
extensively in our proofs without further mention:
\begin{equation}
\dim(S+T)=\dim(S)+\dim(T)-\dim(S\cap T).\label{eq:sum-and-intersection-of-S-and-T}
\end{equation}
This equation is one of the few instances when subspaces behave in
ways analogous to sets. Such instances are rare. For example, unlike
sets, general subspaces $S,T,U$ need not satisfy $S\cap(T+U)=S\cap T+S\cap U.$
The equality requires additional hypotheses, as recorded below.
\begin{fact}
\label{fact:sum-intersection-subspaces}For any linear subspaces $S,S',T$
with $S'\subseteq S,$
\[
S\cap(S'+T)=S'+S\cap T.
\]
\end{fact}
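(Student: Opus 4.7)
The statement is the classical modular (Dedekind) identity for subspaces, so my plan is a short two-inclusion argument, with the hypothesis $S' \subseteq S$ being used exactly once (in the nontrivial inclusion).

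For the inclusion $S' + S \cap T \subseteq S \cap (S' + T)$, I would take an arbitrary element of the form $s' + u$ with $s' \in S'$ and $u \in S \cap T$. Since $S' \subseteq S$, both summands lie in $S$, so their sum does too; and since $s' \in S'$ and $u \in T$, the sum lies in $S' + T$. Hence $s' + u \in S \cap (S' + T)$. This direction in fact does not use $S' \subseteq S$ on the first coordinate but uses it to conclude $s' \in S$.

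For the reverse inclusion $S \cap (S' + T) \subseteq S' + S \cap T$, I would take $x \in S \cap (S' + T)$ and write $x = s' + t$ with $s' \in S'$ and $t \in T$. The key move is to solve for $t$: $t = x - s'$. Here is where the hypothesis $S' \subseteq S$ is essential, since it forces $s' \in S$, and combined with $x \in S$ it gives $t \in S$. Thus $t \in S \cap T$, and $x = s' + t \in S' + S \cap T$.

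There is no real obstacle, as both inclusions are one-line verifications at the level of elements; the only subtlety worth flagging is precisely where the hypothesis $S' \subseteq S$ enters (to deduce $t \in S$ in the second inclusion), which is what distinguishes the modular identity from the false ``distributive'' identity $S \cap (U + T) = S \cap U + S \cap T$ for arbitrary $U$.
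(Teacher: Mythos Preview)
Your proposal is correct and follows essentially the same approach as the paper: a two-inclusion argument where the easy inclusion $S' + S\cap T \subseteq S \cap (S'+T)$ is dispatched immediately (the paper simply calls it ``clear''), and the reverse inclusion is proved by writing an element of $S \cap (S'+T)$ as $s'+t$ and using $S' \subseteq S$ to deduce $t \in S$. The arguments are virtually identical.
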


\begin{proof}
It is clear that $S'+S\cap T$ is a subspace of both $S$ and $S'+T.$
It remains to prove the opposite inclusion, $S\cap(S'+T)\subseteq S'+S\cap T.$
For this, consider an arbitrary vector $u+v\in S$ with $u\in S'$
and $v\in T.$ Then $v\in S+u=S.$ As a result, $v\in S\cap T$ and
therefore $u+v\in S'+S\cap T$ as claimed.
\end{proof}
We continue with a fact that relates the dimension of $S\cap T$ to
that of $S^{\perp}\cap T^{\perp}$.
\begin{fact}
\label{fact:Aperp-intersect-Bperp}Let $S,T\subseteq\mathbb{F}^{n}$
be subspaces over a given field $\mathbb{F}$. Then
\begin{align}
(S+T)^{\perp} & =S^{\perp}\cap T^{\perp},\label{eq:aUNIONb-perp}\\
(S\cap T)^{\perp} & =S^{\perp}+T^{\perp},\label{eq:aINTERSECTb-perp}\\
\dim(S\cap T) & =\dim(S)+\dim(T)+\dim(S^{\perp}\cap T^{\perp})-n.\label{eq:AperpBperp-AplusB}
\end{align}
\end{fact}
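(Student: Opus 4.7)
The plan is to prove the three identities in order, since each follows from the previous one together with standard facts about the standard bilinear form $\langle x,y\rangle=\sum x_iy_i$ on $\mathbb{F}^n$. Recall that this form is non-degenerate over any field, so that $\dim U+\dim U^\perp=n$ and $(U^\perp)^\perp=U$ for every subspace $U\subseteq\mathbb{F}^n$ (the inclusion $U\subseteq(U^\perp)^\perp$ is obvious, and equality follows by comparing dimensions). These facts will be used freely.

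For identity~\eqref{eq:aUNIONb-perp}, I would argue directly from the definition of orthogonal complement. A vector $v$ lies in $(S+T)^\perp$ iff $\langle v,s+t\rangle=0$ for all $s\in S$ and $t\in T$. Specializing to $t=0$ and $s=0$ shows $v\in S^\perp\cap T^\perp$; conversely, if $v\in S^\perp\cap T^\perp$, then $\langle v,s+t\rangle=\langle v,s\rangle+\langle v,t\rangle=0$, which gives the reverse inclusion.

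For identity~\eqref{eq:aINTERSECTb-perp}, I would apply \eqref{eq:aUNIONb-perp} to the subspaces $S^\perp$ and $T^\perp$ in place of $S$ and $T$, obtaining $(S^\perp+T^\perp)^\perp=(S^\perp)^\perp\cap(T^\perp)^\perp=S\cap T$. Taking orthogonal complements of both sides and using $(U^\perp)^\perp=U$ yields $S^\perp+T^\perp=(S\cap T)^\perp$.

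For the dimension identity~\eqref{eq:AperpBperp-AplusB}, I would apply the previously recorded identity \eqref{eq:sum-and-intersection-of-S-and-T} to the pair $S^\perp,T^\perp$, giving $\dim(S^\perp\cap T^\perp)=\dim S^\perp+\dim T^\perp-\dim(S^\perp+T^\perp)$. Substituting $\dim S^\perp=n-\dim S$, $\dim T^\perp=n-\dim T$, and $\dim(S^\perp+T^\perp)=\dim(S\cap T)^\perp=n-\dim(S\cap T)$ from \eqref{eq:aINTERSECTb-perp}, and rearranging, produces the claimed formula. There is no real obstacle here; the only point that deserves a brief mention is the non-degeneracy of the standard bilinear form over a finite field $\mathbb{F}$, which guarantees $\dim U+\dim U^\perp=n$ despite the possible presence of isotropic vectors (i.e., $U\cap U^\perp$ need not be trivial).
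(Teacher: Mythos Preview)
Your proof is correct and follows essentially the same approach as the paper's. The only minor variation is in~\eqref{eq:AperpBperp-AplusB}: the paper derives it directly from~\eqref{eq:aUNIONb-perp} via $\dim(S^\perp\cap T^\perp)=\dim((S+T)^\perp)=n-\dim(S+T)$ and then expands $\dim(S+T)$, whereas you route through~\eqref{eq:aINTERSECTb-perp} and the dimension identity applied to $S^\perp,T^\perp$; the two computations are equivalent rearrangements of the same facts.
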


\begin{proof}
To begin with,
\begin{align*}
S^{\perp}\cap T^{\perp} & =\{x:\langle x,y\rangle=0\text{ for all }y\in S\}\cap\{x:\langle x,y\rangle=0\text{ for all }y\in T\}\\
 & =\{x:\langle x,y\rangle=0\text{ for all }y\in S\cup T\}\\
 & =\{x:\langle x,y\rangle=0\text{ for all }y\in S+T\}\\
 & =(S+T)^{\perp},
\end{align*}
where the third step uses the linearity of inner product. This settles~\eqref{eq:aUNIONb-perp}.
Applying~\eqref{eq:aUNIONb-perp} to the orthogonal complements of
$S$ and $T$ results in $(S^{\perp}+T^{\perp})^{\perp}=S\cap T,$
which upon orthogonal complementation of both sides yields~\eqref{eq:aINTERSECTb-perp}.
Equation~\eqref{eq:AperpBperp-AplusB} is also a straightforward
consequence of~\eqref{eq:aUNIONb-perp}, as follows:
\begin{align*}
\dim(S^{\perp}\cap T^{\perp}) & =\dim((S+T)^{\perp})\\
 & =n-\dim(S+T)\\
 & =n-\dim(S)-\dim(T)+\dim(S\cap T).\qedhere
\end{align*}
\end{proof}
It is well-known that for a symmetric real matrix, any pair of eigenvectors
corresponding to distinct eigenvalues are orthogonal. For completeness,
we state this simple fact with a proof below.
\begin{fact}
\label{fact:diff-eigens-orthog}Let $M$ be a symmetric real matrix.
Let $u,v$ be eigenvectors of $M$ corresponding to different eigenvalues.
Then $\langle u,v\rangle=0.$
\end{fact}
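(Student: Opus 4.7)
The plan is to exploit the interaction between the symmetry of $M$ and the bilinearity of the inner product to express the quantity $\langle Mu, v\rangle$ in two different ways. Let $\lambda$ and $\mu$ denote the distinct eigenvalues associated with $u$ and $v$ respectively, so that $Mu=\lambda u$ and $Mv=\mu v$.

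First I would compute $\langle Mu, v\rangle$ on one hand by substituting $Mu=\lambda u$, yielding $\lambda\langle u,v\rangle$. On the other hand, I would use the symmetry of $M$ (which means $M=M^{\top}$) together with \prettyref{fact:inner-product-vs-trace}\prettyref{enu:inner-product-transfer}, or simply the fact that $\langle Mu,v\rangle=\langle u, M^{\top}v\rangle=\langle u, Mv\rangle$ for symmetric $M$, to rewrite this as $\langle u, \mu v\rangle=\mu\langle u,v\rangle$.

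Equating the two expressions gives $(\lambda-\mu)\langle u,v\rangle=0$, and since $\lambda\neq\mu$ by hypothesis, we can divide through to conclude $\langle u,v\rangle=0$. There is no real obstacle here; the only subtlety is making sure to invoke symmetry in the correct form (real symmetric, so no conjugation issues arise, and $M^{\top}=M$ lets the matrix slide freely from one argument of the inner product to the other).
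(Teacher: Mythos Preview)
Your proposal is correct and follows essentially the same approach as the paper: compute $\langle Mu,v\rangle$ two ways using the eigenvalue relations and the symmetry $M=M^{\top}$, obtain $(\lambda-\mu)\langle u,v\rangle=0$, and conclude. The paper's proof is just a more compressed one-line version of the same argument.
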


\begin{proof}
Suppose that $Mu=\alpha u$ and $Mv=\beta v$, where $\alpha\ne\beta.$
Then $(\alpha-\beta)\langle u,v\rangle=\langle\alpha u,v\rangle-\langle u,\beta v\rangle=\langle Mu,v\rangle-\langle u,Mv\rangle=0,$
where the last step uses $M=M\tr.$ This forces $\langle u,v\rangle=0,$
as claimed.
\end{proof}

\subsection{Matrix norms}

Associated with every matrix $A\in\mathbb{C}^{n\times m}$ are $\min\{n,m\}$
nonnegative reals that are called the \emph{singular values of $A$},
denoted $\sigma_{1}(A)\geq\sigma_{2}(A)\geq\cdots\geq\sigma_{\min\{n,m\}}(A)$.
Every matrix $A\in\mathbb{C}^{n\times m}$ has a \emph{singular value
decomposition} $A=U\Sigma V^{*},$ where $U$ and $V$ are unitary
matrices of order $n$ and $m,$ respectively, and $\Sigma$ is a
rectangular diagonal matrix whose diagonal entries are $\sigma_{1}(A),\sigma_{2}(A),\ldots,\sigma_{\min\{n,m\}}(A).$
In the case of real matrices $A,$ the matrices $U$ and $V$ in the
singular value decomposition can be taken to be real. An alternative
characterization of the singular values is given by
\begin{fact}
\label{fact:singular-values-eigenvalues}Let $A\in\mathbb{C}^{n\times m}$
be given, with $n\leq m$. Then the singular values of $A$ are precisely
the square roots of the eigenvalues of $AA^{*},$ counting multiplicities. 
\end{fact}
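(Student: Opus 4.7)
The plan is to deduce the claim directly from the singular value decomposition of $A$, which was stated just above the fact. Write $A = U\Sigma V^*$, where $U \in \mathbb{C}^{n\times n}$ and $V\in\mathbb{C}^{m\times m}$ are unitary and $\Sigma \in \mathbb{C}^{n\times m}$ is the rectangular diagonal matrix whose diagonal entries are $\sigma_1(A), \sigma_2(A), \ldots, \sigma_n(A)$. Taking conjugate transposes, $A^* = V\Sigma^* U^*$, so
\[
AA^* = U\Sigma V^* V \Sigma^* U^* = U (\Sigma \Sigma^*) U^*,
\]
using $V^*V = I_m$.

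Next, I would observe that because $n \leq m$, the product $\Sigma\Sigma^*$ is an $n\times n$ diagonal matrix whose $i$-th diagonal entry is $\sigma_i(A)^2$ for $i = 1, 2, \ldots, n$. This is a short direct computation from the definition of matrix multiplication applied to the rectangular diagonal matrix $\Sigma$ and its conjugate transpose $\Sigma^*$.

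Finally, the identity $AA^* = U(\Sigma\Sigma^*)U^*$ expresses $AA^*$ as unitarily similar (equivalently, similar) to the diagonal matrix $\Sigma\Sigma^*$. Similar matrices have identical characteristic polynomials and hence identical spectra with multiplicities, so the eigenvalues of $AA^*$ are exactly $\sigma_1(A)^2, \sigma_2(A)^2, \ldots, \sigma_n(A)^2$, counted with multiplicity. Taking square roots (which is unambiguous because singular values and eigenvalues of the positive semidefinite matrix $AA^*$ are all nonnegative real numbers) yields the claim. There is no genuine obstacle here; the only point that requires care is the verification that $\Sigma\Sigma^*$ is $n\times n$ diagonal with the squared singular values on the diagonal, which uses the hypothesis $n\leq m$.
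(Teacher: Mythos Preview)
Your proof is correct and is the standard argument via the singular value decomposition. The paper does not actually supply a proof of this fact; it is stated without proof as a well-known preliminary, so there is nothing to compare against.
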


The spectral norm, trace norm, and Frobenius norm of $A$ are defined
in terms of the singular values as follows:
\begin{align}
 & \|A\|=\sigma_{1}(A),\label{eq:spectral-def-sigma}\\
 & \|A\|_{\Sigma}=\sum\sigma_{i}(A),\label{eq:trace-def-sigma}\\
 & \|A\|_{\mathrm{F}}=\sqrt{\sum\sigma_{i}(A)^{2}}.\label{eq:Frobenius-def-sigma}
\end{align}
Equivalently,
\begin{align}
 & \|A\|=\max_{x:\|x\|_{2}=1}\|Ax\|_{2},\label{eq:spectral-def}\\
 & \|A\|_{\mathrm{F}}=\sqrt{\sum|A_{ij}|^{2}}.\label{eq:Frob-def}
\end{align}
These equations agree with~\prettyref{eq:spectral-def-sigma} and~\prettyref{eq:Frobenius-def-sigma}
because the Euclidean norm on vectors is invariant under unitary transformations.
\begin{fact}
\label{fact:dual-matrix-norms}For any matrices $A,B\in\mathbb{C}^{n\times m},$
\[
|\langle A,B\rangle|\leq\|A\|\;\|B\|_{\Sigma}.
\]
\end{fact}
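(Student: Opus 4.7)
The plan is to reduce to the singular value decomposition of $B$ and then exploit unitary invariance of the spectral norm together with the trivial bound on diagonal entries of a matrix by its spectral norm.

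First I would write $B = U \Sigma V^*$ where $U,V$ are unitary and $\Sigma$ is the rectangular diagonal matrix with diagonal entries $\sigma_1(B), \sigma_2(B), \ldots, \sigma_{\min\{n,m\}}(B)$. Using \prettyref{fact:inner-product-vs-trace}\prettyref{enu:inner-product-as-trace}, I would rewrite
\[
\langle A, B \rangle = \trace(A B^*) = \trace(A V \Sigma^* U^*) = \trace(U^* A V \, \Sigma^*),
\]
where the last step uses the commutativity of the trace. Setting $C = U^* A V$, this yields $|\langle A,B\rangle| = |\trace(C \Sigma^*)| = \bigl|\sum_{i} C_{ii}\, \sigma_i(B)\bigr|$, since $\Sigma^*$ is diagonal with real nonnegative entries.

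Next I would apply the triangle inequality and pull $\max_i |C_{ii}|$ out of the sum to get
\[
|\langle A, B \rangle| \leq \sum_i |C_{ii}|\, \sigma_i(B) \leq \Bigl(\max_i |C_{ii}|\Bigr) \sum_i \sigma_i(B) = \Bigl(\max_i |C_{ii}|\Bigr)\, \|B\|_\Sigma.
\]
To finish, I would note two easy facts: (a) every diagonal entry of a complex matrix is bounded in absolute value by its spectral norm, since $|C_{ii}| = |e_i^* C e_i| \leq \|e_i\|_2 \cdot \|C e_i\|_2 \leq \|C\|$ by \prettyref{eq:spectral-def}; and (b) the spectral norm is invariant under left- and right-multiplication by unitaries, so $\|C\| = \|U^* A V\| = \|A\|$ (either directly from \prettyref{eq:spectral-def} or from the fact that the singular values of $A$ and $U^* A V$ coincide). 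Combining (a) and (b) gives $\max_i |C_{ii}| \leq \|A\|$, which yields the desired inequality $|\langle A,B\rangle| \leq \|A\| \cdot \|B\|_\Sigma$.

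There is no real obstacle here; the only thing to be a bit careful about is handling the rectangular case cleanly, making sure the index $i$ in $\sum_i C_{ii}\sigma_i(B)$ correctly ranges over $1,\ldots,\min\{n,m\}$ so that $\trace(C\Sigma^*)$ is well-defined, and that the unitary invariance argument for $\|C\| = \|A\|$ goes through with $U$ and $V$ of the appropriate (possibly different) orders.
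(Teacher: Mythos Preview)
Your proposal is correct and follows essentially the same approach as the paper, which simply states that the fact ``follows directly from~\eqref{eq:spectral-def} and the singular value decomposition of~$B$''; you have written out exactly those details. One minor remark: \prettyref{fact:inner-product-vs-trace} is stated only for real matrices and finite fields, so for the complex case you should just invoke the identity $\langle A,B\rangle=\trace(AB^*)$ directly from the definition rather than cite that fact.
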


\noindent \prettyref{fact:dual-matrix-norms} follows directly from~\prettyref{eq:spectral-def}
and the singular value decomposition of~$B$. We now recall a relationship
between  the trace norm and Frobenius norm; see, e.g.,~\cite[Prop.~2.4]{sherstov07quantum}.
\begin{fact}
\label{fact:bound-on-trace-of-product} For all matrices $A$ and
$B$ of compatible dimensions, 
\begin{align*}
\|AB\|_{\Sigma}\leq\|A\|_{\mathrm{F}}\;\|B\|_{\mathrm{F}}.
\end{align*}
\end{fact}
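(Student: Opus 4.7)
The plan is to reduce the inequality to Cauchy--Schwarz for the Frobenius inner product, using the singular value decomposition of $AB$ as the key manipulation. I would begin by writing $AB = U\Sigma V^{*}$ for unitary $U, V$ and rectangular diagonal $\Sigma$ whose diagonal entries are the singular values of $AB$. Since $U^{*}U = I$ and $V^{*}V = I$, this rearranges to $\Sigma = U^{*}(AB)V = (U^{*}A)(BV)$, and hence
\[
\|AB\|_{\Sigma} = \trace(\Sigma) = \trace\bigl((U^{*}A)(BV)\bigr),
\]
where the right-hand side is a nonnegative real and so equals its own absolute value.

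Next, I would invoke \prettyref{fact:inner-product-vs-trace}\prettyref{enu:inner-product-as-trace} (with its adjoint analogue in the complex case) to rewrite the trace of a matrix product as a Frobenius inner product, obtaining $\trace((U^{*}A)(BV)) = \langle U^{*}A,\,V^{*}B^{*}\rangle$. The Cauchy--Schwarz inequality for the Frobenius inner product then gives
\[
\|AB\|_{\Sigma} \;\leq\; \|U^{*}A\|_{\mathrm{F}}\,\|V^{*}B^{*}\|_{\mathrm{F}}.
\]
To finish, I would invoke the unitary invariance of the Frobenius norm, an immediate consequence of the identity $\|X\|_{\mathrm{F}}^{2} = \trace(X^{*}X)$ and the cyclic property of the trace: $\|U^{*}A\|_{\mathrm{F}}^{2} = \trace(A^{*}UU^{*}A) = \trace(A^{*}A) = \|A\|_{\mathrm{F}}^{2}$, and analogously $\|V^{*}B^{*}\|_{\mathrm{F}} = \|B^{*}\|_{\mathrm{F}} = \|B\|_{\mathrm{F}}$. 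Combining these estimates yields the desired bound $\|AB\|_{\Sigma} \leq \|A\|_{\mathrm{F}}\|B\|_{\mathrm{F}}$.

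There is no substantive obstacle, as the inequality is a classical Hilbert--Schmidt fact; the only care needed is in carrying the transposes and adjoints correctly through the SVD manipulation. In the real setting covered by \prettyref{fact:inner-product-vs-trace}, one replaces $*$ with $\tr$ throughout and the argument goes through verbatim, using the real SVD $AB = U\Sigma V^{\tr}$ with orthogonal $U,V$.
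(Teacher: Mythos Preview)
The paper does not prove this fact; it states it as well-known with a pointer to \cite[Prop.~2.4]{sherstov07quantum}, so there is no in-paper argument to compare against.

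Your approach is the standard one and is correct in spirit, but as written it tacitly assumes $AB$ is square. If $A\in\mathbb{C}^{n\times m}$ and $B\in\mathbb{C}^{m\times k}$ with $n\ne k$, then $\Sigma=U^{*}(AB)V$ is the $n\times k$ rectangular diagonal matrix, so $\trace(\Sigma)$ is not defined, and the two matrices $U^{*}A\in\mathbb{C}^{n\times m}$ and $V^{*}B^{*}\in\mathbb{C}^{k\times m}$ have different shapes, so the Frobenius inner product $\langle U^{*}A,\,V^{*}B^{*}\rangle$ you form does not typecheck. The fix is painless: either pad with zero rows or columns to make $AB$ square (which changes neither side of the inequality), or use the thin SVD $AB=\tilde U\tilde\Sigma\tilde V^{*}$ with $\tilde\Sigma$ square and $\tilde U,\tilde V$ having orthonormal columns. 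Your manipulation then goes through verbatim, with the final step using $\|\tilde U^{*}A\|_{\mathrm F}\le\|A\|_{\mathrm F}$ and $\|\tilde V^{*}B^{*}\|_{\mathrm F}\le\|B\|_{\mathrm F}$, which hold because $\tilde U\tilde U^{*}$ and $\tilde V\tilde V^{*}$ are orthogonal projections.
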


Recall that a\emph{ sign matrix} is a real matrix with entries in
$\{-1,1\}.$ A \emph{partial} sign matrix, then, is a matrix with
entries in $\{-1,1,*\}$. We define the \emph{domain }of a partial
sign matrix $F$ by $\dom F=\{(i,j):F_{ij}\neq*\}$. The $\epsilon$\emph{-approximate
trace norm of $F$}, denoted $\|F\|_{\Sigma,\epsilon}$, is the least
trace norm of a real matrix $\widetilde{F}$ that satisfies
\begin{align}
|F_{ij}-\widetilde{F}_{ij}|\leq\epsilon & \qquad\qquad\text{if }F_{ij}\in\{-1,1\},\label{eq:partialapprox1}\\
|\widetilde{F}_{ij}|\leq1+\epsilon & \qquad\qquad\text{if }F_{ij}=*.\label{eq:partialapprox2}
\end{align}
The following lower bound on the approximate trace norm is well known~\cite{lee-shraibman09cc-survey,sherstov07quantum,sherstov11quantum-sdpt}.
For reader's convenience, we include a proof.
\begin{prop}
For any partial sign matrix $F$ and $\epsilon\geq0,$\label{prop:approxtracelower}
\begin{align*}
\|F\|_{\Sigma,\epsilon} & \geq\sup_{\Phi\neq0}\frac{1}{\|\Phi\|}\left(\sum_{(i,j)\in\dom F}F_{ij}\Phi_{ij}-\epsilon\|\Phi\|_{1}-\sum_{(i,j)\notin\dom F}|\Phi_{ij}|\right).
\end{align*}
\end{prop}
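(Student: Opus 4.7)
The plan is to derive the bound directly from the variational definition of the trace norm by pairing an optimal approximator $\widetilde{F}$ against an arbitrary dual witness $\Phi$. Fix any $\Phi\neq0$ and any $\widetilde{F}$ attaining the minimum in the definition of $\|F\|_{\Sigma,\epsilon}$, so that $\widetilde{F}$ satisfies \eqref{eq:partialapprox1} on $\dom F$ and \eqref{eq:partialapprox2} off $\dom F$. First I would invoke \prettyref{fact:dual-matrix-norms} to get
\[
\|\widetilde{F}\|_{\Sigma}\,\|\Phi\|\;\geq\;|\langle \widetilde{F},\Phi\rangle|\;\geq\;\langle \widetilde{F},\Phi\rangle.
\]
Since $\|\widetilde{F}\|_{\Sigma}=\|F\|_{\Sigma,\epsilon}$ by the choice of $\widetilde{F}$, it suffices to produce a suitable lower bound on $\langle \widetilde{F},\Phi\rangle$ in terms of $F$ and $\Phi$, and then divide by $\|\Phi\|$ and take the supremum over $\Phi$.

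The key step is splitting the inner product according to $\dom F$. On $\dom F$, write $\widetilde{F}_{ij}=F_{ij}+\delta_{ij}$ with $|\delta_{ij}|\le\epsilon$ by \eqref{eq:partialapprox1}, giving
\[
\sum_{(i,j)\in\dom F}\widetilde{F}_{ij}\Phi_{ij}\;\geq\;\sum_{(i,j)\in\dom F}F_{ij}\Phi_{ij}\;-\;\epsilon\sum_{(i,j)\in\dom F}|\Phi_{ij}|.
\]
Off $\dom F$, I would use \eqref{eq:partialapprox2} to decompose $\widetilde{F}_{ij}=g_{ij}+h_{ij}$ with $|g_{ij}|\le1$ and $|h_{ij}|\le\epsilon$, which yields
\[
\sum_{(i,j)\notin\dom F}\widetilde{F}_{ij}\Phi_{ij}\;\geq\;-\sum_{(i,j)\notin\dom F}|\Phi_{ij}|\;-\;\epsilon\sum_{(i,j)\notin\dom F}|\Phi_{ij}|.
\]

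Finally I would add the two estimates and combine the two partial sums involving $\epsilon|\Phi_{ij}|$ into the single term $\epsilon\|\Phi\|_{1}$, producing
\[
\langle\widetilde{F},\Phi\rangle\;\geq\;\sum_{(i,j)\in\dom F}F_{ij}\Phi_{ij}\;-\;\epsilon\|\Phi\|_{1}\;-\;\sum_{(i,j)\notin\dom F}|\Phi_{ij}|,
\]
and dividing by $\|\Phi\|$ gives the claim after taking the supremum over $\Phi\neq 0$. The argument is essentially mechanical, with no real obstacles; the only subtle point is the off-domain decomposition $\widetilde{F}_{ij}=g_{ij}+h_{ij}$, which is needed so that the slack from $|\widetilde{F}_{ij}|\le1+\epsilon$ (rather than $\le1$) gets absorbed cleanly into the $\epsilon\|\Phi\|_{1}$ term instead of inflating the coefficient of $\sum_{(i,j)\notin\dom F}|\Phi_{ij}|$.
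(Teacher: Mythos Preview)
Your proposal is correct and follows essentially the same route as the paper: pair an approximator $\widetilde{F}$ against an arbitrary $\Phi$, use the trace-norm/spectral-norm duality (\prettyref{fact:dual-matrix-norms}), and split the inner product over $\dom F$ and its complement. The paper handles the off-domain entries by directly bounding $|\widetilde{F}_{ij}|\le 1+\epsilon$ and absorbing the result as $(1+\epsilon)\sum_{\overline{\dom F}}|\Phi_{ij}|=\epsilon\sum_{\overline{\dom F}}|\Phi_{ij}|+\sum_{\overline{\dom F}}|\Phi_{ij}|$, which is exactly your $g_{ij}+h_{ij}$ decomposition written more tersely; the only cosmetic difference is that the paper works with an arbitrary approximator $\widetilde{F}$ rather than assuming a minimizer exists, which sidesteps the (easy) compactness argument.
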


\begin{proof}
Let $\widetilde{F}$ be a real matrix that approximates $F$ in the
sense of~\eqref{eq:partialapprox1} and~\eqref{eq:partialapprox2}.
Then for any $\Phi\ne0,$
\begin{align*}
\langle\widetilde{F},\Phi\rangle & =\sum_{\dom F}F_{ij}\Phi_{ij}+\sum_{\dom F}(\widetilde{F}_{ij}-F_{ij})\Phi_{ij}+\sum_{\overline{\dom F}}\widetilde{F}_{ij}\Phi_{ij}\\
 & \geq\sum_{\dom F}F_{ij}\Phi_{ij}-\sum_{\dom F}|\widetilde{F}_{ij}-F_{ij}|\,|\Phi_{ij}|-\sum_{\overline{\dom F}}|\widetilde{F}_{ij}|\,|\Phi_{ij}|\\
 & \geq\sum_{\dom F}F_{ij}\Phi_{ij}-\sum_{\dom F}\epsilon|\Phi_{ij}|-\sum_{\overline{\dom F}}(1+\epsilon)|\Phi_{ij}|\\
 & =\sum_{\dom F}F_{ij}\Phi_{ij}-\epsilon\|\Phi\|_{1}-\sum_{\overline{\dom F}}|\Phi_{ij}|.
\end{align*}
On the other hand, Fact~\ref{fact:dual-matrix-norms} shows that
$\langle\widetilde{F},\Phi\rangle\leq\|\widetilde{F}\|_{\Sigma}\left\Vert \Phi\right\Vert $.
Combining these two bounds for $\langle\widetilde{F},\Phi\rangle$
gives
\[
\|\widetilde{F}\|_{\Sigma}\geq\frac{1}{\|\Phi\|}\left(\sum_{\dom F}F_{ij}\Phi_{ij}-\epsilon\|\Phi\|_{1}-\sum_{\overline{\dom F}}|\Phi_{ij}|\right).
\]
Taking the supremum over $\Phi\ne0$ completes the proof.
\end{proof}

\subsection{\label{sec:Fourier}Fourier transform}

Consider a prime power $q=p^{k},$ with $p$ a prime and $k$ a positive
integer. Recall that the additive group of $\FF_{q}$ is isomorphic
to the Abelian group $\mathbb{Z}_{p}^{k}$. Fix any such isomorphism
$\psi$. Let $\omega=e^{2\pi i/p}$, a primitive $p$-th root of unity.
For $x\in\FF_{q},$ define $\omega^{x}=\omega^{x_{1}}\omega^{x_{2}}\cdots\omega^{x_{k}},$
where $(x_{1},x_{2},\ldots,x_{k})$ is the image of $x$ under $\psi.$
Then for all $x,y\in\mathbb{F}_{q},$
\begin{align}
\omega^{x+y} & =\omega^{x}\omega^{y},\label{eq:omega-aplusb}\\
\omega^{-x} & =\overline{\omega^{x}}.\label{eq:omega-inverses}
\end{align}
One further calculates $\sum_{x\in\mathbb{F}_{q}}\omega^{x}=\prod_{i=1}^{k}(1+\omega+\omega^{2}+\cdots+\omega^{p-1})=0$,
which in turn generalizes to 
\begin{equation}
\sum_{x\in\mathbb{F}_{q}}\omega^{ax}=0,\qquad\qquad a\in\mathbb{F}_{q}\setminus\{0\}\label{eq:omegasum}
\end{equation}
since $x\mapsto ax$ is a permutation on $\mathbb{F}_{q}.$ 

Let $n$ be a positive integer. For $A\in\FF_{q}^{n\times n}$, define
a corresponding character $\chi_{A}:\FF_{q}^{n\times n}\to\CC$ by
\begin{align*}
\chi_{A}(X) & =\omega^{\langle A,X\rangle}.
\end{align*}
It follows from~\prettyref{eq:omega-aplusb} that
\begin{equation}
\chi_{A}(X+Y)=\chi_{A}(X)\chi_{A}(Y),\label{eq:character-homomorphism}
\end{equation}
making $\chi_{A}$ a homomorphism of the additive group $\mathbb{F}_{q}^{n\times n}$
into the multiplicative group of $\mathbb{C}$. Using~\prettyref{eq:omega-aplusb}
and \prettyref{eq:omega-inverses}, one obtains $\langle\chi_{A},\chi_{B}\rangle=\sum_{X}\omega^{\langle A,X\rangle}\overline{\omega^{\langle B,X\rangle}}=\sum_{X}\omega^{\langle A,X\rangle-\langle B,X\rangle}=\sum_{X}\omega^{\langle A-B,X\rangle},$
which along with~\eqref{eq:omegasum} leads to
\begin{align}
\langle\chi_{A},\chi_{B}\rangle & =\begin{cases}
q^{n^{2}} & \text{{if} }A=B,\\
0 & \text{{otherwise}.}
\end{cases}\label{eq:innerchar}
\end{align}
Hence, the characters $\chi_{A}$ for $A\in\mathbb{F}_{q}^{n\times n}$
form an orthogonal basis for the complex vector space of functions
$\FF_{q}^{n\times n}\to\CC$. In particular, every function $f\colon\mathbb{F}_{q}^{n\times n}\to\mathbb{C}$
has a unique representation as a linear combination of the characters:
\begin{align}
f(X) & =\sum_{A\in\FF_{q}^{n\times n}}\widehat{f}(A)\chi_{A}(X).\label{eq:f-reverse-fourier-transform}
\end{align}
The numbers $\widehat{f}(A)$ are called the \emph{Fourier coefficients
of $f$}. They are given by 
\begin{equation}
\widehat{f}(A)=q^{-n^{2}}\langle f,\chi_{A}\rangle=\Exp_{X\in\mathbb{F}_{q}^{n\times n}}f(X)\omega^{-\langle A,X)}.\label{eq:Fourier-transform}
\end{equation}
where the first step is justified by~\eqref{eq:innerchar}, and the
second step uses~\prettyref{eq:omega-inverses}. An immediate consequence
of \prettyref{eq:innerchar} and~\eqref{eq:f-reverse-fourier-transform}
is that $\langle f,f\rangle=q^{n^{2}}\sum_{A}|\widehat{f}(A)|^{2}$.
This result is known as \emph{Parseval's identity,} and it is typically
written in the form
\begin{equation}
\Exp_{X\in\mathbb{F}_{q}^{n\times n}}[|f(X)|^{2}]=\sum_{A\in\mathbb{F}_{q}^{n\times n}}|\widehat{f}(A)|^{2}.\label{eq:parsevals}
\end{equation}

With $\widehat{f}$ viewed as a complex-valued function on $\mathbb{F}_{q}^{n\times n},$
the linear transformation that sends $f\mapsto\widehat{f}$ is called
the \emph{Fourier transform}. Its matrix representation is easy to
describe. Specifically, define
\[
H_{n}=q^{-n^{2}/2}[\omega^{\langle A,B\rangle}]_{A,B},
\]
where the row and column indices range over all matrices in $\mathbb{F}_{q}^{n\times n}.$
Analogous to~\eqref{eq:innerchar}, one shows that $H_{n}$ is unitary:
\begin{equation}
H_{n}H_{n}^{*}=H_{n}^{*}H_{n}=I.\label{eq:Hnq-unitary}
\end{equation}
Then the Fourier transform $f\mapsto\widehat{f}$, given by~\eqref{eq:Fourier-transform},
corresponds to the linear transformation $q^{-n^{2}/2}H_{n}^{*}.$
Analogously, the inverse transformation $\widehat{f}\mapsto f$ of~\eqref{eq:f-reverse-fourier-transform}
corresponds to $q^{n^{2}/2}H_{n}.$ 

The following well-known fact relates the singular values of a matrix
$[\phi(A+B)]_{A,B}$ to the Fourier spectrum of the outer function
$\phi$. We include a proof adapted from \cite{LiSWW14communication-linear}
and generalized to the case of $\FF_{q}$.
\begin{fact}[adapted from Li et al., Lemma 20]
\label{fact:spectraltofourier}  Let $\phi:\FF_{q}^{n\times n}\to\CC$
be given. Define 
\[
\Phi=[\phi(X+Y)]_{X,Y\in\mathbb{F}_{q}^{n\times n}}.
\]
Then
\[
\Phi=H_{n}DH_{n},
\]
where $D$ is the diagonal matrix given by $D_{A,A}=q^{n^{2}}\widehat{\phi}(A).$
In particular, the singular values of $\Phi$ are $q^{n^{2}}|\widehat{\phi}(A)|$
for $A\in\mathbb{F}_{q}^{n\times n}.$
\end{fact}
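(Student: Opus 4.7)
The plan is to verify the identity $\Phi = H_n D H_n$ entry by entry using the inverse Fourier transform, and then to read off the singular values by examining $\Phi\Phi^*$. Both steps are short, so this is really a matter of assembling the ingredients in the right order.

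First, I would use the inversion formula \eqref{eq:f-reverse-fourier-transform} to write
\[
\phi(X+Y)=\sum_{A\in\mathbb{F}_q^{n\times n}}\widehat{\phi}(A)\,\omega^{\langle A,X+Y\rangle}.
\]
The multiplicativity~\eqref{eq:character-homomorphism} together with the symmetry of the inner product, $\langle A,X\rangle=\langle X,A\rangle$, gives $\omega^{\langle A,X+Y\rangle}=\omega^{\langle X,A\rangle}\omega^{\langle A,Y\rangle}$, so
\[
\phi(X+Y)=\sum_{A}\bigl(q^{-n^2/2}\omega^{\langle X,A\rangle}\bigr)\bigl(q^{n^2}\widehat{\phi}(A)\bigr)\bigl(q^{-n^2/2}\omega^{\langle A,Y\rangle}\bigr).
\]
The first and third factors are precisely $(H_n)_{X,A}$ and $(H_n)_{A,Y}$, while the middle factor is $D_{A,A}$. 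Since $D$ is diagonal, the right-hand side is exactly $\sum_A (H_n)_{X,A}\,D_{A,A}\,(H_n)_{A,Y}=(H_nDH_n)_{X,Y}$, which establishes the matrix identity.

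Next, to obtain the singular values, I would compute
\[
\Phi\Phi^{*}=H_nDH_n\,H_n^{*}D^{*}H_n^{*}=H_n(DD^{*})H_n^{*},
\]
where the middle step uses the unitarity~\eqref{eq:Hnq-unitary} of $H_n$. Since $D$ is diagonal, $DD^{*}$ is diagonal with nonnegative entries $|D_{A,A}|^2=q^{2n^2}|\widehat{\phi}(A)|^2$. Thus $\Phi\Phi^{*}$ is unitarily similar to this nonnegative diagonal matrix, and its eigenvalues are $q^{2n^2}|\widehat{\phi}(A)|^2$ for $A\in\mathbb{F}_q^{n\times n}$. Invoking \prettyref{fact:singular-values-eigenvalues} (singular values are square roots of eigenvalues of $\Phi\Phi^{*}$) finishes the proof: the singular values of $\Phi$ are $q^{n^2}|\widehat{\phi}(A)|$.

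There is no real obstacle here; the only subtlety is keeping track of which of the two factors of $H_n$ carries the conjugation when converting from $H_nDH_n$ to an $U\Sigma V^{*}$-type form, which is why passing through $\Phi\Phi^{*}$ is the cleanest route. I would also remark that since $H_n$ is symmetric (as $\langle A,B\rangle=\langle B,A\rangle$), one has $H_n^{*}=\overline{H_n}$, but this identification is not actually needed for the argument above.
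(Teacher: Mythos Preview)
Your proof is correct and follows essentially the same approach as the paper: expand $\phi(X+Y)$ via the Fourier inversion formula, split the character using $\omega^{\langle A,X+Y\rangle}=\omega^{\langle X,A\rangle}\omega^{\langle A,Y\rangle}$, and recognize the result as the matrix product $H_nDH_n$. Your derivation of the singular values via $\Phi\Phi^*=H_n(DD^*)H_n^*$ is a bit more explicit than the paper, which simply states the conclusion after establishing $\Phi=H_nDH_n$.
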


\begin{proof}
Using the homomorphic property~\eqref{eq:character-homomorphism}
of the characters,
\begin{align*}
\phi(X+Y) & =\sum_{A\in\FF_{q}^{n\times n}}\widehat{\phi}(A)\chi_{A}(X+Y)\\
 & =\sum_{A\in\FF_{q}^{n\times n}}\widehat{\phi}(A)\chi_{A}(X)\chi_{A}(Y).
\end{align*}
Restated in matrix form, this equation becomes $\Phi=[\chi_{A}(X)]_{X,A}\,\diag(\ldots,\widehat{\phi}(A),\ldots)\,[\chi_{A}(Y)]_{A,Y}=H_{n}DH_{n},$
as desired.
\end{proof}

\subsection{Gaussian binomial coefficients}

\emph{Gaussian binomial coefficients}, also known as \emph{$q$-binomial
coefficients}, are defined by
\begin{align}
\gbinom nm & =\frac{(q^{n}-1)(q^{n}-q)\cdots(q^{n}-q^{m-1})}{(q^{m}-1)(q^{m}-q)\cdots(q^{m}-q^{m-1})}\label{eq:gbinom-def1}\\
 & =\frac{(q^{n}-1)(q^{n-1}-1)\cdots(q^{n-m+1}-1)}{(q^{m}-1)(q^{m-1}-1)\cdots(q-1)}\label{eq:gbinom-def2}
\end{align}
for all nonnegative integers $n,m$ and real numbers $q>1$. Observe
that $\binom{n}{0}_{q}=1$ since the above product is empty for $m=0$.
Note further that $\binom{n}{m}_{q}=0$ whenever $m>n$. One recovers
standard binomial coefficients from this definition via
\[
\lim_{q\searrow1}\binom{n}{m}_{q}=\binom{n}{m}.
\]
As a matter of convenience, one generalizes Gaussian binomial coefficients
to arbitrary integers $n,m$ by defining 
\[
\binom{n}{m}_{q}=0\qquad\qquad\text{if }\min\{n,m\}<0.
\]
With this convention, one has the familiar identity
\begin{align}
\binom{n}{m}_{q} & =\binom{n}{n-m}_{q},\qquad\qquad n,m\in\mathbb{Z}.\label{eq:gbinom-symmetry}
\end{align}
Gaussian binomial coefficients play an important role in enumerative
combinatorics. In particular, we recall the following classical fact.
\begin{fact}
\label{fact:qbinprop}Fix a prime power $q$ and integers $n\geq m\geq0$.
Then the number of $m$-dimensional subspaces of $\mathbb{F}_{q}^{n}$
is exactly $\binom{n}{m}_{q}.$
\end{fact}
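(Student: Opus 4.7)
The plan is to use a standard double-counting argument for ordered bases. I would count the number of ordered linearly independent $m$-tuples in $\mathbb{F}_q^n$ in two ways: first directly, then by partitioning such tuples according to the $m$-dimensional subspace they span.

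First I would count ordered linearly independent $m$-tuples $(v_1,v_2,\ldots,v_m)$ in $\mathbb{F}_q^n$ directly. The vector $v_1$ may be any nonzero vector, yielding $q^n - 1$ choices. Having fixed linearly independent vectors $v_1,\ldots,v_{i-1}$, the next vector $v_i$ must lie outside their $(i-1)$-dimensional span, which contains $q^{i-1}$ vectors, so there are $q^n - q^{i-1}$ valid choices. The total count is therefore
\[
(q^n-1)(q^n-q)(q^n-q^2)\cdots(q^n-q^{m-1}).
\]

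Next, I would group these tuples by the subspace they span. Any ordered linearly independent $m$-tuple spans some $m$-dimensional subspace $S \subseteq \mathbb{F}_q^n$, and conversely any ordered basis of an $m$-dimensional subspace $S$ is such a tuple. Applying the same counting argument inside $S$ (an $m$-dimensional space with $q^m$ elements, viewed as an abstract copy of $\mathbb{F}_q^m$) shows that the number of ordered bases of a fixed $S$ is
\[
(q^m-1)(q^m-q)(q^m-q^2)\cdots(q^m-q^{m-1}),
\]
a quantity that does not depend on $S$. Dividing the two counts gives that the number of $m$-dimensional subspaces of $\mathbb{F}_q^n$ equals
\[
\frac{(q^n-1)(q^n-q)\cdots(q^n-q^{m-1})}{(q^m-1)(q^m-q)\cdots(q^m-q^{m-1})},
\]
which is precisely $\binom{n}{m}_q$ by definition~\eqref{eq:gbinom-def1}.

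There is no real obstacle here; the only point that requires a brief justification is that every $m$-dimensional subspace admits the same number of ordered bases, which follows by choosing any isomorphism $S \cong \mathbb{F}_q^m$ and repeating the first counting step inside $S$. The edge case $m = 0$ is consistent, since the only $0$-dimensional subspace is $\{0\}$ and $\binom{n}{0}_q = 1$ by the convention that an empty product equals $1$.
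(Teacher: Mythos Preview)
Your proof is correct and essentially identical to the paper's: both count ordered linearly independent $m$-tuples in $\mathbb{F}_q^n$ directly and then divide by the number of ordered bases of a fixed $m$-dimensional subspace, handling $m=0$ separately.
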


\begin{proof}
This result is clearly true for $m=0.$ For $m\geq1,$ there are $(q^{n}-1)(q^{n}-q)\cdots(q^{n}-q^{m-1})$
ordered bases $(v_{1},v_{2},\ldots,v_{m})$ of vectors in $\mathbb{F}_{q}^{n}$.
Each such basis defines an $m$-dimensional subspace. Conversely,
every $m$-dimensional subspace has exactly $(q^{m}-1)(q^{m}-q)\cdots(q^{m}-q^{m-1})$
ordered bases. Thus, the number of $m$-dimensional subspaces is~\eqref{eq:gbinom-def1},
as claimed.
\end{proof}
The following monotonicity property of $q$-binomial coefficients
is well-known. We provide a proof for convenience.
\begin{fact}
\label{fact:q-binomial-monotone} Let $n\geq m\geq0$ be given integers.
Then for all integers $\ell\in[m,n-m]$ and reals $q>1,$
\begin{equation}
\binom{n}{m}_{q}\leq\binom{n}{\ell}_{q}.\label{eq:n-m-ell-monotone}
\end{equation}
\end{fact}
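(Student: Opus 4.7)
The plan is to reduce to the case $\ell\leq n/2$ using the symmetry identity~\eqref{eq:gbinom-symmetry} and then prove monotonicity of $\binom{n}{\ell}_q$ in $\ell$ on the range $[0,n/2]$ by examining the ratio of consecutive terms.

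First I would handle the case $\ell>n/2$. By hypothesis $m\leq\ell\leq n-m$, which forces $m\leq n/2$. If $\ell>n/2$, then $n-\ell<n/2\leq n-m$, and also $n-\ell\geq m$ because $\ell\leq n-m$. By~\eqref{eq:gbinom-symmetry}, $\binom{n}{\ell}_q=\binom{n}{n-\ell}_q$, so it suffices to prove the inequality~\eqref{eq:n-m-ell-monotone} under the additional assumption that $\ell\leq n/2$ (applied with $n-\ell$ in place of $\ell$ when necessary).

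Next I would establish monotonicity on $[m,\lfloor n/2\rfloor]$ by a telescoping argument. From the definition~\eqref{eq:gbinom-def2}, for any integer $\ell$ with $0\leq\ell<n$,
\[
\frac{\binom{n}{\ell+1}_q}{\binom{n}{\ell}_q}=\frac{q^{n-\ell}-1}{q^{\ell+1}-1}.
\]
Since $q>1$, the map $x\mapsto q^x-1$ is strictly increasing, so this ratio is at least $1$ precisely when $n-\ell\geq\ell+1$, i.e.\ when $\ell\leq(n-1)/2$. In particular the ratio is $\geq 1$ for every integer $\ell$ with $\ell<n/2$. Therefore the sequence $\binom{n}{m}_q,\binom{n}{m+1}_q,\ldots,\binom{n}{\lfloor n/2\rfloor}_q$ is nondecreasing, which gives~\eqref{eq:n-m-ell-monotone} for all $\ell\in[m,\lfloor n/2\rfloor]$. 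Combined with the reduction above, this covers every $\ell\in[m,n-m]$.

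There is no serious obstacle here; the entire argument rests on the elementary ratio computation and the symmetry relation already recorded in~\eqref{eq:gbinom-symmetry}. The only mild care needed is to verify that the hypothesis $\ell\leq n-m$ translates, under symmetry, into the statement $n-\ell\geq m$, so that after reflecting about $n/2$ we land in the correct range to apply the monotonicity bound.
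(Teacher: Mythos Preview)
Your proof is correct and follows essentially the same approach as the paper: both establish monotonicity on $[m,\lfloor n/2\rfloor]$ via the ratio $\frac{q^{n-\ell}-1}{q^{\ell+1}-1}\geq 1$ (the paper packages this as a telescoping product $\prod_{i=m+1}^{\ell}\frac{q^{n-i+1}-1}{q^{i}-1}$), and both handle $\ell>n/2$ by reflecting to $n-\ell$ via the symmetry identity~\eqref{eq:gbinom-symmetry}. The only difference is the order of presentation.
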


\begin{proof}
The defining equation~\prettyref{eq:gbinom-def2} gives
\[
\binom{n}{\ell}_{q}=\binom{n}{m}_{q}\cdot\prod_{i=m+1}^{\ell}\frac{q^{n-i+1}-1}{q^{i}-1}.
\]
If $\ell\leq n/2,$ then every fraction in the above product is greater
than $1$. As a result, \eqref{eq:n-m-ell-monotone} holds in this
case. In the complementary case $\ell>n/2$, we have $n-\ell\in[m,n/2]$
and therefore 
\[
\binom{n}{m}_{q}\leq\binom{n}{n-\ell}_{q}
\]
by the first part of the proof. Since $\binom{n}{n-\ell}_{q}=\binom{n}{\ell}_{q},$
we again arrive at~\eqref{eq:n-m-ell-monotone}. 
\end{proof}
We will use the next proposition to accurately estimate Gaussian binomial
coefficients. 
\begin{prop}
For any set $I$ of positive integers, and any real number $x\geq2$,
\label{prop:prodprop}
\begin{align*}
\frac{1}{4}\leq\prod_{i\in I}\left(1-\frac{1}{x^{i}}\right) & \leq1.
\end{align*}
\end{prop}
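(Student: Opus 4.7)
The upper bound is immediate: for every $x\geq 2$ and $i\geq 1$, each factor satisfies $0<1-x^{-i}\leq 1$, so the product is at most $1$.

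For the lower bound, my plan is to reduce to the worst case $x=2$ and then bound an infinite product. Since each factor $1-x^{-i}$ lies in $(0,1]$, dropping factors only increases the product, hence
\[
\prod_{i\in I}\left(1-\frac{1}{x^i}\right)\;\geq\;\prod_{i=1}^{\infty}\left(1-\frac{1}{x^i}\right).
\]
Moreover, for $x\geq 2$ we have $x^{-i}\leq 2^{-i}$, so $1-x^{-i}\geq 1-2^{-i}\geq 0$, which gives the further inequality
\[
\prod_{i=1}^{\infty}\left(1-\frac{1}{x^i}\right)\;\geq\;\prod_{i=1}^{\infty}\left(1-\frac{1}{2^i}\right).
\]
Thus it suffices to show $\prod_{i=1}^{\infty}(1-2^{-i})\geq\tfrac14$.

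To handle the infinite product at $x=2$, I would peel off the first term and apply the Weierstrass product inequality $\prod(1-y_i)\geq 1-\sum y_i$ (valid when $y_i\in[0,1]$) to the remaining tail:
\[
\prod_{i=1}^{\infty}\left(1-\frac{1}{2^i}\right)=\frac{1}{2}\prod_{i=2}^{\infty}\left(1-\frac{1}{2^i}\right)\geq\frac{1}{2}\left(1-\sum_{i=2}^{\infty}\frac{1}{2^i}\right)=\frac{1}{2}\cdot\frac{1}{2}=\frac{1}{4}.
\]
Chaining the three inequalities gives the desired lower bound. There is no real obstacle here; the only subtlety is that one cannot apply Weierstrass directly to the full product starting at $i=1$, since $\sum_{i\geq 1}2^{-i}=1$ would yield the vacuous bound $0$. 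Splitting off the $i=1$ factor is exactly the trick that makes the tail sum equal to $\tfrac12$, yielding the tight constant $\tfrac14$.
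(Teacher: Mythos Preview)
Your proof is correct and follows essentially the same approach as the paper: both peel off the $i=1$ factor and apply the Weierstrass inequality $\prod(1-a_i)\geq 1-\sum a_i$ to the tail starting at $i=2$. The only cosmetic difference is that you first reduce to the worst case $x=2$, whereas the paper keeps general $x$, bounds the tail sum by $\frac{1}{x(x-1)}$, and checks $(1-\tfrac{1}{x})(1-\tfrac{1}{x(x-1)})\geq \tfrac{1}{4}$ at the end using $x\geq 2$.
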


\begin{proof}
The upper bound is trivial. For the lower bound, we may clearly assume
that $I=\{1,2,3\ldots\}.$ A simple inductive argument shows that
$(1-a_{1})\cdots(1-a_{n})\geq1-a_{1}-\cdots-a_{n}$ for any $a_{1},\ldots,a_{n}\in(0,1)$.
It follows that
\[
\prod_{i=2}^{\infty}\left(1-\frac{1}{x^{i}}\right)\geq1-\frac{1}{x^{2}}-\frac{1}{x^{3}}-\ldots=1-\frac{1}{x(x-1)}
\]
and therefore
\[
\prod_{i=1}^{\infty}\left(1-\frac{1}{x^{i}}\right)\geq\left(1-\frac{1}{x}\right)\left(1-\frac{1}{x(x-1)}\right)\geq\frac{1}{4},
\]
where the last step uses $x\geq2.$
\end{proof}
\begin{cor}
For any integers $n\geq m\geq0$ and any real number $q\geq2,$ \label{cor:qbinbound}
\begin{align*}
q^{m(n-m)} & \leq\gbinom nm\leq4q^{m(n-m)}.
\end{align*}
\end{cor}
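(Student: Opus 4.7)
The plan is to split the defining formula into a clean power-of-$q$ factor times a ratio of products that I can bound between $1$ and $4$. Starting from equation~\prettyref{eq:gbinom-def2}, I factor $q^{n-i}$ out of the $i$-th numerator term (for $i=0,1,\ldots,m-1$) and $q^{i}$ out of the $i$-th denominator term (for $i=1,2,\ldots,m$). The resulting exponent of $q$ is
\[
\sum_{i=0}^{m-1}(n-i) - \sum_{i=1}^{m} i \;=\; \Bigl(mn-\tfrac{m(m-1)}{2}\Bigr) - \tfrac{m(m+1)}{2} \;=\; m(n-m),
\]
so that
\[
\binom{n}{m}_{q} \;=\; q^{m(n-m)} \cdot \frac{\prod_{i=0}^{m-1}(1-q^{-(n-i)})}{\prod_{i=1}^{m}(1-q^{-i})}.
\]
The task therefore reduces to sandwiching the ratio of products between $1$ and $4$.

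For the lower bound, I would pair the factors term by term: match the numerator factor indexed by $n-i$ with the denominator factor indexed by $m-i$, for $i=0,1,\ldots,m-1$. Since $n\geq m$ implies $n-i\geq m-i\geq 1$, and the function $j\mapsto 1-q^{-j}$ is strictly increasing for $q>1$, each numerator factor is at least its paired denominator factor. Consequently the ratio of the two products is at least $1$, which yields $\binom{n}{m}_q \geq q^{m(n-m)}$. The corner case $m=0$ is immediate since both products are empty and both sides equal $1$.

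For the upper bound, I would bound the numerator trivially by $1$ (every factor lies in $(0,1]$) and apply \prettyref{prop:prodprop} with $x=q\geq 2$ and $I=\{1,2,\ldots,m\}$ to conclude that the denominator is at least $\tfrac14$. Hence the ratio is at most $4$, giving $\binom{n}{m}_q \leq 4q^{m(n-m)}$.

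The argument is essentially bookkeeping and presents no real obstacle; the only points meriting a moment's care are verifying that the exponents telescope exactly to $m(n-m)$, and noticing that the hypothesis $n\geq m$ is precisely what makes the term-by-term comparison in the lower bound valid.
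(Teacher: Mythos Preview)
Your proof is correct and follows essentially the same approach as the paper's: the lower bound comes from a term-by-term comparison (the paper phrases it as $(q^{n}-q^{i})/(q^{m}-q^{i})\geq q^{n-m}$ using~\eqref{eq:gbinom-def1}, while you use the equivalent form starting from~\eqref{eq:gbinom-def2}), and the upper bound bounds the numerator trivially and invokes \prettyref{prop:prodprop} for the denominator.
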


\begin{proof}
The lower bound follows directly from the fact that $(q^{n}-q^{i})/(q^{m}-q^{i})\geq q^{n}/q^{m}$
for $n\geq m$. The upper bound can be verified as follows:
\[
\gbinom nm=\frac{{(q^{n}-1)(q^{n}-q)\ldots(q^{n}-q^{m-1})}}{(q^{m}-1)(q^{m}-q)\ldots(q^{m}-q^{m-1})}\leq\frac{{q^{nm}}}{q^{m^{2}}\prod_{i=1}^{m}(1-q^{-i})}\leq4q^{m(n-m)},
\]
where the last step applies~\prettyref{prop:prodprop}.
\end{proof}
We now recall a classical result known as the \emph{Cauchy binomial
theorem}, see, e.g.,~\cite[eqn.~(1.87)]{enumerative-comb86stanley}.
\begin{fact}
For any integer $n\geq1$ and real number $q>1,$ the following identity
holds in $\mathbb{R}[t]$\emph{:}
\begin{align}
(1+t)(1+qt)\ldots(1+q^{n-1}t)= & \sum_{i=0}^{n}q^{\binom{i}{2}}\gbinom[q]nit^{i}.\label{eq:cauchybinom}
\end{align}
\end{fact}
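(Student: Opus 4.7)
I would prove the identity by induction on $n$. For the base case $n=1$, both sides equal $1+t$: the left-hand side is $1+t$ directly, and the right-hand side is $\binom{1}{0}_q + q^{0}\binom{1}{1}_q\, t = 1+t$ by the defining formula~\eqref{eq:gbinom-def1}.

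For the inductive step, assume the identity holds for some $n\geq 1$. Multiplying both sides of the induction hypothesis by $(1+q^{n}t)$ gives
\[
\prod_{j=0}^{n}(1+q^{j}t) = \sum_{i=0}^{n}q^{\binom{i}{2}}\binom{n}{i}_q t^{i} + \sum_{i=0}^{n}q^{n+\binom{i}{2}}\binom{n}{i}_q t^{i+1}.
\]
Reindexing the second sum and collecting the coefficient of $t^{i}$, and using the identity $\binom{i}{2}-\binom{i-1}{2}=i-1$ so that $q^{n+\binom{i-1}{2}}=q^{n-i+1}\cdot q^{\binom{i}{2}}$, we reduce the target to the Pascal-type recurrence
\[
\binom{n+1}{i}_q = \binom{n}{i}_q + q^{n-i+1}\binom{n}{i-1}_q
\]
for all integers $i$. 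The plan is to prove this recurrence directly from the defining formula~\eqref{eq:gbinom-def2}. Specifically, factoring out $\binom{n}{i-1}_q$ on the right gives a common factor $\binom{n}{i-1}_q$ times
\[
\frac{q^{n-i+1}-1}{q^{i}-1} + q^{n-i+1} = \frac{q^{n+1}-1}{q^{i}-1},
\]
and multiplying this by $\binom{n}{i-1}_q$ indeed produces $\binom{n+1}{i}_q$. The boundary cases ($i=0$ and $i=n+1$) are handled by the convention that $\binom{n}{j}_q=0$ for $j<0$ or $j>n$, together with $\binom{n+1}{0}_q=\binom{n}{0}_q=1$ and $\binom{n+1}{n+1}_q=q^{0}\binom{n}{n}_q=1$.

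The proof involves no real obstacle; the only mildly non-routine step is verifying the Pascal-type recurrence for $\binom{n+1}{i}_q$, which is a short algebraic manipulation from~\eqref{eq:gbinom-def2}. An alternative I briefly considered is a direct combinatorial proof via \prettyref{fact:qbinprop}, counting subspaces of $\mathbb{F}_q^{n}$ according to their intersection with a fixed hyperplane; but the induction route above is the most self-contained given the definitions already in hand.
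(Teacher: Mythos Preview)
Your induction proof is correct. The paper does not actually prove this fact; it simply cites it as the classical Cauchy binomial theorem from Stanley~\cite[eqn.~(1.87)]{enumerative-comb86stanley} and states it without proof. Your argument via the $q$-Pascal recurrence $\binom{n+1}{i}_q = \binom{n}{i}_q + q^{n-i+1}\binom{n}{i-1}_q$ is the standard one and is fully justified by the manipulations you give from~\eqref{eq:gbinom-def2}.
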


\begin{cor}
For any integer $n\geq1$ and real number $q>1,$ and any real polynomial
$g$ of degree less than $n,$\label{cor:qbinompoly}
\begin{align}
\sum_{i=0}^{n}(-1)^{i}q^{\binom{i}{2}}\gbinom[q]{n}{i}g(q^{-i}) & =0.\label{eq:orthog-binom}
\end{align}
\end{cor}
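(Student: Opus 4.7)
The plan is to derive the corollary from the Cauchy binomial theorem by a substitution argument together with linearity.

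First I would reduce to monomials. Since the expression
\[
L(g) \;=\; \sum_{i=0}^{n}(-1)^{i}q^{\binom{i}{2}}\gbinom[q]{n}{i}g(q^{-i})
\]
is linear in $g$, it suffices to verify that $L(g)=0$ for every $g$ in a basis of the space of polynomials of degree less than $n$. I will take the monomial basis $g(x)=x^{k}$ for $k=0,1,\ldots,n-1$, so that the target becomes
\[
\sum_{i=0}^{n}(-1)^{i}q^{\binom{i}{2}}\gbinom[q]{n}{i}q^{-ik} \;=\; 0, \qquad k=0,1,\ldots,n-1.
\]

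Next I would substitute $t = -q^{-k}$ into the Cauchy binomial identity \eqref{eq:cauchybinom}. Since the identity holds in $\mathbb{R}[t]$, it holds for every real $t$, and the substitution yields
\[
\prod_{j=0}^{n-1}\bigl(1 - q^{j-k}\bigr) \;=\; \sum_{i=0}^{n}q^{\binom{i}{2}}\gbinom[q]{n}{i}(-1)^{i}q^{-ik}.
\]
The key observation is that for each $k\in\{0,1,\ldots,n-1\}$, the index $j=k$ lies in the range of the product, so one of the factors on the left equals $1-q^{0}=0$; hence the whole left-hand side vanishes. This gives exactly the monomial identity above, and linearity then delivers the corollary for an arbitrary polynomial $g$ of degree less than $n$.

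There is no real obstacle here; the only point to guard against is an off-by-one in the substitution, specifically making sure that the factor $1-q^{j-k}$ does vanish for every $k\in\{0,1,\ldots,n-1\}$ (which it does because the product runs over $j=0,1,\ldots,n-1$). The argument uses nothing beyond \prettyref{eq:cauchybinom} and the linearity of $L$.
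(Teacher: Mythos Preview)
Your proof is correct and follows essentially the same approach as the paper: substitute $t=-q^{-k}$ for $k=0,1,\ldots,n-1$ into the Cauchy binomial theorem to handle monomials, then extend by linearity. Your write-up is slightly more explicit in identifying the vanishing factor $1-q^{0}$ in the product, but the argument is the same.
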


\begin{proof}
For $d=0,1,\ldots,n-1$, take $t=-1/q^{d}$ in \eqref{eq:cauchybinom}
to obtain
\begin{align}
\sum_{i=0}^{n}(-1)^{i}q^{\binom{i}{2}}\gbinom ni(q^{-i})^{d} & =0.\label{eq:binom-monom}
\end{align}
This establishes~\eqref{eq:orthog-binom} when $g$ is a \emph{monomial}
of degree less than $n$. The general case follows by linearity: multiply~\eqref{eq:binom-monom}
by the degree-$d$ coefficient in $g$ and sum over $d.$

\end{proof}

\subsection{Counting and generating matrices of given rank}

For a field $\mathbb{F},$ we let $\Mcal_{r}^{\mathbb{F},n,m}$ denote
the set of matrices in $\FF^{n\times m}$ of rank $r$. Since we mostly
use $\mathbb{F}=\mathbb{F}_{q}$ in this work, we will usually omit
the reference to the field and write simply $\Mcal_{r}^{n,m}.$ As
a matter of convenience, we adopt the convention that for any $n\geq0$
there is exactly one ``matrix'' of size $0\times n$ and exactly
one ``matrix'' of size $n\times0,$ both of rank $0$. The role of
these empty matrices is to ensure that
\[
|\Mcal_{0}^{0,n}|=|\Mcal_{0}^{n,0}|=1,\qquad\qquad n\geq0,
\]
which simplifies the statement of several lemmas in this paper. Analogously,
we define
\begin{equation}
\Mcal_{r}^{n,m}=\varnothing\qquad\text{if }\min\{n,m,r\}<0.\label{eq:Mcal-nmr-empty}
\end{equation}
For nonsingular matrices of order $n\geq1$, we adopt the shorthand
$\Mcal_{n}=\Mcal_{n}^{n,n}$.
\begin{prop}
\label{prop:nummatr} Let $n,m,r$ be nonnegative integers with $r\leq\min\{n,m\}$.
Then
\begin{align}
|\Mcal_{r}^{n,m}| & =\gbinom nr(q^{m}-1)(q^{m}-q)\ldots(q^{m}-q^{r-1}).\label{eq:Mcal-nmr}
\end{align}
\end{prop}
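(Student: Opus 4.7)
The plan is to partition rank-$r$ matrices by their column space and count them using a basis-coordinate decomposition. The identity holds trivially when $r=0$ (both sides equal $1$), so assume $1 \leq r \leq \min\{n,m\}$.

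For each $r$-dimensional subspace $C \subseteq \mathbb{F}_q^n$, I fix once and for all an ordered basis of $C$ and let $B_C \in \mathbb{F}_q^{n \times r}$ be the matrix with these basis vectors as its columns. Any matrix $A \in \mathbb{F}_q^{n \times m}$ all of whose columns lie in $C$ admits a unique factorization $A = B_C X$, where $X \in \mathbb{F}_q^{r \times m}$ records the coordinates of the columns of $A$ in the chosen basis. Because $B_C$ is injective as a linear map $\mathbb{F}_q^r \to \mathbb{F}_q^n$, the kernels of $X$ and $B_C X$ coincide, so by the rank-nullity theorem $\rk A = \rk X$; moreover the column space of $A$ is $B_C(\text{col space of }X)$, which equals $C$ exactly when $X$ has full rank $r$. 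Summing over the $\binom{n}{r}_q$ subspaces $C$ afforded by \prettyref{fact:qbinprop}, I obtain
\[
|\mathcal{M}_r^{n,m}| \;=\; \binom{n}{r}_q \cdot |\mathcal{M}_r^{r,m}|.
\]

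To finish, I count rank-$r$ matrices in $\mathbb{F}_q^{r \times m}$. Since $r \leq m$, such a matrix has rank $r$ if and only if its $r$ rows form a linearly independent sequence in $\mathbb{F}_q^m$. Selecting the rows in order, each outside the span of its predecessors, yields exactly $\prod_{i=0}^{r-1}(q^m - q^i)$ possibilities, which combined with the display above gives the claimed formula. The proof is essentially bookkeeping; the only mildly delicate step is the rank identity $\rk(B_C X) = \rk X$, and everything else unwinds straightforwardly from \prettyref{fact:qbinprop} and the definition of the Gaussian binomial coefficient.
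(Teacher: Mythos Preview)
Your proof is correct and follows essentially the same approach as the paper: partition rank-$r$ matrices by their column space, use a fixed basis matrix for each subspace to set up a bijection with $\mathcal{M}_r^{r,m}$, and then count the latter as the number of length-$r$ linearly independent sequences in $\mathbb{F}_q^m$. The only difference is cosmetic---you spell out the rank identity $\rk(B_C X)=\rk X$ more explicitly than the paper does.
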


\begin{proof}
If $r=0$, then the right-hand side of~\prettyref{eq:Mcal-nmr} evaluates
to~$1$. This is consistent with our convention that $|\Mcal_{0}^{n,m}|=1$
for all $n,m\geq0.$ 

We now consider the complementary case $r\geq1,$ which forces $n$
and $m$ to be positive. Fix an arbitrary $r$-dimensional subspace
$S\subseteq\mathbb{F}_{q}^{n}$ and consider the subset $\Mcal_{S}\subseteq\Mcal_{r}^{n,m}$
of matrices whose column space is $S$. Fix an $n\times r$ matrix
$A$ with column space $S.$ Since the columns of $A$ are linearly
independent, every matrix in $\Mcal_{S}$ has a unique representation
of the form $AB$ for some $B\in\Mcal_{r}^{r,m}.$ Conversely, any
product $AB$ with $B\in\Mcal_{r}^{r,m}$ is a matrix in $\Mcal_{S}.$
Therefore,
\begin{equation}
|\Mcal_{S}|=|\Mcal_{r}^{r,m}|.\label{eq:Mcal-S-Mcal-rm-n}
\end{equation}
Recall that $\Mcal_{r}^{n,m}$ is the disjoint union of $\Mcal_{S}$
over $r$-dimensional subspaces $S\subseteq\mathbb{F}_{q}^{n},$ and
there are precisely $\binom{n}{r}_{q}$ such subspaces (\prettyref{fact:qbinprop}).
With this in mind,~\prettyref{eq:Mcal-S-Mcal-rm-n} leads to
\begin{equation}
|\Mcal_{r}^{n,m}|=\binom{n}{r}_{q}|\Mcal_{r}^{r,m}|.\label{eq:Mcal-nm-rn}
\end{equation}
Finally, the number of $r\times m$ matrices of rank $r$ is precisely
the number of bases $(v_{1},v_{2},\ldots,v_{r})$ of row vectors in
$\mathbb{F}_{q}^{m},$ whence $|\Mcal_{r}^{r,m}|=(q^{m}-1)(q^{m}-q)\cdots(q^{m}-q^{r-1}).$
Making this substitution in~\prettyref{eq:Mcal-nm-rn} completes
the proof.
\end{proof}
Using \prettyref{prop:prodprop} and \prettyref{cor:qbinbound} to
estimate the right-hand side of~\prettyref{eq:Mcal-nmr}, we obtain:
\begin{cor}
\label{cor:nummatrbound} Let $m,n,r$ be nonnegative integers with
$r\leq\min\{n,m\}$. Then 
\begin{align*}
 & \frac{1}{4}q^{r(n+m-r)}\leq|\Mcal_{r}^{n,m}|\leq4q^{r(n+m-r)}.
\end{align*}
\end{cor}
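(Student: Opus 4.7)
The plan is to chain together Proposition~\ref{prop:nummatr}, Corollary~\ref{cor:qbinbound}, and Proposition~\ref{prop:prodprop}. By Proposition~\ref{prop:nummatr}, we have the exact formula
\[
|\Mcal_{r}^{n,m}| = \gbinom{n}{r}\prod_{i=0}^{r-1}(q^{m}-q^{i}),
\]
so it suffices to bound each of the two factors separately and then multiply.

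For the Gaussian binomial, Corollary~\ref{cor:qbinbound} immediately gives $q^{r(n-r)} \leq \gbinom{n}{r} \leq 4 q^{r(n-r)}$ (valid because $r \leq n$). For the product, I would factor out $q^{m}$ from each term to obtain
\[
\prod_{i=0}^{r-1}(q^{m}-q^{i}) = q^{mr}\prod_{i=0}^{r-1}\bigl(1-q^{i-m}\bigr) = q^{mr}\prod_{j=m-r+1}^{m}\bigl(1-q^{-j}\bigr),
\]
where the reindexing uses $j = m-i$ and relies on $m \geq r$ so that all exponents $j$ are positive integers. Applying Proposition~\ref{prop:prodprop} with $I = \{m-r+1, m-r+2, \ldots, m\}$ and $x = q \geq 2$ yields
\[
\tfrac{1}{4}\, q^{mr} \leq \prod_{i=0}^{r-1}(q^{m}-q^{i}) \leq q^{mr}.
\]
Multiplying the two pairs of bounds gives $\tfrac{1}{4} q^{r(n+m-r)} \leq |\Mcal_r^{n,m}| \leq 4 q^{r(n+m-r)}$, as desired.

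The only degenerate case to check is $r=0$, in which case the claimed bounds become $\tfrac{1}{4} \leq 1 \leq 4$, which holds by the convention $|\Mcal_0^{n,m}|=1$ (including the empty-matrix cases $n=0$ or $m=0$). There is no real obstacle here; the argument is a short bookkeeping exercise once one notices the clean factorization $q^m - q^i = q^m(1 - q^{i-m})$, which turns the product over $i$ into exactly the shape required by Proposition~\ref{prop:prodprop}.
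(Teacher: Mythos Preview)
Your proof is correct and follows exactly the approach the paper indicates: apply Corollary~\ref{cor:qbinbound} to the $q$-binomial factor and Proposition~\ref{prop:prodprop} to the remaining product in the formula from Proposition~\ref{prop:nummatr}. The paper's proof is in fact just the one-line remark ``using Proposition~\ref{prop:prodprop} and Corollary~\ref{cor:qbinbound} to estimate the right-hand side of~\eqref{eq:Mcal-nmr},'' so your write-up simply fills in the routine details.
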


The following fact is well-known; cf.~\cite{LiSWW14communication-linear}. 
\begin{prop}
\label{prop:random-matrices}Let $n\geq1$ be a given integer. Let
$X,Y$ be random matrices distributed independently and uniformly
on $\Mcal_{n}.$ Then: 
\begin{enumerate}
\item \label{enu:XA-AX}for any fixed $A\in\Mcal_{n},$ the matrices $XA$
and $AX$ are distributed uniformly on $\Mcal_{n};$
\item \label{enu:XAY}for any $r\in\{0,1,\ldots,n\}$ and fixed $A\in\Mcal_{r}^{n,n},$
the matrix $XAY$ is distributed uniformly on $\Mcal_{r}^{n,n}.$
\end{enumerate}
\end{prop}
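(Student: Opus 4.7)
My plan is to treat the two items of the proposition separately, handling part~(i) by a direct bijection argument and then leveraging it as a lemma for part~(ii).

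For part~(i), I would observe that since $A\in\Mcal_n$, the maps $X\mapsto XA$ and $X\mapsto AX$ are bijections from $\Mcal_n$ to itself: both map invertible matrices to invertible matrices, and they are invertible via $Y\mapsto YA^{-1}$ and $Y\mapsto A^{-1}Y$, respectively. Any bijection carries the uniform distribution on a finite set to the uniform distribution, so both $XA$ and $AX$ are uniform on $\Mcal_n$.

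For part~(ii), the argument proceeds in two steps. First, for any $A\in\Mcal_r^{n,n}$ and any invertible $X,Y\in\Mcal_n$, the product $XAY$ has rank $r$, so $XAY\in\Mcal_r^{n,n}$ with probability $1$. Second, I would show that for every $B\in\Mcal_r^{n,n}$, $\Pr[XAY=B]=\Pr[XAY=A]$, which forces the distribution to be uniform on $\Mcal_r^{n,n}$. To do this, I would invoke the rank normal form: any two matrices of the same rank $r$ in $\mathbb{F}_q^{n\times n}$ are equivalent under left and right multiplication by invertibles, so there exist $P,Q\in\Mcal_n$ with $B=PAQ$. Then
\[
\Pr[XAY=B]=\Pr[XAY=PAQ]=\Pr[(P^{-1}X)\,A\,(YQ^{-1})=A].
\]
By part~(i), $X':=P^{-1}X$ is uniform on $\Mcal_n$ and $Y':=YQ^{-1}$ is uniform on $\Mcal_n$, and they are independent because $X,Y$ are; hence the last probability equals $\Pr[XAY=A]$, as desired.

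There is no real obstacle here: the content is the rank normal form together with the change-of-variables from part~(i). The only point that deserves a remark is that the rank normal form should be applied in the form ``any two $n\times n$ matrices of the same rank are related as $B=PAQ$ with $P,Q$ invertible,'' which follows from the standard factorization $A=U_A\,\diag(I_r,0)\,V_A$ (and analogously for $B$) by taking $P=U_BU_A^{-1}$ and $Q=V_A^{-1}V_B$.
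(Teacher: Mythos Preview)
Your proof is correct and essentially identical to the paper's. The paper also proves part~(i) by noting that $X\mapsto XA$ is a bijection on $\Mcal_n$ (phrased as $\Prob[XA=B]=\Prob[X=BA^{-1}]=1/|\Mcal_n|$), and proves part~(ii) by writing $B=M_1AM_2$ for nonsingular $M_1,M_2$ and then absorbing $M_1^{-1}$ and $M_2^{-1}$ into $X$ and $Y$ via part~(i), exactly as you do.
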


\begin{proof}
\prettyref{enu:XA-AX}~For any $B\in\Mcal_{n},$ we have $\Prob[XA=B]=\Prob[X=BA^{-1}]=1/|\Mcal_{n}|.$
Therefore, $XA$ is distributed uniformly on $\Mcal_{n}.$ The argument
for $AX$ is analogous.

\prettyref{enu:XAY}~Fix $B\in\Mcal_{r}^{n,n}$ arbitrarily. Then
$B$ can be obtained from $A$ by a series of elementary row and column
operations, so that $B=M_{1}AM_{2}$ for nonsingular $M_{1},M_{2}.$
As a result, 
\[
\Prob[XAY=B]=\Prob[M_{1}^{-1}XAYM_{2}^{-1}=A]=\Prob[XAYM_{2}^{-1}=A]=\Prob[XAY=A],
\]
where the last two steps are valid by part~\prettyref{enu:XA-AX}.
To summarize, $XAY$ takes on every value in $\Mcal_{r}^{n,n}$ with
the same probability. Since $XAY\in\Mcal_{r}^{n,n}$, the proof is
complete.
\end{proof}

\subsection{Random projections}

Given a collection of subspaces $S_{1},S_{2},\ldots,S_{m}$ in a vector
space, we use random projections to reduce the dimension of the ambient
space while preserving algebraic relationships among the $S_{i}.$
This is done by choosing a uniformly random matrix $X$ and replacing
$S_{1},S_{2},\ldots,S_{m}$ with the subspaces $X(S_{1}),X(S_{2}),\ldots,X(S_{m}),$
respectively. The following lemma provides quantitative details.
\begin{lem}
\label{lem:random-proj-subspaces}Let $n$ and $d$ be positive integers,
$\mathbb{F}$ a finite field with $q=|\mathbb{F}|$ elements, and
$S\subseteq\mathbb{F}^{n}$ a subspace. Then for every integer $t\leq\min\{\dim(S),d\},$
\begin{equation}
\Prob_{X\in\mathbb{F}^{d\times n}}[\dim(X(S))\leq t]\leqslant4q^{-(\dim(S)-t)(d-t)}.\label{eq:subspace-doesnt-shrink}
\end{equation}
In particular, for every integer $T\leq\min\{\dim(S),d\},$
\begin{equation}
\Exp_{X\in\mathbb{F}^{d\times n}}\,q^{T-\min\{T,\dim(X(S))\}}\leq1+8q^{-(\dim(S)-T+1)(d-T+1)+1}.\label{eq:subspace-shrink-expectation}
\end{equation}
 
\end{lem}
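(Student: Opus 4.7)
\medskip

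\noindent\textbf{Proof plan.} The plan is to reduce both bounds to the concrete question of how the rank of a uniformly random matrix $Y \in \mathbb{F}^{d \times s}$ behaves, where $s = \dim(S)$. Fixing a basis $v_1, \ldots, v_s$ of $S$, we have $X(S) = \Span(Xv_1, \ldots, Xv_s)$, and since $v_1, \ldots, v_s$ are linearly independent, the tuple $(Xv_1, \ldots, Xv_s)$ is distributed uniformly in $(\mathbb{F}^d)^s$ when $X$ is uniform in $\mathbb{F}^{d\times n}$. Hence $\dim(X(S))$ has the same distribution as $\rk(Y)$ for a uniform $Y \in \mathbb{F}^{d\times s}$, and it suffices to prove the analogue of~\eqref{eq:subspace-doesnt-shrink} for $\Prob[\rk(Y)\leq t]$.

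For the first bound, I would use a dual union bound rather than trying to sum the formula from Proposition~\ref{prop:nummatr} for $|\Mcal_r^{d,s}|$ over $r=0,1,\ldots,t$ (which gives only an $8$ instead of a $4$). The observation is that $\rk(Y)\leq t$ if and only if the left null space of $Y$ has dimension at least $d-t$, equivalently, some $(d-t)$-dimensional subspace $V\subseteq\mathbb{F}^d$ satisfies $v\tr Y = 0$ for every $v\in V$. For a fixed such $V$ with basis $v_1,\ldots,v_{d-t}$, the constraints $v_i\tr Y = 0$ for $i=1,\ldots,d-t$ are $s(d-t)$ linearly independent linear conditions on the entries of $Y$, so they hold with probability exactly $q^{-s(d-t)}$. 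By \prettyref{fact:qbinprop} and \prettyref{cor:qbinbound}, the number of $(d-t)$-dimensional subspaces of $\mathbb{F}^d$ is $\binom{d}{d-t}_q = \binom{d}{t}_q \leq 4q^{t(d-t)}$. A union bound yields
\[
\Prob[\rk(Y)\leq t] \;\leq\; 4q^{t(d-t)} \cdot q^{-s(d-t)} \;=\; 4q^{-(s-t)(d-t)},
\]
which is \eqref{eq:subspace-doesnt-shrink}.

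For the expectation bound \eqref{eq:subspace-shrink-expectation}, the idea is to split on whether $\dim(X(S))\geq T$ (contributing $1$) or $\dim(X(S)) = k$ for some $k<T$ (contributing $q^{T-k}$), and then bound $\Prob[\dim(X(S))=k]\leq \Prob[\dim(X(S))\leq k]$ using~\eqref{eq:subspace-doesnt-shrink}. This gives
\[
\Exp q^{T-\min\{T,\dim(X(S))\}} \;\leq\; 1 + 4\sum_{k=0}^{T-1} q^{T-k-(s-k)(d-k)}.
\]
With the substitution $j = T-1-k$ and $a = s-T+1\geq 1$, $b = d-T+1\geq 1$, the exponent simplifies to $1-ab - j(a+b-1) - j^2$, so the sum equals $q^{1-ab}\sum_{j=0}^{T-1} q^{-j(a+b-1)-j^2}$. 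Since $a+b-1\geq 1$, for $j\geq 1$ one has $q^{-j(a+b-1)-j^2}\leq q^{-j}$, so the inner sum is bounded by $1 + \sum_{j\geq 1} q^{-j} \leq 2$ (using $q\geq 2$). Multiplying out produces the $8q^{-(s-T+1)(d-T+1)+1}$ in \eqref{eq:subspace-shrink-expectation}.

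The main (minor) obstacle is getting the constants right in the first bound: naive counting of low-rank matrices loses a factor of $2$ through a geometric-series tail, which is why the dual union-bound viewpoint is essential. The rest is bookkeeping, with the second bound following from the first by summation.
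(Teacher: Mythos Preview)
Your proof is correct and essentially identical to the paper's. The only cosmetic difference is that for~\eqref{eq:subspace-doesnt-shrink} you union-bound over $(d-t)$-dimensional subspaces of the left null space, whereas the paper union-bounds over $t$-dimensional subspaces containing the column space; since $\binom{d}{t}_q=\binom{d}{d-t}_q$ and the per-subspace probabilities are both $q^{-s(d-t)}$, the two computations coincide line for line, and your derivation of~\eqref{eq:subspace-shrink-expectation} matches the paper's summation as well.
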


\begin{proof}
Equations~\prettyref{eq:subspace-doesnt-shrink} and~\prettyref{eq:subspace-shrink-expectation}
hold trivially for negative $t$ and $T,$ respectively. As a result,
we may assume that $t\ge0$ and $T\geq0.$ Abbreviate $k=\dim(S)$.
Fix a basis $v_{1},v_{2},\dots,v_{k}$ for $S$ and extend it to a
basis $v_{1},v_{2},\ldots,v_{n}$ for $\mathbb{F}^{n}.$ Let $A\in\mathbb{F}^{n\times n}$
be the unique matrix such that $Av_{i}=e_{i}$ for each $i=1,2,\dots,n$.
In particular, $A(S)=\Span\{e_{1},e_{2},\ldots,e_{k}\}.$ Now, let
$X\in\mathbb{F}^{d\times n}$ be uniformly random. Then the rows of
$XA$ are independent random variables, each a uniformly random linear
combination of the rows of $A.$ Since $A$ is nonsingular of order
$n,$ it follows that the rows of $XA$ are independent random vectors
in $\mathbb{F}^{n}.$ Put another way, $XA\in\mathbb{F}^{d\times n}$
has the same distribution as $X$. As a result,
\begin{align}
\Prob[\dim(X(S))\leq t] & =\Prob[\dim(XA(S))\leq t]\nonumber \\
 & =\Prob[\dim(X(A(S)))\leq t]\nonumber \\
 & =\Prob[\dim(\Span\{Xe_{1},Xe_{2},\ldots,Xe_{k}\})\leq t]\nonumber \\
 & =\Prob[\exists B\in\Scal(\mathbb{F}^{d},t)\text{ such that }Xe_{1},Xe_{2},\ldots,Xe_{k}\in B]\nonumber \\
 & \leq\sum_{B\in\Scal(\mathbb{F}^{d},t)}\Prob[Xe_{1},Xe_{2},\ldots,Xe_{k}\in B],\label{eq:proj-bound-intermed}
\end{align}
where the third step uses $A(S)=\Span\{e_{1},e_{2},\ldots,e_{k}\}$,
and the last step applies a union bound. Now
\[
\Prob[\dim(X(S))\leq t]\leq\sum_{\Scal(\mathbb{F}^{d},t)}\left(\frac{q^{t}}{q^{d}}\right)^{k}=\binom{d}{t}_{q}q^{-k(d-t)}\leq4q^{t(d-t)}q^{-k(d-t)}=4q^{-(k-t)(d-t)},
\]
where the first step is justified by~\prettyref{eq:proj-bound-intermed}
and the fact that $Xe_{1},Xe_{2},\ldots,Xe_{k}$ are independent and
uniformly random vectors in $\mathbb{F}^{d}$; the second step applies~\prettyref{fact:qbinprop};
and the third step uses~\prettyref{cor:qbinbound}. This settles~\prettyref{eq:subspace-doesnt-shrink}.
Now~\prettyref{eq:subspace-shrink-expectation} can be verified as
follows:
\begin{align*}
\Exp\,q^{T-\min\{T,\dim(X(S))\}} & \leq1+\sum_{t=0}^{T-1}q^{T-t}\Prob[\dim(X(S))=t]\\
 & \leq1+\sum_{t=0}^{T-1}q^{T-t}\cdot4q^{-(k-t)(d-t)}\\
 & =1+\sum_{t=1}^{T}q^{t}\cdot4q^{-(k-T+t)(d-T+t)}\\
 & =1+\sum_{t=1}^{T}q^{t}\cdot4q^{-(k-T+1)(d-T+1)-(t-1)(d+k+t-2T+1)}\\
 & \leq1+\sum_{t=1}^{\infty}q^{t}\cdot4q^{-(k-T+1)(d-T+1)-(t^{2}-1)}\\
 & \leq1+4q^{-(k-T+1)(d-T+1)+1}\cdot\frac{q}{q-1}\\
 & \leq1+8q^{-(k-T+1)(d-T+1)+1},
\end{align*}
where the third step is a change of variable, the next-to-last step
bounds the series by a geometric series, and the last step is valid
due to $q\geq2.$
\end{proof}
The previous lemma yields an analogous results for matrices:
\begin{lem}
\label{lem:random-proj-matrices}Let $n,m,d$ be positive integers,
$\mathbb{F}$ a finite field with $q=|\mathbb{F}|$ elements, and
$M\in\mathbb{F}^{n\times m}$ a given matrix. Then for every integer
$t\leq\min\{\rk M,d\}$$:$
\begin{enumerate}
\item \label{enu:left-matrix-proj}$\Prob[\rk(XM)\leq t]\leq4q^{-(\rk(M)-t)(d-t)}$
for a uniformly random matrix $X\in\mathbb{F}^{d\times n};$
\item \label{enu:right-matrix-proj}$\Prob[\rk(MY)\leq t]\leq4q^{-(\rk(M)-t)(d-t)}$
for a uniformly random matrix $Y\in\mathbb{F}^{m\times d}.$
\end{enumerate}
\end{lem}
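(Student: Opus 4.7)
The plan is to reduce both parts directly to the previously established lemma on random projections of subspaces (\prettyref{lem:random-proj-subspaces}), using the elementary observation that matrix multiplication acts columnwise on the column space.

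For part~\prettyref{enu:left-matrix-proj}, I would let $S \subseteq \mathbb{F}^n$ denote the column space of $M$, so that $\dim(S) = \rk(M)$. Writing $M$ columnwise as $M = [m_1 \mid m_2 \mid \cdots \mid m_m]$, the product $XM$ has columns $Xm_1, Xm_2, \ldots, Xm_m$, so the column space of $XM$ is exactly $X(S) \subseteq \mathbb{F}^d$. Hence $\rk(XM) = \dim(X(S))$, and the hypothesis $t \leq \min\{\rk M, d\} = \min\{\dim(S), d\}$ matches the hypothesis of \prettyref{lem:random-proj-subspaces}. Applying \prettyref{eq:subspace-doesnt-shrink} to $S$ with the uniformly random $X \in \mathbb{F}^{d \times n}$ gives
\[
\Prob[\rk(XM) \leq t] \;=\; \Prob[\dim(X(S)) \leq t] \;\leq\; 4q^{-(\dim(S) - t)(d-t)} \;=\; 4q^{-(\rk(M) - t)(d-t)},
\]
which is the desired bound.

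For part~\prettyref{enu:right-matrix-proj}, I would pass to transposes. Since rank is invariant under transposition, $\rk(MY) = \rk((MY)^{\!\top}) = \rk(Y^{\!\top} M^{\!\top})$. The matrix $M^{\!\top} \in \mathbb{F}^{m \times n}$ has rank equal to $\rk(M)$, and if $Y$ is uniformly distributed on $\mathbb{F}^{m \times d}$, then $Y^{\!\top}$ is uniformly distributed on $\mathbb{F}^{d \times m}$. Therefore part~\prettyref{enu:left-matrix-proj} applied to $M^{\!\top}$ and the random matrix $Y^{\!\top}$ yields
\[
\Prob[\rk(MY) \leq t] \;=\; \Prob[\rk(Y^{\!\top} M^{\!\top}) \leq t] \;\leq\; 4q^{-(\rk(M^{\!\top}) - t)(d-t)} \;=\; 4q^{-(\rk(M) - t)(d-t)}.
\]

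Since everything follows at once from \prettyref{lem:random-proj-subspaces}, there is essentially no obstacle; the only small point to verify is the identification of $\rk(XM)$ with $\dim(X(S))$ for $S = \mathrm{col}(M)$, which is immediate from the columnwise action of $X$. The transpose reduction handles the right multiplication case uniformly, avoiding any need to re-prove a row-space analogue of \prettyref{lem:random-proj-subspaces}.
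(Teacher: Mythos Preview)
Your proposal is correct and follows essentially the same approach as the paper: identify $\rk(XM)$ with $\dim(X(S))$ for $S$ the column space of $M$ and invoke \prettyref{lem:random-proj-subspaces} for part~\prettyref{enu:left-matrix-proj}, then reduce part~\prettyref{enu:right-matrix-proj} to part~\prettyref{enu:left-matrix-proj} via transposition. The paper's proof is just a terser version of what you wrote.
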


\begin{proof}
Let $S$ be the column span of $M.$ Then $\rk(XM)=\dim(X(S)),$ and
\prettyref{enu:left-matrix-proj} follows from~\prettyref{lem:random-proj-subspaces}.
For~\prettyref{enu:right-matrix-proj}, rewrite the probability of
interest as $\Prob[\rk(Y\tr M\tr)\leq t]$ and apply~\prettyref{enu:left-matrix-proj}.
\end{proof}

\subsection{Communication complexity}

An excellent reference on communication complexity is the monograph
by Kushilevitz and Nisan~\cite{ccbook}. In this overview, we will
limit ourselves to key definitions and notation. The \emph{public-coin
randomized model},~due to Yao~\cite{yao79cc}, features two players
Alice and Bob and a (possibly partial) Boolean function $F\colon X\times Y\to\{-1,1,*\}$
for finite sets $X$ and $Y.$ Alice is given as input an element
$x\in X,$ Bob is given $y\in Y$, and their objective is to evaluate
$F(x,y)$. To this end, Alice and Bob communicate by sending bits
according to a protocol agreed upon in advance. Moreover, they have
an unlimited supply of shared random bits which they can use when
deciding what message to send at any given point in the protocol.
Eventually, they must agree on a bit ($-1$ or $1$) that represents
the output of the protocol. An \emph{$\epsilon$-error protocol} for
$F$ is one which, on every input $(x,y)\in\dom F,$ produces the
correct answer $F(x,y)$ with probability at least $1-\epsilon.$
The protocol's behavior on inputs outside $\dom F$ can be arbitrary.
The \emph{cost} of a protocol is the total bit length of the messages
exchanged by Alice and Bob in the worst-case execution of the protocol.
The \emph{$\epsilon$-error randomized communication complexity of
$F,$} denoted $R_{\epsilon}(F),$ is the least cost of an $\epsilon$-error
randomized protocol for $F$. The standard setting of the error parameter
is $\epsilon=1/3,$ which can be replaced by any other constant in
$(0,1/2)$ with only a constant-factor change in communication cost.

A far-reaching generalization of the randomized model is Yao's \emph{quantum
model}~\cite{yao93quantum}, where Alice and Bob exchange \emph{quantum}
messages. As before, their objective is to evaluate a fixed function
$F\colon X\times Y\to\{-1,1,*\}$ on any given input pair $(x,y),$
where Alice receives as input $x$ and Bob receives $y.$ We allow
arbitrary \emph{prior entanglement} at the start of the communication,
which is the quantum analogue of shared randomness. A measurement
at the end of the protocol produces a one-bit answer, which is interpreted
as the protocol output. An\emph{ $\epsilon$-error protocol for $F$}
is required to output, on every input $(x,y)\in\dom F,$ the correct
value $F(x,y)$ with probability at least $1-\epsilon.$ As before,
the protocol can exhibit arbitrary behavior on inputs outside $\dom F$.
The \emph{cost} of a quantum protocol is the total number of quantum
bits exchanged in the worst-case execution. The \emph{$\epsilon$-error
quantum communication complexity of $F$}, denoted $Q_{\epsilon}^{*}(F),$
is the least cost of an $\epsilon$-error quantum protocol for $F.$
The asterisk in $Q_{\epsilon}^{*}(F)$ indicates that the parties
can share arbitrary prior entanglement. As before, the standard setting
of the error parameter is $\epsilon=1/3.$ For a detailed formal description
of the quantum model, we refer the reader to~\cite{dewolf-thesis,razborov03quantum,sherstov07quantum}.
For any protocol $\Pi,$ quantum or otherwise, we write $\cost(\Pi)$
for the communication cost of $\Pi.$

The following theorem, due to Linial and Shraibman~\cite[Lem.~10]{linial07factorization-stoc},
states that the matrix of the acceptance probabilities of a quantum
protocol has an efficient factorization with respect to the Frobenius
norm. Closely analogous statements were established earlier by Yao~\cite{yao93quantum},
Kremer~\cite{kremer95thesis}, and Razborov~\cite{razborov03quantum}.
\begin{thm}
\label{thm:protocol2matrix} Let $X,Y$ be finite sets. Let $P$ be
a quantum protocol $($with or without prior entanglement$)$ with
cost $C$ qubits and input sets $X$ and $Y.$ Then 
\[
\Big[\Prob[P(x,y)=1]\Big]_{x\in X,y\in Y}=AB
\]
for some real matrices $A,B$ with $\|A\|_{\mathrm{F}}\leq2^{C}\sqrt{|X|}$
and $\|B\|_{\mathrm{F}}\leq2^{C}\sqrt{|Y|}.$
\end{thm}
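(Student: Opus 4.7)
The plan is to follow the standard Yao--Kremer--Razborov ``amplitude factorization'' approach for quantum protocols, adapted to accommodate prior entanglement. First I put the protocol in canonical form. Write the joint Hilbert space as $\mathcal{H}_A\otimes\mathcal{H}_C\otimes\mathcal{H}_B$, where $\mathcal{H}_A$ and $\mathcal{H}_B$ are Alice's and Bob's local workspaces (each containing her or his half of any prior entanglement) and $\mathcal{H}_C$ is the single-qubit communication channel. Without loss of generality the protocol alternates between Alice and Bob, each sending one qubit at a time for a total of $C$ rounds; Alice's $t$-th move is a unitary $U_t(x)$ on $\mathcal{H}_A\otimes\mathcal{H}_C$ and identity on $\mathcal{H}_B$, and symmetrically for Bob's $V_t(y)$. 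The initial state $|\psi_0\rangle$ depends only on the shared entanglement and encodings of $x$ in $\mathcal{H}_A$ and $y$ in $\mathcal{H}_B$; crucially it is a unit vector.

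The key lemma is an amplitude factorization: for each basis vector $|a\rangle|c\rangle|b\rangle$ of $\mathcal{H}_A\otimes\mathcal{H}_C\otimes\mathcal{H}_B$, the amplitude of $|\psi(x,y)\rangle$ on that basis vector can be written as an inner product $\langle u_{a,c,b}(x),\, v_{a,c,b}(y)\rangle$ of vectors living in $\mathbb{C}^{2^C}$. This is proved by induction on the number of communication rounds. In the base case ($t=0$), $|\psi_0\rangle$ is a unit vector in $\mathcal{H}_A\otimes\mathcal{H}_B$, so by its Schmidt decomposition the amplitudes already factor with one-dimensional inner products. The inductive step is the observation that when Alice applies $U_{t+1}(x)$, the new amplitude on $|a,c,b\rangle$ is a linear combination (over the two-dimensional subspace corresponding to Alice's local operation on $\mathcal{H}_A\otimes\mathcal{H}_C$) of the old amplitudes on $|a',c',b\rangle$; each of these was already an inner product $\langle u_{a',c',b}(x), v_{a',c',b}(y)\rangle$, so the new amplitude is still an inner product in an inner-product space of twice the dimension, with the $x$-dependence absorbed by $u$ and the $y$-dependence unchanged. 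Bob's moves are handled symmetrically. Throughout, the unit-norm condition $\||\psi(x,y)\rangle\|^2=1$ translates into the bookkeeping bound $\sum_{a,c,b}\|u_{a,c,b}(x)\|_2^2\le 2^C$ (and similarly for $v$), with the slack $2^C$ arising because the factorization dimension is $2^C$ and each ``half'' of the product can carry up to a factor of $2^C$ in the sum.

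With the amplitude factorization in hand, the acceptance probability is
\[
\Pr[P(x,y)=1]\;=\;\sum_{(a,c,b)\in\mathrm{acc}}\bigl|\langle u_{a,c,b}(x),\,v_{a,c,b}(y)\rangle\bigr|^{2}.
\]
Using $|\langle u,v\rangle|^{2}=\langle u\otimes\overline{u},\,v\otimes\overline{v}\rangle$, this rewrites as $\langle\alpha(x),\beta(y)\rangle$, where $\alpha(x)$ is the direct sum over accepting basis vectors of $u_{a,c,b}(x)\otimes\overline{u_{a,c,b}(x)}$ and $\beta(y)$ is defined analogously from the $v$'s. Setting $A$ to have $\alpha(x)$ as its $x$-th row and $B$ to have $\beta(y)$ as its $y$-th column gives the desired factorization $[\Pr[P(x,y)=1]]_{x,y}=AB$.

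Finally, for the Frobenius bounds, $\|A\|_{\mathrm{F}}^{2}=\sum_{x}\|\alpha(x)\|_{2}^{2}=\sum_{x}\sum_{a,c,b}\|u_{a,c,b}(x)\|_{2}^{4}\le\sum_{x}\bigl(\sum_{a,c,b}\|u_{a,c,b}(x)\|_{2}^{2}\bigr)^{2}\le |X|\cdot 2^{2C}$, and likewise $\|B\|_{\mathrm{F}}^{2}\le 2^{2C}|Y|$, completing the proof. I expect the main technical care to lie in the inductive step of the amplitude factorization: one must verify that each unilateral unitary preserves the inner-product structure while only doubling the factorization dimension, and that prior entanglement (which makes the initial state non-product) does not obstruct the base case -- this is precisely what the Schmidt decomposition handles.
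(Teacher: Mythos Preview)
The paper does not prove this theorem; it is stated without proof and attributed to Linial and Shraibman~\cite[Lem.~10]{linial07factorization-stoc}, with closely analogous statements credited to Yao, Kremer, and Razborov. So there is no ``paper's own proof'' to compare against, and your approach is exactly the Yao--Kremer--Razborov amplitude-factorization argument that the paper itself cites as the source of such results.

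That said, two points in your sketch need tightening. First, the inductive step as written is muddled: after Alice applies $U(x)$, the new amplitude $\langle a,c,b|\psi\rangle$ is a linear combination over \emph{all} $(a',c')$ of the old amplitudes, not over a two-dimensional subspace. The clean formulation tracks the transcript decomposition $|\psi_t\rangle=\sum_{m\in\{0,1\}^t}|A_m(x)\rangle\otimes|B_m(y)\rangle$: a local unitary leaves the number of terms fixed, and \emph{sending} a qubit doubles it (branching on the channel-qubit value), giving $2^C$ terms after $C$ rounds with $\sum_m\|A_m(x)\|^2\le 2^{C_B}$ and $\sum_m\|B_m(y)\|^2\le 2^{C_A}$. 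Second, your bookkeeping bound $\sum_{a,c,b}\|u_{a,c,b}(x)\|_2^2\le 2^C$ is false as stated: since $u$ does not actually depend on $b$, summing over all $b$ introduces a factor of $\dim\mathcal{H}_B$, which is unbounded. The fix is to build the factorization from the Gram-type matrices $[\langle A_m(x)|A_{m'}(x)\rangle]_{m,m'}$ and $[\langle B_m(y)|\Pi_{\mathrm{acc}}|B_{m'}(y)\rangle]_{m,m'}$, whose Frobenius norms are bounded by their traces $\le 2^C$. With prior entanglement, the Schmidt index simply augments $m$; the inner-product dimension may then exceed $2^C$, but since $\sum_k\lambda_k=1$ the norm bounds---and hence the Frobenius bounds---are unaffected, so your conclusion survives.
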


\noindent Theorem~\ref{thm:protocol2matrix} provides a transition
from quantum protocols to matrix factorization, which is by now a
standard technique that has been used by various authors in various
contexts. Among other things, Theorem~\ref{thm:protocol2matrix}
gives the following \emph{approximate trace norm method} for quantum
lower bounds; see, e.g.,~\cite[Thm.~5.5]{razborov03quantum}. For
the reader's convenience, we state and prove this result in the generality
that we require.
\begin{thm}[Approximate trace norm method]
\label{thm:approx-trace-norm-method}Let $F\colon X\times Y\to\{-1,1,*\}$
be a $($possibly partial$)$ communication problem. Then
\[
4^{Q_{\epsilon}^{*}(F)}\geq\frac{\|M\|_{\Sigma,2\epsilon}}{3\sqrt{|X|\,|Y|}},
\]
where $M=[F(x,y)]_{x\in X,y\in Y}$ is the characteristic matrix of
$F.$
\end{thm}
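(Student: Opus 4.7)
The plan is to take a near-optimal $\epsilon$-error quantum protocol for $F$ and convert its acceptance probabilities into a real matrix that (i) witnesses the approximate trace norm of $M$ and (ii) admits an efficient factorization via Theorem~\ref{thm:protocol2matrix}. First I would fix a quantum protocol $P$ for $F$ of cost $C=Q_{\epsilon}^{*}(F)$ and define $\widetilde{M}=2A_{P}-J$, where $A_{P}=[\Prob[P(x,y)=1]]_{x,y}$ is the acceptance matrix and $J$ is the all-ones matrix of dimensions $|X|\times|Y|$. Then $\widetilde{M}_{x,y}\in[-1,1]$ is exactly the expected value of the $\pm1$-valued output of $P$ on input $(x,y)$.

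Next I would verify that $\widetilde{M}$ satisfies the approximation constraints \eqref{eq:partialapprox1} and~\eqref{eq:partialapprox2} with $\epsilon$ replaced by $2\epsilon$. For $(x,y)\in\dom F$, the $\epsilon$-error correctness guarantee gives $\Prob[P(x,y)=F(x,y)]\geq1-\epsilon$, which translates to $|\widetilde{M}_{x,y}-F(x,y)|\leq2\epsilon$. For $(x,y)\notin\dom F$, the trivial bound $|\widetilde{M}_{x,y}|\leq1\leq1+2\epsilon$ suffices, which is precisely why the two-sided definition of $\|\cdot\|_{\Sigma,2\epsilon}$ accommodates partial functions cleanly. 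Consequently $\|M\|_{\Sigma,2\epsilon}\leq\|\widetilde{M}\|_{\Sigma}$.

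To finish, I would apply Theorem~\ref{thm:protocol2matrix} to factor $A_{P}=AB$ with $\|A\|_{\mathrm{F}}\leq2^{C}\sqrt{|X|}$ and $\|B\|_{\mathrm{F}}\leq2^{C}\sqrt{|Y|}$, and then bound
\begin{align*}
\|\widetilde{M}\|_{\Sigma} & \leq2\|AB\|_{\Sigma}+\|J\|_{\Sigma}\\
 & \leq2\|A\|_{\mathrm{F}}\,\|B\|_{\mathrm{F}}+\sqrt{|X|\,|Y|}\\
 & \leq(2\cdot4^{C}+1)\sqrt{|X|\,|Y|}\\
 & \leq3\cdot4^{C}\sqrt{|X|\,|Y|},
\end{align*}
using the triangle inequality for the trace norm, Fact~\ref{fact:bound-on-trace-of-product}, the rank-one identity $\|J\|_{\Sigma}=\sqrt{|X|\,|Y|}$ (since the unique nonzero singular value of $J$ is $\sqrt{|X|\,|Y|}$), and $4^{C}\geq1$. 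Combining this with the inequality of the previous paragraph and rearranging yields the claimed bound.

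The argument is essentially mechanical: the conceptually heavy step has already been packaged into Theorem~\ref{thm:protocol2matrix}, so there is no real obstacle here. The only detail requiring care is correctly handling off-domain entries in showing $\widetilde{M}$ is a valid approximator, but this follows directly from the boundedness of probabilities and the two-sided form of the approximation constraints.
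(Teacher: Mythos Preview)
Your proposal is correct and follows essentially the same approach as the paper: define $\widetilde{M}=2\Pi-J$ from the acceptance matrix $\Pi$ of an optimal protocol, verify it is a valid $2\epsilon$-approximator of $M$, then bound $\|\widetilde{M}\|_{\Sigma}$ via Theorem~\ref{thm:protocol2matrix} and Fact~\ref{fact:bound-on-trace-of-product} together with $\|J\|_{\Sigma}=\sqrt{|X|\,|Y|}$. The paper's proof is line-for-line the same, ending with $\|M\|_{\Sigma,2\epsilon}\leq(2\cdot4^{C}+1)\sqrt{|X|\,|Y|}$ and the identical rearrangement.
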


\begin{proof}
Let $P$ be a quantum protocol with prior entanglement that computes
$F$ with error $\epsilon$ and cost $C.$ Put 
\[
\Pi=\Big[\Prob[P(x,y)=1]\Big]_{x\in X,\,y\in Y}.
\]
Then the matrix $\widetilde{M}=2\Pi-J$ satisfies $|\widetilde{M}_{x,y}|\leq1$
for all $(x,y)\in X\times Y$ and $|M_{x,y}-\widetilde{M}_{x,y}|\leq2\epsilon$
for all $(x,y)\in\dom M$. In particular,
\begin{equation}
\|M\|_{\Sigma,2\epsilon}\leq\|\widetilde{M}\|_{\Sigma}.\label{eq:F-tilde-F}
\end{equation}
On the other hand, Theorem~\ref{thm:protocol2matrix} guarantees
the existence of matrices $A$ and $B$ with $AB=\Pi$ and $\|A\|_{\mathrm{F}}\,\|B\|_{\mathrm{F}}\leq4^{C}\sqrt{|X|\,|Y|}.$
Therefore,
\begin{align}
\|\widetilde{M}\|_{\Sigma} & =\|2AB-J\|_{\Sigma}\nonumber \\
 & \leq2\|AB\|_{\Sigma}+\|J\|_{\Sigma}\nonumber \\
 & \leq2\|A\|_{\mathrm{F}}\|B\|_{\mathrm{F}}+\|J\|_{\Sigma}\nonumber \\
 & \leq2\cdot4^{C}\sqrt{|X|\,|Y|}+\|J\|_{\Sigma}\nonumber \\
 & =2\cdot4^{C}\sqrt{|X|\,|Y|}+\sqrt{|X|\,|Y|},\label{eq:tilde-F-trace-norm}
\end{align}
where the third step uses \prettyref{fact:bound-on-trace-of-product}.
Equations~\prettyref{eq:F-tilde-F} and~\prettyref{eq:tilde-F-trace-norm}
give $\|M\|_{\Sigma,2\epsilon}\leq(2\cdot4^{C}+1)\sqrt{|X|\,|Y|},$
which implies the claimed lower bound on $4^{C}.$
\end{proof}
A \emph{distinguisher} for a communication problem $F\colon X\times Y\to\{-1,1,*\}$
is a communication protocol $\Pi$ for which the expected output on
every input in $F^{-1}(-1)$ is less than the expected output on every
input in $F^{-1}(1).$ We will use the following proposition to convert
any distinguisher for $F$ into a communication protocol that computes
$F$.
\begin{prop}
\label{prop:shift-probab}Let $F\colon X\times Y\to\{-1,1,*\}$ be
a $($possibly partial$)$ communication problem. Suppose that $\Pi$
is a cost-$c$ randomized protocol with output $\pm1$ such that
\begin{align}
\Exp[\Pi(x,y)] & \leq\alpha &  & \text{for all }(x,y)\in F^{-1}(-1),\label{eq:minus-inputs}\\
\Exp[\Pi(x,y)] & \geq\beta &  & \text{for all }(x,y)\in F^{-1}(1),\label{eq:plus-inputs}
\end{align}
where $\alpha,\beta$ are reals with $-1\leq\alpha\leq\beta\leq1.$
Then
\[
R_{\frac{1}{2}-\frac{1}{8}(\beta-\alpha)}(F)\leq c.
\]
\begin{proof}
For a real number $t,$ define $\ssign t$ to be $1$ if $t\geq0$
and $-1$ if $t<0.$ Set $p=|\alpha+\beta|/(2+|\alpha+\beta|)$ and
consider the following randomized protocol $\Pi'$ with input $(x,y)\in X\times Y$:
with probability $p,$ Alice and Bob output $-\ssign(\alpha+\beta)$
without any communication; with the complementary probability $1-p$,
they execute the original protocol $\Pi$ on $(x,y)$ and output its
answer. Clearly, $\Pi'$ has the same cost as $\Pi.$ On every $(x,y)\in F^{-1}(-1),$
\[
\Exp[\Pi'(x,y)]\leq-p\ssign(\alpha+\beta)+(1-p)\alpha=\frac{-(\alpha+\beta)+2\alpha}{2+|\alpha+\beta|}=\frac{\alpha-\beta}{2+|\alpha+\beta|}\leq-\frac{\beta-\alpha}{4},
\]
where the first step uses~\prettyref{eq:minus-inputs}, and the last
step uses $-1\leq\alpha\leq\beta\leq1$. Analogously, on every $(x,y)\in F^{-1}(1),$
\[
\Exp[\Pi'(x,y)]\geq-p\ssign(\alpha+\beta)+(1-p)\beta=\frac{-(\alpha+\beta)+2\beta}{2+|\alpha+\beta|}=\frac{\beta-\alpha}{2+|\alpha+\beta|}\geq\frac{\beta-\alpha}{4},
\]
where the first step uses~\prettyref{eq:plus-inputs}. We have shown
that $\Exp[\Pi'(x,y)F(x,y)]\geq(\beta-\alpha)/4$ on the domain of
$F,$ which is another way of saying that $\Pi'$ computes $F$ with
error at most $\frac{1}{2}-\frac{1}{8}(\beta-\alpha).$
\end{proof}
\end{prop}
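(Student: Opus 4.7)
The plan is to convert the expected-value guarantees~\eqref{eq:minus-inputs}--\eqref{eq:plus-inputs} into a bound on error probability by first \emph{rebalancing} the protocol so that its expected outputs on $F^{-1}(-1)$ and $F^{-1}(1)$ become symmetric about $0$. The key observation is that for a $\pm 1$-valued random variable $Z$ with $\Exp Z=\mu$, one has $\Prob[Z=1]=(1+\mu)/2$ and $\Prob[Z=-1]=(1-\mu)/2$. Hence, once we have a protocol with output expectation in $[-\tau,\tau]$ shifted so that the upper bound on $F^{-1}(-1)$ is $-\tau$ and the lower bound on $F^{-1}(1)$ is $+\tau$, interpreting the output as the decision bit yields error probability at most $(1-\tau)/2$ on both sides.

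First I would introduce a modified protocol $\Pi'$ that, with some probability $p\in[0,1]$, ignores its input and outputs the constant $-\operatorname{sgn}(\alpha+\beta)$ (using no communication), and with the complementary probability $1-p$ simulates $\Pi$. This costs the same as $\Pi$. Its expected output on any $(x,y)$ satisfies
\[
\Exp[\Pi'(x,y)] \;=\; -p\operatorname{sgn}(\alpha+\beta) + (1-p)\Exp[\Pi(x,y)],
\]
so the new upper and lower bounds on $F^{-1}(-1)$ and $F^{-1}(1)$ become
\[
\alpha' = -p\operatorname{sgn}(\alpha+\beta) + (1-p)\alpha, \qquad \beta' = -p\operatorname{sgn}(\alpha+\beta) + (1-p)\beta.
\]

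Next I would choose $p$ so that $\alpha' = -\beta'$, i.e., the two bounds are symmetric about $0$. Solving yields $p = |\alpha+\beta|/(2+|\alpha+\beta|)$, which lies in $[0,1)$ because $|\alpha+\beta|\le 2$. Substituting back gives
\[
\beta' - \alpha' \;=\; \frac{2(\beta-\alpha)}{2+|\alpha+\beta|} \;\ge\; \frac{\beta-\alpha}{2},
\]
so $\beta' \ge (\beta-\alpha)/4$ and $\alpha' \le -(\beta-\alpha)/4$. Finally, outputting $\Pi'(x,y)$ as the protocol's answer, the error probability on $F^{-1}(1)$ is $(1-\Exp[\Pi'(x,y)])/2 \le (1-\beta')/2 \le \frac{1}{2} - \frac{1}{8}(\beta-\alpha)$, and symmetrically on $F^{-1}(-1)$.

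There is no substantive obstacle here; the only delicate point is verifying the sign arithmetic involving $\operatorname{sgn}(\alpha+\beta)$ in the degenerate cases (e.g., when $\alpha+\beta=0$, in which case $p=0$ and no rebalancing is needed, or when $\alpha=\beta$, in which case the conclusion is vacuous). I would handle those by noting that the formula for $p$ and the resulting bounds remain valid or trivially true, and the definition of $\operatorname{sgn}$ at $0$ can be fixed arbitrarily (say, $\operatorname{sgn}(0)=1$).
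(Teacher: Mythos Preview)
Your proposal is correct and essentially identical to the paper's proof: the same rebalancing construction (mix $\Pi$ with the constant output $-\operatorname{sgn}(\alpha+\beta)$), the same choice $p=|\alpha+\beta|/(2+|\alpha+\beta|)$, and the same bound $\beta'\ge(\beta-\alpha)/4$. The only cosmetic difference is that you \emph{derive} $p$ from the symmetry condition $\alpha'=-\beta'$, whereas the paper simply states $p$ and verifies the resulting inequalities directly.
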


\subsection{Communication problems defined}

Let $\mathbb{F}$ be a given field. For nonnegative integers $n,m,r$
with $r\leq\min\{n,m\}$, the \emph{rank problem} is the communication
problem in which Alice and Bob are given matrices $A,B\in\mathbb{F}^{n\times m},$
respectively, and their objective is to determine whether $\rk(A+B)\leq r.$
Formally, this problem corresponds to the Boolean function $\RANK_{r}^{\mathbb{F},n,m}\colon\mathbb{F}^{n\times m}\times\mathbb{F}^{n\times m}\to\{-1,1\}$
given by 
\[
\RANK_{r}^{\mathbb{F},n,m}(A+B)=-1\qquad\Leftrightarrow\qquad\rk(A+B)\leq r.
\]
We also study the corresponding partial problem $\RANK_{r,R}^{\mathbb{F},n,m}$
for nonnegative integers $n,m,r,R$ with $r<R\leq\min\{n,m\}$, defined
on $\mathbb{F}^{n\times m}\times\mathbb{F}^{n\times m}$ by
\[
\RANK_{r,R}^{\mathbb{F},n,m}(A,B)=\begin{cases}
-1 & \text{if }\rk(A+B)=r,\\
1 & \text{if }\rk(A+B)=R,\\
* & \text{otherwise.}
\end{cases}
\]

For a positive integer $n$ and a pair of distinct field elements
$a,b\in\mathbb{F}$, the \emph{determinant problem} $\DET_{a,b}^{\mathbb{F},n}\colon\mathbb{F}^{n\times n}\times\mathbb{F}^{n\times n}\to\{-1,1,*\}$
is given by
\[
\DET_{a,b}^{\mathbb{F},n}(A,B)=\begin{cases}
-1 & \text{if }\det(A+B)=a,\\
1 & \text{if }\det(A+B)=b,\\
* & \text{otherwise.}
\end{cases}
\]
The \emph{rank versus determinant problem} is a hybrid inspired by
the previous two problems. Specifically, for a number $r\in\{0,1,\ldots,n-1\}$
and a nonzero field element $a\in\mathbb{F}\setminus\{0\},$ we define
$\RANKDET_{r,a}^{\mathbb{F},n}\colon\mathbb{F}^{n\times n}\times\mathbb{F}^{n\times n}\to\{-1,1,*\}$
by
\[
\RANKDET_{r,a}^{\mathbb{F},n}(A,B)=\begin{cases}
-1 & \text{if }\rk(A+B)=r,\\
1 & \text{if }\det(A+B)=a,\\
* & \text{otherwise.}
\end{cases}
\]
Note that $\RANKDET_{r,a}^{\mathbb{F},n}$ is a \emph{subproblem}
of both $\RANK_{r,n}^{\mathbb{F},n,n}$ and $\DET_{0,a}^{\mathbb{F},n},$
in the sense that the domain of $\RANKDET_{r,a}^{\mathbb{F},n}$ is
a subset of the domain of each of these other two problems and it
agrees on its domain with those problems.

Consider now the setting where Alice is given an $m$-dimensional
subspace $S\subseteq\mathbb{F}^{n}$ and Bob is given an $\ell$-dimensional
subspace $T\subseteq\mathbb{F}^{n}$, for some nonnegative integers
$n,m,\ell$ with $\max\{m,\ell\}\leq n.$ In the \emph{subspace intersection
problem} with parameter $d,$ Alice and Bob need to determine whether
$S\cap T$ has dimension at least $d$. In the \emph{subspace sum
problem}, they need to determine whether $S+T$ has dimension at most
$d$. Formally, these problems correspond to the Boolean functions
$\INTERSECT_{d}^{\mathbb{F},n,m,\ell}$ and $\SUM_{d}^{\mathbb{F},n,m,\ell}$
that are defined on $\Scal(\mathbb{F}^{n},m)\times\Scal(\mathbb{F}^{n},\ell)$
by
\begin{align*}
\INTERSECT_{d}^{\mathbb{F},n,m,\ell}(S,T) & =-1\qquad\Leftrightarrow\qquad\dim(S\cap T)\geq d,\\
\SUM_{d}^{\mathbb{F},n,m,\ell}(S,T) & =-1\qquad\Leftrightarrow\qquad\dim(S+T)\leq d.
\end{align*}
Their partial counterparts $\INTERSECT_{d_{1},d_{2}}^{\mathbb{F},n,m,\ell}$
and $\SUM_{d_{1},d_{2}}^{\mathbb{F},n,m,\ell}$, for any pair of distinct
integers $d_{1},d_{2}$, are defined on $\Scal(\mathbb{F}^{n},m)\times\Scal(\mathbb{F}^{n},\ell)$
by
\begin{align*}
\INTERSECT_{d_{1},d_{2}}^{\mathbb{F},n,m,\ell}(S,T) & =\begin{cases}
-1 & \text{if }\dim(S\cap T)=d_{1},\\
1 & \text{if }\dim(S\cap T)=d_{2},\\
* & \text{otherwise, }
\end{cases}\\
\rule{0mm}{11mm}\SUM_{d_{1},d_{2}}^{\mathbb{F},n,m,\ell}(S,T) & =\begin{cases}
-1 & \text{if }\dim(S+T)=d_{1},\\
1 & \text{if }\dim(S+T)=d_{2},\\
* & \text{otherwise. }
\end{cases}
\end{align*}
These partial functions are well-defined for any $d_{1},d_{2}$ with
$d_{1}\ne d_{2}.$ Their communication complexity, however, is zero
unless both $d_{1}$ and $d_{2}$ have meaningful values for the problem
in question. Specifically, one must have $d_{1},d_{2}\in[\max\{m,\ell\},\min\{m+\ell,n\}]$
for the subspace sum problem and $d_{1},d_{2}\in[\max\{0,m+\ell-n\},\min\{m,\ell\}]$
for the subspace intersection problem. We record this simple fact
as a proposition below.
\begin{prop}
\label{prop:possible-intersections-and-sums}Let $\mathbb{F}$ be
a field. Let $n,m,\ell$ be nonnegative integers with $\max\{m,\ell\}\leq n.$
Then:
\begin{enumerate}
\item \label{enu:set-of-sums}there exist $S\in\mathcal{S}(\mathbb{F}^{n},m)$
and $T\in\Scal(\mathbb{F}^{n},\ell)$ with $\dim(S+T)=d$ if and only
if $d$ is an integer with $\max\{m,\ell\}\leq d\leq\min\{m+\ell,n\};$
\item \label{enu:set-of-intersections}there exist $S\in\mathcal{S}(\mathbb{F}^{n},m)$
and $T\in\Scal(\mathbb{F}^{n},\ell)$ with $\dim(S\cap T)=d$ if and
only if $d$ is an integer with $\max\{0,m+\ell-n\}\leq d\leq\min\{m,\ell\}.$
\end{enumerate}
\end{prop}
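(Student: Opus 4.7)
The plan is to prove the two parts simultaneously by exploiting the identity $\dim(S+T)=\dim(S)+\dim(T)-\dim(S\cap T)$, which is recorded just before the proposition. Under this identity, the map $d \mapsto m+\ell-d$ is a bijection between the range $[\max\{m,\ell\},\min\{m+\ell,n\}]$ of part~\ref{enu:set-of-sums} and the range $[\max\{0,m+\ell-n\},\min\{m,\ell\}]$ of part~\ref{enu:set-of-intersections}, and subspace pairs $(S,T)$ with $\dim(S+T)=d$ correspond exactly to those with $\dim(S\cap T)=m+\ell-d$. So it suffices to establish~\ref{enu:set-of-sums}.

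For the necessity direction, I would observe three elementary bounds. Since $S+T$ contains $S$, we have $\dim(S+T)\geq\dim S=m$, and symmetrically $\dim(S+T)\geq\ell$, giving the lower bound $\dim(S+T)\geq\max\{m,\ell\}$. For the upper bound, a union of bases for $S$ and $T$ spans $S+T$, so $\dim(S+T)\leq m+\ell$; and since $S+T\subseteq\mathbb{F}^n$, we also have $\dim(S+T)\leq n$.

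For sufficiency, I would exhibit an explicit pair $(S,T)$ achieving any integer $d\in[\max\{m,\ell\},\min\{m+\ell,n\}]$. Set $k=m+\ell-d$; the hypothesis on $d$ guarantees $0\leq k\leq\min\{m,\ell\}$, and also $d\leq n$. Let $e_1,\dots,e_n$ be the standard basis of $\mathbb{F}^n$, and define
\[
S=\Span\{e_1,\dots,e_m\},\qquad T=\Span\{e_1,\dots,e_k,\,e_{m+1},\dots,e_{m+\ell-k}\}.
\]
All indices lie in $\{1,\dots,d\}\subseteq\{1,\dots,n\}$, so $T\subseteq\mathbb{F}^n$ is well-defined. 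A direct check shows $\dim S=m$, $\dim T=k+(\ell-k)=\ell$, $S\cap T=\Span\{e_1,\dots,e_k\}$ has dimension $k$, and hence $\dim(S+T)=m+\ell-k=d$ as required.

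I do not anticipate any serious obstacle; the only mild point to watch is to verify that every index appearing in the construction of $T$ is at most $n$, which follows from $m+\ell-k=d\leq n$, and that the two blocks defining $T$ are disjoint, which holds because $k\leq m$. Part~\ref{enu:set-of-intersections} then follows immediately by applying~\ref{enu:set-of-sums} to $d'=m+\ell-d$ and invoking the dimension identity.
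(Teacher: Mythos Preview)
Your proposal is correct and follows essentially the same approach as the paper: both prove the necessity direction of~\ref{enu:set-of-sums} via the trivial containment bounds, establish sufficiency by an explicit construction with coordinate subspaces spanned by standard basis vectors, and then derive~\ref{enu:set-of-intersections} from~\ref{enu:set-of-sums} via the identity $\dim(S+T)=\dim(S)+\dim(T)-\dim(S\cap T)$. The only cosmetic difference is the indexing of the basis vectors for $T$ (the paper takes $T=\Span\{e_{d-\ell+1},\ldots,e_d\}$, overlapping $S$ at the top, whereas you overlap at the bottom), which has no bearing on the argument.
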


\begin{proof}
\prettyref{enu:set-of-sums} For any subspaces $S,T\subseteq\mathbb{F}^{n},$
we have the trivial bounds $\max\{\dim(S),\dim(T)\}\leq\dim(S+T)\leq\min\{\dim(S)+\dim(T),n\}$.
This proves the ``only if'' part of~\prettyref{enu:set-of-sums}.
For the converse, let $d$ be any integer with $\max\{m,\ell\}\leq d\leq\min\{m+\ell,n\}.$
Then the sets $A=\{1,2,\ldots,m\}$ and $B=\{d-\ell+1,\ldots,d-1,d\}$
satisfy $A,B\subseteq\{1,2,\ldots,n\}$ (because $\ell\leq d\leq n$)
and $A\cup B=\{1,2,\ldots,d\}$ (because $m\leq d\leq m+\ell$). As
a result, $\Span\{e_{1},e_{2},\ldots,e_{m}\}$ and $\Span\{e_{d-\ell+1},\ldots,e_{d-1},e_{d}\}$
are a pair of subspaces in $\mathbb{F}^{n}$ of dimension $m$ and
$\ell,$ respectively, whose sum has dimension $d.$ 

\prettyref{enu:set-of-intersections} Recall that $\dim(S\cap T)=\dim(S)+\dim(T)-\dim(S+T)$
for any subspaces $S,T$. As a result,
\begin{align*}
\{\dim(S\cap T) & :S\in\Scal(\mathbb{F}^{n},m),T\in\Scal(\mathbb{F}^{n},\ell)\}\\
 & =\{m+\ell-\dim(S+T):S\in\Scal(\mathbb{F}^{n},m),T\in\Scal(\mathbb{F}^{n},\ell)\}\\
 & =\{m+\ell-d:d\in\mathbb{Z}\text{ with }\max\{m,\ell\}\leq d\leq\min\{m+\ell,n\}\}\\
 & =\{\max\{0,m+\ell-n\},\ldots,\min\{m,\ell\}-1,\min\{m,\ell\}\},
\end{align*}
where the second step uses~\prettyref{enu:set-of-sums}.
\end{proof}
Let $F\colon X\times Y\to\{-1,1,*\}$ and $F'\colon X'\times Y'\to\{-1,1,*\}$
be (possibly partial) communication problems. A \emph{communication-free
reduction from $F$ to $F'$} is a pair of mappings $\alpha\colon X\to X'$
and $\beta\colon Y\to Y'$ such that $F(x,y)=F'(\alpha(x),\beta(y))$
for all $(x,y)\in\dom F.$ We indicate the existence of a communication-free
reduction from $F$ to $F'$ by writing $F'\succeq F$. In this case,
it is clear that the communication complexity of $F'$ in any given
model is bounded from below by the communication complexity of $F$
in the same model.

\begin{prop}
\label{prop:INTERSECT-reductions}Let $n,m,\ell,r,R$ be integers
with $0\leq r<R\leq\min\{m,\ell\}$ and $\max\{m,\ell\}\leq n.$ Then
\[
\INTERSECT_{r,R}^{\mathbb{F},n,m,\ell}\succeq\INTERSECT_{0,R-r}^{\mathbb{F},n-r,m-r,\ell-r}.
\]
\end{prop}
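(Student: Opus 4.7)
The plan is to exhibit an explicit communication-free reduction by padding each input subspace with a fixed $r$-dimensional ``core'' that is guaranteed to lie in both Alice's and Bob's subspaces. Concretely, let $U=\Span\{e_1,\ldots,e_r\}\subseteq\mathbb{F}^n$ and let $\iota\colon\mathbb{F}^{n-r}\to\mathbb{F}^n$ be the coordinate embedding whose image is $\Span\{e_{r+1},\ldots,e_n\}$. Define
\[
\alpha(S')=U+\iota(S'),\qquad\beta(T')=U+\iota(T')
\]
for $S'\in\Scal(\mathbb{F}^{n-r},m-r)$ and $T'\in\Scal(\mathbb{F}^{n-r},\ell-r)$. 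Since $U\cap\iota(\mathbb{F}^{n-r})=\{0\}$, the dimension of $\alpha(S')$ is exactly $r+(m-r)=m$, and likewise $\dim\beta(T')=\ell$, so the images land in the correct domains.

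The key verification is that $\dim(\alpha(S')\cap\beta(T'))=r+\dim(S'\cap T')$. By construction $U\subseteq\beta(T')$, so \prettyref{fact:sum-intersection-subspaces} (applied with the roles $S\leftarrow\beta(T')$, $S'\leftarrow U$, $T\leftarrow\iota(S')$) gives
\[
\alpha(S')\cap\beta(T')=\beta(T')\cap(U+\iota(S'))=U+\beta(T')\cap\iota(S').
\]
It remains to identify $\beta(T')\cap\iota(S')$. If $v=u+w\in\iota(S')$ with $u\in U$ and $w\in\iota(T')$, then $u=v-w\in\iota(\mathbb{F}^{n-r})$, and since $U\cap\iota(\mathbb{F}^{n-r})=\{0\}$ we must have $u=0$ and $v=w\in\iota(S')\cap\iota(T')=\iota(S'\cap T')$. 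The reverse inclusion is immediate. Thus $\alpha(S')\cap\beta(T')=U+\iota(S'\cap T')$, and because the two summands meet trivially,
\[
\dim(\alpha(S')\cap\beta(T'))=r+\dim(S'\cap T').
\]

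Consequently, $\dim(S'\cap T')=0$ if and only if $\dim(\alpha(S')\cap\beta(T'))=r$, and $\dim(S'\cap T')=R-r$ if and only if $\dim(\alpha(S')\cap\beta(T'))=R$. This is precisely the identity
\[
\INTERSECT_{r,R}^{\mathbb{F},n,m,\ell}(\alpha(S'),\beta(T'))=\INTERSECT_{0,R-r}^{\mathbb{F},n-r,m-r,\ell-r}(S',T')
\]
on the entire domain of the right-hand side, so $(\alpha,\beta)$ is the desired communication-free reduction. There is no real obstacle here: the only point that requires a moment's care is the second application of the modular identity for subspace intersections, where one must verify $\beta(T')\cap\iota(S')=\iota(S'\cap T')$ directly rather than by a formal invocation of \prettyref{fact:sum-intersection-subspaces} (the hypothesis $S'\subseteq S$ of that fact does not apply to $U$ and $\iota(S')$).
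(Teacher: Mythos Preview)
Your proof is correct and takes essentially the same approach as the paper: both pad each input subspace with a fixed $r$-dimensional coordinate subspace and verify that $\dim(\alpha(S')\cap\beta(T'))=r+\dim(S'\cap T')$. The only cosmetic difference is that the paper computes this dimension via the identity $\dim(A\cap B)=\dim A+\dim B-\dim(A+B)$ (observing that $\alpha(S')+\beta(T')=U+\iota(S'+T')$), whereas you identify the intersection directly using \prettyref{fact:sum-intersection-subspaces}; both routes are equally short.
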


\begin{proof}
Consider the injective linear map $\phi\colon\mathbb{F}^{n-r}\to\mathbb{F}^{n}$
that takes any vector and extends it with $r$ zero components to
obtain a vector in $\mathbb{F}^{n}.$ Given arbitrary subspaces $S,T\subseteq\mathbb{F}^{n-r}$
of dimension $m-r$ and $\ell-r,$ respectively, define $S'=\Span(\phi(S)\cup\{e_{n-r+1},\ldots,e_{n-1},e_{n}\})$
and $T'=\Span(\phi(T)\cup\{e_{n-r+1},\ldots,e_{n-1},e_{n}\})$. Then
clearly 
\begin{align*}
\dim(S'\cap T') & =\dim(S')+\dim(T')-\dim(S'+T')\\
 & =\dim(S)+r+\dim(T)+r-\dim(S+T)-r\\
 & =\dim(S)+\dim(T)-\dim(S+T)+r\\
 & =\dim(S\cap T)+r,
\end{align*}
whence the reduction $\INTERSECT_{0,R-r}^{\mathbb{F},n-r,m-r,\ell-r}(S,T)=\INTERSECT_{r,R}^{\mathbb{F},n,m,\ell}(S',T').$ 
\end{proof}

\section{\label{sec:rank-problem}The matrix rank problem}

In this section, we prove a tight lower bound on the randomized and
quantum communication complexity of the rank problem. As discussed
in the introduction, we obtain this lower bound by constructing a
dual matrix $\Phi$ with certain properties, namely, low spectral
norm, low $\ell_{1}$ norm, and high correlation with the characteristic
matrix of the rank problem. We start in Section~\prettyref{subsec:the-pn-function}
by analyzing the probabilities $P_{n}$ that arise in the recurrence
relation for the $\Gamma_{n}$ function. The latter plays an important
role in our proof and is studied in Section~\prettyref{subsec:the-gamma-function}.
Section~\prettyref{subsec:univar-dual-obj} constructs a univariate
dual object $\phi$ defined on $\{0,1,\ldots,n\}$ and studies its
analytic and metric properties. We build on $\phi$ to construct a
dual matrix $E_{\phi}$ in Section~\prettyref{subsec:from-univar-dual-obj-to-dual-matr},
and discuss how the properties of $\phi$ give rise to analogous properties
of $E_{\phi}$. Sections~\ref{subsec:approx-trace-norm}~and~\ref{subsec:communication-lower-bounds}
establish lower bounds for the approximate trace norm of the characteristic
matrix and the communication complexity of the rank problem, with
$\Phi=E_{\phi}$ used as the dual witness. We prove a matching communication
upper bound in Section~\prettyref{subsec:communication-upper-bounds}.
Section~\prettyref{subsec:streaming} concludes our study of the
rank problem with an application to streaming complexity.

Throughout this section, the underlying field is $\mathbb{F}_{q}$
for an arbitrary prime power $q.$ The root of unity $\omega$ and
the notation $\omega^{x}$ for $x\in\mathbb{F}_{q}$ are as defined
in \prettyref{sec:Fourier}.

\subsection{\label{subsec:the-pn-function}The $P_{n}$ function}

The $P_{n}$ function, defined next, conveys useful information about
random nonsingular matrices of order $n$ over a given field.
\begin{defn}
\label{def:P_n}Let $n\geq1$ be a given integer. For nonnegative
integers $s,t,r\in\{0,1,\ldots,n\},$ define $P_{n}(s,t,r)$ to be
the probability that the upper-left $s\times t$ quadrant of a uniformly
random nonsingular matrix in $\FF_{q}^{n\times n}$ has rank $r$:
\begin{equation}
P_{n}(s,t,r)=\Prob_{X\in\Mcal_{n}}[\rk(I_{s}XI_{t})=r].\label{eq:def-Pn}
\end{equation}
\end{defn}

To derive a closed-form expression for $P_{n}$, we essentially need
to count the number of ways to complete a given $s\times t$ matrix
of rank $r$ to a nonsingular matrix of order $n$. We break this
counting task into two steps, where the first step is to count the
number of completions of an $s\times t$ matrix of rank $r$ to an
$s\times n$ matrix of rank $s$.
\begin{lem}
\label{lem:extend-quadrant}Let $s,t,r,m$ be nonnegative integers
with $r\leq\min\{s,t\}.$ Let $A\in\Mcal_{r}^{s,t}$ be given. Then
the number of matrices $B\in\mathbb{F}_{q}^{s\times m}$ for which
$\rk\begin{bmatrix}A & B\end{bmatrix}=s$ is 
\[
q^{rm}\,|\Mcal_{s-r}^{s-r,m}|.
\]
\end{lem}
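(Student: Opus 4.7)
The plan is to show that the count depends only on $r = \rk A$ (not on $A$ itself), reduce to a convenient canonical form, and then read off the count from a block-structured rank condition.

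First I would argue that the quantity of interest is invariant under row/column operations on $A$. Given any $A \in \Mcal_r^{s,t}$, pick invertible $P \in \Mcal_s$ and $Q \in \Mcal_t$ with $PAQ = \begin{bmatrix} I_r & 0 \\ 0 & 0 \end{bmatrix}$ (Smith normal form over a field). Then for any $B \in \mathbb{F}_q^{s \times m}$,
\[
\rk\begin{bmatrix}A & B\end{bmatrix} = \rk\left(P\begin{bmatrix}A & B\end{bmatrix}\begin{bmatrix}Q & 0 \\ 0 & I_m\end{bmatrix}\right) = \rk\begin{bmatrix}PAQ & PB\end{bmatrix},
\]
and the map $B \mapsto PB$ is a bijection on $\mathbb{F}_q^{s \times m}$. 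So I may assume $A = \begin{bmatrix} I_r & 0 \\ 0 & 0 \end{bmatrix}$ without loss of generality.

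With this canonical $A$, I would partition $B = \begin{bmatrix} B_1 \\ B_2 \end{bmatrix}$ where $B_1 \in \mathbb{F}_q^{r \times m}$ and $B_2 \in \mathbb{F}_q^{(s-r) \times m}$. Then
\[
\begin{bmatrix}A & B\end{bmatrix} = \begin{bmatrix} I_r & 0 & B_1 \\ 0 & 0 & B_2 \end{bmatrix},
\]
whose rank (by clearing $B_1$ with row operations from the top block) is $r + \rk B_2$. The condition $\rk[A\ B] = s$ is therefore equivalent to $\rk B_2 = s - r$, with no constraint on $B_1$. Counting the choices independently gives $q^{rm}$ options for $B_1$ and $|\Mcal_{s-r}^{s-r,m}|$ options for $B_2$, yielding the claimed product.

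I do not expect real obstacles here: the content is elementary linear algebra plus counting. The only minor points to handle carefully are (i) that the Smith normal form step is valid for every $r \in \{0,1,\dots,\min\{s,t\}\}$, including the edge cases $r = 0$, $r = s$, or $r = t$ (where the block structure degenerates but the argument still goes through), and (ii) consistency with the convention in the paper that $|\Mcal_{0}^{0,m}| = 1$, which correctly handles the case $r = s$ (there is no $B_2$ block, and any $B$ works, giving $q^{sm}$ matrices, matching the formula).
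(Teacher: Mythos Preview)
Your proof is correct and follows essentially the same approach as the paper: reduce to a canonical form of $A$ via invertible transformations that induce a bijection on the $B$'s, then read off the count from the block structure. The only cosmetic difference is that the paper uses row operations alone (an invertible $V$ making the bottom $s-r$ rows of $VA$ zero) rather than full Smith normal form, but the substance is identical.
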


\begin{proof}
If $r=0,$ then $\rk\begin{bmatrix}A & B\end{bmatrix}=\rk B$. As
a result, $\rk\begin{bmatrix}A & B\end{bmatrix}=s$ if and only if
$B\in\Mcal_{s}^{s,m}.$ Therefore, the lemma holds in this case. In
what follows, we consider $r\geq1,$ which forces $s$ and $t$ to
be positive integers.

Define the matrices $A'$ and $A''$ to be the top $r$ rows of $A$
and the bottom $s-r$ rows of $A,$ respectively. We first consider
the possibility when $A''$ is zero or empty. Here, the column span
of $A'$ is necessarily all of $\mathbb{F}_{q}^{r}.$ Given an $s\times m$
matrix $B,$ partition it into $B'$ and $B''$ conformably with the
partition of $A$. Then
\[
\rk\begin{bmatrix}A & B\end{bmatrix}=\rk\begin{bmatrix}A' & B'\\
0 & B''
\end{bmatrix}=\rk\begin{bmatrix}A' & 0\\
0 & B''
\end{bmatrix}=\rk(A')+\rk(B'')=r+\rk(B'').
\]
Thus, $\begin{bmatrix}A & B\end{bmatrix}$ has rank $s$ if and only
if $\rk(B'')=s-r.$ This means that there are $|\Mcal_{s-r}^{s-r,m}|$
ways to choose $B''$, and independently $q^{rm}$ ways to choose
$B'$, such that $\rk\begin{bmatrix}A & B\end{bmatrix}=s$.

It remains to examine the case of a general matrix $A$ of rank $r\geq1.$
Let $V$ be an invertible matrix such that the bottom $s-r$ rows
of $VA$ are zero. Let $\Mcal$ be the set of $s\times m$ matrices
$M$ for which $\rk\begin{bmatrix}VA & M\end{bmatrix}=s$. Then $\rk\begin{bmatrix}A & B\end{bmatrix}=s$
if and only if $VB\in\Mcal.$ In particular, the number of matrices
$B$ for which $\rk\begin{bmatrix}A & B\end{bmatrix}=s$ is $|\Mcal|$.
Since $|\Mcal|=q^{rm}\,|\Mcal_{s-r}^{s-r,m}|$ by the previous paragraph,
we are done.
\end{proof}
We now derive an exact expression for $P_{n}$ and establish its relevant
algebraic and analytic properties.
\begin{lem}
Let $n\geq1$ be a given integer. Then for all $s,t,r\in\{0,1,\ldots,n\}$\emph{:}\label{lem:probquadrant}
\end{lem}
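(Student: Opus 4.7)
The plan is to establish the closed form for $P_n(s,t,r)$ via a two-stage counting of the nonsingular matrices $X \in \mathcal{M}_n$ whose upper-left $s \times t$ block has rank exactly $r$, and then read the algebraic and analytic properties directly off the resulting formula. Partition $X$ into blocks
\[
X = \begin{bmatrix} A & B \\ C & D \end{bmatrix},
\]
where $A \in \mathbb{F}_q^{s \times t}$, $B \in \mathbb{F}_q^{s \times (n-t)}$, and $[C\;D] \in \mathbb{F}_q^{(n-s) \times n}$. The event $\rk(I_s X I_t) = r$ forces $A \in \mathcal{M}_r^{s,t}$, so there are $|\mathcal{M}_r^{s,t}|$ choices for the upper-left block, and it remains to count admissible completions $(B,C,D)$.

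I decompose this completion into two sub-stages. In the first, I count the $B \in \mathbb{F}_q^{s \times (n-t)}$ for which the top strip $[A\;B]$ has full row rank $s$; Lemma \ref{lem:extend-quadrant} with parameters $(s,t,r,n-t)$ gives $q^{r(n-t)} |\mathcal{M}_{s-r}^{s-r,\,n-t}|$ valid $B$'s. In the second, given any top strip of rank $s$, I count the bottom strips $[C\;D]$ that extend it to a nonsingular $X$; transposing and applying Lemma \ref{lem:extend-quadrant} once more, now with parameters $(n,s,s,n-s)$, yields $q^{s(n-s)} |\mathcal{M}_{n-s}|$ valid bottom strips. Crucially, this second count depends on $[A\;B]$ only through its row span, so the sub-stages are independent. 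Multiplying all factors and dividing by $|\mathcal{M}_n|$ produces
\[
P_n(s,t,r) \;=\; \frac{|\mathcal{M}_r^{s,t}| \cdot q^{r(n-t)} |\mathcal{M}_{s-r}^{s-r,\,n-t}| \cdot q^{s(n-s)} |\mathcal{M}_{n-s}|}{|\mathcal{M}_n|},
\]
and substituting Proposition \ref{prop:nummatr} yields whichever clean closed form the lemma asserts.

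For the algebraic and analytic properties, inspection of the formula suffices. The factor $|\mathcal{M}_r^{s,t}|$ contributes $\prod_{i=0}^{r-1}(q^t - q^i)$, a polynomial in $q^t$ of degree $r$, and the factor $q^{r(n-t)} |\mathcal{M}_{s-r}^{s-r,\,n-t}|$ contributes $q^{r(n-t)}\prod_{i=0}^{s-r-1}(q^{n-t}-q^i)$, which after pulling out the overall scaling becomes a polynomial in $q^{-t}$ of degree at most $s-r$. Hence $P_n(s,t,r)$ viewed as a function of $t$ (with $n,s,r$ fixed) has the polynomial structure in $q^{-t}$ needed downstream to conclude that $\Gamma_n(s,t)$ is a polynomial in $q^{-t}$ of degree at most $s$. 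Any pointwise bound follows by substituting the two-sided estimates of Corollary \ref{cor:nummatrbound} into the explicit formula and cancelling common powers of $q$. The main obstacle is the verification that the two sub-stages really decouple, i.e., that the count in the second stage depends only on the row span of $[A\;B]$ and not on the particular $A$ and $B$ chosen; this follows because Lemma \ref{lem:extend-quadrant} is invariant under right multiplication by a nonsingular matrix, and any two $s \times n$ matrices of rank $s$ are related by such a transformation. Once this invariance is noted, the two counts multiply cleanly and the remainder is routine algebraic simplification.
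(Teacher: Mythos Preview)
Your proposal is correct and follows essentially the same approach as the paper: count nonsingular $X$ with prescribed upper-left block rank by first choosing $A\in\mathcal{M}_r^{s,t}$, then extending to a full-rank top strip via Lemma~\ref{lem:extend-quadrant}, then completing to a nonsingular matrix, and finally reading off the polynomial-in-$q^{-t}$ structure and the pointwise bound from the explicit formula together with Corollary~\ref{cor:nummatrbound}. The only cosmetic differences are that the paper counts the second extension directly as $(q^n-q^s)\cdots(q^n-q^{n-1})$ rather than invoking Lemma~\ref{lem:extend-quadrant} a second time, and it records the trivial support constraint $P_n(s,t,r)=0$ for $r>\min\{s,t\}$ or $r<s+t-n$ (your item~(i)), which you omit; also, your degree bookkeeping is slightly off---the crucial point is that the $q^{rt}$ coming from $\prod_{i=0}^{r-1}(q^t-q^i)$ cancels against the $q^{-rt}$ hidden in $q^{r(n-t)}$, leaving a genuine polynomial in $q^{-t}$ of total degree $r+(s-r)=s$.
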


\begin{enumerate}
\item $P_{n}(s,t,r)=0$ if $r>\min\{s,t\}$ or $r<s+t-n$;\label{enu:probsupp}
\item \label{enu:probexpr}$P_{n}(s,t,r)=q^{r(n-t)}|\Mcal_{r}^{s,t}|\,|\Mcal_{s-r}^{s-r,n-t}|/((q^{n}-1)(q^{n}-q)\cdots(q^{n}-q^{s-1}));$
\item for any fixed values of $n,s,r$, the quantity $P_{n}(s,t,r)$ as
a function of $t\in\{0,1,\ldots,n\}$ is a polynomial in $q^{-t}$
of degree at most $s;$ \label{enu:probpoly}
\item $P_{n}(s,t,r)\leq16q^{-(s-r)(t-r)}.$ \label{enu:probbound}
\end{enumerate}
\begin{proof}
\ref{enu:probsupp} Since the quadrant of interest is an $s\times t$
matrix, the first inequality is trivial. For the second inequality,
observe that the matrix $I_{s}XI_{t}$ in the defining equation~\prettyref{eq:def-Pn}
satisfies $\rk(I_{s}XI_{t})\geq\rk I_{s}+\rk(XI_{t})-n=s+t-n$ by~\prettyref{fact:rank-of-product}. 

\ref{enu:probexpr} If $r>\min\{s,t\},$ then the left-hand side and
right-hand side of~\prettyref{enu:probexpr} both vanish due to~\prettyref{enu:probsupp}
and the definition of $\Mcal_{r}^{s,t}$. We now treat the case $r\leq\min\{s,t\}.$
Letting $\Mcal$ stand for the set of nonsingular matrices of order
$n$ whose upper-left $s\times t$ quadrant has rank $r$, we have
\begin{equation}
P_{n}(s,t,r)=\frac{|\Mcal|}{|\Mcal_{n}|}.\label{eq:P-ratio}
\end{equation}
A matrix in $\Mcal$ can be chosen by the following three-step process:
choose a matrix in $\Mcal_{r}^{s,t}$ for the upper-left quadrant;
extend the quadrant to a matrix in $\Mcal_{s}^{s,n}$, which by \prettyref{lem:extend-quadrant}
can be done in $q^{r(n-t)}\,|\Mcal_{s-r}^{s-r,n-t}|$ ways; and finally
add $n-s$ rows to obtain an invertible matrix, which can be done
in $(q^{n}-q^{s})(q^{n}-q^{s+1})\cdots(q^{n}-q^{n-1})$ ways. Altogether,
we obtain
\[
|\Mcal|=|\Mcal_{r}^{s,t}|\cdot q^{r(n-t)}\,|\Mcal_{s-r}^{s-r,n-t}|\cdot(q^{n}-q^{s})(q^{n}-q^{s+1})\cdots(q^{n}-q^{n-1}),
\]
whereas \prettyref{prop:nummatr} gives 
\[
|\Mcal_{n}|=(q^{n}-1)(q^{n}-q)\cdots(q^{n}-q^{n-1}).
\]
Making these substitutions in \prettyref{eq:P-ratio} completes the
proof.

\prettyref{enu:probpoly} We claim that for all $s,t,r\in\{0,1,\ldots,n\},$
\begin{multline}
P_{n}(s,t,r)=q^{r(n-t)}\gbinom sr(q^{t}-1)(q^{t}-q)\cdots(q^{t}-q^{r-1})\\
\times\frac{(q^{n-t}-1)(q^{n-t}-q)\cdots(q^{n-t}-q^{s-r-1})}{(q^{n}-1)(q^{n}-q)\cdots(q^{n}-q^{s-1})}.\qquad\label{eq:P-str-explicit-formula}
\end{multline}
Indeed, in the case when $r>\min\{s,t\}$ or $r<s+t-n,$ the right-hand
side vanishes and therefore the equality holds due to~\prettyref{enu:probsupp}.
In the complementary case, \prettyref{prop:nummatr} gives closed-form
expressions for $|\Mcal_{r}^{s,t}|$ and $|\Mcal_{s-r}^{s-r,n-t}|$
which, when substituted in~\prettyref{enu:probexpr}, result in~\prettyref{eq:P-str-explicit-formula}.
This settles~\prettyref{eq:P-str-explicit-formula} for all $s,t,r\in\{0,1,\ldots,n\}.$

Rewrite~\prettyref{eq:P-str-explicit-formula} to obtain
\begin{multline}
P_{n}(s,t,r)=q^{rn}\gbinom sr(1-q^{-t})(1-q^{-t+1})\cdots(1-q^{-t+r-1})\\
\times\frac{(q^{n-t}-1)(q^{n-t}-q)\cdots(q^{n-t}-q^{s-r-1})}{(q^{n}-1)(q^{n}-q)\cdots(q^{n}-q^{s-1})}.\qquad\label{eq:P-str-explicit-formula-alternate}
\end{multline}
Now, fix $n,s,r$ arbitrarily. If $r\leq s,$ then \prettyref{eq:P-str-explicit-formula-alternate}
makes it clear that $P_{n}(s,t,r)$ is a polynomial in $q^{-t}$ of
degree at most $r+(s-r)=s.$ If $r>s,$ then $P_{n}(s,t,r)$ is identically
zero and thus trivially a polynomial in $q^{-t}$ of degree at most
$s.$

\prettyref{enu:probbound} For $r>s,$ we have $P_{n}(s,t,r)=0$ by~\prettyref{enu:probsupp}
and therefore the claimed upper bound holds trivially. In the complementary
case, simplify~\prettyref{eq:P-str-explicit-formula} to obtain
\begin{align*}
P_{n}(s,t,r) & \le q^{r(n-t)}\binom{s}{r}_{q}q^{tr}\cdot\frac{q^{(n-t)(s-r)}}{(q^{n}-1)(q^{n}-q)\cdots(q^{n}-q^{s-1})}\\
 & \leq q^{r(n-t)}\binom{s}{r}_{q}q^{tr}\cdot4q^{(n-t)(s-r)}q^{-ns}\\
 & \leq q^{r(n-t)}\cdot4q^{r(s-r)}q^{tr}\cdot4q^{(n-t)(s-r)}q^{-ns}\\
 & =16q^{-(s-r)(t-r)},
\end{align*}
where the second and third steps apply~\prettyref{prop:prodprop}
and~\prettyref{cor:qbinbound}, respectively.
\end{proof}

\subsection{\label{subsec:the-gamma-function}The $\Gamma_{n}$ function}

A basic building block in our construction is the characteristic function
of matrices in $\mathbb{F}_{q}^{n\times n}$ of a given rank. Its
Fourier spectrum is best understood in terms of what we call the $\Gamma_{n}$
function.
\begin{defn}
\label{def:Gamma}Let $n\geq1$ be a given integer. For $s,t\in\{0,1,\ldots,n\},$
define
\begin{align*}
\Gamma_{n}(s,t) & =\Exp_{\substack{\rk A=s\\
\rk B=t
}
}\omega^{\langle A,B\rangle},
\end{align*}
where the expectation is taken with respect to the uniform distribution
on $\Mcal_{s}^{n,n}\times\Mcal_{t}^{n,n}$.
\end{defn}

Sun and Wang~\cite{sunwang12communication-linear} studied the Fourier
spectrum of the nonsingularity function on $\mathbb{F}_{q}^{n\times n},$
defined to be $1$ on nonsingular matrices and $0$ otherwise. In
our notation, they established the following result.
\begin{lem}
For any integers $n\geq1$ and $r\in\{0,1,\ldots,n\},$ \label{lem:sunwang}
\begin{align*}
\Gamma_{n}(n,r) & =\frac{(-1)^{r}q^{\binom{r}{2}}}{(q^{n}-1)(q^{n}-q)\cdots(q^{n}-q^{r-1})}.
\end{align*}
\end{lem}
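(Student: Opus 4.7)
The plan is to reduce $\Gamma_n(n,r)$ to a character sum against a canonical rank-$r$ matrix and then establish a one-step recurrence in $r$ that is solved by induction.

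\textbf{Reduction to canonical form.} Every $B \in \Mcal_r^{n,n}$ factors as $B = PJ_rQ$, where $P,Q \in \Mcal_n$ and $J_r$ denotes the $n\times n$ matrix with an $I_r$ block in the top-left corner and zeros elsewhere. The transfer identity of \prettyref{fact:inner-product-vs-trace} gives $\langle A,B\rangle = \langle P\tr A Q\tr, J_r\rangle$, and \prettyref{prop:random-matrices} guarantees that $P\tr A Q\tr$ is again uniformly distributed on $\Mcal_n$ whenever $A$ is. Hence the inner expectation over $A$ with $B$ fixed does not depend on $B$, and
\[
\Gamma_n(n,r) \;=\; \Exp_{A \in \Mcal_n}\,\omega^{\langle A, J_r\rangle} \;=\; \frac{1}{|\Mcal_n|}\sum_{A \in \Mcal_n}\omega^{A_{11}+A_{22}+\cdots+A_{rr}}.
\]

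\textbf{Recurrence.} I would next prove the identity
\[
|\Mcal_n|\,\Gamma_n(n,r) \;=\; -q^{n-1}\,|\Mcal_{n-1}|\,\Gamma_{n-1}(n-1,r-1), \qquad r \geq 1.
\]
Group $A \in \Mcal_n$ by the $n\times(n-1)$ matrix $A'$ obtained by deleting column $r$ from $A$, and write $a \in \mathbb{F}_q^n$ for the removed column. Nonsingularity of $A$ is equivalent to $A'$ having rank $n-1$ together with $a \notin W$, where $W$ is the column span of $A'$. Since $A_{ii}$ for $i < r$ depends only on $A'$ while $A_{rr} = a_r$, the sum factors, and the inner character sum $\sum_{a \notin W}\omega^{a_r}$ is the key piece. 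Using $\sum_{a \in \mathbb{F}_q^n}\omega^{a_r} = 0$ from equation~\eqref{eq:omegasum}, this equals $-\sum_{a \in W}\omega^{a_r}$, which vanishes unless the character $a \mapsto \omega^{a_r}$ is trivial on $W$; triviality is equivalent to $W = \{x \in \mathbb{F}_q^n : x_r = 0\}$, i.e., to the $r$-th row of $A'$ being zero, and in that case the value is $-q^{n-1}$. The surviving $A'$ are in bijection with nonsingular $\tilde A \in \Mcal_{n-1}$ (by deleting the zero $r$-th row), under which $\sum_{i<r}A'_{ii} = \sum_{i<r}\tilde A_{ii}$; collecting these pieces yields the recurrence.

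\textbf{Solving the recurrence.} The base case $r = 0$ gives $\Gamma_n(n,0) = 1$, matching the empty product on the right-hand side of the claimed formula. For the inductive step, consecutive closed-form values stand in ratio $-q^{r-1}/(q^n - q^{r-1})$, so it suffices to check that the recurrence reproduces this ratio. Using $|\Mcal_n| = \prod_{i=0}^{n-1}(q^n-q^i)$ together with the factorization $q^n - q^{i+1} = q\,(q^{n-1}-q^i)$, the $n-r$ trailing factors on the two sides cancel telescopically, each producing a factor of $q^{-1}$, and the leftover $q^{n-1}\cdot q^{-(n-r)} = q^{r-1}$ closes the identity; combined with $\binom{r}{2} = \binom{r-1}{2}+(r-1)$, this completes the induction. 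The only substantive step in the whole argument is the character-sum evaluation inside the recurrence; everything else is routine bookkeeping.
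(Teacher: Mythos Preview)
Your argument is correct. The reduction to $\Exp_{A\in\Mcal_n}\omega^{A_{11}+\cdots+A_{rr}}$ is the same as the paper's Proposition (attributed to Sun and Wang), but from there the two proofs diverge. The paper right-multiplies by a uniformly random nonsingular \emph{lower-triangular} matrix $L$ and averages over $L$ first: because $(XL)_{ii}=\sum_{j\ge i}X_{ij}L_{ji}$ and the subdiagonal entries $L_{ji}$ are uniform in $\mathbb{F}_q$, this expectation vanishes unless $X_{ij}=0$ for all $i\le r$, $j>i$; for such ``nice'' $X$ the diagonal contribution is $(-1)^r/(q-1)^r$, and the result follows from computing $\Prob[X\text{ nice}]$ directly. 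Your approach instead peels off column $r$ and reduces to $\Gamma_{n-1}(n-1,r-1)$ via the character sum $\sum_{a\notin W}\omega^{a_r}=-\sum_{a\in W}\omega^{a_r}$, which is $-q^{n-1}$ exactly when row $r$ of the remaining matrix vanishes. The paper's route is a one-shot computation with a clean randomization trick; yours trades that trick for a straightforward induction whose single nontrivial step (the column-sum evaluation) is arguably more elementary. Both end up isolating the same structural condition---vanishing of certain entries in row $r$---just from different angles. One small expositional comment: your sentence about ``consecutive closed-form values'' standing in ratio $-q^{r-1}/(q^n-q^{r-1})$ is potentially misleading, since the recurrence links $(n,r)$ to $(n-1,r-1)$ rather than to $(n,r-1)$; the verification you actually carry out is the right one.
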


\noindent The proof of Sun and Wang~\cite{sunwang12communication-linear}
is stated for fields $\mathbb{F}_{p}$ with prime $p$, but their
analysis readily extends to fields of cardinality a prime power. In
Appendix~\ref{sec:nonsingularity}, we prove \prettyref{lem:sunwang}
from scratch in our desired generality, using a simpler proof than
that of~\cite{sunwang12communication-linear}.

Our next lemma collects crucial properties of $\Gamma_{n}(s,t)$ for
general values of $s,t$.
\begin{lem}
Let $n\geq1$ be a given integer. Then for all $s,t\in\{0,1,\ldots,n\}$$:$
\label{lem:gammaprop}
\end{lem}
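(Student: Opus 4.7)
I expect the lemma to assert (based on the proof sketch in the introduction) three kinds of properties of $\Gamma_n(s,t)$: a symmetry $\Gamma_n(s,t)=\Gamma_n(t,s)$, the algebraic fact that for fixed $n,s$ the function $t\mapsto \Gamma_n(s,t)$ is a polynomial in $q^{-t}$ of degree at most $s$, and an analytic decay estimate of the form $|\Gamma_n(s,t)|\le c\, q^{-st/2}$. My plan is to route all of these through the recurrence
\[
\Gamma_n(s,t)=\sum_{r=0}^{n} P_n(s,t,r)\,\Gamma_n(n,r),
\]
which already appeared in the paper's overview. The symmetry $\Gamma_n(s,t)=\Gamma_n(t,s)$ is immediate from the definition together with $\omega^{\langle A,B\rangle}=\omega^{\langle B,A\rangle}$.

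The first and main step is to prove the recurrence. Starting from the definition, I parameterize $A\in\mathcal{M}_s^{n,n}$ as $A=U I_s X$ and $B\in\mathcal{M}_t^{n,n}$ as $B=V I_t Y$, with $U,X,V,Y$ independent and uniform on $\mathcal{M}_n$. By \prettyref{prop:random-matrices}\prettyref{enu:XAY}, $A$ and $B$ are then uniform on their respective rank strata. Using \prettyref{fact:inner-product-vs-trace} and the cyclicity of trace,
\[
\langle U I_s X, V I_t Y\rangle=\operatorname{trace}(I_s\,U^{\!\top}\!V\,I_t\,YX^{\!\top}).
\]
Setting $W=U^{\!\top}V$ and $Z=YX^{\!\top}$, both are uniform on $\mathcal{M}_n$ (\prettyref{prop:random-matrices}\prettyref{enu:XA-AX}), so that
\[
\Gamma_n(s,t)=\mathop{\mathbb{E}}_{W\in\mathcal{M}_n}\mathop{\mathbb{E}}_{Z\in\mathcal{M}_n}\omega^{\langle I_s W I_t,\,Z^{\!\top}\rangle}.
\]
For fixed $W$, conditioned on $\operatorname{rk}(I_s W I_t)=r$, the inner expectation depends only on $r$: any rank-$r$ matrix $M$ can be written $M=P I_r Q$ with $P,Q$ nonsingular, and by \prettyref{prop:random-matrices}\prettyref{enu:XA-AX} the distribution of $P^{\!\top}Z^{\!\top}Q^{\!\top}$ is uniform on $\mathcal{M}_n$. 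Hence the inner expectation is exactly $\Gamma_n(n,r)$, and the recurrence follows from the definition of $P_n(s,t,r)$.

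The second step is the algebraic property. By \prettyref{lem:probquadrant}\prettyref{enu:probpoly}, for each fixed $r$ the quantity $P_n(s,t,r)$ is a polynomial in $q^{-t}$ of degree at most $s$, while $\Gamma_n(n,r)$ does not depend on $t$ at all. Summing over $r$ therefore yields a polynomial in $q^{-t}$ of degree at most $s$, giving the hyperpolynomial conclusion.

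The third step, which I expect to be the main technical obstacle, is the bound $|\Gamma_n(s,t)|\le c\, q^{-st/2}$. I will combine \prettyref{lem:probquadrant}\prettyref{enu:probbound}, giving $P_n(s,t,r)\le 16\, q^{-(s-r)(t-r)}$, with \prettyref{lem:sunwang}, which after bounding $\prod_{i=0}^{r-1}(q^n-q^i)\ge (q^n/2)^r$ yields $|\Gamma_n(n,r)|\le 2^r q^{\binom{r}{2}-rn}$. Substituting into the recurrence gives
\[
|\Gamma_n(s,t)|\le \sum_{r=0}^{\min\{s,t\}} 16\cdot 2^r\, q^{-(s-r)(t-r)+\binom{r}{2}-rn}.
\]
Expanding the exponent as $-st+r(s+t-n)-\tfrac{r(r+1)}{2}$ and using $r\le\min\{s,t\}\le n$, one checks that each summand is at most a constant times $q^{-st/2}$, with the $r=0$ term dominating; a geometric-series estimate then absorbs the remaining terms into an absolute constant $c$. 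The delicate part is keeping track of the signs in $r(s+t-n)$ when $r$ is close to $\min\{s,t\}$; here the factor $-r^2/2$ saves the day, and one might need to split cases $s+t\le n$ versus $s+t>n$ to make the bookkeeping transparent.
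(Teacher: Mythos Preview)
Your approach is essentially the paper's: the recurrence via parametrizing rank-$s$ and rank-$t$ matrices by nonsingular conjugates of $I_s,I_t$, the hyperpolynomial property via \prettyref{lem:probquadrant}\prettyref{enu:probpoly}, and the decay bound via \prettyref{lem:probquadrant}\prettyref{enu:probbound} together with \prettyref{lem:sunwang}. The derivation of the recurrence is correct (your identification of the inner expectation with $\Gamma_n(n,r)$ implicitly uses that $\Exp_{Z\in\Mcal_n}\omega^{\langle M,Z\rangle}$ depends only on $\rk M$, which is exactly what you prove; the paper arrives at the same point with one extra auxiliary nonsingular matrix).

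There is, however, a genuine quantitative gap in your decay estimate. Your bound $\prod_{i=0}^{r-1}(q^n-q^i)\ge (q^n/2)^r$, hence $|\Gamma_n(n,r)|\le 2^r q^{\binom{r}{2}-rn}$, is too weak to produce an \emph{absolute} constant $c$ in $|\Gamma_n(s,t)|\le c\,q^{-st/2}$. Concretely, take $q=2$ and $s=t=n$: the only surviving term is $r=n$, and your summand is $16\cdot 2^{n}\cdot q^{-n(n+1)/2}=16\cdot 2^{n/2}\cdot q^{-n^2/2}$, which exceeds $q^{-st/2}$ by the unbounded factor $16\cdot 2^{n/2}$. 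The fix is to replace your per-factor bound by the global estimate $\prod_{i=0}^{r-1}(q^n-q^i)=q^{rn}\prod_{j=n-r+1}^{n}(1-q^{-j})\ge q^{rn}/4$ from \prettyref{prop:prodprop}, yielding $|\Gamma_n(n,r)|\le 4\,q^{\binom{r}{2}-rn}$ with an $r$-independent constant; this is precisely what the paper does.

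A secondary imprecision: the dominant term is not $r=0$ but $r_0=\max\{0,s+t-n\}$. Indeed $P_n(s,t,r)=0$ for $r<s+t-n$ by \prettyref{lem:probquadrant}\prettyref{enu:probsupp}, so the sum should start at $r_0$, and your exponent $-st+r(s+t-n)-\tfrac{r(r+1)}{2}$ is maximized over $r\ge r_0$ at $r=r_0$. The paper then completes the square to show this maximum is at most $-st/2$ (the nontrivial case is $s+t\ge n+1$, where one must verify $(n-s-t+\tfrac12)^2\le st$), and bounds the remaining terms by a geometric series since the exponent, being integer-valued and strictly concave, drops by at least $1$ at each subsequent step. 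Your case split $s+t\le n$ versus $s+t>n$ is exactly the right move here.
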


\begin{enumerate}
\item $|\Gamma_{n}(s,t)|\leq1;$\label{enu:gammatriv}
\item $\Gamma_{n}(s,t)=\Gamma_{n}(t,s);$\label{enu:gammasym}
\item $\Gamma_{n}(s,t)=\sum_{r=0}^{n}P_{n}(s,t,r)\Gamma_{n}(n,r);$\label{enu:gammasum}
\item for $n,s$ fixed, $\Gamma_{n}(s,t)$ as a function of $t\in\{0,1,\ldots,n\}$
is a polynomial in $q^{-t}$ of degree at most $s$;\label{enu:gammapoly}
\item $|\Gamma_{n}(s,t)|\leq128q^{-st/2}.$\label{enu:gammabound}
\end{enumerate}
\begin{proof}
\ref{enu:gammatriv} Using $|\omega|=1$ and the triangle inequality,
\[
\left|\Gamma_{n}(s,t)\right|=\left|\Exp_{A,B}\omega^{\langle A,B\rangle}\right|\leq\Exp_{A,B}\left|\omega^{\langle A,B\rangle}\right|=1.
\]

\ref{enu:gammasym} The symmetry of $\Gamma_{n}$ follows from the
independence of $A$ and $B$ in the defining equation for $\Gamma_{n}$,
and the symmetry of the inner product over $\mathbb{F}_{q}$.

\ref{enu:gammasum} We have:
\begin{align*}
\Gamma_{n}(s,t) & =\Exp_{\substack{A\in\Mcal_{s}^{n,n}\\
B\in\Mcal_{t}^{n,n}
}
}\;\omega^{\langle A,B\rangle}\\
 & =\Exp_{X,Y,Z_{1},Z_{2},W\in\Mcal_{n}}\omega^{\langle XI_{s}Y,Z_{1}Z_{2}I_{t}W\rangle}\\
 & =\Exp_{X,Y,Z_{1},Z_{2},W\in\Mcal_{n}}\omega^{\langle XI_{s}YW\tr I_{t}Z_{2}\tr,Z_{1}\rangle}\\
 & =\Exp_{X,U,Z_{1},Z_{2}\in\Mcal_{n}}\omega^{\langle X(I_{s}UI_{t})Z_{2}\tr,Z_{1}\rangle}\\
 & =\sum_{r=0}^{n}\Prob_{U\in\Mcal_{n}}[\rk(I_{s}UI_{t})=r]\Exp_{X,U,Z_{1},Z_{2}\in\Mcal_{n}}\left[\omega^{\langle X(I_{s}UI_{t})Z_{2}\tr,Z_{1}\rangle}\ \mid\ \rk(I_{s}UI_{t})=r\right]\\
 & =\sum_{r=0}^{n}\Prob_{U\in\Mcal_{n}}[\rk(I_{s}UI_{t})=r]\Exp_{\substack{B\in\Mcal_{r}^{n,n}\\
Z_{1}\in\Mcal_{n}
}
}\omega^{\langle B,Z_{1}\rangle}\\
 & =\sum_{r=0}^{n}P_{n}(s,t,r)\Gamma_{n}(n,r),
\end{align*}
where the first step restates the definition of $\Gamma_{n}$, the
second step uses \prettyref{prop:random-matrices}, the third step
applies \prettyref{fact:inner-product-vs-trace}\prettyref{enu:inner-product-transfer},
the fourth and sixth steps again use \prettyref{prop:random-matrices},
and the last step is immediate from the definitions of $P_{n}$ and
$\Gamma_{n}$.

\ref{enu:gammapoly} Immediate from \prettyref{enu:gammasum} and
\prettyref{lem:probquadrant}\ref{enu:probpoly}.

\ref{enu:gammabound} We have:
\begin{align*}
|\Gamma_{n}(s,t)| & =\left|\sum_{r=0}^{n}P_{n}(s,t,r)\Gamma_{n}(n,r)\right|\\
 & \leq\sum_{r=0}^{n}P_{n}(s,t,r)|\Gamma_{n}(n,r)|\\
 & =\sum_{r=\max\{0,s+t-n\}}^{n}P_{n}(s,t,r)\left|\Gamma_{n}(n,r)\right|\\
 & \leq\sum_{r=\max\{0,s+t-n\}}^{n}16q^{-(s-r)(t-r)}\cdot\frac{q^{\binom{r}{2}}}{(q^{n}-1)(q^{n}-q)\cdots(q^{n}-q^{r-1})}\\
 & \leq\sum_{r=\max\{0,s+t-n\}}^{n}64q^{-(s-r)(t-r)+\binom{r}{2}-nr}\\
 & \leq128q^{-st/2},
\end{align*}
where the first step appeals to~\prettyref{enu:gammasum}, the third
step is valid by \prettyref{lem:probquadrant}\prettyref{enu:probsupp},
the fourth step uses \prettyref{lem:probquadrant}\ref{enu:probbound}
and \prettyref{lem:sunwang}, the fifth step applies \prettyref{prop:prodprop},
and the last step which completes the proof is justified by the following
claim.
\end{proof}
\begin{claim}
For any integers $n\geq1$ and $s,t\in\{0,1,\ldots,n\}$,
\begin{align}
\sum_{r=\max\{0,s+t-n\}}^{\infty}q^{-(s-r)(t-r)+\binom{r}{2}-nr} & \leq2q^{-st/2}.\label{eq:claim-q-exponent-sum}
\end{align}
\end{claim}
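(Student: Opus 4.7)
The plan is to rewrite the exponent in a convenient quadratic form, show that the terms decay super-geometrically as $r$ increases beyond $r_0 = \max\{0, s+t-n\}$, and then bound the initial term $q^{f(r_0)}$ by $q^{-st/2}$ with a short case analysis.

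First, I would expand
\[
f(r) := -(s-r)(t-r) + \binom{r}{2} - nr = -st + r\!\left(s+t-n-\tfrac{1}{2}\right) - \tfrac{r^2}{2},
\]
so that the summand is $q^{f(r)}$. This is a downward-opening parabola in $r$ with vertex at $r^* = s+t-n-1/2$. Since $r_0 \geq s+t-n \geq r^* + 1/2$, the function $f$ is strictly decreasing on $\{r_0, r_0+1, r_0+2, \ldots\}$. A direct computation gives
\[
f(r+1) - f(r) = (s+t-n-1) - r,
\]
and evaluating at $r = r_0 + j$ shows $f(r_0+j+1) - f(r_0+j) \leq -1 - j$ in both the case $s+t \leq n$ (where $r_0 = 0$) and the case $s+t > n$ (where $r_0 = s+t-n$).

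Next, I would bound the first term. If $s+t \leq n$, then $r_0 = 0$ and $f(r_0) = -st \leq -st/2$. If $s+t > n$, set $a = s+t-n$ and compute $f(r_0) = -st + \binom{a}{2}$. Since $a \leq \min\{s,t\}$, we have $a(a-1) \leq \min\{s,t\}^2 \leq st$, so $\binom{a}{2} \leq st/2$ and again $f(r_0) \leq -st/2$.

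Finally, chaining the per-step bounds on $f(r_0+j+1) - f(r_0+j)$ gives $f(r_0+j) \leq f(r_0) - j(j+1)/2$ for $j \geq 0$, so
\[
\sum_{r=r_0}^{\infty} q^{f(r)} \;\leq\; q^{f(r_0)} \sum_{j=0}^{\infty} q^{-j(j+1)/2} \;\leq\; q^{-st/2} \sum_{j=0}^{\infty} 2^{-j(j+1)/2}.
\]
The remaining series $\sum_{j\geq 0} 2^{-j(j+1)/2} = 1 + 1/2 + 1/8 + 1/64 + \cdots$ is easily seen to be strictly less than $2$ (for instance by bounding the tail after the first two terms by the geometric series $\sum_{j \geq 2} 2^{-j} = 1/2$, giving a total under $1 + 1/2 + 1/2 = 2$; a sharper bookkeeping gives well under $1.7$). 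This yields the claimed inequality. The only step that requires any care is checking $a(a-1) \leq st$ when $s+t > n$, which is immediate once one observes $s+t-n \leq \min\{s,t\}$; everything else is a routine inspection of the exponent.
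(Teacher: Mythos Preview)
Your proof is correct and follows the same overall skeleton as the paper's: write the exponent as a downward parabola in $r$, observe it is strictly decreasing from $r_0=\max\{0,s+t-n\}$, bound the leading term by $q^{-st/2}$, and control the tail.

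The execution differs in two places, and in both your version is a bit cleaner. For the tail, the paper only uses that the exponent is integer-valued and strictly decreasing, so consecutive terms drop by at least a factor $q$, giving the geometric bound $\sum_i q^{-i}\le q/(q-1)\le 2$; you compute the step exactly as $f(r_0+j+1)-f(r_0+j)\le -1-j$ and get the sharper super-geometric decay $q^{-j(j+1)/2}$. For the initial term when $s+t>n$, the paper completes the square and then runs a small optimization argument over pairs $(s,t)$ to show $(n-s-t+\tfrac12)^2\le st$; you instead compute $f(r_0)=-st+\binom{a}{2}$ with $a=s+t-n$ directly and observe that $a\le\min\{s,t\}$ (since $s,t\le n$) forces $a(a-1)\le a^2\le\min\{s,t\}^2\le st$. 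This one-line argument replaces the paper's case analysis and is the nicer of the two.
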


\begin{proof}
The exponent of $q$ on the left-hand side of~\prettyref{eq:claim-q-exponent-sum}
is given by the function
\begin{align}
A(r) & =-(s-r)(t-r)+\binom{r}{2}-nr\label{eq:Ar}\\
 & =-st-\frac{1}{2}\left(r+n-s-t+\frac{1}{2}\right)^{2}+\frac{1}{2}\left(n-s-t+\frac{1}{2}\right)^{2}.\label{eq:Ar-squares}
\end{align}
The first equality shows that $A(r)$ is always an integer, whereas
the second shows that $A(r)$ is a strictly decreasing function in
the variable $r\in[\max\{0,s+t-n\},\infty).$ These two facts lead
to
\begin{align}
A(\max\{0,s+t-n\}+i) & \leq A(\max\{0,s+t-n\})-i, &  & i=0,1,2,\ldots.\label{eq:A-decreases}
\end{align}

We will now prove that
\begin{equation}
A(\max\{0,s+t-n\})\leq-\frac{st}{2}.\label{eq:A-at-max}
\end{equation}
There are two cases to consider. If $s+t\leq n$, then $A(\max\{0,s+t-n\})=A(0)=-st$
and therefore~\prettyref{eq:A-at-max} holds. The complementary case
$s+t\geq n+1$ is more challenging. Here, we have 
\[
A(\max\{0,s+t-n\})=A(s+t-n)\leq-st+\frac{1}{2}\left(n-s-t+\frac{1}{2}\right)^{2},
\]
where the second step uses~\prettyref{eq:Ar-squares}. Thus, the
proof of~\prettyref{eq:A-at-max} will be complete once we show that
\begin{equation}
\left(n-s-t+\frac{1}{2}\right)^{2}-st\leq0.\label{eq:am-gm-reverse}
\end{equation}
To prove~\prettyref{eq:am-gm-reverse}, suppose that of all pairs
$(s,t)\in\{0,1,\ldots,n\}^{2}$ with $s+t\geq n+1$, the left-hand
side of~\prettyref{eq:am-gm-reverse} is maximized at a pair $(s^{*},t^{*}).$
By symmetry, we may assume that $s^{*}\leq t^{*}.$ If we had $t^{*}\leq n-1,$
then it would follow that $s^{*}\geq2$ (due to the requirement that
$s^{*}+t^{*}\geq n+1$); as a result, the left-hand side of~\prettyref{eq:am-gm-reverse}
would be larger for the pair $(s,t)=(s^{*}-1,t^{*}+1)$ than it is
for the pair $(s,t)=(s^{*},t^{*}),$ an impossibility. Therefore,
$t^{*}=n.$ In addition, we have $s^{*}\geq1$ (due to the requirement
that $s^{*}+t^{*}\geq n+1$). Evaluating the right-hand side of~\prettyref{eq:am-gm-reverse}
at this pair $(s^{*},t^{*})=(s^{*},n)$, we obtain $(s^{*}-\frac{1}{2})^{2}-s^{*}n,$
which is clearly negative due to $s^{*}\in\{1,2,\ldots,n\}.$ This
completes the proof of~\prettyref{eq:am-gm-reverse} and therefore
that of~\prettyref{eq:A-at-max}.

Now,
\begin{align*}
\sum_{r=\max\{0,s+t-n\}}^{\infty}q^{-(s-r)(t-r)+\binom{r}{2}-nr} & =\sum_{r=\max\{0,s+t-n\}}^{\infty}q^{A(r)}\\
 & =\sum_{i=0}^{\infty}q^{A(\max\{0,s+t-n\}+i)}\\
 & \leq q^{A(\max\{0,s+t-n\})}\sum_{i=0}^{\infty}q^{-i}\\
 & \leq q^{-st/2}\cdot\frac{q}{q-1},
\end{align*}
where the first step uses the definition of $A(r),$ the third step
applies~\prettyref{eq:A-decreases}, and the final step appeals to~\prettyref{eq:A-at-max}
and a geometric series. Since $q\geq2,$ this completes the proof
of~\prettyref{eq:claim-q-exponent-sum}.
\end{proof}

\subsection{\label{subsec:univar-dual-obj}Univariate dual object}

Our construction of the univariate dual object is based on the Cauchy
binomial theorem along with a certain ``correcting'' polynomial $\zeta$.
The next lemma presents $\zeta$ as parametrized by two numbers $\ell$
and $m$ and gives its basic properties.

\begin{lem}
\label{lem:zeta}Let $n,k,\ell,m$ be nonnegative integers such that
$\ell+m\leq k<n$. Define a univariate polynomial $\zeta$ by 
\begin{align}
\zeta(t) & =\prod_{i=0}^{\ell-1}\frac{t-q^{-i}}{q^{-n}-q^{-i}}\cdot\prod_{i=k-m}^{k-1}\frac{t-q^{-i}}{q^{-n}-q^{-i}}\cdot\prod_{i=k+1}^{n-1}\frac{t-q^{-i}}{q^{-n}-q^{-i}}.\label{eq:zeta-def}
\end{align}
 Then:
\end{lem}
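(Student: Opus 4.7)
The plan is to read off each property of $\zeta$ directly from the product form in (\ref{eq:zeta-def}). From the construction, the three products contain $\ell$, $m$, and $n-1-k$ monic linear factors in $t$, respectively, so I would first record that $\deg\zeta=\ell+m+n-k-1$. Next, substituting $t=q^{-n}$ makes every individual ratio equal to $1$, which yields the normalization $\zeta(q^{-n})=1$; and substituting $t=q^{-i}$ for any $i$ in the index sets $\{0,\dots,\ell-1\}$, $\{k-m,\dots,k-1\}$, or $\{k+1,\dots,n-1\}$ kills the corresponding numerator, so $\zeta(q^{-i})=0$ at every such $i$. These three items are essentially bookkeeping and need no cleverness.

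The substantive part is almost certainly a pointwise bound of the form $|\zeta(q^{-t})|\leq C\cdot q^{-(\text{something depending on }t)}$ at the remaining integer points $t\in\{\ell,\ell+1,\dots,k-m-1\}\cup\{k\}\cup\{n\}$, since the downstream construction of the univariate dual $\phi(t)=\binom{n}{t}_q(-1)^{t-n}q^{\binom{t}{2}-\binom{n}{2}}\zeta(q^{-t})$ requires such control in Section~\ref{subsec:univar-dual-obj}. My plan here is to estimate each of the three products in (\ref{eq:zeta-def}) separately. For a single factor, I would use the elementary bounds
\[
|q^{-t}-q^{-i}|\;\leq\;q^{-\min\{t,i\}},\qquad |q^{-n}-q^{-i}|\;\geq\;\tfrac{1}{2}q^{-i}\quad(i<n),
\]
the second of which follows from $q\geq 2$. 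This gives each ratio the bound $2q^{\max\{0,\,i-t\}}$, after which the product over $i$ in each block collapses to an explicit sum of $\max\{0,i-t\}$ that can be evaluated in closed form by splitting on whether $t$ lies below, inside, or above the block.

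The main obstacle will be keeping this exponent-sum sharp enough that the resulting estimate on $|\zeta(q^{-t})|$, when multiplied by the factor $\binom{n}{t}_q q^{\binom{t}{2}}$ that appears in $\phi(t)$, leaves a genuine gain of $q^{-\Omega(k^2)}$ in the spectral-norm bound later on; a lossy bound here would nullify the whole approach. I would therefore, after carrying out the three blockwise sums, verify the final exponent by checking the boundary cases $t=\ell$, $t=k-m-1$, $t=k$ (and $t=n$), which are the extreme points of the evaluation region and typically determine whether the bound is tight. I expect the final statement to take the form of a single clean inequality whose exponent measures the distance from $t$ to the zero set $\{0,\dots,\ell-1\}\cup\{k-m,\dots,k-1\}\cup\{k+1,\dots,n-1\}$ of $\zeta$.
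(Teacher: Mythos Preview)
Your reading of items (i), (iii), (iv) as direct consequences of the product form is correct and matches the paper. You omit one of the listed items: the sign of $\zeta(q^{-k})$, which the paper records as $(-1)^{n-k-1}$. This is also bookkeeping---at $t=q^{-k}$ the first two products in (\ref{eq:zeta-def}) contain only positive factors, while the third contains exactly $n-k-1$ factors, all negative---but it is needed downstream to get $\phi(k)<0$.

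For the pointwise bound (the paper's item (v), stated only for $r\in\{\ell,\dots,k-m-1\}$; no magnitude bound is needed at $k$ or $n$), your exponent computation is right: for such $r$ every index $i$ in the first block satisfies $i<r$ and every index in the other two blocks satisfies $i>r$, so $\sum_i\max\{0,i-r\}$ coincides with the paper's exponent. The genuine gap is the constant. Your per-factor denominator bound $|q^{-n}-q^{-i}|\geq\tfrac12 q^{-i}$ accumulates a factor $2^{n+\ell+m-k-1}$, and since $n$ can be arbitrarily large relative to $k,\ell,m$, this overwhelms the $q^{-m-1}$ saving in Lemma~\ref{lem:phi}\ref{enu:phi-small-tail} and kills the approach---precisely the failure mode you anticipate, except that the loss is in the constant, not the exponent. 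The paper's fix is to bound the denominators collectively rather than term by term: after factoring $q^{-i}$ from numerator and denominator of each ratio, the full denominator product becomes $\prod(1-q^{-(n-i)})$ over a set of \emph{distinct} positive exponents $n-i$, and Proposition~\ref{prop:prodprop} bounds this product below by $1/4$ uniformly in $n$. With that single change your plan yields exactly the paper's bound
\[
|\zeta(q^{-r})|\leq 4\,q^{-r(n-k+m-1)+\binom{n}{2}-k-\binom{k-m}{2}}.
\]
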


\begin{enumerate}
\item $\zeta(q^{-n})=1;$\label{enu:zeta-at-n}
\item $\sign\zeta(q^{-k})=(-1)^{n-k-1};$\label{enu:zeta-at-k}
\item $\zeta(q^{-r})=0$ for $r\in\{0,1,\ldots,n\}\setminus(\{\ell,\ell+1,\ldots,k-m-1\}\cup\{k,n\});$\label{enu:zeta-vanish}
\item \label{enu:zeta-deg}$\deg\zeta=n+\ell+m-k-1;$
\item $\left|\zeta(q^{-r})\right|\leq4q^{-r(n-k+m-1)+\binom{n}{2}-k-\binom{k-m}{2}}$
for $r\in\{\ell,\ell+1,\ldots,k-m-1\}$.\label{enu:zeta-small-weight}
\end{enumerate}
\begin{proof}
Items~\prettyref{enu:zeta-at-n}, \prettyref{enu:zeta-vanish}, and~\prettyref{enu:zeta-deg}
are immediate from the defining equation for $\zeta.$ Item~\prettyref{enu:zeta-at-k}
holds because for $t=q^{-k},$ the first and second products in~\prettyref{eq:zeta-def}
contain only positive factors, whereas the third product contains
exactly $n-k-1$ factors all of which are negative. For~\prettyref{enu:zeta-small-weight},
\begin{align*}
|\zeta(q^{-r})| & =\left|\prod_{i=0}^{\ell-1}\frac{q^{-r}-q^{-i}}{q^{-n}-q^{-i}}\cdot\prod_{i=k-m}^{k-1}\frac{q^{-r}-q^{-i}}{q^{-n}-q^{-i}}\cdot\prod_{i=k+1}^{n-1}\frac{q^{-r}-q^{-i}}{q^{-n}-q^{-i}}\right|\\
 & =\prod_{i=0}^{\ell-1}\frac{1-q^{i-r}}{1-q^{-(n-i)}}\cdot\prod_{i=k-m}^{k-1}\frac{q^{i-r}-1}{1-q^{-(n-i)}}\cdot\prod_{i=k+1}^{n-1}\frac{q^{i-r}-1}{1-q^{-(n-i)}}\\
 & \leq\prod_{i=0}^{\ell-1}\frac{1}{1-q^{-(n-i)}}\cdot\prod_{i=k-m}^{k-1}\frac{q^{i-r}}{1-q^{-(n-i)}}\cdot\prod_{i=k+1}^{n-1}\frac{q^{i-r}}{1-q^{-(n-i)}}.
\end{align*}
The product of the numerators in the last expression is $q^{-r(n-k+m-1)+\binom{n}{2}-k-\binom{k-m}{2}},$
whereas the product of the denominators is at least $1/4$ by~\prettyref{prop:prodprop}.
\end{proof}
With $\zeta$ in hand, we are now in a position to construct the promised
univariate dual object $\phi$. The properties of $\phi$ established
in the lemma below will give rise to analogous properties in the dual
matrix $E_{\phi}$.
\begin{lem}
\label{lem:phi}Let $n,k,\ell,m$ be nonnegative integers such that
$\ell+m\leq k<n$. Then there is a function $\phi\colon\{0,1,\ldots,n\}\to\Re$
such that:
\end{lem}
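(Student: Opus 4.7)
The plan is to take the explicit construction
\[
\phi(t) = \binom{n}{t}_{q}(-1)^{t-n}q^{\binom{t}{2}-\binom{n}{2}}\zeta(q^{-t}), \qquad t\in\{0,1,\ldots,n\},
\]
with $\zeta$ the polynomial provided by \prettyref{lem:zeta}, and then verify each promised property by reading off the corresponding property of $\zeta$ or by applying the Cauchy binomial identity of \prettyref{cor:qbinompoly}. Thus the structure of the proof mirrors the construction: the normalization and sign properties at $t=n$ and $t=k$ come directly from \prettyref{lem:zeta}\prettyref{enu:zeta-at-n} and \prettyref{lem:zeta}\prettyref{enu:zeta-at-k} respectively (at $t=n$ every factor is $1$, and at $t=k$ only the sign of $\zeta(q^{-k})$ together with $(-1)^{k-n}q^{\binom{k}{2}-\binom{n}{2}}$ matters). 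Similarly, $\phi(t)=0$ for every $t$ outside the ``small-weight'' window $\{\ell,\ldots,k-m-1\}\cup\{k,n\}$ because $\zeta(q^{-t})=0$ there by \prettyref{lem:zeta}\prettyref{enu:zeta-vanish}.

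Next I would establish the $\ell_1$ bound on $\phi$ restricted to $\{\ell,\ldots,k-m-1\}$. Here I combine the pointwise estimate of \prettyref{lem:zeta}\prettyref{enu:zeta-small-weight} with $\binom{n}{t}_{q}\leq 4q^{t(n-t)}$ from \prettyref{cor:qbinbound}, yielding for each such $t$ a bound of the shape $|\phi(t)|\leq 16\, q^{E(t)}$ for an explicit affine exponent $E(t)$; summing a geometric series over $t\in\{\ell,\ldots,k-m-1\}$ gives the desired decay $q^{-\Omega(k)}$ (with explicit constants).

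The heart of the argument is the orthogonality property underlying item (iv) of the desiderata in the proof sketch. For any polynomial $g$ with $\deg g\leq k-\ell-m$,
\[
\sum_{t=0}^{n}\phi(t)\,g(q^{-t})
= (-1)^{-n}q^{-\binom{n}{2}}\sum_{t=0}^{n}\binom{n}{t}_{q}(-1)^{t}q^{\binom{t}{2}}\,\bigl[\zeta(q^{-t})g(q^{-t})\bigr].
\]
Since $\deg(\zeta g)\leq (n+\ell+m-k-1)+(k-\ell-m)=n-1$, the inner sum vanishes by \prettyref{cor:qbinompoly}. Combined with \prettyref{lem:gammaprop}\prettyref{enu:gammapoly}, which says $\Gamma_{n}(s,t)$ is a polynomial in $q^{-t}$ of degree at most $s$, this immediately gives $\sum_{t}\phi(t)\Gamma_{n}(s,t)=0$ for all $s\leq k-\ell-m$. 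For the large-$s$ regime $s>k-\ell-m$, the orthogonality trick no longer fires, so I would instead bound the sum in absolute value, plugging in the pointwise estimate $|\Gamma_{n}(s,t)|\leq 128\,q^{-st/2}$ from \prettyref{lem:gammaprop}\prettyref{enu:gammabound} and the pointwise bound on $|\phi(t)|$ derived above; the resulting exponent is again geometric, giving the claimed $q^{-\Omega(k^{2})}$ bound.

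The main obstacle is purely one of bookkeeping: $\ell$ and $m$ must be chosen so that (a) the $\ell_1$-mass on $\{\ell,\ldots,k-m-1\}$ is small enough, (b) the degree threshold $k-\ell-m$ is large enough to annihilate $\Gamma_{n}(s,\cdot)$ up to the crossover point $s\approx k$, and (c) the residual sum $\sum_{t}|\phi(t)||\Gamma_{n}(s,t)|$ for $s$ beyond that threshold telescopes to $q^{-\Omega(k^{2})}$. Balancing these three constraints forces $\ell,m = \Theta(k)$ with an explicit constant, and the proof then amounts to verifying the inequalities with the exact exponents from \prettyref{lem:zeta}\prettyref{enu:zeta-small-weight} and \prettyref{lem:gammaprop}\prettyref{enu:gammabound}. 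None of this requires new ideas beyond what the construction already encodes; the only subtlety is ensuring the exponents balance in the boundary case $s=k-\ell-m+1$, where the orthogonality just fails and the pointwise bound must take over cleanly.
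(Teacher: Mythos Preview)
Your proposal is correct and follows essentially the same approach as the paper: the same explicit construction $\phi(t)=\binom{n}{t}_{q}(-1)^{t-n}q^{\binom{t}{2}-\binom{n}{2}}\zeta(q^{-t})$, the same verification of properties~(i)--(iii) by reading off the corresponding properties of $\zeta$, the same degree count $\deg(\zeta\cdot\xi)\leq(n+\ell+m-k-1)+(k-\ell-m)=n-1$ to invoke \prettyref{cor:qbinompoly} for~(iv), and the same combination of \prettyref{cor:qbinbound} with \prettyref{lem:zeta}\prettyref{enu:zeta-small-weight} followed by a geometric sum for~(v). Your discussion of the $\Gamma_{n}$ analysis and the eventual choice $\ell,m=\Theta(k)$ goes beyond what this lemma requires---those steps belong to the downstream \prettyref{thm:approx-norm-rank-problem} and \prettyref{thm:rank-lower-bound-CANONICAL}, where the paper carries them out exactly as you outline---but the proof of the lemma itself matches.
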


\begin{enumerate}
\item $\phi(n)=1;$\label{enu:phi-at-n}
\item $\phi(k)<0;$\label{enu:phi-at-k}
\item $\phi(r)=0$ for $r\in\{0,1,\ldots,n\}\setminus(\{\ell,\ell+1,\ldots,k-m-1\}\cup\{k,n\});$\label{enu:phi-vanish}
\item \label{enu:phi-orthog}$\sum_{r=0}^{n}\phi(r)\xi(q^{-r})=0$ for every
polynomial $\xi$ of degree at most $k-\ell-m;$
\item $\sum_{r\in\{0,\ldots,n\}\setminus\{k,n\}}|\phi(r)|\leq32q^{-m-1}$.\label{enu:phi-small-tail}
\end{enumerate}
\begin{proof}
Define
\[
\phi(r)=\gbinom nr(-1)^{r-n}q^{\binom{r}{2}-\binom{n}{2}}\zeta(q^{-r}),
\]
where $\zeta$ is the univariate polynomial from~\prettyref{lem:zeta}.
Then items~\prettyref{enu:phi-at-n}\textendash \prettyref{enu:phi-vanish}
are immediate from the corresponding items~\prettyref{enu:zeta-at-n}\textendash \prettyref{enu:zeta-vanish}
of \prettyref{lem:zeta}.

For~\prettyref{enu:phi-orthog}, fix a univariate polynomial $\xi$
of degree at most $k-\ell-m$. In view of~\prettyref{lem:zeta}\prettyref{enu:zeta-deg},
the product of $\zeta$ and $\xi$ has degree less than $n$. As a
result, the Cauchy binomial theorem (\prettyref{cor:qbinompoly})
implies that
\[
\sum_{r=0}^{n}\phi(r)\xi(q^{-r})=(-1)^{-n}q^{-\binom{n}{2}}\sum_{r=0}^{n}\gbinom nr(-1)^{r}q^{\binom{r}{2}}\zeta(q^{-r})\xi(q^{-r})=0.
\]

For~\prettyref{enu:phi-small-tail}, fix any $r\in\{\ell,\ell+1,\ldots,k-m-1\}$.
Then 
\begin{align}
|\phi(r)| & =\gbinom nrq^{\binom{r}{2}-\binom{n}{2}}|\zeta(q^{-r})|\nonumber \\
 & \leq4q^{r(n-r)}\cdot q^{\binom{r}{2}-\binom{n}{2}}\cdot4q^{-r(n-k+m-1)+\binom{n}{2}-k-\binom{k-m}{2}}\nonumber \\
 & =16q^{-\binom{k-m-r+1}{2}-m},\label{eq:phi-at-problematic-points}
\end{align}
where in the second step we bound the $q$-binomial coefficient via~\prettyref{cor:qbinbound}
and $|\zeta(q^{-r})|$ via~\prettyref{lem:zeta}\prettyref{enu:zeta-small-weight}.
Now
\[
\sum_{r\in\{0,\ldots n\}\setminus\{k,n\}}|\phi(r)|=\sum_{r=\ell}^{k-m-1}|\phi(r)|\leq\sum_{r=\ell}^{k-m-1}16q^{-\binom{k-m-r+1}{2}-m}\leq\sum_{i=2}^{\infty}16q^{-\binom{i}{2}-m}\leq\frac{16q^{-m-1}}{1-\frac{1}{q}},
\]
where the first step is valid by~\prettyref{enu:phi-vanish}, the
second step uses~\prettyref{eq:phi-at-problematic-points}, and the
fourth step uses a geometric series along with $\binom{i}{2}\geq i-1$
for $i\geq2$. Since $q\geq2,$ this completes the proof of~\prettyref{enu:phi-small-tail}.
\end{proof}

\subsection{\label{subsec:from-univar-dual-obj-to-dual-matr}From univariate
dual objects to dual matrices}

En route to the main result of this section, we now show how to convert
a univariate dual object $\phi$, such as the one constructed in~\prettyref{lem:phi},
into a dual matrix $E_{\phi}$.
\begin{defn}
\label{def:E-phi}Let $n\geq1$ be a given integer. For $r=0,1,\ldots,n,$
define $E_{r}$ to be the matrix with rows and columns indexed by
matrices in $\mathbb{F}_{q}^{n\times n},$ and entries given by
\[
(E_{r})_{A,B}=\begin{cases}
q^{-n^{2}}|\Mcal_{r}^{n,n}|^{-1} & \text{if \ensuremath{\rk(A+B)=r,}}\\
0 & \text{otherwise.}
\end{cases}
\]
For a function $\phi\colon\{0,1,\ldots,n\}\to\Re,$ define
\[
E_{\phi}=\sum_{r=0}^{n}\phi(r)E_{r}.
\]
\end{defn}

As one would expect, the metric and analytic properties of $E_{\phi}$
are closely related to those of $\phi.$
\begin{lem}[Metric properties of $E_{\phi}$]
\label{lem:E_phi-metric} Let $n\geq1$ be an integer and $\phi\colon\{0,1,\ldots,n\}\to\Re$
a given function. Then
\begin{align}
 & \sum_{A,B:\rk(A+B)=r}(E_{\phi})_{A,B}=\phi(r), &  & r=0,1,\dots,n,\label{eq:E-phi-sum-rank-r}\\
 & \sum_{A,B:\rk(A+B)=r}|(E_{\phi})_{A,B}|=|\phi(r)|, &  & r=0,1,\dots,n.\label{eq:E-phi-sum-rank-r-abs-values}
\end{align}
In particular, 
\[
\|E_{\phi}\|_{1}=\|\phi\|_{1}.
\]
\end{lem}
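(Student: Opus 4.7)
The plan is a direct counting argument; there is no real obstacle here beyond bookkeeping. The entire content of the lemma is that the normalization $q^{-n^2}|\mathcal{M}_r^{n,n}|^{-1}$ in the definition of $E_r$ exactly cancels the number of matrix pairs $(A,B)$ with $\rk(A+B)=r$, leaving just $\phi(r)$.

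First I would fix an arbitrary $r \in \{0,1,\ldots,n\}$ and observe that for any pair $(A,B)$ with $\rk(A+B)=r$, the only index $r' \in \{0,1,\ldots,n\}$ for which $(E_{r'})_{A,B}$ is nonzero is $r'=r$, simply because $E_{r'}$ is supported on pairs whose sum has rank exactly $r'$. Consequently, for every such pair,
\[
(E_\phi)_{A,B} = \phi(r)\cdot q^{-n^2}\,|\mathcal{M}_r^{n,n}|^{-1}.
\]

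Next I would count how many such pairs there are. Any pair $(A,B) \in \mathbb{F}_q^{n\times n}\times \mathbb{F}_q^{n\times n}$ with $\rk(A+B)=r$ is determined uniquely by choosing $A \in \mathbb{F}_q^{n\times n}$ freely and then a matrix $M \in \mathcal{M}_r^{n,n}$, setting $B = M-A$. Hence the total number of such pairs is exactly $q^{n^2}\,|\mathcal{M}_r^{n,n}|$. Multiplying the per-entry value above by this count yields
\[
\sum_{A,B:\rk(A+B)=r} (E_\phi)_{A,B} = q^{n^2}\,|\mathcal{M}_r^{n,n}|\cdot\phi(r)\cdot q^{-n^2}\,|\mathcal{M}_r^{n,n}|^{-1} = \phi(r),
\]
which is \eqref{eq:E-phi-sum-rank-r}. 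The same calculation with $|\phi(r)|$ in place of $\phi(r)$ gives \eqref{eq:E-phi-sum-rank-r-abs-values}.

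Finally, for the $\ell_1$-norm claim, I would partition the entries of $E_\phi$ by the value of $\rk(A+B)$ and sum the identity \eqref{eq:E-phi-sum-rank-r-abs-values} over $r$:
\[
\|E_\phi\|_1 = \sum_{r=0}^n \;\sum_{A,B:\rk(A+B)=r} |(E_\phi)_{A,B}| = \sum_{r=0}^n |\phi(r)| = \|\phi\|_1.
\]
This settles all three assertions of the lemma.
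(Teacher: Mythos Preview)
Your proof is correct and follows essentially the same approach as the paper: both arguments observe that $(E_\phi)_{A,B}$ is constant on pairs with $\rk(A+B)=r$, count such pairs as $q^{n^2}|\mathcal{M}_r^{n,n}|$ via the bijection $B\leftrightarrow A+B$, and multiply out. The only cosmetic difference is that the paper first records the normalization $\sum_{\rk(A+B)=r}(E_r)_{A,B}=1$ and then substitutes into the sum for $E_\phi$, whereas you compute the per-entry value of $E_\phi$ directly and then multiply by the count.
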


\begin{proof}
Recall that for any fixed matrix $A\in\mathbb{F}_{q}^{n\times n},$
the mapping $B\mapsto A+B$ is a permutation on $\mathbb{F}_{q}^{n\times n}.$
As a result, for any fixed matrix $A,$ there are exactly $|\Mcal_{r}^{n,n}|$
matrices $B$ such that $\rk(A+B)=r.$ Altogether, there are $q^{n^{2}}|\Mcal_{r}^{n,n}|$
matrix pairs $(A,B)$ with $\rk(A+B)=r.$ With this in mind, Definition~\prettyref{def:E-phi}
implies the following for each $r$:
\begin{align}
\sum_{\rk(A+B)=r}(E_{r})_{A,B} & =\sum_{\rk(A+B)=r}|(E_{r})_{A,B}|=1.\label{eq:E-normalization}
\end{align}
Now for each $r,$
\[
\sum_{\rk(A+B)=r}(E_{\phi})_{A,B}=\sum_{\rk(A+B)=r}\;\sum_{i=0}^{n}\phi(i)(E_{i})_{A,B}=\sum_{\rk(A+B)=r}\;\phi(r)(E_{r})_{A,B}=\phi(r),
\]
where the second step uses $(E_{i})_{A,B}=0$ for $i\ne r,$ and the
final step applies \prettyref{eq:E-normalization}. Analogously, 
\[
\sum_{\rk(A+B)=r}|(E_{\phi})_{A,B}|=\sum_{\rk(A+B)=r}\;\left|\sum_{i=0}^{n}\phi(i)(E_{i})_{A,B}\right|=\sum_{\rk(A+B)=r}\,|\phi(r)|\,|(E_{r})_{A,B}|=|\phi(r)|.
\]
This establishes~\prettyref{eq:E-phi-sum-rank-r} and~\prettyref{eq:E-phi-sum-rank-r-abs-values}.
Summing~\prettyref{eq:E-phi-sum-rank-r-abs-values} over $r$ gives
$\|E_{\phi}\|_{1}=\|\phi\|_{1}.$ 
\end{proof}
To discuss the spectrum of $E_{\phi}$, we first describe the Fourier
spectrum of the characteristic function of matrices of a given rank.
This is where the significance of the $\Gamma_{n}$ function becomes
evident.
\begin{lem}
\label{lem:f-t-spectrum}Let $n\geq1$ be a given integer. For $r\in\{0,1,\ldots,n\},$
define $f_{r}\colon\mathbb{\mathbb{F}}_{q}^{n\times n}\to\{0,1\}$
by $f_{r}(X)=1$ if and only if $\rk X=r.$ Then for all $M\in\mathbb{F}_{q}^{n\times n},$
\begin{align*}
\widehat{f_{r}}(M) & =\frac{|\Mcal_{r}^{n,n}|}{q^{n^{2}}}\cdot\Gamma_{n}(\rk M,r).
\end{align*}
\end{lem}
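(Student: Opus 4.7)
The plan is to compute the Fourier coefficient directly from its definition and reduce the resulting sum to the defining expectation of $\Gamma_n$. Starting from the formula $\widehat{f_r}(M) = \mathbb{E}_{X \in \mathbb{F}_q^{n\times n}} f_r(X)\,\omega^{-\langle M, X\rangle}$ derived in Section~\ref{sec:Fourier}, and using that $f_r$ is supported on $\mathcal{M}_r^{n,n}$, I would rewrite
\[
\widehat{f_r}(M) \;=\; \frac{|\mathcal{M}_r^{n,n}|}{q^{n^2}} \cdot \Exp_{X \in \mathcal{M}_r^{n,n}} \omega^{-\langle M, X\rangle}.
\]
Thus it suffices to prove that the inner expectation equals $\Gamma_n(\rk M, r)$. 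Observing that $-M$ has the same rank as $M$ (multiplication by the unit $-1\in\mathbb{F}_q$ is a bijection on $\mathcal{M}_s^{n,n}$), I may replace $M$ by $-M$ and reduce to showing that
\[
\Exp_{X \in \mathcal{M}_r^{n,n}} \omega^{\langle M, X\rangle} \;=\; \Gamma_n(\rk M, r)
\]
for every $M$.

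The key step is to show that the left-hand side depends on $M$ only through $\rk M$. For this, write $s = \rk M$ and fix any factorization $M = U I_s V$ with $U, V \in \mathcal{M}_n$, which exists because every rank-$s$ matrix is equivalent to $I_s$ under left and right multiplication by invertible matrices. Using the transfer property of the inner product (Fact~\ref{fact:inner-product-vs-trace}\ref{enu:inner-product-transfer}),
\[
\langle M, X\rangle \;=\; \langle U I_s V,\, X\rangle \;=\; \langle I_s,\, U\tr X V\tr\rangle.
\]
By \prettyref{prop:random-matrices}\prettyref{enu:XAY}, the map $X \mapsto U\tr X V\tr$ is a bijection of $\mathcal{M}_r^{n,n}$ onto itself, so the expectation becomes $\Exp_{Y \in \mathcal{M}_r^{n,n}} \omega^{\langle I_s, Y\rangle}$, which depends only on $s$ and $r$.

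Having established invariance, I would finish by averaging. Since the quantity $\Exp_{X \in \mathcal{M}_r^{n,n}} \omega^{\langle A, X\rangle}$ is the same for every rank-$s$ matrix $A$, averaging $A$ uniformly over $\mathcal{M}_s^{n,n}$ does not change its value; but the resulting double expectation is precisely the definition of $\Gamma_n(s, r)$. This yields the claimed identity. No step presents a real obstacle here: the only mild subtlety is handling the sign in $\omega^{-\langle M, X\rangle}$, which is dispatched at the outset by the rank-preserving substitution $M \leftrightarrow -M$; everything else is a direct application of the inner-product identity and the uniformity of the conjugation action on matrices of a given rank.
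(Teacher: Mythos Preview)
Your argument is correct and follows essentially the same route as the paper: both proofs reduce $\widehat{f_r}(M)$ to an expectation over $\Mcal_r^{n,n}$, then use the inner-product transfer identity (\prettyref{fact:inner-product-vs-trace}\prettyref{enu:inner-product-transfer}) together with \prettyref{prop:random-matrices} to show that this expectation depends only on $\rk M$ and hence equals $\Gamma_n(\rk M,r)$. The only cosmetic difference is that the paper randomizes $X$ by writing it as $UXV$ and absorbs the minus sign into the random nonsingular factor, whereas you dispatch the sign up front via $M\mapsto -M$ and use a fixed factorization $M=UI_sV$ before averaging; neither choice affects the substance of the proof.
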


\begin{proof}
We have
\begin{align*}
\widehat{f_{r}}(M) & =\Exp_{X\in\mathbb{F}_{q}^{n\times n}}\omega^{-\langle M,X\rangle}f_{r}(X)\\
 & =q^{-n^{2}}\sum_{X\in\Mcal_{r}^{n,n}}\omega^{-\langle M,X\rangle}\\
 & =q^{-n^{2}}|\Mcal_{r}^{n,n}|\Exp_{X\in\Mcal_{r}^{n,n}}\omega^{-\langle M,X\rangle}\\
 & =q^{-n^{2}}|\Mcal_{r}^{n,n}|\Exp_{\substack{X\in\Mcal_{r}^{n,n}\\
U,V\in\Mcal_{n}
}
}\omega^{-\langle M,UXV\rangle}\\
 & =q^{-n^{2}}|\Mcal_{r}^{n,n}|\Exp_{\substack{X\in\Mcal_{r}^{n,n}\\
U,V\in\Mcal_{n}
}
}\omega^{\langle-U\tr MV\tr,X\rangle}\\
 & =q^{-n^{2}}|\Mcal_{r}^{n,n}|\Exp_{\substack{X\in\Mcal_{r}^{n,n}\\
Y\in\Mcal_{\rk M}^{n,n}
}
}\omega^{\langle Y,X\rangle}\\
 & =q^{-n^{2}}|\Mcal_{r}^{n,n}|\,\Gamma_{n}(\rk M,r),
\end{align*}
where the second step uses the definition of $f_{r}$, the fourth
step is valid by~\prettyref{prop:random-matrices}, the fifth step
invokes \prettyref{fact:inner-product-vs-trace}\prettyref{enu:inner-product-transfer},
the sixth step uses \prettyref{prop:random-matrices} once more, and
the last step applies the definition of $\Gamma_{n}.$ 
\end{proof}
We are now ready to describe the spectrum of $E_{\phi}$ in terms
of $\phi$ and the $\Gamma_{n}$ function.
\begin{lem}[Singular values of $E_{\phi}$]
\label{lem:E_phi-spectral} Let $n\geq1$ be an integer and $\phi\colon\{0,1,\ldots,n\}\to\Re$
a given function. Then the singular values of $E_{\phi}$ are 
\begin{align*}
q^{-n^{2}}\left|\sum_{t=0}^{n}\phi(t)\Gamma_{n}(s,t)\right|, &  & s=0,1,\dots,n,
\end{align*}
with corresponding multiplicities $|\Mcal_{s}^{n,n}|$ for $s=0,1,\ldots,n.$
\begin{proof}
For $t=0,1,\ldots,n,$ define $f_{t}$ as in \prettyref{lem:f-t-spectrum}.
In this notation, 
\[
E_{\phi}=\sum_{t=0}^{n}\phi(t)E_{t}=\sum_{t=0}^{n}\phi(t)\left[\frac{1}{q^{n^{2}}|\Mcal_{t}^{n,n}|}\cdot f_{t}(A+B)\right]_{A,B}=[f(A+B)]_{A,B},
\]
 where 
\[
f=\sum_{t=0}^{n}\frac{\phi(t)}{q^{n^{2}}|\Mcal_{t}^{n,n}|}\cdot f_{t}.
\]
By \prettyref{fact:spectraltofourier}, the singular values of $E_{\phi}$
are $q^{n^{2}}|\widehat{f}(M)|$ for $M\in\mathbb{F}_{q}^{n\times n}.$
Calculating,
\begin{align*}
q^{n^{2}}|\widehat{f}(M)| & =q^{n^{2}}\left|\sum_{t=0}^{n}\frac{\phi(t)}{q^{n^{2}}|\Mcal_{t}^{n,n}|}\cdot\widehat{f_{t}}(M)\right|\\
 & =q^{-n^{2}}\left|\sum_{t=0}^{n}\phi(t)\Gamma_{n}(\rk M,t)\right|,
\end{align*}
where the first step uses the linearity of the Fourier transform,
and the second step applies \prettyref{lem:f-t-spectrum}. Grouping
these singular values according to $\rk M$ shows that the spectrum
of $E_{\phi}$ is as claimed.
\end{proof}
\end{lem}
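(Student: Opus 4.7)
My plan is to recognize $E_\phi$ as a matrix of the form $[f(A+B)]_{A,B}$ for a single function $f\colon \mathbb{F}_q^{n\times n} \to \mathbb{R}$, and then invoke Fact~\ref{fact:spectraltofourier} to read off its singular values from the Fourier spectrum of $f$. Concretely, since $(E_r)_{A,B}$ is a normalized indicator of $\rk(A+B)=r$, I would set
\[
f = \sum_{t=0}^n \frac{\phi(t)}{q^{n^2}\,|\Mcal_t^{n,n}|}\, f_t,
\]
where $f_t$ is the characteristic function of rank-$t$ matrices as in Lemma~\ref{lem:f-t-spectrum}. A direct unfolding of Definition~\ref{def:E-phi} then gives $E_\phi = [f(A+B)]_{A,B}$, and Fact~\ref{fact:spectraltofourier} tells us that the singular values of this matrix are exactly $\{\,q^{n^2}|\widehat{f}(M)|\,\}_{M \in \mathbb{F}_q^{n\times n}}$.

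Next, I would compute $\widehat{f}(M)$ by linearity of the Fourier transform together with the formula from Lemma~\ref{lem:f-t-spectrum}:
\[
\widehat{f}(M)
=\sum_{t=0}^n \frac{\phi(t)}{q^{n^2}|\Mcal_t^{n,n}|}\,\widehat{f_t}(M)
=\sum_{t=0}^n \frac{\phi(t)}{q^{n^2}|\Mcal_t^{n,n}|}\cdot\frac{|\Mcal_t^{n,n}|}{q^{n^2}}\,\Gamma_n(\rk M,t)
=q^{-2n^2}\sum_{t=0}^n \phi(t)\,\Gamma_n(\rk M,t).
\]
The normalizing factors $|\Mcal_t^{n,n}|$ cancel cleanly, which is the whole point of the normalization in Definition~\ref{def:E-phi}. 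Multiplying by the $q^{n^2}$ from Fact~\ref{fact:spectraltofourier} yields the claimed singular value expression $q^{-n^2}\bigl|\sum_{t}\phi(t)\Gamma_n(\rk M,t)\bigr|$.

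Finally, I would obtain the multiplicities by grouping the singular values according to $\rk M$: the displayed formula depends on $M$ only through $\rk M$, so each value $s \in \{0,1,\ldots,n\}$ contributes one distinct singular value with multiplicity equal to $|\{M : \rk M = s\}| = |\Mcal_s^{n,n}|$. I do not anticipate any real obstacle here — the only subtlety is bookkeeping of the two powers of $q^{n^2}$ and the cancellation of $|\Mcal_t^{n,n}|$ — since all heavy lifting is delegated to Fact~\ref{fact:spectraltofourier} (which reduces spectra of $[f(A+B)]$ matrices to the Fourier transform of $f$) and Lemma~\ref{lem:f-t-spectrum} (which expresses that Fourier transform through $\Gamma_n$).
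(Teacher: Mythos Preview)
Your proposal is correct and follows essentially the same approach as the paper: write $E_\phi=[f(A+B)]_{A,B}$ with $f=\sum_t \frac{\phi(t)}{q^{n^2}|\Mcal_t^{n,n}|}f_t$, apply Fact~\ref{fact:spectraltofourier} to obtain singular values $q^{n^2}|\widehat{f}(M)|$, then use linearity and Lemma~\ref{lem:f-t-spectrum} to simplify to $q^{-n^2}\bigl|\sum_t\phi(t)\Gamma_n(\rk M,t)\bigr|$, grouping by $\rk M$ for the multiplicities.
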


\subsection{\label{subsec:approx-trace-norm}Approximate trace norm of the rank
problem}

Using the machinery developed in previous sections, we now prove a
lower bound on the approximate trace norm of the characteristic matrix
of the rank problem. Combined with the approximate trace norm method,
this will allow us to obtain our communication lower bounds for the
rank problem.
\begin{thm}
\label{thm:approx-norm-rank-problem}Let $n>k\geq0$ be given integers.
Let $F$ be the matrix with rows and columns indexed by elements of
$\mathbb{F}_{q}^{n\times n},$ and entries given by
\[
F_{A,B}=\begin{cases}
1 & \text{if }\rk(A+B)=n,\\
-1 & \text{if }\rk(A+B)=k,\\
* & \text{otherwise.}
\end{cases}
\]
Then for all reals $\delta\geq0$ and all nonnegative integers $\ell,m$
with $\ell+m\leq k,$
\begin{align}
\|F\|_{\Sigma,\delta} & \geq\frac{1}{150}\left(1-\delta-\frac{64}{q^{m+1}}\right)q^{\ell(k-\ell-m+1)/2}\,q^{n^{2}},\label{eq:F-m-l-trace-norm-rank}\\
\|F\|_{\Sigma,\delta} & \geq\frac{1-\delta}{150}\cdot q^{k/2}\,q^{n^{2}}.\label{eq:F-m-l-trace-norm-rank-alt}
\end{align}
\end{thm}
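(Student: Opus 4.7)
The plan is to apply Proposition~\ref{prop:approxtracelower} with the symmetric dual witness $\Phi = E_\phi$, where $\phi\colon\{0,1,\ldots,n\}\to\mathbb{R}$ is the univariate function produced by Lemma~\ref{lem:phi} with the given parameters $\ell,m$. Because $F(A,B)$ depends only on $\rk(A+B)$, such an $E_\phi$ is naturally matched to $F$, and its contribution to the right-hand side of Proposition~\ref{prop:approxtracelower} decomposes cleanly into a correlation term, an $\ell_1$ penalty term, an off-domain penalty term, and a spectral-norm denominator.

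First I handle the three metric quantities using Lemma~\ref{lem:E_phi-metric} together with items (i)--(iii) and (v) of Lemma~\ref{lem:phi}. The on-domain correlation equals $\phi(n)-\phi(k) = 1+|\phi(k)|$, the full $\ell_1$ norm equals $\|\phi\|_1 = 1+|\phi(k)| + \sum_{r\notin\{k,n\}} |\phi(r)|$, and the off-domain mass is exactly $\sum_{r\notin\{k,n\}} |\phi(r)|$; item (v) of Lemma~\ref{lem:phi} bounds this tail by $32 q^{-m-1}$. Assuming $\delta\le 1$ (otherwise~\eqref{eq:F-m-l-trace-norm-rank} is vacuous), regrouping yields a numerator at least $(1-\delta)(1+|\phi(k)|) - (1+\delta)\cdot 32 q^{-m-1} \ge 1-\delta-64 q^{-m-1}$, which is exactly the prefactor in~\eqref{eq:F-m-l-trace-norm-rank}.

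Next I bound the spectral norm $\|E_\phi\|$ via Lemma~\ref{lem:E_phi-spectral}, which expresses it as $q^{-n^2}\max_{s}\bigl|\sum_{t=0}^{n}\phi(t)\Gamma_n(s,t)\bigr|$, and I split by the range of $s$. For $s \le k-\ell-m$, Lemma~\ref{lem:gammaprop}(iv) makes $\Gamma_n(s,\cdot)$ a polynomial in $q^{-t}$ of degree at most $s$, which Lemma~\ref{lem:phi}(iv) annihilates, so this range contributes nothing. For $s > k-\ell-m$, I combine the pointwise decay $|\Gamma_n(s,t)|\le 128\, q^{-st/2}$ from Lemma~\ref{lem:gammaprop}(v) with the pointwise estimate $|\phi(r)|\le 16\,q^{-\binom{k-m-r+1}{2}-m}$ on the tail support $r\in\{\ell,\ldots,k-m-1\}$ (read off from the proof of Lemma~\ref{lem:phi}(v)), with $|\phi(n)|=1$, and with a similar pointwise bound on $|\phi(k)|$ extracted from the explicit formula for $\zeta$. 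Bounding the resulting sum as a geometric series gives $\|E_\phi\| \le C\, q^{-n^2}\, q^{-\ell(k-\ell-m+1)/2}$ for an absolute constant $C$. The main technical obstacle is verifying that the contributions from interior support points $r\in\{\ell+1,\ldots,k-m-1\}$ and from the distinguished values $r\in\{k,n\}$ do not dominate the $r=\ell$ contribution at the critical value $s=k-\ell-m+1$; this requires separately checking the geometric rates on each piece of the support. Substituting the numerator and this spectral-norm bound into Proposition~\ref{prop:approxtracelower} produces~\eqref{eq:F-m-l-trace-norm-rank}.

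For the second inequality~\eqref{eq:F-m-l-trace-norm-rank-alt}, which does not reference the parameters $\ell,m$, I would use a simpler dual object $\phi$ supported only on $\{k,n\}$ with $\phi(n)=1$ and $\phi(k)=-1$. The identity $\Gamma_n(0,t)\equiv 1$ makes the $s=0$ singular value vanish automatically, and for $s\ge 1$ the bound $|\Gamma_n(s,t)|\le 128\, q^{-st/2}$ gives $\bigl|\sum_t\phi(t)\Gamma_n(s,t)\bigr|\le 256\, q^{-k/2}$. Since $\|E_\phi\|_1=2$, the off-domain mass is zero, and the correlation equals $2$, Proposition~\ref{prop:approxtracelower} yields $\|F\|_{\Sigma,\delta} \ge \frac{1-\delta}{128}\, q^{k/2}\, q^{n^2}$, which is stronger than the claimed~\eqref{eq:F-m-l-trace-norm-rank-alt}.
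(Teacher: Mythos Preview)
Your approach is correct and follows the same skeleton as the paper's: apply Proposition~\ref{prop:approxtracelower} with $\Phi=E_\phi$ for the $\phi$ of Lemma~\ref{lem:phi}, handle the metric terms via Lemma~\ref{lem:E_phi-metric}, and bound the spectral norm via Lemma~\ref{lem:E_phi-spectral} by splitting on $s\le k-\ell-m$ (orthogonality) versus $s>k-\ell-m$ (pointwise decay of $\Gamma_n$).

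The one substantive difference, for \eqref{eq:F-m-l-trace-norm-rank}, is in how the large-$s$ spectral estimate is organized. You propose to bound each term $|\phi(t)\Gamma_n(s,t)|$ individually, which forces you to extract a pointwise estimate on $|\phi(k)|$ from the explicit $\zeta$ formula and then verify the ``main technical obstacle'' that no support point dominates. The paper sidesteps all of this: since $\phi$ vanishes for $t<\ell$ (Lemma~\ref{lem:phi}\ref{enu:phi-vanish}), one has simply
\[
\Bigl|\sum_{t}\phi(t)\Gamma_n(s,t)\Bigr|\le\|\phi\|_1\max_{t\ge\ell}|\Gamma_n(s,t)|\le128\,\|\phi\|_1\,q^{-s\ell/2},
\]
so $\|E_\phi\|\le128\,\|\phi\|_1\,q^{-\ell(k-\ell-m+1)/2-n^2}$. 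Meanwhile the numerator in Proposition~\ref{prop:approxtracelower} is bounded below by $(1-\delta-2\sum_{r\notin\{k,n\}}|\phi(r)|)\,\|\phi\|_1$ rather than by your $1-\delta-64q^{-m-1}$. The factor $\|\phi\|_1$ then cancels, and no information about $|\phi(k)|$ beyond its sign is ever needed. Your route works too (one can check $|\phi(k)|=\Theta(1)$ with absolute constants), but the cancellation makes the paper's argument shorter.

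For \eqref{eq:F-m-l-trace-norm-rank-alt}, the paper specializes the same construction to $\ell=k$, $m=0$, for which the tail in Lemma~\ref{lem:phi}\ref{enu:phi-vanish} is empty; your ad~hoc two-point witness $\phi=\1_{\{n\}}-\1_{\{k\}}$ achieves the same thing and is arguably more elementary.
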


\begin{proof}
Let $\phi\colon\{0,1,\ldots,n\}\to\Re$ be the function constructed
in \prettyref{lem:phi}. Then
\begin{align}
\sum_{\dom F}F_{A,B} & (E_{\phi})_{A,B}-\delta\|E_{\phi}\|_{1}-\sum_{\overline{\dom F}}|(E_{\phi})_{A,B}|\nonumber \\
 & =\sum_{\rk(A+B)=n}(E_{\phi})_{A,B}-\sum_{\rk(A+B)=k}(E_{\phi})_{A,B}-\delta\|E_{\phi}\|_{1}-\sum_{\rk(A+B)\notin\{n,k\}}|(E_{\phi})_{A,B}|\nonumber \\
 & =\phi(n)-\phi(k)-\delta\|\phi\|_{1}-\sum_{r\notin\{n,k\}}|\phi(r)|\nonumber \\
 & =|\phi(n)|+|\phi(k)|-\delta\|\phi\|_{1}-\sum_{r\notin\{n,k\}}|\phi(r)|\nonumber \\
 & =(1-\delta)\|\phi\|_{1}-2\sum_{r\notin\{n,k\}}|\phi(r)|\nonumber \\
 & \geq\left(1-\delta-2\sum_{r\notin\{n,k\}}|\phi(r)|\right)\|\phi\|_{1},\label{eq:Ephi-F-correlation}
\end{align}
where the second step uses \prettyref{lem:E_phi-metric}, the third
step is valid by \prettyref{lem:phi}\prettyref{enu:phi-at-n}\textendash \prettyref{enu:phi-at-k},
and the last step is justified by \prettyref{lem:phi}\prettyref{enu:phi-at-n}.

We now analyze the spectral norm of $E_{\phi}.$ Recall from \prettyref{lem:gammaprop}\prettyref{enu:gammapoly}
that for any fixed values of $n$ and $s,$ the quantity $\Gamma_{n}(s,t)$
as a function of $t\in\{0,1,\ldots,n\}$ is a polynomial in $q^{-t}$
of degree at most $s.$ In this light, \prettyref{lem:phi}\prettyref{enu:phi-orthog}
implies that 
\begin{equation}
\max_{s\in\{0,1,\ldots,k-\ell-m\}}\left|\sum_{t=0}^{n}\phi(t)\Gamma_{n}(s,t)\right|=0.\label{eq:sing-value-s-small}
\end{equation}
Continuing,
\begin{align}
\max_{s\in\{k-\ell-m+1,\ldots,n-1,n\}} & \left|\sum_{t=0}^{n}\phi(t)\Gamma_{n}(s,t)\right|\nonumber \\
 & =\max_{s\in\{k-\ell-m+1,\ldots,n-1,n\}}\left|\sum_{t=\ell}^{n}\phi(t)\Gamma_{n}(s,t)\right|\nonumber \\
 & \leq\max_{s\in\{k-\ell-m+1,\ldots,n-1,n\}}\left\{ \|\phi\|_{1}\max_{t\in\{\ell,\ell+1,\ldots,n\}}|\Gamma_{n}(s,t)|\right\} \nonumber \\
 & \leq\max_{s\in\{k-\ell-m+1,\ldots,n-1,n\}}\left\{ \|\phi\|_{1}\max_{t\in\{\ell,\ell+1,\ldots,n\}}128q^{-st/2}\right\} \nonumber \\
 & =128\|\phi\|_{1}q^{-\ell(k-\ell-m+1)/2},\label{eq:sing-value-s-large}
\end{align}
where the first step uses \prettyref{lem:phi}\prettyref{enu:phi-vanish},
and the third step applies the bound of \prettyref{lem:gammaprop}\prettyref{enu:gammabound}.
By~\prettyref{eq:sing-value-s-small}, \prettyref{eq:sing-value-s-large},
and \prettyref{lem:E_phi-spectral}, 
\begin{equation}
\|E_{\phi}\|\leq128\|\phi\|_{1}\,q^{-\ell(k-\ell-m+1)/2}\,q^{-n^{2}}.\label{eq:E-phi-spectral-norm-bound}
\end{equation}

\prettyref{prop:approxtracelower} with $\Phi=E_{\phi}$ implies,
in view of~\prettyref{eq:Ephi-F-correlation} and~\prettyref{eq:E-phi-spectral-norm-bound},
that
\begin{equation}
\|F\|_{\Sigma,\delta}\geq\frac{1}{128}\cdot\left(1-\delta-2\sum_{r\notin\{n,k\}}|\phi(r)|\right)q^{\ell(k-\ell-m+1)/2}\,q^{n^{2}}.\label{eq:F-trace-norm-intermediate}
\end{equation}
Since $\sum_{r\notin\{n,k\}}|\phi(r)|\leq32q^{-m-1}$ by~\prettyref{lem:phi}\prettyref{enu:phi-small-tail},
this settles~\prettyref{eq:F-m-l-trace-norm-rank}. The alternative
lower bound~\prettyref{eq:F-m-l-trace-norm-rank-alt} follows from~\prettyref{eq:F-trace-norm-intermediate}
by taking $\ell=k$ and $m=0$ and noting that $\sum_{r\notin\{n,k\}}|\phi(r)|=0$
in this case (by \prettyref{lem:phi}\prettyref{enu:phi-vanish}).
\end{proof}

\subsection{\label{subsec:communication-lower-bounds}Communication lower bounds}

We will now use our newly obtained lower bound on the approximate
trace norm to prove the main result of this section, a tight lower
bound on the communication complexity of the rank problem. We will
first examine the canonical case of distinguishing rank-$k$ matrices
in $\mathbb{F}^{n\times n}$ from full-rank matrices.
\begin{thm}
\label{thm:rank-lower-bound-CANONICAL}There is an absolute constant
$c>0$ such that for all finite fields $\mathbb{F}$ and all integers
$n>k\geq0,$
\begin{equation}
Q_{\frac{1}{2}-\frac{1}{4|\mathbb{F}|^{k/3}}}^{*}(\RANK_{k,n}^{\mathbb{F},n,n})\geq c(1+k^{2}\log|\mathbb{F}|).\label{eq:master-lower-bound-rank}
\end{equation}
\end{thm}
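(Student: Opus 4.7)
The plan is to combine the approximate trace norm lower bound of Theorem~\ref{thm:approx-norm-rank-problem} with the approximate trace norm method of Theorem~\ref{thm:approx-trace-norm-method}. Let $q=|\mathbb{F}|$, and let $F$ denote the characteristic matrix of $\RANK_{k,n}^{\mathbb{F},n,n}$, whose rows and columns are indexed by matrices in $\mathbb{F}^{n\times n}$. Setting $\epsilon=\frac{1}{2}-\frac{1}{4q^{k/3}}$, Theorem~\ref{thm:approx-trace-norm-method} yields
\[
4^{Q_\epsilon^*(\RANK_{k,n}^{\mathbb{F},n,n})}\;\geq\;\frac{\|F\|_{\Sigma,2\epsilon}}{3q^{n^2}},
\]
so the entire task reduces to lower-bounding $\|F\|_{\Sigma,2\epsilon}$ by roughly $q^{n^2+\Omega(k^2)}$.

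\textbf{Main case: $k$ exceeds a sufficiently large absolute constant $k_0$.} Write $\delta=2\epsilon=1-\frac{1}{2q^{k/3}}$, and choose integers $m,\ell\geq 0$ with $\ell+m\leq k$ by setting $m=\lceil k/3\rceil+c_0$ for a fixed constant $c_0$ large enough that $64/q^{m+1}\leq\frac{1}{4q^{k/3}}$ uniformly in $q\geq 2$ (any $c_0\geq 8$ suffices), and then $\ell=\lfloor(k-m+1)/2\rfloor$. For $k\geq k_0$ these choices satisfy $\ell,m\geq 1$ and $\ell+m\leq k$, and a routine estimate shows $\ell(k-\ell-m+1)\geq c_1 k^2$ for an absolute constant $c_1>0$. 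Plugging into inequality~\eqref{eq:F-m-l-trace-norm-rank} of Theorem~\ref{thm:approx-norm-rank-problem} and using $1-\delta-64/q^{m+1}\geq\frac{1}{4q^{k/3}}$, one obtains
\[
\|F\|_{\Sigma,\delta}\;\geq\;\frac{1}{150}\cdot\frac{1}{4q^{k/3}}\cdot q^{\ell(k-\ell-m+1)/2}\cdot q^{n^2}\;\geq\;\frac{q^{c_1 k^2/2-k/3}}{600}\cdot q^{n^2}.
\]
Substituting back into the approximate trace norm method gives $4^{Q_\epsilon^*}\geq q^{c_1k^2/2-k/3}/1800$, which yields $Q_\epsilon^*(\RANK_{k,n}^{\mathbb{F},n,n})=\Omega(k^2\log q)$ once $k\geq k_0$.

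\textbf{Small $k$ case.} For $k<k_0$, the target lower bound $c(1+k^2\log q)$ is at most $O(\log q)$, so it suffices to prove $Q_\epsilon^*\geq\Omega(1+\log q)$. If $k=0$, the partial function $\RANK_{0,n}^{\mathbb{F},n,n}$ is non-constant and $\epsilon=\frac{1}{4}<\frac{1}{2}$, so the trivial bound gives $Q_\epsilon^*\geq 1$, matching $c\cdot(1+0)=c$. If $1\leq k<k_0$, apply instead the auxiliary inequality~\eqref{eq:F-m-l-trace-norm-rank-alt}:
\[
\|F\|_{\Sigma,\delta}\;\geq\;\frac{1-\delta}{150}\,q^{k/2}\,q^{n^2}\;=\;\frac{1}{300q^{k/3}}\,q^{k/2}\,q^{n^2}\;=\;\frac{q^{k/6}}{300}\,q^{n^2}.
\]
Hence $4^{Q_\epsilon^*}\geq q^{k/6}/900$, giving $Q_\epsilon^*\geq\Omega(k\log q)=\Omega(\log q)$ in this range, which absorbs the trivial $\Omega(1)$ contribution.

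\textbf{Anticipated difficulty.} There is no deep new idea required beyond the machinery already developed in Sections~\ref{subsec:univar-dual-obj}--\ref{subsec:approx-trace-norm}; the proof is essentially a parameter-tuning exercise that extracts the optimal $k^2\log q$ behaviour from Theorem~\ref{thm:approx-norm-rank-problem}. The only nontrivial point is the simultaneous balancing of $m$ and $\ell$ so that (a) $1-\delta-64/q^{m+1}$ remains at least a constant fraction of $1-\delta=\frac{1}{2q^{k/3}}$, forcing $m\gtrsim k/3$, while (b) $\ell+m\leq k$ still allows $\ell(k-\ell-m+1)=\Theta(k^2)$, forcing $m\lesssim k/3$ and $\ell\approx k/3$. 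The choice $m=\lceil k/3\rceil+O(1)$ and $\ell=\lfloor(k-m+1)/2\rfloor$ achieves this, with all absorbed constants uniform in $q$ and $k$.
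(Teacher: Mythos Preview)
Your proposal is correct and follows essentially the same approach as the paper: both combine the trace-norm lower bound of Theorem~\ref{thm:approx-norm-rank-problem} with the approximate trace norm method, splitting into a small-$k$ regime handled via~\eqref{eq:F-m-l-trace-norm-rank-alt} and a large-$k$ regime handled via~\eqref{eq:F-m-l-trace-norm-rank}. The only difference is in the specific parameter tuning: the paper takes $\ell=\lceil k/3\rceil$ and $m=\lfloor k/2\rfloor$ (with threshold $k_0=50$), whereas you take $m=\lceil k/3\rceil+O(1)$ and $\ell=\lfloor(k-m+1)/2\rfloor$; both choices yield $\ell(k-\ell-m+1)=\Theta(k^2)$ while keeping $64/q^{m+1}$ dominated by $\frac{1}{2q^{k/3}}$. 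One minor point: in your small-$k$ analysis the inequality $4^{Q_\epsilon^*}\geq q^{k/6}/900$ does not by itself give $\Omega(\log q)$ for very small $q$ and $k$---you implicitly need to average with the trivial bound $Q_\epsilon^*\geq 1$, exactly as the paper does explicitly.
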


\begin{proof}
Abbreviate $q=|\mathbb{F}|$ and $\epsilon=\frac{1}{2}-\frac{1}{4q^{k/3}}$.
Since $\RANK_{k,n}^{\mathbb{F},n,n}$ is a nonconstant function, we
have the trivial lower bound
\begin{equation}
Q_{\epsilon}^{*}(\RANK_{k,n}^{\mathbb{F},n,n})\geq1.\label{eq:RANK-trivial}
\end{equation}
Let $F$ be the characteristic matrix of this communication problem.
We first examine the case $k\leq50.$ Here, taking $\delta=2\epsilon$
in equation~\prettyref{eq:F-m-l-trace-norm-rank-alt} of~\prettyref{thm:approx-norm-rank-problem}
shows that $\|F\|_{\Sigma,2\epsilon}\geq q^{k/6}q^{n^{2}}/300\geq q^{k^{2}/300}q^{n^{2}}/300,$
where the last step uses $k\leq50.$ It follows from \prettyref{thm:approx-trace-norm-method}
that
\[
Q_{\epsilon}^{*}(\RANK_{k,n}^{\mathbb{F},n,n})\geq\frac{1}{2}\log\frac{q^{k^{2}/300}}{3\cdot300}\geq\frac{1}{600}\,k^{2}\log q-5.
\]
Taking a weighted arithmetic average of this lower bound and~\prettyref{eq:RANK-trivial}
settles~\prettyref{eq:master-lower-bound-rank}.

Consider now the complementary case $k>50$. Taking $\delta=2\epsilon,$
$\ell=\lceil k/3\rceil$, and $m=\lfloor k/2\rfloor$ in equation~\prettyref{eq:F-m-l-trace-norm-rank}
of~\prettyref{thm:approx-norm-rank-problem} gives
\begin{align*}
\|F\|_{\Sigma,2\epsilon} & \geq\frac{1}{150}\left(\frac{1}{2q^{k/3}}-\frac{64}{q^{\lfloor k/2\rfloor+1}}\right)q^{\lceil k/3\rceil(k-\lceil k/3\rceil-\lfloor k/2\rfloor+1)/2}q^{n^{2}}\\
 & \geq\frac{1}{300}\left(1-\frac{128}{q^{k/6}}\right)q^{-k/3}q^{\lceil k/3\rceil k/12}q^{n^{2}}\\
 & \geq\frac{1}{600}\,q^{-k/3}q^{\lceil k/3\rceil k/12}q^{n^{2}}\\
 & \geq\frac{1}{600}\,q^{k^{2}/48}q^{n^{2}},
\end{align*}
where the last two steps use $k>50.$ As a result, \prettyref{thm:approx-trace-norm-method}
guarantees that
\[
Q_{\epsilon}^{*}(\RANK_{k,n}^{\mathbb{F},n,n})\geq\frac{1}{2}\log\frac{q^{k^{2}/48}}{3\cdot600}\geq\frac{1}{96}\,k^{2}\log q-6.
\]
Taking a weighted arithmetic average of this lower bound and~\prettyref{eq:RANK-trivial}
settles~\prettyref{eq:master-lower-bound-rank}.
\end{proof}
We now establish our main lower bound for the rank problem in its
full generality.
\begin{thm*}[restatement of \prettyref{thm:RANK-lowerbound-intro}]
\label{thm:rank-lower-bound-GENERAL}There is an absolute constant
$c>0$ such that for all finite fields $\mathbb{F}$ and all integers
$n,m,R,r$ with $\min\{n,m\}\geq R>r\geq0,$
\begin{equation}
Q_{\frac{1}{2}-\frac{1}{4|\mathbb{F}|^{r/3}}}^{*}(\RANK_{r,R}^{\mathbb{F},n,m})\geq c(1+r^{2}\log|\mathbb{F}|).\label{eq:master-lower-bound-rank-general}
\end{equation}
In particular, 
\begin{equation}
Q_{1/4}^{*}(\RANK_{r,R}^{\mathbb{F},n,m})\geq c(1+r^{2}\log|\mathbb{F}|).\label{eq:master-lower-bound-rank-const-error}
\end{equation}
\end{thm*}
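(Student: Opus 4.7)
The plan is to reduce the general partial rank problem $\RANK_{r,R}^{\mathbb{F},n,m}$ to its canonical ``square, full-rank target'' version $\RANK_{r,R}^{\mathbb{F},R,R}$, and then invoke the already-proved \prettyref{thm:rank-lower-bound-CANONICAL}. The work has been done; what remains is a short communication-free reduction and an easy monotonicity observation to deduce the constant-error consequence \prettyref{eq:master-lower-bound-rank-const-error}.

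First I would exhibit the communication-free reduction $\RANK_{r,R}^{\mathbb{F},R,R} \PRECEQ \RANK_{r,R}^{\mathbb{F},n,m}$. Given any $R\times R$ inputs $A',B'$ for Alice and Bob, each party produces an $n\times m$ matrix by placing their own input in the top-left $R\times R$ block and filling the remaining entries with zeros; call these padded matrices $A,B$. Since padding by zero rows and columns preserves rank, $\rk(A+B)=\rk(A'+B')$, so on inputs in $\dom \RANK_{r,R}^{\mathbb{F},R,R}$ the two functions agree. Hence for every $\epsilon$,
\[
Q_{\epsilon}^{*}(\RANK_{r,R}^{\mathbb{F},n,m}) \;\geq\; Q_{\epsilon}^{*}(\RANK_{r,R}^{\mathbb{F},R,R}).
\]

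Next I would apply \prettyref{thm:rank-lower-bound-CANONICAL} with its parameters $n$ and $k$ instantiated as $R$ and $r$, respectively; its hypothesis $n>k\geq 0$ becomes the hypothesis $R>r\geq 0$ that we are given. This yields
\[
Q_{\frac{1}{2}-\frac{1}{4|\mathbb{F}|^{r/3}}}^{*}(\RANK_{r,R}^{\mathbb{F},R,R}) \;\geq\; c\bigl(1+r^{2}\log|\mathbb{F}|\bigr),
\]
and chaining with the reduction establishes \prettyref{eq:master-lower-bound-rank-general}.

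Finally, to deduce \prettyref{eq:master-lower-bound-rank-const-error}, I would note that $|\mathbb{F}|\geq 2$ and $r\geq 0$ give $\frac{1}{4|\mathbb{F}|^{r/3}}\leq \frac{1}{4}$, whence $\frac{1}{4}\leq \frac{1}{2}-\frac{1}{4|\mathbb{F}|^{r/3}}$. Since relaxing the error budget can only decrease quantum communication complexity, $Q_{1/4}^{*}(\RANK_{r,R}^{\mathbb{F},n,m})\geq Q_{\frac{1}{2}-\frac{1}{4|\mathbb{F}|^{r/3}}}^{*}(\RANK_{r,R}^{\mathbb{F},n,m})$, and the constant-error bound follows. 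There is no real obstacle: the canonical theorem does all the analytic heavy lifting (the $P_{n}$ and $\Gamma_{n}$ analysis, the dual witness $E_{\phi}$, and the approximate trace norm method), and the ``hardest'' thing to verify here is simply that the zero-padding reduction respects the partial-function domain.
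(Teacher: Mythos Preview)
Your proposal is correct and essentially identical to the paper's own proof: the paper also zero-pads $R\times R$ inputs into $n\times m$ matrices to get the communication-free reduction $\RANK_{r,R}^{\mathbb{F},R,R}\preceq\RANK_{r,R}^{\mathbb{F},n,m}$, then invokes \prettyref{thm:rank-lower-bound-CANONICAL} with parameters $(R,r)$, and finally notes that~\eqref{eq:master-lower-bound-rank-general} implies~\eqref{eq:master-lower-bound-rank-const-error}. Your explicit monotonicity justification for the last implication is a slight elaboration over the paper's one-line remark, but the argument is the same.
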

\begin{proof}
There is a communication-free reduction from $\RANK_{r,R}^{\mathbb{F},R,R}$
to $\mathbb{\RANK}_{r,R}^{\mathbb{F},n,m}$, where Alice and Bob pad
their input matrices $A,B\in\mathbb{F}^{R\times R}$ with zeroes to
obtain matrices $A',B'\in\mathbb{F}^{n\times m}$ with $\rk(A+B)=\rk(A'+B').$
Therefore, $Q_{\epsilon}^{*}(\RANK_{r,R}^{\mathbb{F},n,m})\geq Q_{\epsilon}^{*}(\RANK_{r,R}^{\mathbb{F},R,R})$
for all $\epsilon.$ Now~\prettyref{thm:rank-lower-bound-CANONICAL}
implies~\prettyref{eq:master-lower-bound-rank-general}, which in
turn implies \prettyref{eq:master-lower-bound-rank-const-error}.
\end{proof}

\subsection{\label{subsec:communication-upper-bounds}Communication upper bounds}

To finalize our study of the rank problem, we will prove a matching
upper bound on its communication complexity. We emphasize that our
upper bound is achieved by a randomized (classical) protocol, whereas
our lower bound is valid even for quantum communication.
\begin{thm}
\label{thm:RANK-upperbound-high-accuracy}Let $n,m,R$ be nonnegative
integers with $\min\{n,m\}\geq R\geq0$. Let $\mathbb{F}$ be a finite
field with $q=|\mathbb{F}|$ elements. Then for all $t\geq2$ and
$\epsilon\in(0,1),$ there is a $t$-party randomized communication
protocol which: 
\begin{itemize}
\item takes as input matrices $A_{1},A_{2},\ldots,A_{t}\in\mathbb{F}^{n\times m}$
for players $1,2,\ldots,t,$ respectively;
\item computes $\min\{\rk(\sum A_{i}),R\}$ with probability of error at
most $\epsilon;$ and 
\item has communication cost $O(t(R+\lceil\log_{q}(1/\epsilon)\rceil)^{2}\log q)$. 
\end{itemize}
\end{thm}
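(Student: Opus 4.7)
The plan is to adapt the Clarkson--Woodruff random-projection idea to the $t$-party blackboard model. Set $d = R + \lceil \log_q(8/\epsilon) \rceil$, so $d = O(R + \lceil \log_q(1/\epsilon) \rceil)$. Using public randomness, the parties sample matrices $X \in \mathbb{F}^{d \times n}$ and $Y \in \mathbb{F}^{m \times d}$ uniformly at random. Each player $i$ computes the $d \times d$ matrix $X A_i Y$ locally and writes it on the blackboard using $d^2 \lceil \log_2 q \rceil$ bits. Once all players have spoken, every party reads off $XMY = \sum_{i=1}^{t} X A_i Y$ where $M = \sum_i A_i$, computes $\rk(XMY)$ locally, and outputs $\min\{\rk(XMY), R\}$. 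The total communication is $t \cdot d^2 \cdot \lceil \log_2 q \rceil = O(t(R + \lceil \log_q(1/\epsilon)\rceil)^2 \log q)$, matching the required bound.

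For correctness, the key observation is $\rk(XMY) \leq \rk M$ deterministically, so $\min\{\rk(XMY), R\} \leq \min\{\rk M, R\}$; the protocol errs exactly when $\rk(XMY)$ drops strictly below $\min\{\rk M, R\}$. I will bound the probability of this bad event by $\epsilon$ via a two-step argument, separating the failure modes for the left projection $X$ and the right projection $Y$. Suppose first that $\rk M \geq R$, so the target output is $R$. Applying \prettyref{lem:random-proj-matrices}\prettyref{enu:left-matrix-proj} with the threshold $R-1$ (which satisfies $R-1 \leq \min\{\rk M, d\}$) gives $\Prob[\rk(XM) \leq R-1] \leq 4 q^{-(\rk M - R + 1)(d - R + 1)} \leq 4 q^{-(d-R+1)}$. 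Conditioning on $\rk(XM) = r \geq R$, \prettyref{lem:random-proj-matrices}\prettyref{enu:right-matrix-proj} applied to the matrix $XM \in \mathbb{F}^{d \times m}$ with threshold $R-1$ yields $\Prob[\rk((XM)Y) \leq R-1 \mid \rk(XM) = r] \leq 4 q^{-(r-R+1)(d-R+1)} \leq 4 q^{-(d-R+1)}$. A union bound therefore gives $\Prob[\rk(XMY) < R] \leq 8 q^{-(d-R+1)} \leq \epsilon$ by our choice of $d$.

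The complementary case $\rk M = s < R$ is handled by the same two-step argument with threshold $s-1$ in both invocations of \prettyref{lem:random-proj-matrices}; since $\rk(XM) \leq \rk M = s$, the conditional event $\rk(XM) \geq s$ reduces to $\rk(XM) = s$, and the resulting failure bound $8 q^{-(d-s+1)} \leq 8 q^{-(d-R+1)} \leq \epsilon$ is even stronger. No real obstacle arises here: once one sees that double random projection preserves $\min\{\rk M, R\}$ with high confidence whenever $d - R$ exceeds $\log_q(8/\epsilon)$, the protocol and analysis follow immediately from \prettyref{lem:random-proj-matrices} combined with a union bound. The only bookkeeping subtlety is keeping the left-projection and right-projection failure events separate and verifying that the threshold $R-1$ (respectively $s-1$) meets the hypothesis $t \leq \min\{\rk(\cdot), d\}$ of the lemma in both steps.
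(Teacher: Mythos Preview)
Your proof is correct and follows essentially the same approach as the paper's: both set the projection dimension to $R+\lceil\log_q(8/\epsilon)\rceil$, have each party broadcast $XA_iY$, output $\min\{\rk(XMY),R\}$, and bound the failure probability via two applications of \prettyref{lem:random-proj-matrices} (one for $X$, one for $Y$) combined by a union bound. The only cosmetic difference is that the paper handles both cases at once by taking the threshold $t=\min\{\rk A,R\}-1$ (and then $t=\min\{\rk(XA),R\}-1$), whereas you split into the cases $\rk M\geq R$ and $\rk M<R$; the resulting bounds are identical.
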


\begin{proof}
We may assume that $n,m\geq1$ since the theorem is trivial otherwise.
The communication protocol is based on random projections and is inspired
by Clarkson and Woodruff's streaming algorithm~\cite{clarkson-woodruff09}
for matrix rank. Set $\Delta=\lceil\log_{q}(8/\epsilon)\rceil$ and
$A=\sum A_{i}$. The players use their shared randomness to pick a
pair of independent and uniformly random matrices $X\in\mathbb{F}^{(R+\Delta)\times n}$
and $Y\in\mathbb{F}^{m\times(R+\Delta)}$. Then each player $i$ sends
the matrix $XA_{i}Y\in\mathbb{F}^{(R+\Delta)\times(R+\Delta)}$, and
they all output $\min\{\rk(XAY),R\}$. The communication cost is $O(t(R+\Delta)^{2}\log q)$
as claimed, due to $XAY=\sum XA_{i}Y$. It is also clear that this
protocol always outputs a \emph{lower} bound on the correct value
$\min\{\rk A,R\},$ due to $\rk(XAY)\leq\rk A$ for all $X,Y$. It
remains to show that 
\begin{equation}
\Prob[\rk(XAY)\geq\min\{\rk A,R\}]\geq1-\epsilon.\label{eq:lower-bound-tight-rank-protocol}
\end{equation}

Conditioned on $X,$ we have $\rk(XAY)\geq\min\{\rk(XA),R\}$ with
probability at least $1-4q^{-\Delta-1}\geq1-\epsilon/2$ (apply \prettyref{lem:random-proj-matrices}\prettyref{enu:right-matrix-proj}
with $M=XA$ and $t=\min\{\rk(XA),R\}-1$). Similarly, $\rk(XA)\geq\min\{\rk A,R\}$
with probability at least $1-\epsilon/2$ (apply \prettyref{lem:random-proj-matrices}\prettyref{enu:left-matrix-proj}
with $M=A$ and $t=\min\{\rk A,R\}-1$). The union bound now gives~\prettyref{eq:lower-bound-tight-rank-protocol}.
\end{proof}
In the corollary below, $\RANK_{r}^{\mathbb{F},n,m,t}\colon(\mathbb{F}^{n\times m})^{t}\to\{-1,1\}$
denotes the total version of the matrix rank problem for $t$ parties,
given by $\RANK_{r}^{\mathbb{F},n,m,t}(A_{1},A_{2},\ldots,A_{t})=-1$
if and only if $\rk(\sum A_{i})\leq r.$
\begin{cor}
\label{cor:RANK-upperbound-high-accuracy}Let $n,m,r$ be integers
with $\min\{n,m\}>r\geq0.$ Let $\mathbb{F}$ be a finite field with
$q=|\mathbb{F}|$ elements. Then for all $\epsilon\in(0,1/2),$
\begin{equation}
R_{\epsilon}(\RANK_{r}^{\mathbb{F},n,m})=\begin{cases}
O(\log(1/\epsilon)) & \text{if }r=0,\\
O((r+\lceil\log_{q}(1/\epsilon)\rceil)^{2}\log q) & \text{otherwise.}
\end{cases}\label{eq:RANK-upper-total}
\end{equation}
More generally, for all $t\geq2,$
\begin{equation}
R_{\epsilon}(\RANK_{r}^{\mathbb{F},n,m,t})=O(t(r+\lceil\log_{q}(1/\epsilon)\rceil)^{2}\log q).\label{eq:RANK-upper-total-multiparty}
\end{equation}
\end{cor}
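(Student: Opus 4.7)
The plan is to deduce both bounds from Theorem \ref{thm:RANK-upperbound-high-accuracy}, with the sole exception being the two-party case $r=0$, which requires a short fingerprinting argument because the claimed bound $O(\log(1/\epsilon))$ carries no $\log q$ factor.

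For the multiparty bound \eqref{eq:RANK-upper-total-multiparty}, I would invoke Theorem \ref{thm:RANK-upperbound-high-accuracy} with the parameter $R$ set to $r+1$. The hypothesis $\min\{n,m\}>r$ guarantees $\min\{n,m\}\geq R$, so the theorem yields a $t$-party protocol with error at most $\epsilon$ and communication cost
\[
O\bigl(t(r+1+\lceil\log_q(1/\epsilon)\rceil)^2\log q\bigr)
\]
that computes $\min\{\rk(\sum_i A_i),\,r+1\}$. The players then output $-1$ if and only if this value is at most $r$, which is equivalent to $\rk(\sum_i A_i)\leq r$. To bring this cost into the form stated in \eqref{eq:RANK-upper-total-multiparty}, one observes that $\epsilon<1/2$ forces $\log_q(1/\epsilon)>0$ and hence $\lceil\log_q(1/\epsilon)\rceil\geq 1$, which lets us bound $r+1+\lceil\log_q(1/\epsilon)\rceil\leq 2(r+\lceil\log_q(1/\epsilon)\rceil)$ and absorb the ``$+1$'' into the big-$O$.

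For the two-party total bound \eqref{eq:RANK-upper-total} with $r\geq 1$, I would simply specialize the preceding construction to $t=2$, which gives the stated $O((r+\lceil\log_q(1/\epsilon)\rceil)^2\log q)$ cost. The remaining case is two-party with $r=0$, where one has $\RANK_0^{\mathbb{F},n,m}(A,B)=-1$ precisely when $A+B=0$, i.e.\ when $A=-B$. This is an instance of the equality problem on $\mathbb{F}^{n\times m}$, for which the standard public-coin protocol (Alice and Bob use shared randomness to sample a hash function $h\colon\mathbb{F}^{n\times m}\to\{0,1\}^{\lceil\log(1/\epsilon)\rceil}$, Alice transmits $h(A)$, and Bob accepts iff $h(A)=h(-B)$) achieves error at most $\epsilon$ with $O(\log(1/\epsilon))$ bits of communication, independent of $q$ and the matrix dimensions.

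There is no serious obstacle: every ingredient (the black-box application of Theorem \ref{thm:RANK-upperbound-high-accuracy} and the $r=0$ equality reduction) is immediate, and the only arithmetic subtlety is the observation $\lceil\log_q(1/\epsilon)\rceil\geq 1$ used to fold the additive ``$+1$'' into the stated big-$O$.
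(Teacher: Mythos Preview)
Your proposal is correct and matches the paper's proof essentially line for line: the paper also invokes Theorem~\ref{thm:RANK-upperbound-high-accuracy} with $R=r+1$ to get the multiparty bound, specializes to $t=2$ for the $r\geq 1$ case, and handles $r=0$ by observing that $\RANK_0^{\mathbb{F},n,m}$ is the equality problem. Your extra care in justifying why the ``$+1$'' can be absorbed into the big-$O$ is a detail the paper leaves implicit.
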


\begin{proof}
We have $\RANK_{r}^{\mathbb{F},n,m,t}(A_{1},A_{2},\ldots,A_{t})=-1$
if and only if $\min\{\rk(\sum A_{i}),r+1\}\leq r.$ To compute this
minimum with error $\epsilon$, one can use the $t$-party protocol
of \prettyref{thm:RANK-upperbound-high-accuracy} with $R=r+1,$ with
communication cost $O(t(r+\lceil\log_{q}(1/\epsilon)\rceil)^{2}\log q)$.
This settles the multiparty bound~\prettyref{eq:RANK-upper-total-multiparty},
which in turn implies the two-party bound~\prettyref{eq:RANK-upper-total}
for $r\geq1.$

Lastly, $\RANK_{0}^{\mathbb{F},n,m}(A,B)=-1$ if and only if $A=-B.$
Thus, $\RANK_{0}^{\mathbb{F},n,m}$ is equivalent to the equality
problem with domain $\mathbb{F}^{n\times m}\times\mathbb{F}^{n\times m}$.
It is well known~\cite{ccbook} that the $\epsilon$-error randomized
communication complexity of equality is $O(\log(1/\epsilon)).$ Therefore,
\prettyref{eq:RANK-upper-total} holds also for $r=0.$
\end{proof}
Our communication upper bound readily generalizes to the bilinear
query model, as follows.
\begin{thm}
\label{thm:BLQ-upper}Let $n,m,r$ be integers with $\min\{n,m\}>r\geq0.$
Let $\mathbb{F}$ be a finite field with $q=|\mathbb{F}|$ elements.
Then for all $\epsilon\in(0,1/2),$ there is a query algorithm in
the bilinear query model with cost $O((r+\lceil\log_{q}(1/\epsilon)\rceil)^{2})$
that takes as input a matrix $A\in\mathbb{F}^{n\times m}$ and determines
whether $\rk A\leq r$ with probability of error at most $\epsilon$.
\end{thm}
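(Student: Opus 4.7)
The strategy is to mimic the two-party communication protocol of \prettyref{thm:RANK-upperbound-high-accuracy} (with $t=2$ and $R=r+1$) inside the bilinear query model. Set $\Delta=\lceil\log_q(1/\epsilon)\rceil+O(1)$ and $R=r+1$. Using shared randomness, the query algorithm picks a pair of independent and uniformly random matrices $X\in\mathbb{F}^{(R+\Delta)\times n}$ and $Y\in\mathbb{F}^{m\times(R+\Delta)}$. Observe that each entry $(XAY)_{i,j}=X_{i,:}\,A\,Y_{:,j}$ is precisely a single bilinear query $u\tr A v$ with $u=X_{i,:}\tr$ and $v=Y_{:,j}$. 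Thus the full matrix $XAY\in\mathbb{F}^{(R+\Delta)\times(R+\Delta)}$ can be assembled with $(R+\Delta)^2=O((r+\lceil\log_q(1/\epsilon)\rceil)^2)$ bilinear queries. Having computed $XAY$, the algorithm outputs $-1$ if $\rk(XAY)\leq r$ and $+1$ otherwise.

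For correctness, note that $\rk(XAY)\leq\rk A$ always, so if $\rk A\leq r$ the algorithm outputs $-1$ with probability $1$. In the complementary case $\rk A\geq r+1=R$, we must show that $\rk(XAY)\geq R$ with probability at least $1-\epsilon$. This follows by applying \prettyref{lem:random-proj-matrices}\prettyref{enu:left-matrix-proj} to $A$ with $t=R-1$ to get $\rk(XA)\geq R$ except with probability at most $4q^{-\Delta}\leq\epsilon/2$, and then applying \prettyref{lem:random-proj-matrices}\prettyref{enu:right-matrix-proj} conditionally on $X$ with $M=XA$ and $t=R-1$ to get $\rk(XAY)\geq R$ with the same failure bound. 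A union bound over these two events yields total failure probability at most $\epsilon$, exactly as in the proof of~\prettyref{thm:RANK-upperbound-high-accuracy}.

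There is essentially no obstacle here; the result is a direct translation of the random-projection protocol to the query setting, using the observation that matrix-vector-matrix products $u\tr A v$ are the native access primitive of the bilinear query model. The only small care needed is to choose the constant inside $\Delta$ so that both applications of \prettyref{lem:random-proj-matrices} individually fail with probability at most $\epsilon/2$, which is immediate from the quantitative bound $4q^{-(\dim-t)(d-t)}$ in that lemma.
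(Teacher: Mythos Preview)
Your proposal is correct and matches the paper's proof essentially verbatim: the paper likewise sets $\Delta=\lceil\log_q(8/\epsilon)\rceil$ and $R=r+1$, queries the $(R+\Delta)^2$ entries of $XAY$ as bilinear forms, and invokes the same two applications of \prettyref{lem:random-proj-matrices} (via the bound~\eqref{eq:lower-bound-tight-rank-protocol} established in the proof of \prettyref{thm:RANK-upperbound-high-accuracy}) together with a union bound. The only cosmetic difference is that the paper cites that earlier paragraph rather than restating the argument.
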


\begin{proof}
On input $A\in\mathbb{F}^{n\times m},$ choose $X\in\mathbb{F}^{(R+\Delta)\times n}$
and $Y\in\mathbb{F}^{m\times(R+\Delta)}$ uniformly at random, where
$\Delta=\lceil\log_{q}(8/\epsilon)\rceil$ and $R=r+1$. Then trivially
$\rk(XAY)\leq\rk A.$ In the opposite direction, we have the bound
\prettyref{eq:lower-bound-tight-rank-protocol}, established in the
last paragraph of the proof of \prettyref{thm:RANK-upperbound-high-accuracy}.
Therefore, we can determine whether $\rk A\leq r$ with probability
of error $\epsilon$ by checking whether $\rk(XAY)\leq r.$ Since
the entries of $XAY$ are bilinear forms in $A,$ the entire matrix
$XAY$ can be recovered using $(R+\Delta)^{2}$ queries. 
\end{proof}
We now prove an alternative communication upper bound, showing that
even a two-bit protocol can solve the rank problem with nontrivial
advantage. For simplicity, we will only consider the two-party model;
a similar statement can be proved for bilinear query complexity. 
\begin{thm}
\label{thm:RANK-upperbound-2bit}Let $n,m,r$ be integers with $\min\{n,m\}>r\geq0.$
Let $\mathbb{F}$ be a finite field with $q=|\mathbb{F}|$ elements.
Then 
\begin{equation}
R_{\frac{1}{2}-\frac{1}{32q^{r}}}(\mathrm{RANK}_{r}^{\mathbb{F},n,m})\leqslant2.\label{eq:rank-2bit-total}
\end{equation}
\end{thm}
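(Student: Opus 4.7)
The plan is to combine a random bilinear projection of $A+B$ to a single field element with a one-bit public-coin equality test on that element. Using the shared randomness, Alice and Bob will sample $u \in \mathbb{F}^n$ and $v \in \mathbb{F}^m$ uniformly, along with an independent uniform bit $h(c) \in \{0,1\}$ for each $c \in \mathbb{F}$ (equivalently, a uniformly random function $h \colon \mathbb{F} \to \{0,1\}$). Alice then sends the single bit $x = h(u^\top A v)$, Bob sends $y = h(-u^\top B v)$, and the protocol outputs $\Pi = -1$ when $x = y$ and $\Pi = +1$ otherwise. The total cost is two bits.

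The first analysis step is to compute $\mathbb{E}[\Pi(A,B)]$ by conditioning on $\gamma = u^\top(A+B)v \in \mathbb{F}$. If $\gamma = 0$ then $u^\top A v = -u^\top B v$, so $h$ is evaluated at the same point by both players, giving $x = y$ deterministically and $\Pi = -1$; if $\gamma \neq 0$ the two arguments of $h$ are distinct, so $h$ assigns them independent uniform bits and $\mathbb{E}[\Pi \mid \gamma \neq 0] = 0$. Averaging over $u,v$ gives $\mathbb{E}[\Pi(A,B)] = -\Pr_{u,v}[u^\top(A+B)v = 0]$.

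The second step is to compute $p_k := \Pr_{u,v}[u^\top M v = 0]$ for $\rk M = k$. An invertible change of basis (which preserves the uniform distributions of $u$ and $v$) reduces $M$ to $\bigl[\begin{smallmatrix} I_k & 0 \\ 0 & 0 \end{smallmatrix}\bigr]$, so $u^\top M v = \sum_{i=1}^{k} u_i v_i$. Conditioning on whether $(v_1,\ldots,v_k) = 0$ yields the closed form
\[
p_k \;=\; \frac{1}{q} + \frac{q-1}{q^{k+1}}, \qquad k \geq 0,
\]
which is strictly decreasing from $p_0 = 1$ toward $1/q$, with the uniform gap $p_r - p_{r+1} = (q-1)^2/q^{r+2}$ for every $r \geq 0$. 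Consequently $\mathbb{E}[\Pi(A,B)] \leq -p_r$ whenever $\rk(A+B) \leq r$, and $\mathbb{E}[\Pi(A,B)] \geq -p_{r+1}$ whenever $\rk(A+B) > r$.

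Finally, I will invoke \prettyref{prop:shift-probab} with $\alpha = -p_r$ and $\beta = -p_{r+1}$, producing a $2$-bit protocol for $\RANK_r^{\mathbb{F},n,m}$ with error at most $\tfrac{1}{2} - (p_r - p_{r+1})/8 = \tfrac{1}{2} - (q-1)^2/(8q^{r+2})$. The claimed bound $\tfrac{1}{2} - 1/(32 q^r)$ then reduces to the elementary inequality $4(q-1)^2 \geq q^2$, which holds for every $q \geq 2$. I foresee no substantial technical obstacle; the only nontrivial design point is to hash into $\{0,1\}$ rather than testing $u^\top A v = c = -u^\top B v$ against a shared random $c \in \mathbb{F}$, since the latter naive variant loses a factor of $q$ in the advantage and therefore fails once $q$ is larger than a small constant.
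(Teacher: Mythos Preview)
Your proposal is correct and is essentially the same argument as the paper's: both use a shared random bilinear form $u^\top(A+B)v$ hashed by a public random function into one bit per player, observe that $\mathbb{E}[\Pi(A,B)]=-\Pr[u^\top(A+B)v=0]$, compute this probability as $1/q+(q-1)/q^{\rk(A+B)+1}$, and then apply \prettyref{prop:shift-probab} to obtain error $\tfrac12-(q-1)^2/(8q^{r+2})\leq\tfrac12-1/(32q^r)$. The only cosmetic differences are your use of $h\colon\mathbb{F}\to\{0,1\}$ with an equality test versus the paper's $H\colon\mathbb{F}\to\{-1,1\}$ with a product, and your change-of-basis computation of $p_k$ versus the paper's direct conditioning on whether $u^\top(A+B)=0$.
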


\begin{proof}
Consider the following auxiliary protocol $\Pi$. On input $A,B\in\mathbb{F}^{n\times m},$
Alice and Bob use their shared randomness to pick a pair of independent
and uniformly random vectors $x\in\mathbb{F}^{n}$ and $y\in\mathbb{F}^{m},$
as well as a uniformly random function $H\colon\mathbb{F}\to\{-1,1\}.$
They exchange $H(x\tr Ay)$ and $H(-x\tr By)$ using $2$ bits of
communication and output $-H(x\tr Ay)H(-x\tr By).$

We now analyze the expected output of $\Pi(A,B)$ on a given matrix
pair $A,B$. To begin with,
\begin{equation}
\Exp[\Pi(A,B)\mid x,y]=\begin{cases}
-1 & \text{if }x\tr(A+B)y=0,\\
0 & \text{otherwise.}
\end{cases}\label{eq:Pi-cond-x-y}
\end{equation}
Indeed, if $x\tr(A+B)y=0$ then $x\tr Ay=-x\tr By$ and therefore
$\Pi$ outputs $-1.$ If, on the other hand, $x\tr(A+B)y\ne0$ then
$x\tr Ay\ne-x\tr By$, which means that $H(x\tr Ay)$ and $H(-x\tr By)$
are independent and their product has expected value $0$. Equation~\prettyref{eq:Pi-cond-x-y}
implies that $\Exp\Pi(A,B)=-\Prob[x\tr(A+B)y=0]$, which can be expanded
as
\[
\Exp\Pi(A,B)=-\Prob[x\tr(A+B)=0]-\Prob[x\tr(A+B)\ne0]\Prob[x\tr(A+B)y=0\mid x\tr(A+B)\ne0].
\]
The event $x\tr(A+B)=0$ is equivalent to $x$ being in the orthogonal
complement of the column span of $A+B,$ which happens with probability
$q^{n-\rk(A+B)}/q^{n}=q^{-\rk(A+B)}.$ Conditioned on $x\tr(A+B)\ne0,$
the field element $x\tr(A+B)y$ is uniformly random and in particular
is $0$ with probability $1/q.$ As a result,
\[
\Exp\Pi(A,B)=-\frac{1}{q^{\rk(A+B)}}-\left(1-\frac{1}{q^{\rk(A+B)}}\right)\cdot\frac{1}{q}=-\frac{1}{q}-\frac{q-1}{q^{\rk(A+B)+1}}.
\]
Therefore, the expected value of $\Pi(A,B)$ is at most $-1/q-(q-1)/q^{r+1}$
when $\rk(A+B)\leq r$ and is at least $-1/q-(q-1)/q^{r+2}$ when
$\rk(A+B)>r$. \prettyref{prop:shift-probab} now shows that $\RANK_{r}^{\mathbb{F},n,m}$
has a communication protocol with the same cost as $\Pi$ and error
at most $\frac{1}{2}-\frac{1}{8}(q-1)^{2}/q^{r+2}.$ This settles~\prettyref{eq:rank-2bit-total}
since $q\geq2.$
\end{proof}
\prettyref{cor:RANK-upperbound-high-accuracy} (with $\epsilon=1/3$)
and \prettyref{thm:RANK-upperbound-2bit} settle \prettyref{thm:RANK-upperbound}
from the introduction. 

\subsection{\label{subsec:streaming}Streaming complexity}

Fix a finite field $\mathbb{F}$ and a (possibly partial) function
$f\colon\mathbb{F}^{n\times n}\to\{-1,1,*\}$. A streaming algorithm
for $f$ receives as input a matrix $M\in\mathbb{F}^{n\times n}$
in row-major order. We say that $\mathcal{A}$ \emph{computes $f$
with error} $\epsilon$ if for every input in the domain of $f$,
the output of $\mathcal{A}$ agrees with $f$ with probability at
least $1-\epsilon$. We will now use a well-known reduction~\cite{LiSWW14communication-linear}
to transform our communication lower bound for the matrix rank problem
into a lower bound on its streaming complexity.
\begin{thm*}[restatement of Theorem~\ref{thm:RANK-streaming-intro}]
 Let $n,r,R$ be nonnegative integers with $n/2\leq r<R\leq n,$
and let $\mathbb{F}$ be a finite field. Define $f\colon\mathbb{F}^{n\times n}\to\{-1,1,*\}$
by
\[
f(M)=\begin{cases}
-1 & \text{if }\rk M=r,\\
1 & \text{if }\rk M=R,\\
* & \text{otherwise.}
\end{cases}
\]
Let $\mathcal{A}$ be any randomized streaming algorithm for $f$
with error probability $\frac{1}{2}-\frac{1}{4}|\mathbb{F}|^{-(r-\lceil n/2\rceil)/3}$
that uses $k$ passes and space $s$. Then
\begin{equation}
sk=\Omega\left(\left(r-\left\lceil \frac{n}{2}\right\rceil \right)^{2}\log|\mathbb{F}|\right).\label{eq:sp-tradeoff}
\end{equation}
\end{thm*}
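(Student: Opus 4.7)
The plan is the classical reduction from a streaming algorithm to a two-party communication protocol, using \prettyref{thm:RANK-lowerbound-intro} as a black box. Set $n' = \lfloor n/2\rfloor$, $r' = r - \lceil n/2\rceil$, and $R' = R - \lceil n/2\rceil$. The hypotheses $n/2 \leq r < R \leq n$ together with integrality of $r,R$ give $0 \leq r' < R' \leq n'$, so the two-party promise problem $\RANK_{r',R'}^{\mathbb{F},n',n'}$ is well defined. The goal is to build from $\mathcal{A}$ a communication protocol for $\RANK_{r',R'}^{\mathbb{F},n',n'}$ of cost $O(ks)$ whose error matches the hypothesis of \prettyref{thm:RANK-lowerbound-intro} with parameter $r'$; the desired inequality \eqref{eq:sp-tradeoff} then follows immediately.

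Given Alice's matrix $A \in \mathbb{F}^{n'\times n'}$ and Bob's matrix $B \in \mathbb{F}^{n'\times n'}$, I would encode them into the $n\times n$ matrix
\[
M \;=\; \begin{pmatrix} A & I_{n'} & 0 \\ -B & I_{n'} & 0 \\ 0 & 0 & I_{n-2n'} \end{pmatrix}.
\]
Subtracting the second block row from the first and then clearing the $-B$ block via column operations using the $I_{n'}$ pivot yields $\rk M = \rk(A+B) + n' + (n-2n') = \rk(A+B) + \lceil n/2\rceil$, an identity valid over any field. Hence $\rk(A+B) = r'$ iff $\rk M = r$, and $\rk(A+B) = R'$ iff $\rk M = R$, so $f(M)$ agrees with $\RANK_{r',R'}^{\mathbb{F},n',n'}(A,B)$ on the entire promise. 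The decisive structural feature is that $M$ is partitioned by rows: rows $1,\ldots,n'$ depend only on $A$, rows $n'+1,\ldots,2n'$ depend only on $B$, and rows $2n'+1,\ldots,n$ are public.

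Alice and Bob now simulate $\mathcal{A}$ on the row-major stream of $M$ in the standard way: during each pass, the party currently holding the $s$-bit internal state of $\mathcal{A}$ advances it through the rows they own, then transmits the state to the other party when ownership of the next row switches; the trailing public rows can be processed by either party. Over $k$ passes this incurs $O(ks)$ bits of total communication, and the simulated output equals the output of $\mathcal{A}$ on $M$, so the resulting protocol solves $\RANK_{r',R'}^{\mathbb{F},n',n'}$ with the same error probability as $\mathcal{A}$, namely $\frac{1}{2} - \frac{1}{4}|\mathbb{F}|^{-r'/3}$. Applying \prettyref{thm:RANK-lowerbound-intro} with parameter $r'$ forces $ks = \Omega((r')^2 \log|\mathbb{F}|) = \Omega((r - \lceil n/2\rceil)^2 \log|\mathbb{F}|)$, which is exactly \eqref{eq:sp-tradeoff}. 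There is no substantive obstacle in executing this plan: the analytic content is fully packaged inside \prettyref{thm:RANK-lowerbound-intro}, and the only routine items are the rank identity above and the verification that the error parameter is preserved verbatim under the simulation, which it is by construction.
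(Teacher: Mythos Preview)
Your proposal is correct and follows essentially the same approach as the paper: both use the standard streaming-to-communication reduction via a block matrix $M$ whose first $\lfloor n/2\rfloor$ rows depend only on Alice's input and whose remaining rows depend only on Bob's, with $\rk M = \rk(A+B) + \lceil n/2\rceil$, and then invoke \prettyref{thm:RANK-lowerbound-intro}. The only differences are cosmetic (your matrix uses $(A,I_{n'})$ and $(-B,I_{n'})$ where the paper uses $(A,-I_m)$ and $(B,I_m)$), and the paper is slightly more explicit about the bit count $s(2k-1)+1$.
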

\begin{proof}
Abbreviate $m=\lfloor n/2\rfloor$ and $F=\RANK_{r-\lceil n/2\rceil,R-\lceil n/2\rceil}^{\mathbb{F},m,m}$.
We will use a reduction from communication to streaming due to Li,
Sun, Wang, and Woodruff~\cite[Thm.~29]{LiSWW14communication-linear}.
Specifically, let $A,B\in\mathbb{F}^{m\times m}$ be Alice and Bob's
inputs, respectively, for $F$. Define
\[
M=\begin{bmatrix}A & -I_{m} & 0\\
B & I_{m} & 0\\
0 & 0 & I_{n-2m}
\end{bmatrix},
\]
where $I_{m}$ and $I_{n-2m}$ stand for the identity matrices of
order $m$ and $n-2m,$ respectively (in particular, $I_{n-2m}$ is
empty for even $n$). We have
\[
\rk M=\rk\begin{bmatrix}A+B & 0 & 0\\
B & I_{m} & 0\\
0 & 0 & I_{n-2m}
\end{bmatrix}=\rk(A+B)+n-m=\rk(A+B)+\left\lceil \frac{n}{2}\right\rceil .
\]
As a result, for all matrix pairs $(A,B)$ with $\rk(A+B)\in\{r-\lceil n/2\rceil,R-\lceil n/2\rceil\},$
we have $F(A,B)=f(M)$. This makes it possible for Alice and Bob to
compute $F$ by simulating $\mathcal{A}$ on $M$. Alice starts the
simulation by running $\mathcal{A}$ on the first $m$ rows of $M$,
which depend only on her input $A$. She then sends Bob the contents
of $\mathcal{A}$'s memory, and Bob runs $\mathcal{A}$ on the remaining
$n-m$ rows of $M$. This completes the first pass. Next, Bob sends
Alice the contents of $\mathcal{A}$'s memory, and they continue as
before until they simulate all $k$ passes. At the end of the $k$-th
pass, Bob announces the output of $\mathcal{A}$ as the protocol output.
The error probability of the described protocol is the same as that
of $\mathcal{A}$, and the communication cost is $s(2k-1)+1$ bits.
Therefore,
\[
R_{\frac{1}{2}-\frac{1}{4}|\mathbb{F}|^{-(r-\lceil n/2\rceil)/3}}(F)\leq s(2k-1)+1.
\]
Since the left-hand side is at least $\Omega((r-\lceil n/2\rceil)^{2}\log|\mathbb{F}|+1)$
by \prettyref{thm:RANK-lowerbound-intro}, the claimed trade-off~\prettyref{eq:sp-tradeoff}
follows.
\end{proof}

\section{The determinant problem}

In this section, we establish our lower bound on the communication
complexity of the determinant problem. We begin in Section~\prettyref{subsec:auxiliary-results}
with technical results on characteristic functions of matrices with
a given determinant value. In Section~\prettyref{subsec:determinant-nonzero},
we give our own proof of the lower bound for distinguishing two nonzero
values of the determinant, which is simpler and more elementary than
the proof in \cite{sunwang12communication-linear}. In Section~\prettyref{subsec:determinant-arbitrary},
we prove an optimal lower bound for the general case of distinguishing
two arbitrary values of the determinant, solving an open problem from~\cite{sunwang12communication-linear}.
Throughout this section, we use a generic finite field $\mathbb{F}$
with $q$ elements, where $q$ is an arbitrary prime power. The root
of unity $\omega$ and the notation $\omega^{x}$ for $x\in\mathbb{F}$
are as defined in \prettyref{sec:Fourier}.

\subsection{\label{subsec:auxiliary-results}Auxiliary results}

Fix a finite field $\mathbb{F}$ and a positive integer $n$. Recall
that the determinant function on $\mathbb{F}^{n\times n}$ is multiplicative,
with $\det(AB)=\det(A)\det(B).$ As a result, the set of matrices
in $\mathbb{F}^{n\times n}$ with nonzero determinants form a group
under matrix multiplication, called the \emph{general linear group}
and denoted by $\GL(\mathbb{F},n).$ Analogously, the matrices in
$\mathbb{F}^{n\times n}$ with determinant $1$ also form a group,
called the \emph{special linear group} and denoted by $\SL(\mathbb{F},n).$
The multiplicativity of the determinant further implies that $\SL(\mathbb{F},n)$
is a normal subgroup of $\GL(\mathbb{F},n)$, with quotient isomorphic
to the multiplicative group of the field: $\GL(\mathbb{F},n)/\SL(\mathbb{F},n)\cong\mathbb{F}^{\times}.$
For any given field element $u\ne0$, the set of matrices with determinant
$u$ form a coset of $\SL(\mathbb{F},n)$ in $\GL(\mathbb{F},n).$
In particular,
\begin{align}
|\{X & \in\mathbb{F}^{n\times n}:\det X=u\}|=\lvert\SL(\mathbb{F},n)\rvert=\frac{\lvert\Mcal_{n}\rvert}{|\mathbb{F}|-1}, &  & u\in\mathbb{F}\setminus\{0\}.\label{eq:coset-size}
\end{align}
Recall that for each $Y\in\mathbb{F}^{n\times n},$ the mapping $X\mapsto X+Y$
is a permutation on $\mathbb{F}^{n\times n}.$ As a result, the previous
equation implies that
\begin{align}
|\{(X,Y) & \in\mathbb{F}^{n\times n}\times\mathbb{F}^{n\times n}:\det(X+Y)=u\}|=|\mathbb{F}|^{n^{2}}\lvert\SL(\mathbb{F},n)\rvert, & u\in\mathbb{F}\setminus\{0\}.\label{eq:pairs-with-given-det}
\end{align}

To understand the spectral norm of the characteristic matrix of the
determinant problem, we now introduce a relevant function on $\mathbb{F}^{n\times n}$
and discuss its Fourier coefficients.
\begin{lem}
\label{lem:DETab-fourier}Let $n$ be a positive integer, $\mathbb{F}$
a finite field. For a pair of distinct elements $u,v\in\mathbb{F}\setminus\{0\},$
define $g_{u,v}\colon\mathbb{F}^{n\times n}\to\{-1,1,0\}$ by
\[
g_{u,v}(X)=\begin{cases}
-1 & \text{if }\det X=u,\\
1 & \text{if }\det X=v,\\
0 & \text{otherwise.}
\end{cases}
\]
Then:
\begin{enumerate}
\item \label{enu:Fourier-coeff-singular}$\widehat{g_{u,v}}(A)=0$ for every
singular matrix $A;$
\item \label{enu:Fourier-coeff-equality}$\widehat{g_{u,v}}(A)=\widehat{g_{u,v}}(B)$
whenever $\det A=\det B;$
\item \label{enu:Fourier-coeff-bound}$\|\widehat{g_{u,v}}\|_{\infty}\leq1/\sqrt{\lvert\SL(\mathbb{F},n)\rvert}$.
\end{enumerate}
\end{lem}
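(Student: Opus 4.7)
The plan is to exploit the invariance $g_{u,v}(PX) = g_{u,v}(X)$ under left multiplication by $P \in \SL(\mathbb{F},n)$, together with the multiplicative twist induced by right multiplication by a general invertible matrix. In the Fourier integral
\[
\widehat{g_{u,v}}(A) = \Exp_X g_{u,v}(X)\, \omega^{-\langle A, X\rangle},
\]
the bijective substitution $X \mapsto XM$ with $M \in \GL(\mathbb{F},n)$, combined with \prettyref{fact:inner-product-vs-trace}\prettyref{enu:inner-product-transfer} and the identity $g_{u,v}(XM) = g_{u/\det M,\, v/\det M}(X)$, yields the master identity
\[
\widehat{g_{u,v}}(A) = \widehat{g_{u/\det M,\, v/\det M}}(A M^T).
\]

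For part~\prettyref{enu:Fourier-coeff-singular}, when $A$ is singular I will produce $M \in \GL$ satisfying $AM^T = A$ with $\det M$ taking any prescribed value in $\mathbb{F}^\times$. The condition $AM^T = A$ amounts to every column of $M^T - I$ lying in $\ker A$, which has dimension $k = n - \rk A \geq 1$; writing $M^T = I + VW$ with the columns of $V \in \mathbb{F}^{n \times k}$ forming a basis of $\ker A$ and $W \in \mathbb{F}^{k \times n}$ arbitrary, Sylvester's determinant identity gives $\det M = \det(I_k + WV)$, which can be set to any $c \in \mathbb{F}^\times$ via a suitable choice of $W$ (since $V$ has full column rank, $WV$ ranges over all of $\mathbb{F}^{k \times k}$). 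With such $M$, the master identity collapses to $\widehat{g_{u,v}}(A) = \widehat{g_{u/c,\, v/c}}(A)$ for every $c \in \mathbb{F}^\times$. Summing over $c$ and using the pointwise identity $\sum_{c \in \mathbb{F}^\times} g_{u/c, v/c}(X) \equiv 0$ (both $\sum_c \mathbf{1}_{\det X = u/c}$ and $\sum_c \mathbf{1}_{\det X = v/c}$ reduce to $\mathbf{1}_{\det X \ne 0}$), one obtains $(|\mathbb{F}|-1)\,\widehat{g_{u,v}}(A) = 0$, whence $\widehat{g_{u,v}}(A) = 0$.

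For part~\prettyref{enu:Fourier-coeff-equality}, the case $\det A = \det B = 0$ is immediate from part~\prettyref{enu:Fourier-coeff-singular}. In the complementary case where $\det A = \det B$ is nonzero, the matrix $P = (BA^{-1})^T$ satisfies $\det P = 1$, and the substitution $X \mapsto PX$ (a bijection preserving $g_{u,v}$) yields $\widehat{g_{u,v}}(A) = \widehat{g_{u,v}}(P^T A) = \widehat{g_{u,v}}(B)$. Part~\prettyref{enu:Fourier-coeff-bound} then follows from Parseval's identity~\eqref{eq:parsevals}: the left-hand side $\sum_A |\widehat{g_{u,v}}(A)|^2$ equals $\Exp_X |g_{u,v}(X)|^2 = 2|\SL(\mathbb{F},n)|/q^{n^2}$ by~\eqref{eq:coset-size}. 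Parts~\prettyref{enu:Fourier-coeff-singular} and~\prettyref{enu:Fourier-coeff-equality} imply that this sum is supported on $\GL(\mathbb{F},n)$ and constant on each of the $|\mathbb{F}|-1$ cosets of $\SL$ in $\GL$ (each of size $|\SL(\mathbb{F},n)|$); denoting the common value on $\{A : \det A = c\}$ by $\nu_c$, one obtains $\sum_{c \in \mathbb{F}^\times} |\nu_c|^2 = 2/q^{n^2}$, hence $\max_c |\nu_c|^2 \leq 2/q^{n^2}$. Since the hypothesis $u \ne v$ with $u, v \in \mathbb{F}^\times$ forces $|\mathbb{F}| \geq 3$, we have $|\SL(\mathbb{F},n)| \leq q^{n^2}/(|\mathbb{F}|-1) \leq q^{n^2}/2$, and therefore $\sqrt{2/q^{n^2}} \leq 1/\sqrt{|\SL(\mathbb{F},n)|}$, as required.

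The main obstacle is the vanishing step in part~\prettyref{enu:Fourier-coeff-singular}: $\SL$-invariance alone yields only constancy of $\widehat{g_{u,v}}$ on $\SL$-orbits, not vanishing. The critical insight is to average over the \emph{multiplicative twists} $g_{u,v} \mapsto g_{u/c, v/c}$, exploiting the freedom — afforded precisely by the singularity of $A$ — to choose $\det M$ arbitrarily in $\mathbb{F}^\times$ while preserving the relation $AM^T = A$.
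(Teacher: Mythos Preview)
Your proof is correct and follows essentially the same strategy as the paper's: part~\prettyref{enu:Fourier-coeff-singular} via finding $M\in\GL$ with $AM^T=A$ and controlled determinant (you parametrize all such $M$ via Sylvester and then average over $\det M=c\in\mathbb{F}^\times$, whereas the paper exhibits one explicit $M$ with $\det M=u^{-1}v$ and directly equates the two expectations), part~\prettyref{enu:Fourier-coeff-equality} via the left $\SL$-substitution $X\mapsto (BA^{-1})^T X$, and part~\prettyref{enu:Fourier-coeff-bound} via Parseval combined with~\prettyref{enu:Fourier-coeff-singular} and~\prettyref{enu:Fourier-coeff-equality}. The only notable difference is in~\prettyref{enu:Fourier-coeff-bound}: the paper simply bounds $\Exp_X|g_{u,v}(X)|^2\leq 1$ and restricts the Parseval sum to the single coset $\{A:\det A=\det M\}$, obtaining $\|\widehat{g_{u,v}}\|_\infty^2\cdot|\SL(\mathbb{F},n)|\leq 1$ directly and thereby avoiding your $q\geq 3$ detour.
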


\begin{proof}
\prettyref{enu:Fourier-coeff-singular} In view of \prettyref{eq:coset-size},
we have
\begin{align*}
\widehat{g_{u,v}}(A) & =\Exp_{X\in\mathbb{F}^{n\times n}}g_{u,v}(X)\omega^{-\langle A,X\rangle}\\
 & =|\mathbb{F}|^{-n^{2}}\cdot\frac{|\Mcal_{n}|}{|\mathbb{F}|-1}\left(\Exp_{X:\det X=v}\omega^{-\langle A,X\rangle}-\Exp_{X:\det X=u}\omega^{-\langle A,X\rangle}\right).
\end{align*}
It remains to show that the expectations in the last expression are
equal. Since $A$ is singular, there exist nonsingular matrices $P$
and $Q$ such that $A=PI_{s}Q$ for $s=\rk A<n$. Consider the order-$n$
diagonal matrix $Z=\diag(1,1,\ldots,1,u^{-1}v)$. Using $I_{s}=I_{s}Z,$
we obtain $A=PI_{s}ZQ=PI_{s}QQ^{-1}ZQ=AQ^{-1}ZQ.$ As a result,
\begin{align*}
\Exp_{X:\det X=u}\omega^{-\langle A,X\rangle} & =\Exp_{X:\det X=u}\omega^{-\langle AQ^{-1}ZQ,X\rangle}\\
 & =\Exp_{X:\det X=u}\omega^{-\langle A,X(Q^{-1}ZQ)\tr\rangle}\\
 & =\Exp_{Y:\det Y=v}\omega^{-\langle A,Y\rangle},
\end{align*}
where the second step uses \prettyref{fact:inner-product-vs-trace}\prettyref{enu:inner-product-transfer},
and the last step is valid because the mapping $X\mapsto X(Q^{-1}ZQ)\tr$
is a bijection from the set of matrices with determinant $u$ onto
the set of matrices with determinant $u\cdot\det((Q^{-1}ZQ)\tr)=v.$

\prettyref{enu:Fourier-coeff-equality} For singular $A$ and $B$,
the claim is immediate from~\prettyref{enu:Fourier-coeff-singular}.
In the complementary case, 
\begin{align*}
\widehat{g_{u,v}}(A) & =\Exp_{X\in\mathbb{F}^{n\times n}}g_{u,v}(X)\omega^{-\langle A,X\rangle}\\
 & =\Exp_{X\in\mathbb{F}^{n\times n}}g_{u,v}((BA^{-1})\tr X)\omega^{-\langle A,(BA^{-1})\tr X\rangle}\\
 & =\Exp_{X\in\mathbb{F}^{n\times n}}g_{u,v}(X)\omega^{-\langle A,(BA^{-1})\tr X\rangle}\\
 & =\Exp_{X\in\mathbb{F}^{n\times n}}g_{u,v}(X)\omega^{-\langle B,X\rangle}\\
 & =\widehat{g_{u,v}}(B),
\end{align*}
where the second step is valid because $(BA^{-1})\tr$ is invertible
and hence $X\mapsto(BA^{-1})\tr X$ is a permutation on $\mathbb{F}^{n\times n};$
the third step is justified by $\det((BA^{-1})\tr X)=\det(B)\det(X)/\det(A)=\det X$;
and the fourth step is an application of \prettyref{fact:inner-product-vs-trace}\prettyref{enu:inner-product-transfer}.

\prettyref{enu:Fourier-coeff-bound} Let $M$ be a matrix with $|\widehat{g_{u,v}}(M)|=\|\widehat{g_{u,v}}\|_{\infty}.$
By~\prettyref{enu:Fourier-coeff-singular}, we know that $\det M\ne0.$
Now
\begin{align*}
1 & \geq\Exp_{X\in\mathbb{F}^{n\times n}}[|g_{u,v}(X)|^{2}]\\
 & =\sum_{A\in\mathbb{\mathbb{F}}^{n\times n}}|\widehat{g_{u,v}}(A)|^{2}\\
 & \geq\sum_{A:\det A=\det M}|\widehat{g_{u,v}}(A)|^{2}\\
 & =|\{A:\det A=\det M\}|\,|\widehat{g_{u,v}}(M)|^{2}\\
 & =\lvert\SL(\mathbb{F},n)\rvert\,\|\widehat{g_{u,v}}\|_{\infty}^{2},
\end{align*}
where the second step applies Parseval's inequality~\prettyref{eq:parsevals},
the fourth step is justified by~\prettyref{enu:Fourier-coeff-equality},
and the fifth step uses~$\det M\ne0$ along with~\prettyref{eq:coset-size}.
\end{proof}

\subsection{\label{subsec:determinant-nonzero}Determinant problem for nonzero
field elements}

As an application of the previous lemma, we now prove that the characteristic
matrix of the determinant problem $\DET_{a,b}^{\mathbb{F},n}$ for
any two nonzero field elements $a,b$ has small spectral norm.
\begin{lem}
\label{lem:DET-problem-witness}Let $\mathbb{F}$ be a finite field
with $q=|\mathbb{F}|$ elements. For each $u\in\mathbb{F}\setminus\{0\},$
define $G_{u}$ to be the matrix with rows and columns indexed by
elements of $\mathbb{F}^{n\times n},$ and entries given by 
\begin{equation}
(G_{u})_{X,Y}=\begin{cases}
q^{-n^{2}}\lvert\SL(\mathbb{F},n)\rvert^{-1} & \text{if }\det(X+Y)=u,\\
0 & \text{otherwise.}
\end{cases}\label{eq:G_u-defined}
\end{equation}
Then
\begin{align}
 & \|G_{u}\|_{1}=1, &  & u\in\mathbb{F}\setminus\{0\},\label{eq:Ga-ell1}\\
 & \|G_{v}-G_{u}\|\leq\lvert\SL(\mathbb{F},n)\rvert^{-3/2}\leq8q^{-3(n^{2}-1)/2}, &  & u,v\in\mathbb{F}\setminus\{0\}.\label{eq:Gab-spectral}
\end{align}
\end{lem}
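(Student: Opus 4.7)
The plan is to reduce both claims to Lemma~\ref{lem:DETab-fourier} via direct counting and the Fourier-spectral identity of Fact~\ref{fact:spectraltofourier}.

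For the first claim, I would observe that $G_u$ has exactly one nonzero value, namely $q^{-n^{2}}\lvert\SL(\mathbb{F},n)\rvert^{-1}$, attained on those pairs $(X,Y)$ for which $\det(X+Y)=u$. Since the map $Y\mapsto X+Y$ is a bijection on $\mathbb{F}^{n\times n}$ for each fixed $X$, the number of such pairs is $q^{n^{2}}$ times the number of matrices of determinant $u$, which equals $\lvert\SL(\mathbb{F},n)\rvert$ by the coset observation recorded in equation~\eqref{eq:pairs-with-given-det}. Multiplying the number of nonzero entries by their magnitude gives $\|G_u\|_1 = q^{n^{2}}\lvert\SL(\mathbb{F},n)\rvert\cdot q^{-n^{2}}\lvert\SL(\mathbb{F},n)\rvert^{-1}=1$.

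For the second claim, the key observation is that
\[
G_v - G_u \;=\; \frac{1}{q^{n^{2}}\lvert\SL(\mathbb{F},n)\rvert}\bigl[g_{u,v}(X+Y)\bigr]_{X,Y},
\]
where $g_{u,v}$ is the sign function from Lemma~\ref{lem:DETab-fourier}. Then Fact~\ref{fact:spectraltofourier} expresses the spectral norm of the matrix $[g_{u,v}(X+Y)]_{X,Y}$ as $q^{n^{2}}\|\widehat{g_{u,v}}\|_{\infty}$, so that
\[
\|G_v - G_u\| \;=\; \frac{\|\widehat{g_{u,v}}\|_{\infty}}{\lvert\SL(\mathbb{F},n)\rvert}.
\]
Applying Lemma~\ref{lem:DETab-fourier}\ref{enu:Fourier-coeff-bound} to bound $\|\widehat{g_{u,v}}\|_{\infty} \leq \lvert\SL(\mathbb{F},n)\rvert^{-1/2}$ immediately yields the first inequality $\|G_v - G_u\| \leq \lvert\SL(\mathbb{F},n)\rvert^{-3/2}$.

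For the second inequality in~\eqref{eq:Gab-spectral}, I would use the lower bound $\lvert\Mcal_{n}\rvert \geq \tfrac{1}{4}q^{n^{2}}$ from Corollary~\ref{cor:nummatrbound} and the identity $\lvert\SL(\mathbb{F},n)\rvert = \lvert\Mcal_{n}\rvert/(q-1)$ recorded in~\eqref{eq:coset-size} to obtain $\lvert\SL(\mathbb{F},n)\rvert \geq q^{n^{2}}/(4(q-1)) \geq \tfrac{1}{4}q^{n^{2}-1}$, and then raise to the $-3/2$ power to get $\lvert\SL(\mathbb{F},n)\rvert^{-3/2} \leq 8\,q^{-3(n^{2}-1)/2}$. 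No step presents a real obstacle: the proof is a short assembly of the Fourier bound for $g_{u,v}$, the characterization of singular values of matrices of the form $[\phi(X+Y)]_{X,Y}$, and the standard estimate for $\lvert\SL(\mathbb{F},n)\rvert$.
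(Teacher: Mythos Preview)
Your proposal is correct and follows essentially the same route as the paper: deduce $\|G_u\|_1=1$ from the count~\eqref{eq:pairs-with-given-det}, express $G_v-G_u$ via $g_{u,v}$, apply Fact~\ref{fact:spectraltofourier} and Lemma~\ref{lem:DETab-fourier}\ref{enu:Fourier-coeff-bound} to get $\|G_v-G_u\|\leq\lvert\SL(\mathbb{F},n)\rvert^{-3/2}$, and then estimate $\lvert\SL(\mathbb{F},n)\rvert$ numerically. One trivial remark: the statement allows $u=v$, in which case $g_{u,v}$ is not defined in Lemma~\ref{lem:DETab-fourier}; the paper disposes of this by noting $G_v-G_u=0$ before invoking~$g_{u,v}$, and you should do the same.
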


\begin{proof}
Equation~\prettyref{eq:Ga-ell1} follows from~\prettyref{eq:pairs-with-given-det}.
For \prettyref{eq:Gab-spectral}, there are two cases to consider.
If $u=v,$ then $G_{v}-G_{u}=0$ and thus $\|G_{v}-G_{u}\|=0.$ If
$u\ne v,$ write $G_{v}-G_{u}=[q^{-n^{2}}\lvert\SL(\mathbb{F},n)\lvert^{-1}g_{u,v}(X+Y)]_{X,Y}$
with $g_{u,v}$ as defined in \prettyref{lem:DETab-fourier}. Then
\begin{equation}
\|G_{v}-G_{u}\|=\frac{\|\widehat{g_{u,v}}\|_{\infty}}{\lvert\SL(\mathbb{F},n)\rvert}\leq\frac{1}{\lvert\SL(\mathbb{F},n)\rvert^{3/2}},\label{eq:Gb-Ga-spectral-intermed}
\end{equation}
where the first step applies \prettyref{fact:spectraltofourier},
and the second step uses \prettyref{lem:DETab-fourier}\prettyref{enu:Fourier-coeff-bound}.
It remains to simplify the bound of~\prettyref{eq:Gb-Ga-spectral-intermed}:
\[
\frac{1}{\lvert\SL(\mathbb{F},n)\rvert^{3/2}}=\left(\frac{|\Mcal_{n}|}{q-1}\right)^{-3/2}=\left(q^{n-1}\prod_{i=0}^{n-2}(q^{n}-q^{i})\right)^{-3/2}\leq8q^{-3(n^{2}-1)/2},
\]
where the first step uses \prettyref{eq:coset-size}, the second step
applies \prettyref{prop:nummatr}, and the last step is justified
by \prettyref{prop:prodprop}.
\end{proof}
\prettyref{lem:DET-problem-witness} was originally obtained by Sun
and Wang~\cite{sunwang12communication-linear} using a different
and rather technical proof. By contrast, the proof presented above
is short and uses only basic Fourier analysis. With this newly obtained
bound on the spectral norm of the characteristic matrix of $\DET_{a,b}^{\mathbb{F},n}$
for nonzero $a,b$, we can use the approximate trace norm method to
obtain a tight communication lower bound for this special case of
the determinant problem.
\begin{thm}
\label{thm:DET-cc-ab-nonzero}Let $\mathbb{F}$ be a finite field,
and $n$ a positive integer. Then for every pair of distinct elements
$a,b\in\mathbb{F}\setminus\{0\}$ and every $\gamma\in(0,1),$ 
\begin{equation}
Q_{(1-\gamma)/2}^{*}(\DET_{a,b}^{\mathbb{F},n})\geq\frac{1}{4}(n^{2}-3)\log|\mathbb{F}|-\frac{1}{2}\log\frac{12}{\gamma}.\label{eq:Q-det-ab}
\end{equation}
\end{thm}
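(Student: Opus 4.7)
The plan is to apply the approximate trace norm method of Theorem~\ref{thm:approx-trace-norm-method} to the characteristic matrix $M$ of $\DET_{a,b}^{\mathbb{F},n}$, using the dual witness
\[
\Phi = G_b - G_a,
\]
where $G_a, G_b$ are the matrices constructed in Lemma~\ref{lem:DET-problem-witness}. The key observations are that $\Phi$ is supported precisely on the domain of $F = \DET_{a,b}^{\mathbb{F},n}$ (since $\Phi_{X,Y}$ vanishes whenever $\det(X+Y) \notin \{a,b\}$), and that $\Phi$ is \emph{aligned} with the sign pattern of $F$: wherever $F(X,Y) = -1$ one has $\Phi_{X,Y} < 0$, and wherever $F(X,Y) = 1$ one has $\Phi_{X,Y} > 0$.

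First, I would compute the three quantities appearing in Proposition~\ref{prop:approxtracelower}. Using equation~\eqref{eq:pairs-with-given-det}, there are exactly $q^{n^2}\lvert\SL(\mathbb{F},n)\rvert$ pairs $(X,Y)$ with $\det(X+Y)=a$, and the same count for $b$; each such pair contributes $1/(q^{n^2}\lvert\SL(\mathbb{F},n)\rvert)$ to $\sum_{\dom F} M_{X,Y}\Phi_{X,Y}$, giving a total of $2$. By the triangle inequality and~\eqref{eq:Ga-ell1}, $\|\Phi\|_1 \leq \|G_a\|_1 + \|G_b\|_1 = 2$. Finally, $\sum_{\overline{\dom F}} |\Phi_{X,Y}| = 0$ since $\Phi$ is supported on $\dom F$.

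Next I would take $\epsilon = (1-\gamma)/2$, so that the relevant approximation parameter in Proposition~\ref{prop:approxtracelower} is $2\epsilon = 1-\gamma$. Plugging in the bound $\|\Phi\| \leq 8q^{-3(n^2-1)/2}$ from~\eqref{eq:Gab-spectral}, one obtains
\[
\|M\|_{\Sigma,\,1-\gamma} \;\geq\; \frac{2 - (1-\gamma)\cdot 2 - 0}{8q^{-3(n^2-1)/2}} \;=\; \frac{\gamma}{4}\, q^{3(n^2-1)/2}.
\]
Applying Theorem~\ref{thm:approx-trace-norm-method} with $|X| = |Y| = q^{n^2}$ then yields
\[
4^{Q^*_{(1-\gamma)/2}(F)} \;\geq\; \frac{\gamma\, q^{3(n^2-1)/2}}{12\, q^{n^2}} \;=\; \frac{\gamma}{12}\, q^{(n^2-3)/2},
\]
and taking $\log_2$ of both sides and dividing by $2$ gives exactly the stated bound~\eqref{eq:Q-det-ab}.

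There is no real obstacle at this stage: the heavy lifting — constructing a witness supported on $\{(X,Y) : \det(X+Y) \in \{a,b\}\}$ with spectral norm as small as $q^{-3(n^2-1)/2}$ — was already carried out in Lemma~\ref{lem:DET-problem-witness} via the Fourier-analytic bound $\|\widehat{g_{u,v}}\|_\infty \leq 1/\sqrt{\lvert\SL(\mathbb{F},n)\rvert}$. The present proof is essentially an arithmetic consequence of that lemma. The only subtlety is to track constants carefully so that the error regime $(1-\gamma)/2$ is handled correctly; this is what dictates the particular form of the additive $-\tfrac12 \log(12/\gamma)$ term.
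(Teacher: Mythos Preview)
Your proof is correct and follows essentially the same approach as the paper: take $\Phi = G_b - G_a$ as the dual witness, observe that it is supported exactly on $\dom F$ with the correct signs, plug into Proposition~\ref{prop:approxtracelower} with the spectral bound~\eqref{eq:Gab-spectral}, and finish with Theorem~\ref{thm:approx-trace-norm-method}. The only cosmetic difference is that the paper notes $G_a$ and $G_b$ have disjoint supports so that $\|G_b - G_a\|_1 = 2$ exactly, whereas you use the triangle inequality $\|G_b - G_a\|_1 \leq 2$; since $1-\gamma > 0$, this inequality goes in the right direction and yields the same final bound.
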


\begin{proof}
Let $F$ be the characteristic matrix of $\DET_{a,b}^{\mathbb{F},n}$.
For $u\in\mathbb{F}\setminus\{0\},$ define $G_{u}$ as in \prettyref{lem:DET-problem-witness}.
Since $G_{a}$ and $G_{b}$ are supported on disjoint sets of entries,
\prettyref{eq:Ga-ell1} leads to
\begin{equation}
\|G_{b}-G_{a}\|_{1}=\|G_{b}\|_{1}+\|G_{a}\|_{1}=2.\label{eq:Gba-ell1}
\end{equation}
Taking $\Phi=G_{b}-G_{a}$ in~\prettyref{prop:approxtracelower},
we obtain
\begin{align}
\hspace{-1cm}\|F\|_{\Sigma,1-\gamma} & \geq\frac{1}{\|G_{b}-G_{a}\|}\left(\sum_{\dom F}F_{A,B}(G_{b}-G_{a})_{A,B}-(1-\gamma)\|G_{b}-G_{a}\|_{1}-\sum_{\overline{\dom F}}|(G_{b}-G_{a})_{A,B}|\right)\nonumber \\
 & =\frac{1}{\|G_{b}-G_{a}\|}\left(\sum_{\dom F}|(G_{b}-G_{a})_{A,B}|-(1-\gamma)\|G_{b}-G_{a}\|_{1}\right)\nonumber \\
 & =\frac{\gamma\|G_{b}-G_{a}\|_{1}}{\|G_{b}-G_{a}\|}\nonumber \\
 & \geq\frac{1}{4}\,\gamma|\mathbb{F}|^{3(n^{2}-1)/2},\label{eq:F-trace-norm-gamma}
\end{align}
where the second and third steps are valid because $G_{b}-G_{a}$
by definition coincides in sign with $F$ on $\dom F$ and vanishes
on $\overline{\dom F}$; and the last step uses~\prettyref{eq:Gab-spectral}
and~\prettyref{eq:Gba-ell1}. Now~\prettyref{eq:Q-det-ab} follows
from~\prettyref{eq:F-trace-norm-gamma} in view of \prettyref{thm:approx-trace-norm-method}.
\end{proof}
We remind the reader that \prettyref{thm:DET-cc-ab-nonzero} was obtained
with different techniques by Sun and Wang~\cite{sunwang12communication-linear},
who settled the determinant problem $\DET_{a,b}^{\mathbb{F},n}$ for
nonzero $a,b$ and left open the complementary case when one of $a,b$
is zero.

\subsection{\label{subsec:determinant-arbitrary}Determinant problem for arbitrary
field elements}

Recall that the \emph{rank versus determinant problem,} $\RANKDET_{k,a}^{\mathbb{F},n}$,
is a hybrid problem that naturally generalizes the matrix rank problem
$\RANK_{k,n}^{\mathbb{F},n,n}$ and the determinant problem $\DET_{0,a}^{\mathbb{F},n}$.
Specifically, the rank versus determinant problem requires Alice and
Bob to distinguish matrix pairs with $\rk(A+B)=k$ from those with
$\det(A+B)=a,$ where $a$ is a nonzero field element, $k$ is an
integer with $k<n,$ and $A,B$ are Alice and Bob's respective inputs.
We will now construct a dual matrix for $\RANKDET_{k,a}^{\mathbb{F},n}$
and thereby obtain a lower bound on its approximate trace norm. As
a dual matrix, we will use a linear combination of the dual matrices
from our analyses of the rank and determinant problems.
\begin{thm}
\label{thm:RANK-vs-DET-trace-norm}Let $n>k\geq1$ be given integers.
Let $\mathbb{F}$ be a finite field with $q=|\mathbb{F}|$ elements,
and let $a\in\mathbb{F}\setminus\{0\}.$ Let $F$ be the characteristic
matrix of $\RANKDET_{k,a}^{\mathbb{F},n}.$ Then for all reals $\delta\geq0$
and all nonnegative integers $\ell,m$ with $\ell+m\leq k,$
\begin{align}
\|F\|_{\Sigma,\delta} & \geq\frac{1}{150}\left(1-\delta-\frac{64}{q^{m+1}}\right)q^{\ell(k-\ell-m+1)/2}\,q^{n^{2}},\label{eq:G-m-l-trace-norm}\\
\|F\|_{\Sigma,\delta} & \geq\frac{1-\delta}{150}\cdot q^{k/2}\,q^{n^{2}}.\label{eq:G-m-l-trace-norm-alt}
\end{align}
\end{thm}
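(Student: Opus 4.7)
My plan is to construct a dual witness $\Phi$ by taking a linear combination of the dual matrix $E_\phi$ from the rank-problem analysis (Section~\ref{subsec:approx-trace-norm}) with the matrices $G_u$ associated to the determinant problem (Lemma~\ref{lem:DET-problem-witness}). Concretely, with $\phi\colon\{0,1,\ldots,n\}\to\Re$ chosen as in Lemma~\ref{lem:phi}, I will set
\[
\Phi \;=\; E_\phi \;+\; \frac{1}{q-1}\sum_{u\in\mathbb{F}\setminus\{0,a\}} (G_a - G_u).
\]
The guiding intuition is that on singular pairs $E_\phi$ already behaves exactly as in the rank problem (the $G_u$ contribute nothing since they are supported on $\det \ne 0$), while on rank-$n$ pairs $E_\phi$ is constant and distributes its $\ell_1$ weight uniformly across the cosets $\{\det = u\}_{u\ne 0}$. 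The correction $\frac{1}{q-1}\sum_{u}(G_a-G_u)$ is designed to transfer all of this rank-$n$ weight onto the single coset $\{\det = a\}$.

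The first step is a direct entrywise verification: on rank-$k$ pairs and more generally on every rank-$r$ pair with $r<n$ we have $\Phi = E_\phi$; on rank-$n$ pairs with $\det = a$ the entry of $\Phi$ equals the (constant) entry of $G_a$; and on rank-$n$ pairs with $\det\in\mathbb{F}\setminus\{0,a\}$ we get exact cancellation, so $\Phi = 0$. From this and Lemma~\ref{lem:E_phi-metric} together with \eqref{eq:pairs-with-given-det}, I compute
\begin{align*}
\sum_{\dom F} F_{A,B}\,\Phi_{A,B} &= -\phi(k) + 1 = |\phi(n)| + |\phi(k)|,\\
\sum_{\overline{\dom F}} |\Phi_{A,B}| &= \sum_{r\notin\{k,n\}} |\phi(r)|,\\
\|\Phi\|_1 &= \|\phi\|_1,
\end{align*}
which are identical to the quantities appearing in equation~\eqref{eq:Ephi-F-correlation} of the rank-problem analysis. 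Thus the correlation bound simplifies to $(1-\delta)\|\phi\|_1 - 2\sum_{r\notin\{k,n\}}|\phi(r)|$, exactly as in Theorem~\ref{thm:approx-norm-rank-problem}.

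For the spectral norm, the triangle inequality and Lemma~\ref{lem:DET-problem-witness}\eqref{eq:Gab-spectral} give
\[
\|\Phi\| \;\le\; \|E_\phi\| \;+\; \frac{1}{q-1}\sum_{u\ne 0,a}\|G_a-G_u\| \;\le\; \|E_\phi\| \;+\; 8q^{-3(n^2-1)/2}.
\]
Here is the one place requiring care: the spectral-norm bound \eqref{eq:E-phi-spectral-norm-bound} gives $\|E_\phi\|\le 128\|\phi\|_1\,q^{-\ell(k-\ell-m+1)/2}\,q^{-n^2}$, and I need to check that the additive $q^{-3(n^2-1)/2}$ term is absorbed into a comparable bound. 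Since $\ell+m\le k<n$ and $k\ge 1$, a quick calculation shows $\ell(k-\ell-m+1)\le k^2\le(n-1)^2\le n^2-3$ for $n\ge 2$, so the extra term is at most a constant times the $E_\phi$ bound. (The case $k=n-1$, $n$ small is handled by the trivial lower bound hidden in the claim.)

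Plugging these ingredients into Proposition~\ref{prop:approxtracelower} reproduces verbatim the estimate from the proof of Theorem~\ref{thm:approx-norm-rank-problem}, yielding both \eqref{eq:G-m-l-trace-norm} and (with $\ell=k$, $m=0$, so that the $\phi$-tail vanishes) the alternative bound \eqref{eq:G-m-l-trace-norm-alt}. The main conceptual step is the construction of $\Phi$ together with the cancellation on rank-$n$ pairs; the only technical point is the bookkeeping showing that the $O(q^{-3n^2/2})$ correction to $\|\Phi\|$ does not spoil the constant $1/150$.
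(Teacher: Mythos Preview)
Your proposal is correct and is essentially identical to the paper's proof: the paper constructs the same witness $\Phi=E_\phi+\frac{\phi(n)}{q-1}\sum_{b\ne 0,a}(G_a-G_b)$ (which coincides with yours since $\phi(n)=1$), proves the same entrywise description as a claim, and carries out the same metric and spectral bookkeeping to arrive at~\eqref{eq:G-m-l-trace-norm} and~\eqref{eq:G-m-l-trace-norm-alt}. Your parenthetical about ``$n$ small'' is unnecessary, as the inequality $\ell(k-\ell-m+1)\le n^2-3$ already holds for all $n\ge 2$, which is guaranteed by $n>k\ge 1$.
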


\begin{proof}
This proof combines our ideas in Theorems~\ref{thm:approx-norm-rank-problem}
and \ref{thm:DET-cc-ab-nonzero}, and our dual matrix here will be
a linear combination of the dual matrices used in those theorems. 

Fix nonnegative integers $\ell,m$ with $\ell+m\leq k,$ and let $\phi\colon\{0,1,\ldots,n\}\to\Re$
be the corresponding function constructed in \prettyref{lem:phi}.
This univariate function gives rise to a matrix $E_{\phi},$ described
in Definition~\prettyref{def:E-phi}. To restate equation~\prettyref{eq:E-phi-spectral-norm-bound}
from our proof of \prettyref{thm:approx-norm-rank-problem},
\begin{equation}
\|E_{\phi}\|\leq128\|\phi\|_{1}\,q^{-\ell(k-\ell-m+1)/2}\,q^{-n^{2}}.\label{eq:E-phi-spectral}
\end{equation}
For $u\in\mathbb{F}\setminus\{0\},$ define $G_{u}$ as in~\prettyref{lem:DET-problem-witness}.
As our dual matrix, we will use
\begin{equation}
\Phi=E_{\phi}+\sum_{b\in\mathbb{F}\setminus\{0,a\}}\frac{\phi(n)}{q-1}\,(G_{a}-G_{b}).\label{eq:Phi-defined}
\end{equation}
\begin{claim}
\label{claim:Phi-simplified}For every matrix pair $(A,B),$
\[
\Phi_{A,B}=\begin{cases}
(E_{\phi})_{A,B} & \text{if }\det(A+B)=0,\\
\phi(n)q^{-n^{2}}\lvert\SL(\mathbb{F},n)\rvert^{-1} & \text{if }\det(A+B)=a,\\
0 & \text{otherwise.}
\end{cases}
\]
\end{claim}

\begin{proof}
If $\det(A+B)=0,$ then by definition $(G_{u})_{A,B}=0$ for every
nonzero field element $u.$ As a result, \prettyref{eq:Phi-defined}
gives $\Phi_{A,B}=(E_{\phi})_{A,B}$ in this case. 

In what follows, we treat the complementary case when $\det(A+B)\ne0.$
For all such matrix pairs,
\[
(E_{\phi})_{A,B}=\sum_{i=0}^{n}\phi(i)(E_{i})_{A,B}=\phi(n)(E_{n})_{A,B}=\frac{\phi(n)}{q^{n^{2}}\lvert\Mcal_{n}\lvert}=\frac{\phi(n)}{q^{n^{2}}(q-1)\lvert\SL(\mathbb{F},n)\rvert},
\]
where the first three steps are immediate from Definition~\prettyref{def:E-phi},
and the last step uses~\prettyref{eq:coset-size}. In particular,
\begin{equation}
\Phi_{A,B}=\frac{\phi(n)}{q^{n^{2}}(q-1)\lvert\SL(\mathbb{F},n)\rvert}+\sum_{b\in\mathbb{F}\setminus\{0,a\}}\frac{\phi(n)}{q-1}\,((G_{a})_{A,B}-(G_{b})_{A,B}).\label{eq:PhiAB}
\end{equation}
If $\det(A+B)=a,$ then by definition $(G_{a})_{A,B}=q^{-n^{2}}\lvert\SL(\mathbb{F},n)\rvert^{-1}$
and $(G_{b})_{A,B}=0$ for all $b\in\mathbb{F}\setminus\{0,a\},$
so that~\prettyref{eq:PhiAB} gives
\[
\Phi_{A,B}=\frac{\phi(n)}{q^{n^{2}}(q-1)\lvert\SL(\mathbb{F},n)\rvert}+\sum_{b\in\mathbb{F}\setminus\{0,a\}}\frac{\phi(n)}{(q-1)q^{n^{2}}\lvert\SL(\mathbb{F},n)\rvert}=\frac{\phi(n)}{q^{n^{2}}\lvert\SL(\mathbb{F},n)\rvert}.
\]
If, on the other hand, $\det(A+B)=c$ for some $c\in\mathbb{F}\setminus\{0,a\},$
then by definition $(G_{a})_{A,B}=0$ and likewise $(G_{b})_{A,B}=0$
for every $b\ne c,$ so that \prettyref{eq:PhiAB} simplifies to
\[
\Phi_{A,B}=\frac{\phi(n)}{q^{n^{2}}(q-1)\lvert\SL(\mathbb{F},n)\rvert}-\frac{\phi(n)}{(q-1)}(G_{c})_{A,B}=0.\qedhere
\]
\end{proof}
We proceed to establish key analytic and metric properties of $\Phi.$
To begin with,
\begin{align}
\|\Phi\| & \leq\|E_{\phi}\|+\sum_{b\in\mathbb{F}\setminus\{0,a\}}\frac{|\phi(n)|}{q-1}\,\|G_{a}-G_{b}\|\nonumber \\
 & \leq\|E_{\phi}\|+\sum_{b\in\mathbb{F}\setminus\{0,a\}}\frac{\|\phi\|_{1}}{q-1}\,\|G_{a}-G_{b}\|\nonumber \\
 & \leq128\|\phi\|_{1}\,q^{-\ell(k-\ell-m+1)/2}\,q^{-n^{2}}+\sum_{b\in\mathbb{F}\setminus\{0,a\}}\frac{\|\phi\|_{1}}{q-1}\cdot8q^{-3(n^{2}-1)/2}\nonumber \\
 & \leq(128q^{-\ell(k-\ell-m+1)/2}+8q^{-(n^{2}-3)/2})q^{-n^{2}}\|\phi\|_{1},\label{eq:Phi-spectral-intermed}
\end{align}
where the first step uses the triangle inequality, and the third step
is a substitution from~\prettyref{eq:E-phi-spectral} and equation~\prettyref{eq:Gab-spectral}
of~\prettyref{lem:DET-problem-witness}. To simplify this bound,
recall from the theorem hypothesis that $n>k\geq1$ and $\ell,m\geq0.$
Therefore, $\ell(k-\ell-m+1)\leq\ell(k-\ell+1)\leq(k+1)^{2}/4\leq n^{2}/4\leq n^{2}-3.$
This results in $q^{-(n^{2}-3)/2}\leq q^{-\ell(k-\ell-m+1)/2}$, and
thus \prettyref{eq:Phi-spectral-intermed} simplifies to
\begin{equation}
\|\Phi\|\leq136q^{-\ell(k-\ell-m+1)/2}q^{-n^{2}}\|\phi\|_{1}.\label{eq:Phi-spectral}
\end{equation}

Next, we examine $\|\Phi\|_{1}.$ We have
\[
\sum_{\rk(A+B)=n}|\Phi_{A,B}|=\sum_{\det(A+B)=a}|\Phi_{A,B}|=\sum_{\det(A+B)=a}\frac{|\phi(n)|}{q^{n^{2}}\lvert\SL(\mathbb{F},n)\rvert}=|\phi(n)|,
\]
where the first and second steps are immediate from Claim~\prettyref{claim:Phi-simplified},
and the last step applies~\prettyref{eq:pairs-with-given-det}. Also,
\[
\sum_{\rk(A+B)<n}|\Phi_{A,B}|=\sum_{\rk(A+B)<n}|(E_{\phi})_{A,B}|=\|E_{\phi}\|-\sum_{\rk(A+B)=n}|(E_{\phi})_{A,B}|=\|\phi\|_{1}-|\phi(n)|,
\]
where the first step uses Claim~\prettyref{claim:Phi-simplified},
and the last step invokes~\prettyref{lem:E_phi-metric}. These two
equations yield
\begin{equation}
\|\Phi\|_{1}=\|\phi\|_{1}.\label{eq:Phi-ell1}
\end{equation}

Continuing,
\begin{align}
\sum_{\dom F}F_{A,B}\Phi_{A,B} & =\sum_{\det(A+B)=a}\Phi_{A,B}-\sum_{\rk(A+B)=k}\Phi_{A,B}\nonumber \\
 & =\sum_{\det(A+B)=a}\frac{\phi(n)}{q^{n^{2}}\lvert\SL(\mathbb{F},n)\rvert}-\sum_{\rk(A+B)=k}(E_{\phi})_{A,B}\nonumber \\
 & =\phi(n)-\phi(k)\nonumber \\
 & =|\phi(n)|+|\phi(k)|\nonumber \\
 & =\|\phi\|_{1}-\sum_{r\notin\{k,n\}}|\phi(r)|,\label{eq:sum-over-domG}
\end{align}
where the second step uses Claim~\prettyref{claim:Phi-simplified},
the third step invokes \prettyref{lem:E_phi-metric} and~\prettyref{eq:pairs-with-given-det},
and the fourth step is valid due to~\prettyref{lem:phi}\prettyref{enu:phi-at-n},~\prettyref{enu:phi-at-k}.
Finally,
\begin{align}
\sum_{\overline{\dom F}}|\Phi_{A,B}| & =\sum_{\rk(A+B)\notin\{n,k\}}|\Phi_{A,B}|+\sum_{\substack{\rk(A+B)=n\\
\det(A+B)\ne a
}
}|\Phi_{A,B}|\nonumber \\
 & =\sum_{\rk(A+B)\notin\{n,k\}}|\Phi_{A,B}|\nonumber \\
 & =\sum_{\rk(A+B)\notin\{n,k\}}|(E_{\phi})_{A,B}|\nonumber \\
 & =\sum_{r\notin\{n,k\}}|\phi(r)|,\label{eq:sum-over-complement-domG}
\end{align}
where the second and third steps use Claim~\prettyref{claim:Phi-simplified},
and the last step uses~\prettyref{lem:E_phi-metric}. Now
\begin{align}
\sum_{\dom F}F_{A,B} & \Phi_{A,B}-\delta\|\Phi\|_{1}-\sum_{\overline{\dom F}}|\Phi_{A,B}|\nonumber \\
 & =\|\phi\|_{1}-\delta\|\phi\|_{1}-2\sum_{r\notin\{n,k\}}|\phi(r)|\nonumber \\
 & \geq\left(1-\delta-2\sum_{r\notin\{n,k\}}|\phi(r)|\right)\|\phi\|_{1},\label{eq:Phi-G-correlation}
\end{align}
where the first step uses \prettyref{eq:Phi-ell1}\textendash \prettyref{eq:sum-over-complement-domG},
and the last step is legitimate by \prettyref{lem:phi}\prettyref{enu:phi-at-n}.

\prettyref{prop:approxtracelower} implies, in view of~\prettyref{eq:Phi-spectral}
and~\prettyref{eq:Phi-G-correlation}, that
\begin{equation}
\|F\|_{\Sigma,\delta}\geq\frac{1}{136}\,\left(1-\delta-2\sum_{r\notin\{n,k\}}|\phi(r)|\right)q^{\ell(k-\ell-m+1)/2}\,q^{n^{2}}.\label{eq:G-trace-norm}
\end{equation}
Since $\sum_{r\notin\{n,k\}}|\phi(r)|\leq32q^{-m-1}$ by~\prettyref{lem:phi}\prettyref{enu:phi-small-tail},
this proves~\prettyref{eq:G-m-l-trace-norm}. The alternative lower
bound~\prettyref{eq:G-m-l-trace-norm-alt} follows by taking $\ell=k$
and $m=0$ in~\prettyref{eq:G-trace-norm} and noting that $\sum_{r\notin\{n,k\}}|\phi(r)|=0$
in this case (by \prettyref{lem:phi}\prettyref{enu:phi-vanish}).
\end{proof}
By virtue of the approximate trace norm method, \prettyref{thm:RANK-vs-DET-trace-norm}
yields the following tight lower bound on the communication complexity
of the rank versus determinant problem.
\begin{thm*}[restatement of \prettyref{thm:RANKDET-intro}]
There is an absolute constant $c>0$ such that for every finite field
$\mathbb{F},$ every field element $a\in\mathbb{F}\setminus\{0\},$
and all integers $n>k\geq0,$ 
\begin{equation}
Q_{\frac{1}{2}-\frac{1}{4|\mathbb{F}|^{k/3}}}^{*}(\RANKDET_{k,a}^{\mathbb{F},n})\geq c(1+k^{2}\log|\mathbb{F}|).\label{eq:master-lower-bound-rankdet}
\end{equation}
\end{thm*}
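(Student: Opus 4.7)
The plan is to mirror, nearly verbatim, the proof of \prettyref{thm:rank-lower-bound-CANONICAL}, but with \prettyref{thm:approx-norm-rank-problem} replaced by its $\RANKDET$-analogue \prettyref{thm:RANK-vs-DET-trace-norm}, which has been set up with the same two trace-norm bounds (compare \prettyref{eq:G-m-l-trace-norm} with \prettyref{eq:F-m-l-trace-norm-rank} and \prettyref{eq:G-m-l-trace-norm-alt} with \prettyref{eq:F-m-l-trace-norm-rank-alt}). Abbreviate $q=|\mathbb{F}|$ and $\epsilon=\tfrac12-\tfrac{1}{4q^{k/3}}$. Since $\RANKDET_{k,a}^{\mathbb{F},n}$ is a nonconstant partial function (both values $-1$ and $1$ occur on the domain), the trivial lower bound
\[
Q_{\epsilon}^{*}(\RANKDET_{k,a}^{\mathbb{F},n})\geq 1
\]
holds and in particular handles the edge case $k=0$, where the target bound $c(1+k^2\log q)=c$ is matched by this inequality for any small enough absolute constant $c$.

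For the main regime $k\geq 1$, let $G$ denote the characteristic matrix of $\RANKDET_{k,a}^{\mathbb{F},n}$. I would split into two subcases exactly as in the rank-problem proof. When $1\leq k\leq 50$, I would invoke the ``alternative'' bound \prettyref{eq:G-m-l-trace-norm-alt} with $\delta=2\epsilon$, obtaining
\[
\|G\|_{\Sigma,2\epsilon}\geq \frac{1}{150}\cdot\frac{1}{2q^{k/3}}\cdot q^{k/2}q^{n^2}\;\geq\; \frac{1}{300}\,q^{k/6}q^{n^2}\;\geq\;\frac{1}{300}\,q^{k^{2}/300}q^{n^2},
\]
where the last step uses $k\leq 50$. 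Applying the approximate trace norm method (\prettyref{thm:approx-trace-norm-method}) with $|X|=|Y|=q^{n^{2}}$ yields $Q_\epsilon^*\geq \tfrac{1}{600}k^2\log q-O(1)$.

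For $k>50$, I would plug $\delta=2\epsilon$, $\ell=\lceil k/3\rceil$, $m=\lfloor k/2\rfloor$ into the main bound \prettyref{eq:G-m-l-trace-norm}, verifying $\ell+m\leq k$ and noting that the factor $1-\delta-64/q^{m+1}=1/(2q^{k/3})-64/q^{\lfloor k/2\rfloor+1}$ is at least $1/(4q^{k/3})$ once $k>50$. The algebra then matches the rank-problem proof line for line and produces
\[
\|G\|_{\Sigma,2\epsilon}\geq \frac{1}{600}\,q^{k^{2}/48}\,q^{n^2},
\]
so that \prettyref{thm:approx-trace-norm-method} gives $Q_\epsilon^*\geq\tfrac{1}{96}k^2\log q-O(1)$. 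In each subcase, taking a weighted average of the resulting bound with the trivial $Q_\epsilon^*\geq 1$ absorbs the additive $O(1)$ and delivers the clean estimate $Q_\epsilon^*\geq c(1+k^2\log q)$ for a single absolute constant $c>0$.

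There is essentially no hard step beyond what is already done: the delicate work has been encapsulated in \prettyref{thm:RANK-vs-DET-trace-norm} (whose proof combined the rank-problem witness $E_\phi$ with the determinant-spectral matrices $G_u$). The only minor point requiring care is the parameter book-keeping in the $k>50$ case, namely checking that the constants chosen for $\ell,m,\delta$ leave the parenthetical factor in \prettyref{eq:G-m-l-trace-norm} bounded below by a positive absolute constant multiple of $q^{-k/3}$; this is a routine calculation identical to the one in the proof of \prettyref{thm:rank-lower-bound-CANONICAL}.
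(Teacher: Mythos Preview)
Your proposal is correct and follows exactly the same approach as the paper. The paper's proof is in fact even terser than yours: it simply observes that for $k=0$ the bound follows from nonconstancy, and for $k\geq1$ the trace-norm bounds \prettyref{eq:G-m-l-trace-norm} and \prettyref{eq:G-m-l-trace-norm-alt} of \prettyref{thm:RANK-vs-DET-trace-norm} are identical in form to those of \prettyref{thm:approx-norm-rank-problem}, so the proof of \prettyref{thm:rank-lower-bound-CANONICAL} carries over verbatim with those substitutions.
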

\begin{proof}
For $k=0,$ the claimed lower bound follows from the fact that $\RANKDET_{0,a}^{\mathbb{F},n}$
is nonconstant and hence has communication complexity at least $1$
bit. For $k\geq1,$ our lower bounds on the approximate trace norm
of $\RANKDET_{k,a}^{\mathbb{F},n}$ are identical to those for $\RANK_{k,n}^{\mathbb{F},n,n}$
(Theorems~\ref{thm:RANK-vs-DET-trace-norm} and~\prettyref{thm:approx-norm-rank-problem},
respectively). Accordingly, the proof here is identical to that of
\prettyref{thm:rank-lower-bound-CANONICAL}, with equations~\prettyref{eq:G-m-l-trace-norm}
and \prettyref{eq:G-m-l-trace-norm-alt} of \prettyref{thm:RANK-vs-DET-trace-norm}
used in place of the corresponding equations \prettyref{eq:F-m-l-trace-norm-rank}
and \prettyref{eq:F-m-l-trace-norm-rank-alt} of \prettyref{thm:approx-norm-rank-problem}.
\end{proof}
As a consequence, we obtain an optimal communication lower bound for
the unrestricted determinant problem.
\begin{thm*}[restatement of \prettyref{thm:DET-intro}]
There is an absolute constant $c>0$ such that for every finite field
$\mathbb{F},$ every pair of distinct elements $a,b\in\mathbb{F},$
and all integers $n\geq2,$
\begin{equation}
Q_{\frac{1}{2}-\frac{1}{4|\mathbb{F}|^{(n-1)/3}}}^{*}(\DET_{a,b}^{\mathbb{F},n})\geq cn^{2}\log|\mathbb{F}|.\label{eq:master-lower-bound-det}
\end{equation}
\end{thm*}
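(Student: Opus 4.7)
My proof plan is to split into two cases based on whether either of $a, b$ equals zero, and in each case reduce to a bound already established earlier in the manuscript.

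\textbf{Case 1: $a, b \in \mathbb{F} \setminus \{0\}$.} Here I would directly apply \prettyref{thm:DET-cc-ab-nonzero} with the parameter choice $\gamma = \tfrac{1}{2}|\mathbb{F}|^{-(n-1)/3}$. This is calibrated so that $(1-\gamma)/2 = \tfrac{1}{2} - \tfrac{1}{4}|\mathbb{F}|^{-(n-1)/3}$, matching the error probability in the claim. Substituting into the bound of \prettyref{thm:DET-cc-ab-nonzero} yields
\[
Q^*_{\frac{1}{2}-\frac{1}{4|\mathbb{F}|^{(n-1)/3}}}(\DET_{a,b}^{\mathbb{F},n}) \;\geq\; \tfrac{1}{4}(n^2-3)\log|\mathbb{F}| \;-\; \tfrac{n-1}{6}\log|\mathbb{F}| \;-\; \tfrac{1}{2}\log 24.
\]
Since the coefficient of the leading $n^2\log|\mathbb{F}|$ term is $\tfrac{1}{4}$ while the negative contributions are linear in $n$ and $\log|\mathbb{F}|$, this quantity is $\Omega(n^2\log|\mathbb{F}|)$ whenever $n$ or $|\mathbb{F}|$ is sufficiently large. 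For the remaining small values of $(n, |\mathbb{F}|)$ where this bound is swamped by the additive constant, I would combine it with the trivial lower bound $Q^*_\epsilon(\DET_{a,b}^{\mathbb{F},n}) \geq 1$ (valid because $\DET_{a,b}^{\mathbb{F},n}$ is nonconstant for distinct $a, b$) via a weighted arithmetic average, mirroring the endgame in the proof of \prettyref{thm:rank-lower-bound-CANONICAL}.

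\textbf{Case 2: Exactly one of $a, b$ equals $0$.} Since $a \ne b$ rules out the case $a = b = 0$, one of them is nonzero. By swapping the roles of the $-1$ and $1$ output labels if necessary---which preserves communication complexity since it corresponds to negating the protocol's output---I may assume $a = 0$ and $b \in \mathbb{F} \setminus \{0\}$. The key observation is that $\RANKDET_{n-1,b}^{\mathbb{F},n}$ is a subproblem of $\DET_{0,b}^{\mathbb{F},n}$: every matrix pair $(A,B)$ with $\rk(A+B) = n-1$ satisfies $\det(A+B) = 0$, so the domain of $\RANKDET_{n-1,b}^{\mathbb{F},n}$ is contained in the domain of $\DET_{0,b}^{\mathbb{F},n}$, and the two partial functions agree on the common domain. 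Consequently, any $\epsilon$-error protocol for $\DET_{0,b}^{\mathbb{F},n}$ also solves $\RANKDET_{n-1,b}^{\mathbb{F},n}$ with error $\epsilon$, giving
\[
Q^*_{\epsilon}(\DET_{0,b}^{\mathbb{F},n}) \;\geq\; Q^*_{\epsilon}(\RANKDET_{n-1,b}^{\mathbb{F},n})
\]
for every $\epsilon \geq 0$. Applying the RANKDET lower bound \eqref{eq:master-lower-bound-rankdet} with $k = n-1 \geq 1$ then yields a bound of order $1 + (n-1)^2\log|\mathbb{F}| = \Omega(n^2\log|\mathbb{F}|)$ at the required error probability, which is exactly what we need.

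There is no substantive obstacle in this argument---both cases reduce transparently to theorems already proved in the excerpt. The only housekeeping concern is the constant-factor calibration in Case 1 for small $n$ and $|\mathbb{F}|$, handled by the standard weighted-average trick with the trivial one-bit bound. Taking the minimum of the constants across the two cases yields the absolute constant $c > 0$ advertised in the statement.
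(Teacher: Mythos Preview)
Your proposal is correct and follows essentially the same approach as the paper: split on whether $ab=0$, apply \prettyref{thm:DET-cc-ab-nonzero} directly in the nonzero case (combined with the trivial one-bit bound via a weighted average), and in the $ab=0$ case reduce to $\RANKDET_{n-1,\cdot}^{\mathbb{F},n}$ and invoke \prettyref{thm:RANKDET-intro}. The paper's write-up is slightly more compressed---it absorbs your explicit $\tfrac{n-1}{6}\log|\mathbb{F}|$ term into the leading $c'n^2\log|\mathbb{F}|$ coefficient in one step---but the argument is the same.
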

\begin{proof}
If $ab=0,$ then $\DET_{a,b}^{\mathbb{F},n}$ contains as a subproblem
either $\RANKDET_{n-1,b}^{\mathbb{F},n}$ (when $a=0$) or $\neg\RANKDET_{n-1,a}^{\mathbb{F},n}$
(when $b=0$), and therefore \prettyref{eq:master-lower-bound-det}
follows from~\prettyref{thm:RANKDET-intro}. If $a$ and $b$ are
both nonzero, \prettyref{thm:DET-cc-ab-nonzero} gives
\[
Q_{\frac{1}{2}-\frac{1}{4|\mathbb{F}|^{(n-1)/3}}}^{*}(\DET_{a,b}^{\mathbb{F},n})\geq c'n^{2}\log|\mathbb{F}|-\frac{1}{2}\log24
\]
for a small enough constant $c'>0.$ Taking a weighted average of
this lower bound with the trivial lower bound of $1$ bit settles~\eqref{eq:master-lower-bound-det}.
\end{proof}

\section{\label{sec:Subspace-sum-and-intersection}The subspace sum and intersection
problems}

As discussed in the introduction, our analysis of the subspace sum
and subspace intersection problems has similarities with the rank
problem but also diverges from it in important ways. Instead of additively
composed matrices whose rows and columns are indexed by elements of
$\mathbb{F}_{q}^{n\times n}$, we now have matrices with rows and
columns indexed by subspaces, and each entry $(A,B)$ depends solely
on the dimension of $A\cap B$. While the construction of the univariate
dual object is similar to that for the rank problem, its relation
to the singular values of the dual matrix is significantly more intricate,
and computing the spectral norm of the dual matrix is now a challenge.
Our study of the spectral norm is based on ideas due to Knuth~\cite{knuth01discreet}.
We start in Section~\ref{subsec:Equivalence-subset-sum-intersection}
by formalizing the equivalence of the subspace sum and subspace intersection
problems, which allows us to focus on the latter problem from then
on. As a first step toward solving the subspace intersection problem,
we collect necessary technical results about subspace combinatorics
in Section~\ref{subsec:Counting-subspaces-satisfying}. In Section~\ref{subsec:Subspace-matrices},
we give a formal definition of subspace matrices, state several auxiliary
results, and compare our analysis of their spectrum to that of Knuth.
In Section~\prettyref{subsec:Eigenvalues-and-eigenvectors}, we fully
determine the spectrum of subspace matrices. In Sections~\ref{subsec:Normalized-subspace-matrices}\textendash \ref{subsec:Communication-lower-bounds-subspace},
we use this spectral study along with our techniques developed in
Section~\ref{sec:rank-problem} to prove optimal lower bounds on
the communication complexity of the subspace intersection problem.
Sections~\ref{subsec:Communication-upper-bounds-subspace-small-error}
and~\ref{subsec:Communication-upper-bounds-large-error} conclude
with matching communication upper bounds. As in previous sections,
we let $q$ denote an arbitrary prime power and adopt $\mathbb{F}_{q}$
throughout as the underlying field. 

\subsection{\label{subsec:Equivalence-subset-sum-intersection}Equivalence of
the subspace sum and intersection problems}

The equivalence of the subspace sum and subspace intersection problems
from the standpoint of communication complexity is a straightforward
consequence of the identity~\prettyref{eq:sum-and-intersection-of-S-and-T},
valid for any linear subspaces $S$ and $T$ in a finite-dimensional
vector space. We formalize this equivalence below.
\begin{prop}
\label{prop:SUM-INTERSECT-reductions}Let $n,m,\ell$ be nonnegative
integers with $\max\{m,\ell\}\leq n.$ Then for all integers $d,D$
with $d\ne D,$
\begin{align}
 & \SUM_{d,D}^{\mathbb{F},n,m,\ell}=\INTERSECT_{m+\ell-d,m+\ell-D}^{\mathbb{F},n,m,\ell},\label{eq:sum-partial-intersect-partial}\\
 & \SUM_{d}^{\mathbb{F},n,m,\ell}=\INTERSECT_{m+\ell-d}^{\mathbb{F},n,m,\ell}.\label{eq:intersect-total-sum-total}
\end{align}
\end{prop}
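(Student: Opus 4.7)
The proof plan is straightforward: both identities reduce to the basic dimension formula~\eqref{eq:sum-and-intersection-of-S-and-T}, namely $\dim(S+T)=\dim(S)+\dim(T)-\dim(S\cap T)$. Since Alice's input $S$ lies in $\Scal(\mathbb{F}^n,m)$ and Bob's input $T$ lies in $\Scal(\mathbb{F}^n,\ell)$, we always have $\dim(S)=m$ and $\dim(T)=\ell$, so this identity gives the exact relation
\[
\dim(S\cap T) \;=\; m+\ell-\dim(S+T)
\]
on the common input space $\Scal(\mathbb{F}^n,m)\times\Scal(\mathbb{F}^n,\ell)$. The whole argument is to verify that both sides of each claimed equality agree as partial, respectively total, Boolean functions on this common input space.

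For~\eqref{eq:sum-partial-intersect-partial}, I would fix an arbitrary input pair $(S,T)$ and do a three-way case analysis. If $\dim(S+T)=d$, then $\dim(S\cap T)=m+\ell-d$, so the left-hand side evaluates to $-1$ and the right-hand side evaluates to $-1$ as well. If $\dim(S+T)=D$, then $\dim(S\cap T)=m+\ell-D$, and both sides evaluate to $+1$; note here we use $d\ne D$, which ensures $m+\ell-d\ne m+\ell-D$ so that the partial function $\INTERSECT_{m+\ell-d,m+\ell-D}^{\mathbb{F},n,m,\ell}$ is well-defined in the sense of the earlier definition. In the remaining case, both sides evaluate to $*$.

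For~\eqref{eq:intersect-total-sum-total}, the reasoning is even more direct: $\dim(S+T)\leq d$ if and only if $m+\ell-\dim(S\cap T)\leq d$, which is the same as $\dim(S\cap T)\geq m+\ell-d$. Hence $\SUM_{d}^{\mathbb{F},n,m,\ell}(S,T)=-1$ precisely when $\INTERSECT_{m+\ell-d}^{\mathbb{F},n,m,\ell}(S,T)=-1$, and otherwise both functions take the value $+1$.

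There is no real obstacle here; the only thing to be careful about is the role of the hypothesis $d\ne D$ in~\eqref{eq:sum-partial-intersect-partial}, which is needed only so that the indices on the $\INTERSECT$ side remain distinct and the partial function is well-defined as introduced in the preliminaries. The constraint $\max\{m,\ell\}\leq n$ plays no active role in the proof beyond ensuring that the domains $\Scal(\mathbb{F}^n,m)$ and $\Scal(\mathbb{F}^n,\ell)$ are nonempty so that the equations are not vacuous; it is already part of the standing setup for both problems.
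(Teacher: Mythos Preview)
Your proof is correct and follows essentially the same approach as the paper: both arguments reduce the claim to the identity $\dim(S+T)=m+\ell-\dim(S\cap T)$ and then check pointwise agreement of the two functions on their common domain. Your version is slightly more explicit in spelling out the three-way case analysis for~\eqref{eq:sum-partial-intersect-partial}, but the underlying reasoning is identical.
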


\begin{proof}
Let $S,T\subseteq\mathbb{F}^{n}$ be arbitrary subspaces of dimension
$m$ and $\ell,$ respectively. Since $\dim(S+T)=m+\ell-\dim(S\cap T),$
we have
\begin{align*}
\SUM_{d,D}^{\mathbb{F},n,m,\ell}(S,T) & =\INTERSECT_{m+\ell-d,m+\ell-D}^{\mathbb{F},n,m,\ell}(S,T),
\end{align*}
settling~\prettyref{eq:sum-partial-intersect-partial}. Analogously,
for any subspaces $S,T\subseteq\mathbb{F}^{n}$ of dimension $m$
and $\ell,$ respectively, we have $\dim(S+T)\leq d$ if and only
if $\dim(S\cap T)\geq m+\ell-d,$ which implies~\prettyref{eq:intersect-total-sum-total}.
\end{proof}
We will now prove our main result on the subspace sum problem (stated
in the introduction as Theorems~\ref{thm:MAIN-subspace-sum-constant-error}
and~\ref{thm:MAIN-subspace-sum-general}) assuming our corresponding
result on subspace intersection (\prettyref{thm:MAIN-subspace-intersection-general}).
In the rest of this work, we will focus on proving \prettyref{thm:MAIN-subspace-intersection-general}.
\begin{proof}[Proof of Theorems~\emph{\ref{thm:MAIN-subspace-sum-constant-error}}
and~\emph{\ref{thm:MAIN-subspace-sum-general}} assuming Theorem~\emph{\ref{thm:MAIN-subspace-intersection-general}}.]
Recall that \prettyref{thm:MAIN-subspace-sum-constant-error} is
a special case of \prettyref{thm:MAIN-subspace-sum-general}, corresponding
to $\gamma=1/3$. Therefore, it suffices to prove \prettyref{thm:MAIN-subspace-sum-general}.
Define $r=m+\ell-D$ and $R=m+\ell-d.$ Then the hypotheses $\max\{m,\ell\}\leq d<D\leq\min\{m+\ell,n\}$
and $\gamma\in[\frac{1}{3}q^{-(2d-m-\ell)/5},{\textstyle \frac{1}{3}}]$
of \prettyref{thm:MAIN-subspace-sum-general} can be equivalently
stated as
\begin{align}
 & \max\{0,m+\ell-n\}\leq r<R\leq\min\{m,\ell\},\label{eq:intersect-hyp}\\
 & \gamma\in[{\textstyle \frac{1}{3}}q^{-(m+\ell-2R)/5},{\textstyle \frac{1}{3}}].\label{eq:intersect-gamma-hyp}
\end{align}
Recall from \prettyref{prop:SUM-INTERSECT-reductions} that $\SUM_{d,D}^{\mathbb{F},n,m,\ell}$
is the same function as $\INTERSECT_{R,r}^{\mathbb{F},n,m,\ell}$,
which in turn is the negation of $\INTERSECT_{r,R}^{\mathbb{F},n,m,\ell}$.
Now the bounds for $\SUM_{d,D}^{\mathbb{F},n,m,\ell}$ claimed in
\prettyref{thm:MAIN-subspace-sum-general} follow from the bounds
for $\INTERSECT_{r,R}^{\mathbb{F},n,m,\ell}$ in \prettyref{thm:MAIN-subspace-intersection-general},
upon substituting $R=m+\ell-d.$ This appeal to \prettyref{thm:MAIN-subspace-intersection-general}
is legitimate due to~\prettyref{eq:intersect-hyp} and~\prettyref{eq:intersect-gamma-hyp}. 

Analogously, $\SUM_{d}^{\mathbb{F},n,m,\ell}$ is the same function
as $\INTERSECT_{R}^{\mathbb{F},n,m,\ell}$ (\prettyref{prop:SUM-INTERSECT-reductions}),
and therefore the bounds claimed for $\SUM_{d}^{\mathbb{F},n,m,\ell}$
in \prettyref{thm:MAIN-subspace-sum-general} follow from the bounds
for $\INTERSECT_{R}^{\mathbb{F},n,m,\ell}$ in \prettyref{thm:MAIN-subspace-intersection-general},
upon substituting $R=m+\ell-d.$ 
\end{proof}

\subsection{\label{subsec:Counting-subspaces-satisfying}Counting subspaces satisfying
combinatorial constraints}

When it comes to counting, one could hope that the transition from
subsets to subspaces would be straightforward and amount to replacing
binomial coefficients with their Gaussian counterparts. Unfortunately,
this is not the case. Many basic results for sets have no analogues
in the subspace setting. For example, the well-known inclusion-exclusion
formula \prettyref{eq:sum-and-intersection-of-S-and-T} is valid for
two subspaces but does not generalize to any larger number. As a consequence,
it is in general a subtle task to count the subspaces of a given dimension
that satisfy basic combinatorial constraints relative to other given
subspaces. We start by counting, for given subspaces $A$ and $C$,
all $d$-dimensional subspaces that contain $C$ and avoid $A\setminus C$.
\begin{lem}[Counting subspaces externally]
\label{lem:external-subspaces} Let $A$ and $C$ be linear subspaces
of an $n$-dimensional vector space $V$ over $\mathbb{F}_{q}$. Let
$d\geq0$ be an integer. Then the number of dimension-$d$ linear
subspaces $X$ such that $C\subseteq X\subseteq V$ and $A\cap X=A\cap C$
is
\begin{equation}
q^{(\dim(A)-\dim(A\cap C))(d-\dim(C))}\binom{n-\dim(A+C)}{d-\dim(C)}_{q}.\label{eq:external-subspaces-bound}
\end{equation}
\end{lem}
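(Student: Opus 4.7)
The plan is to reduce the count to a classical enumeration problem in a quotient space. Set $c = \dim C$, $a = \dim A$, $i = \dim(A \cap C)$, so $\dim(A+C) = a+c-i$. Since we require $C \subseteq X$, it is natural to pass to the quotient $\pi \colon V \to V/C$, under which subspaces $X$ with $C \subseteq X \subseteq V$ of dimension $d$ correspond bijectively to subspaces $\bar X = \pi(X) = X/C$ of $V/C$ of dimension $d-c$. Moreover, $\pi(A)$ is a subspace of $V/C$ of dimension $a - i$, since $\ker(\pi|_A) = A \cap C$. The ambient space $V/C$ has dimension $n - c$.

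The first step is to verify that, under this correspondence, the constraint $A \cap X = A \cap C$ is equivalent to $\pi(A) \cap \bar X = 0$. The inclusion $A \cap C \subseteq A \cap X$ is automatic (using $C \subseteq X$), so the constraint says $A \cap X \subseteq C$, i.e., $\pi(A \cap X) = 0$. To match this with $\pi(A) \cap \bar X = 0$, I will show that $\pi(A \cap X) = \pi(A) \cap \pi(X)$ whenever $C \subseteq X$: the $\subseteq$ direction is immediate, and for the reverse, any coset $a + C \in \pi(A) \cap \pi(X)$ with $a \in A$ satisfies $a - x \in C \subseteq X$ for some $x \in X$, forcing $a \in A \cap X$.

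The second step is to count, in an $N$-dimensional space over $\mathbb{F}_q$ with $N = n-c$, the subspaces of dimension $k = d-c$ whose intersection with a fixed subspace $W$ of dimension $w = a - i$ is trivial. This is a standard count: the number of ordered linearly independent tuples $(v_1, \ldots, v_k)$ with $\Span\{v_1,\ldots,v_j\} \cap W = 0$ at every stage equals $\prod_{j=0}^{k-1}(q^N - q^{w+j})$, since after picking $v_1, \ldots, v_j$ the forbidden set is the $(w+j)$-dimensional subspace $W + \Span\{v_1, \ldots, v_j\}$. Dividing by $|\mathrm{GL}_k(\mathbb{F}_q)| = \prod_{j=0}^{k-1}(q^k - q^j)$ and factoring out $q^{wk}$ yields
\[
q^{wk} \prod_{j=0}^{k-1} \frac{q^{N-w} - q^j}{q^k - q^j} = q^{wk} \binom{N-w}{k}_{q}.
\]
Substituting $N = n-c$, $w = a-i$, $k = d-c$ gives $q^{(a-i)(d-c)} \binom{n - (a+c-i)}{d-c}_q$, which is precisely \eqref{eq:external-subspaces-bound}.

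I expect no serious obstacle: the main subtlety is verifying the identity $\pi(A \cap X) = \pi(A) \cap \pi(X)$ under the hypothesis $C \subseteq X$ (which is a special case of \prettyref{fact:sum-intersection-subspaces} in a disguised form), after which the problem is a routine Gaussian-coefficient enumeration. The formula also gracefully handles degenerate cases: if $d < c$ or $d - c > n - \dim(A+C)$, then the binomial coefficient vanishes, correctly reflecting that no such $X$ exists.
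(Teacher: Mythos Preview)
Your proof is correct and takes a cleaner route than the paper's. The paper works directly in $V$: it fixes a basis $v_1,\ldots,v_c$ for $C$, counts ordered tuples $(v_1,\ldots,v_c,u_1,\ldots,u_{d-c})$ with each $u_i\notin A+C+\Span\{u_1,\ldots,u_{i-1}\}$, and then argues separately (their Claim~5.2, via \prettyref{fact:sum-intersection-subspaces}) that every admissible $X$ contributes exactly $\prod_{i=1}^{d-c}(q^d-q^{c+i-1})$ such tuples, and (their Claim~5.3) that every tuple spans an admissible $X$. Your quotient-by-$C$ argument absorbs both claims into the single observation that $A\cap X=A\cap C$ is equivalent to $\pi(A)\cap\bar X=0$; this is precisely the identity $(A+C)\cap X=C+A\cap X$, i.e.\ \prettyref{fact:sum-intersection-subspaces} with $S=X$, $S'=C$, $T=A$, as you note. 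After that, the problem is the classical count of $k$-dimensional subspaces of an $N$-space meeting a fixed $w$-space trivially. The two proofs ultimately perform the same tuple enumeration, but yours does it in $V/C$ where the bookkeeping is lighter and the role of the key identity is isolated up front rather than buried inside an inductive claim.
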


\begin{proof}
The lemma is trivially true for $d\notin[\dim(C),n]$ since the Gaussian
binomial coefficient in~\eqref{eq:external-subspaces-bound} is zero
in that case. In what follows, we consider the complementary case
$d\in[\dim(C),n].$

Let $\mathscr{X}$ be the set of subspaces $X$ in the statement of
the lemma. Fix a basis $v_{1},v_{2},\ldots,v_{\dim(C)}$ for $C.$
Let $\mathscr{B}$ be the set of all $d$-tuples $(v_{1},\ldots,v_{\dim(C)},u_{1},\ldots,u_{d-\dim(C)})$
of vectors in $V$ such that for all $i,$
\begin{equation}
u_{i}\notin A+C+\Span\{u_{1},u_{2},\ldots,u_{i-1}\}.\label{eq:inductive-prop-basis}
\end{equation}
Then each element of $\mathscr{B}$ is an ordered basis, with
\begin{equation}
|\mathscr{B}|=\prod_{i=1}^{d-\dim(C)}(q^{n}-q^{\dim(A+C)+i-1}).\label{eq:B-bases}
\end{equation}
\begin{claim}
\label{claim:X-bases}Every subspace $X\in\mathscr{X}$ has precisely
\begin{equation}
\prod_{i=1}^{d-\dim(C)}(q^{d}-q^{\dim(C)+i-1})\label{eq:X-bases}
\end{equation}
ordered bases in $\mathscr{B}$.
\end{claim}

\begin{proof}
Let us say that a sequence of vectors $(u_{1},u_{2},\ldots,u_{k})$
in $X$ is \emph{good} if~\eqref{eq:inductive-prop-basis} holds
for all $i=1,2,\ldots,k.$ We will prove that for every good sequence
of $k$ vectors in $X$, where $k<d-\dim(C),$ there are exactly $q^{d}-q^{\dim(C)+k}$
vectors $u_{k+1}\in X$ such that the sequence $(u_{1},u_{2},\ldots,u_{k+1})$
is good. Indeed, letting $S,S',T$ in Fact~\ref{fact:sum-intersection-subspaces}
be the subspaces $X,C+\Span\{u_{1},u_{2},\ldots,u_{k}\},A,$ respectively,
we obtain
\begin{align*}
X\cap(A+C+\Span\{u_{1},u_{2},\ldots,u_{k}\}) & =(X\cap A)+C+\Span\{u_{1},u_{2},\ldots,u_{k}\}\\
 & =(C\cap A)+C+\Span\{u_{1},u_{2},\ldots,u_{k}\}\\
 & =C+\Span\{u_{1},u_{2},\ldots,u_{k}\}.
\end{align*}
Therefore, the only vectors $u_{k+1}\in X$ for which the sequence
$(u_{1},u_{2},\ldots,u_{k+1})$ is not good are the elements of $C+\Span\{u_{1},u_{2},\ldots,u_{k}\},$
which is a subspace of dimension $\dim(C)+k$ because it is spanned
by the linearly independent vectors $v_{1},v_{2},\ldots,v_{\dim(C)},u_{1},u_{2},\ldots,u_{k}.$
In conclusion, out of the $q^{d}$ vectors of $X$, there are precisely
$q^{\dim(C)+k}$ vectors $u_{k+1}$ for which the sequence $u_{1},u_{2},\ldots,u_{k+1}$
is not good.

It now follows immediately that the number of good sequences $(u_{1},u_{2},\ldots,u_{d-\dim(C)})$
of vectors in $X$ is~\eqref{eq:X-bases} as claimed, with $q^{d}-q^{\dim(C)}$
ways to choose $u_{1},$ then $q^{d}-q^{\dim(C)+1}$ ways to choose
$u_{2}$ given $u_{1},$ then $q^{d}-q^{\dim(C)+2}$ ways to choose
$u_{3}$ given $u_{1},u_{2},$ and so on.
\end{proof}
\begin{claim}
\label{claim:B-some-subspace}Every element of $\mathscr{B}$ is an
ordered basis for some subspace in $\mathscr{X}$. 
\end{claim}

\begin{proof}
Fix a tuple $(u_{1},u_{2},\ldots,u_{d-\dim(C)})$ with~\eqref{eq:inductive-prop-basis}
for all $i,$ and let 
\[
X=\Span\{v_{1},\ldots,v_{\dim(C)},u_{1},\ldots,u_{d-\dim(C)}\}.
\]
Then clearly $X$ is a $d$-dimensional subspace with $C\subseteq X\subseteq V.$
This in particular means that $A\cap X$ contains $A\cap C.$ It remains
to prove the opposite inclusion, $A\cap X\subseteq A\cap C.$ For
this, fix arbitrary scalars $\alpha_{i},\beta_{j}$ such that 
\[
\sum\alpha_{i}v_{i}+\sum\beta_{j}u_{j}\in A.
\]
If some $\beta_{j}$ were nonzero, we could take $j^{*}=\max\{j:\beta_{j}\ne0\}$
and obtain $u_{j^{*}}\in\beta_{j^{*}}^{-1}(A-\sum\alpha_{i}v_{i}-\sum_{j<j^{*}}\beta_{j}u_{j}),$
contradicting~\eqref{eq:inductive-prop-basis}. This means that $\beta_{j}=0$
for all $j,$ with the consequence that the vector $\sum\alpha_{i}v_{i}+\sum\beta_{j}u_{j}=\sum\alpha_{i}v_{i}$
belongs to $C.$ This settles the containment $A\cap X\subseteq A\cap C$
and completes the proof.
\end{proof}
Claims~\ref{claim:X-bases} and~\ref{claim:B-some-subspace} imply
that $|\mathscr{X}|$ is the quotient of \eqref{eq:B-bases} by~\eqref{eq:X-bases},
namely,
\begin{align*}
|\mathscr{X}| & =\prod_{i=1}^{d-\dim(C)}\frac{q^{n}-q^{\dim(A+C)+i-1}}{q^{d}-q^{\dim(C)+i-1}}\\
 & =\left(\frac{q^{\dim(A+C)}}{q^{\dim(C)}}\right)^{d-\dim(C)}\binom{n-\dim(A+C)}{d-\dim(C)}_{q}\\
 & =q^{(\dim(A)-\dim(A\cap C))(d-\dim(C))}\binom{n-\dim(A+C)}{d-\dim(C)}_{q}.
\end{align*}
This completes the proof of the lemma.
\end{proof}
\begin{cor}[Counting subspaces internally]
\label{cor:internal-subspaces} Let $S'\subseteq S$ be linear subspaces
in a vector space over $\mathbb{F}_{q}$. Let $d\geq0$ be an integer.
Then the number of dimension-$d$ linear subspaces $T$ with $S'\subseteq T\subseteq S$
is
\begin{equation}
\binom{\dim(S)-\dim(S')}{d-\dim(S')}_{q}.\label{eq:counting-internally}
\end{equation}
\end{cor}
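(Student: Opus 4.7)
The plan is to derive the corollary as a direct application of the preceding \prettyref{lem:external-subspaces}, with the zero subspace playing the role of the ``external'' subspace $A$ and with the ambient space restricted to $S$ itself. Concretely, I would invoke the lemma with the vector space $V$ taken to be $S$ (of dimension $\dim(S)$), the subspace $C$ taken to be $S'$, and the subspace $A$ taken to be $\{0\}$. With these choices, the hypothesis $A \cap X = A \cap C$ becomes $\{0\} = \{0\}$ and is therefore automatically satisfied for every $X$ with $S' \subseteq X \subseteq S$. Thus the subspaces $X$ counted by the lemma are exactly the dimension-$d$ subspaces $T$ satisfying $S' \subseteq T \subseteq S$, which is what we want.

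Next I would substitute into the formula supplied by the lemma. Since $\dim(A) = \dim(A \cap C) = 0$, the exponent of $q$ in the prefactor vanishes and we obtain
\[
q^{(\dim(A)-\dim(A\cap C))(d-\dim(C))}\binom{\dim(S)-\dim(A+C)}{d-\dim(C)}_{q} = \binom{\dim(S)-\dim(S')}{d-\dim(S')}_{q},
\]
using $A + C = C = S'$. This yields precisely the claimed count \eqref{eq:counting-internally}.

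There is essentially no obstacle here; the corollary is a degenerate special case of the lemma. The only point requiring a moment of care is that we are applying the lemma with the ambient space $V = S$ rather than with an arbitrary ambient space containing $S$, which is legitimate because the lemma is stated for an arbitrary finite-dimensional $V$. As a sanity check, I note that the formula is consistent with the familiar bijection between subspaces $T$ with $S' \subseteq T \subseteq S$ and subspaces $\overline{T} \subseteq S/S'$ (via $T \mapsto T/S'$), under which $d$-dimensional $T$ correspond to $(d-\dim(S'))$-dimensional subspaces of the $(\dim(S)-\dim(S'))$-dimensional quotient $S/S'$, and the latter are counted by the same Gaussian binomial coefficient via \prettyref{fact:qbinprop}. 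Either route gives the result, but the application of \prettyref{lem:external-subspaces} is the most economical since it avoids introducing quotient spaces.
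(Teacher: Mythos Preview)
Your proposal is correct and matches the paper's proof exactly: the paper also sets $V=S$, $C=S'$, and $A=\{0\}$ in \prettyref{lem:external-subspaces} and reads off the result. Your additional sanity check via the quotient $S/S'$ is a nice remark but not needed.
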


\begin{proof}
Set $V=S,$ $C=S',$ and $A=\{0\}$ in the statement of \prettyref{lem:external-subspaces}
\end{proof}
We now generalize \prettyref{lem:external-subspaces} by allowing
$A\cap X$ to be any subspace of $A$ of a given dimension $t$.
\begin{lem}
\label{lem:external-subspaces-gen}Let $A,B$ be linear subspaces
of an $n$-dimensional vector space $V$ over $\mathbb{F}_{q}$. Define
$r=\dim(A\cap B).$ Let $d$ and $t$ be nonnegative integers. Then
the number of dimension-$d$ linear subspaces $X$ such that $B\subseteq X\subseteq V$
and $\dim(A\cap X)=t$ is 
\begin{equation}
q^{(\dim(A)-t)(d-t-\dim(B)+r)}\binom{n-\dim(A)-\dim(B)+r}{d-t-\dim(B)+r}_{q}\binom{\dim(A)-r}{t-r}_{q}.\label{eq:external-subspaces-bound-1}
\end{equation}
\end{lem}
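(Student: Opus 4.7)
My plan is to reduce the count to Lemma~5.1 by stratifying the subspaces $X$ according to the intersection $C := A\cap X$. Since $B\subseteq X$, we automatically have $A\cap B\subseteq C$, so $C$ must be a $t$-dimensional subspace of $A$ containing $A\cap B$. The number of choices for $C$ is $\binom{\dim(A)-r}{t-r}_q$ by Corollary~5.2 (applied to $S=A$, $S'=A\cap B$). Having fixed $C$, we then count $d$-dimensional subspaces $X$ with $B\subseteq X\subseteq V$ and $A\cap X=C$; summing over $C$ gives the total.

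For the inner count, I would rewrite the constraints as $C+B\subseteq X$ and $A\cap X=A\cap(C+B)$. The first is immediate. For the second, Fact~2.3 with $S=A,S'=C,T=B$ gives $A\cap(C+B)=C+A\cap B=C$, which lies inside $A\cap X$ automatically (since $C\subseteq A\cap X$ and $A\cap B\subseteq A\cap X$), so requiring $A\cap X=C$ is the same as requiring $A\cap X=A\cap(C+B)$. Therefore Lemma~5.1, applied with $C$ replaced by $C+B$, counts these $X$ exactly.

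What remains is to evaluate the two dimensions that enter Lemma~5.1's formula. First, $\dim(C+B)=\dim(C)+\dim(B)-\dim(C\cap B)$; and since $C\subseteq A$ gives $C\cap B\subseteq A\cap B$ while $A\cap B\subseteq C$ gives $A\cap B\subseteq C\cap B$, we get $C\cap B=A\cap B$ of dimension $r$. Hence $\dim(C+B)=t+\dim(B)-r$. Second, $\dim(A+C+B)=\dim(A+B)=\dim(A)+\dim(B)-r$ since $C\subseteq A$. Substituting into the bound from Lemma~5.1, the quantity $\dim(A)-\dim(A\cap(C+B))$ becomes $\dim(A)-t$, and the inner count is
\[
q^{(\dim(A)-t)(d-t-\dim(B)+r)}\binom{n-\dim(A)-\dim(B)+r}{d-t-\dim(B)+r}_q.
\]
Multiplying by the number of choices for $C$ yields the claimed formula.

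There is essentially no obstacle beyond bookkeeping: the only place one has to be careful is in verifying $A\cap(C+B)=C$ and $C\cap B=A\cap B$, both of which reduce to modular-law manipulations (Fact~2.3) using $A\cap B\subseteq C\subseteq A$. Once these identities are in hand, the result is a clean application of Lemma~5.1 and Corollary~5.2. Note that the formula correctly degenerates to $0$ whenever $t<r$ or $t>\dim(A)$ or $d<t+\dim(B)-r$ or $d>n-\dim(A)+t$, via vanishing of the Gaussian binomial coefficients, matching the combinatorial constraints on $t,d$.
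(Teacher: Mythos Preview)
Your proposal is correct and follows essentially the same approach as the paper: both stratify by the subspace $C=A\cap X$, count the strata via Corollary~\ref{cor:internal-subspaces}, and within each stratum apply Lemma~\ref{lem:external-subspaces} with $C$ replaced by $C+B$ after verifying $A\cap(C+B)=C$ via Fact~\ref{fact:sum-intersection-subspaces}. The dimension computations and the handling of degenerate cases are also the same.
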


\begin{proof}
The lemma is trivially true for $t\notin[r,\dim(A)]$ since the last
Gaussian binomial coefficient in~\eqref{eq:external-subspaces-bound-1}
is zero in that case. In what follows, we consider the complementary
case $t\in[r,\dim(A)]$.

Let $\mathscr{X}$ be the set of all dimension-$d$ subspaces $X$
with $B\subseteq X\subseteq V$ and $\dim(A\cap X)=t.$ Let $\mathscr{A}$
be the set of all dimension-$t$ subspaces $A'$ with $A\cap B\subseteq A'\subseteq A.$
By \prettyref{cor:internal-subspaces},
\begin{equation}
|\mathscr{A}|=\binom{\dim(A)-r}{t-r}_{q}.\label{eq:number-Aprimes}
\end{equation}
For any $X\in\mathscr{X},$ the subspace $A\cap X$ is by definition
a dimension-$t$ subspace of $A$ that contains $A\cap B$. This makes
it possible to define a function $f\colon\mathscr{X}\to\mathscr{A}$
by $f(X)=A\cap X.$
\begin{claim}
\label{claim:no-stray-vectors}For every $A'\in\mathscr{A},$
\begin{equation}
|f^{-1}(A')|=q^{(\dim(A)-t)(d-t-\dim(B)+r)}\binom{n-\dim(A)-\dim(B)+r}{d-t-\dim(B)+r}_{q}.\label{eq:extend-by-Aprime}
\end{equation}
\end{claim}

\begin{proof}
Define $C=A'+B.$ Then $A\cap C=A'+A\cap B$ by Fact~\ref{fact:sum-intersection-subspaces},
which in view of $A\cap B\subseteq A'$ further yields 
\begin{equation}
A\cap C=A'.\label{eq:A-C-Aprime}
\end{equation}
 Now
\begin{align}
|f^{-1}(A')| & =|\{X:\text{\ensuremath{X} is a subspace of dimension }d\text{ with }B\subseteq X\subseteq V\text{ and }A\cap X=A'\}\nonumber \\
 & =|\{X:\text{\ensuremath{X} is a subspace of dimension }d\text{ with }C\subseteq X\subseteq V\text{ and }A\cap X=A'\}\nonumber \\
 & =|\{X:\text{\ensuremath{X} is a subspace of dimension }d\text{ with }C\subseteq X\subseteq V\text{ and }A\cap X=A\cap C\}\nonumber \\
 & =q^{(\dim(A)-\dim(A\cap C))(d-\dim(C))}\binom{n-\dim(A+C)}{d-\dim(C)}_{q},\label{eq:intermediate-part}
\end{align}
where the first step is immediate from the definitions of $\mathscr{X}$
and $f$; the second step holds because the condition $B\subseteq X$
is logically equivalent to $A'+B\subseteq X$ due to $A'\subseteq X$;
the third step applies~\eqref{eq:A-C-Aprime}; and the final step
uses Lemma~\ref{lem:external-subspaces}. 

It remains to calculate the dimensions of the relevant subspaces in~\eqref{eq:intermediate-part}.
We have $\dim(C)=\dim(A'+B)=\dim(A')+\dim(B)-\dim(A'\cap B),$ which
along with $A'\cap B=A\cap B$ and $\dim(A')=t$ yields
\begin{equation}
\dim(C)=t+\dim(B)-r.\label{eq:plugin1}
\end{equation}
It is immediate from~\eqref{eq:A-C-Aprime} that
\begin{equation}
\dim(A\cap C)=t.\label{eq:plugin2}
\end{equation}
Finally, we have $\dim(A+C)=\dim(A+A'+B)=\dim(A+B)$ and therefore
\begin{equation}
\dim(A+C)=\dim(A)+\dim(B)-r.\label{eq:plugin3}
\end{equation}
Substituting~\eqref{eq:plugin1}\textendash \eqref{eq:plugin3} into~\eqref{eq:intermediate-part},
we arrive at the sought equality~\eqref{eq:extend-by-Aprime}.
\end{proof}
Claim~\ref{claim:no-stray-vectors} implies that $|\mathscr{X}|$
is the product of the right-hand side of~\eqref{eq:number-Aprimes}
and the right-hand side of~\eqref{eq:extend-by-Aprime}, as was to
be shown.
\end{proof}
\begin{cor}
\label{cor:internal-subspaces-gen}Let $S'\subseteq S$ be linear
subspaces in a vector space over $\mathbb{F}_{q}$. Let $d$ and $t$
be nonnegative integers. Then the number of dimension-$d$ linear
subspaces $T\subseteq S$ with $\dim(S'\cap T)=t$ is
\begin{equation}
q^{(\dim(S')-t)(d-t)}\binom{\dim(S)-\dim(S')}{d-t}_{q}\binom{\dim(S')}{t}_{q}.\label{eq:counting-internally-gen}
\end{equation}
\end{cor}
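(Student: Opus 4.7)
The plan is to derive this corollary as a direct specialization of Lemma~\ref{lem:external-subspaces-gen}. Specifically, I would take the ambient vector space in that lemma to be $V = S$ (so the dimension parameter $n$ there becomes $\dim(S)$), set $A = S'$, and set $B = \{0\}$. Under this choice, $\dim(B) = 0$ and $r = \dim(A \cap B) = \dim(S' \cap \{0\}) = 0$, while the condition $B \subseteq X \subseteq V$ simply amounts to $X \subseteq S$, and the condition $\dim(A \cap X) = t$ becomes $\dim(S' \cap X) = t$.

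Substituting these values into the count~\eqref{eq:external-subspaces-bound-1} from Lemma~\ref{lem:external-subspaces-gen} gives
\[
q^{(\dim(S')-t)(d-t-0+0)}\binom{\dim(S)-\dim(S')-0+0}{d-t-0+0}_{q}\binom{\dim(S')-0}{t-0}_{q},
\]
which simplifies immediately to the claimed expression~\eqref{eq:counting-internally-gen}. Since all steps are routine substitutions, there is no real obstacle in the argument — the work has already been done in Lemma~\ref{lem:external-subspaces-gen}, and this corollary is the natural ``internal'' analogue obtained by collapsing the role of the outer constraint subspace $B$ to the trivial subspace. I would present the proof in one or two lines.
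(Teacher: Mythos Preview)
Your proposal is correct and is exactly the approach the paper takes: it invokes Lemma~\ref{lem:external-subspaces-gen} with $V=S$, $A=S'$, and $B=\{0\}$, which after the substitutions you describe yields~\eqref{eq:counting-internally-gen}.
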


\begin{proof}
Invoke \prettyref{lem:external-subspaces-gen} with $V=S,$ $A=S',$
and $B=\{0\}.$
\end{proof}

\subsection{\label{subsec:Subspace-matrices}Subspace matrices}

In \cite{knuth01discreet}, Knuth defined \emph{combinatorial matrices
of type $(n,t)$} as matrices whose rows and columns are indexed by
$t$-element subsets of a fixed $n$-element set, and whose $(A,B)$
entry depends only on $|A\cap B|$. We begin with analogous definitions
in the setting of linear subspaces. Let $\mathbb{F}$ be a given finite
field. For each $d=0,1,2,\ldots,n,$ fix an ordering on the set of
dimension-$d$ subspaces of $\mathbb{F}^{n}.$
\begin{defn}
\label{def:Jr}Let $n,m,\ell$ be nonnegative integers with $\max\{m,\ell\}\leq n.$
For any $r\geq0,$ define $J_{r}^{\mathbb{F},n,m,\ell}$ to be the
matrix whose rows are indexed by dimension-$m$ subspaces of $\mathbb{F}^{n},$
columns indexed by dimension-$\ell$ subspaces of $\mathbb{F}^{n},$
and entries given by
\[
(J_{r}^{\mathbb{F},n,m,\ell})_{A,B}=\begin{cases}
1 & \text{if }\dim(A\cap B)=r,\\
0 & \text{otherwise},
\end{cases}
\]
where the row index $A$ and column index $B$ use the ordering on
the subspaces of $\mathbb{F}^{n}$ fixed at the beginning. 
\end{defn}

Thus, the $(A,B)$ entry of $J_{r}^{\mathbb{F},n,m,\ell}$ depends
only on the dimension of $A\cap B$ rather than the subspaces $A,B$
themselves. By passing to the linear span of all such matrices for
fixed $\mathbb{F},n,m,\ell$, we obtain a matrix family that we call
\emph{subspace matrices}.
\begin{defn}[Subspace matrices]
\label{def:J-phi}For a function $\phi\colon\mathbb{Z}\to\Re,$ we
define
\begin{equation}
J_{\phi}^{\mathbb{F},n,m,\ell}=\sum_{r=0}^{\min\{m,\ell\}}\phi(r)J_{r}^{\mathbb{F},n,m,\ell}.\label{eq:J-phi-def}
\end{equation}
\end{defn}

\noindent Recall that throughout this manuscript, the underlying field
is $\mathbb{F}=\mathbb{F}_{q}$ for an arbitrary prime power $q.$
To avoid notational clutter, we will write simply $J_{r}^{n,m,\ell}$
and $J_{\phi}^{n,m,\ell}$ to mean $J_{r}^{\mathbb{F},n,m,\ell}$
and $J_{\phi}^{\mathbb{F},n,m,\ell}$, respectively. 

To determine the eigenvalues of combinatorial matrices with rows and
columns indexed by $t$-element subsets of $\{1,2,\ldots,n\}$, Knuth
investigates a certain homogeneous system of linear equations with
variables indexed by $s$-element subsets and the equations themselves
corresponding to $(s-1)$-element subsets. He refers to the solutions
to such systems as \emph{$(n,s)$-kernel systems}. It turns out that
the linear space of kernel systems has a basis supported on variables
labeled by a certain type of sets, which Knuth calls\emph{ basic sets}
and which he fully describes in a combinatorial way. For any $s\in\{1,2,\ldots,t\}$
and any $(n,s)$-kernel system $(x_{u})$, he shows that the corresponding
vector $(z_{w})$, indexed by $t$-element subsets $w$ and given
by $z_{w}=\sum_{u\subseteq w}x_{u}$, is an eigenvector for any combinatorial
matrix of type $(n,t)$. These vectors $(z_{w})$ for various values
of $s$, together with the all-ones vector, make up a complete set
of eigenvectors, and Knuth's analysis also reveals the associated
eigenvalues.

Even setting aside the more subtle combinatorial nature of subspaces
described in Section~\prettyref{subsec:Counting-subspaces-satisfying},
it is not clear how to generalize Knuth's notion of basic sets to
linear subspaces. For this reason, we do not appeal to combinatorial
machinery and rely instead on linear-algebraic arguments. As another
point of departure, our problem requires understanding the singular
values of a general subspace matrix $J_{\phi}^{n,m,\ell},$ whereas
Knuth studied combinatorial matrices that are symmetric (analogous
to the symmetric subspace matrices $J_{\phi}^{n,m,m}$ in our setting).
We note that the eigenvalues of symmetric subspace matrices $J_{\phi}^{n,m,m}$
were also determined by Delsarte~\cite{delsarte76schemes} and~Eisfeld~\cite{eisfeld99eigenspaces},
and their properties were studied in \cite{BCIM17eigenvalues-graphs,cioaba-gupta21grassmann}.
However, these previous analyses do not seem to apply to the general
case of interest to us, namely, that of subspace matrices $J_{\phi}^{n,m,\ell}$
for arbitrary $m,\ell$.

We start by studying the subspace matrices $J_{k}^{n,m,k}$, which
play a particularly important role in our analysis. The following
lemma investigates their rank.
\begin{lem}
\label{lem:J-rank}Let $n,m,k$ be nonnegative integers with $m\geq k\geq0$
and $n\geq m+k$. Then 
\begin{equation}
\rk J_{k}^{n,m,k}=\binom{n}{k}_{q}.\label{eq:rk-bound-J-mkk}
\end{equation}
\end{lem}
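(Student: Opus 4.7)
The matrix $J_k^{n,m,k}$ is, up to a natural identification, the $q$-analogue of the inclusion matrix between $k$-subspaces and $m$-subspaces: since $\dim(A \cap B) = k = \dim B$ is equivalent to $B \subseteq A$, its $(A,B)$ entry equals $1$ when $B \subseteq A$ and $0$ otherwise. The upper bound $\rk J_k^{n,m,k} \leq \binom{n}{k}_q$ is immediate from \prettyref{fact:qbinprop}, since $\binom{n}{k}_q$ is the number of columns. My plan for the matching lower bound is to produce an explicit left inverse: for each $k$-subspace $B_0$, I want a function $\lambda_{B_0}$ on $m$-subspaces such that $\sum_{A \supseteq B} \lambda_{B_0}(A) = \delta_{B,B_0}$ for every $k$-subspace $B$.

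I would seek such a $\lambda_{B_0}$ among functions invariant under the stabilizer of $B_0$ in $\GL_n(\mathbb{F}_q)$, which acts transitively on the sets $\{A : \dim A = m,\ \dim(A \cap B_0) = t\}$ and $\{B : \dim B = k,\ \dim(B \cap B_0) = j\}$ for each $t$ and $j$ (a standard Witt-extension argument). So I would take the ansatz $\lambda_{B_0}(A) = g(\dim(A \cap B_0))$ for some $g \colon \{0,1,\ldots,k\} \to \mathbb{R}$; the hypothesis $n \geq m + k$ guarantees via \prettyref{prop:possible-intersections-and-sums} that $\dim(A \cap B_0)$ does indeed range over all of $\{0,1,\ldots,k\}$ as $A$ varies. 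Under this ansatz, $\sum_{A \supseteq B} \lambda_{B_0}(A) = \sum_{t=0}^k g(t) N(j,t)$ depends only on $j = \dim(B \cap B_0)$, where $N(j,t)$ is the number of $m$-subspaces $A$ containing $B$ with $\dim(A \cap B_0) = t$. The problem thus reduces to solving the $(k+1) \times (k+1)$ linear system $\sum_t g(t) N(j,t) = \delta_{j,k}$ for $j = 0,1,\ldots,k$.

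To close the argument, I would invoke \prettyref{lem:external-subspaces-gen} (with its $A$, $B$, $X$ playing the roles of our $B_0$, $B$, $A$) to get
\[
N(j,t) = q^{(k-t)(m-k-t+j)} \binom{n - 2k + j}{m - k - t + j}_q \binom{k - j}{t - j}_q.
\]
The last factor vanishes whenever $t < j$, so the matrix $[N(j,t)]_{j,t = 0}^k$ is upper triangular in the natural ordering, with diagonal entries $N(j,j) = q^{(k-j)(m-k)} \binom{n - 2k + j}{m - k}_q$. The hypotheses $m \geq k$ and $n \geq m + k$ imply $n - 2k + j \geq m - k \geq 0$, so each diagonal entry is a positive integer. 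Hence the matrix is invertible, a unique $g$ exists, and $\lambda_{B_0}$ is thereby constructed, establishing linear independence of the columns. The main conceptual step I expect to require care is the symmetry reduction to the $(k+1)$-dimensional system; once that is in place, the explicit count from \prettyref{lem:external-subspaces-gen} and the triangular structure it produces make the positivity check essentially automatic.
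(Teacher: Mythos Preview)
Your proposal is correct and follows essentially the same approach as the paper: both construct a left inverse by positing weights on $m$-subspaces that depend only on $\dim(A\cap B_0)$, invoke \prettyref{lem:external-subspaces-gen} to reduce to a $(k+1)\times(k+1)$ linear system, and observe that the coefficient matrix is upper triangular with nonzero diagonal entries $q^{(k-j)(m-k)}\binom{n-2k+j}{m-k}_q$. The only cosmetic difference is that you motivate the ansatz via the stabilizer of $B_0$, whereas the paper writes it down directly.
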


\begin{proof}
In the degenerate case $n=0,$ the matrix $J_{k}^{n,m,k}=J_{0}^{0,0,0}=\begin{bmatrix}1\end{bmatrix}$
clearly has rank $\binom{0}{0}_{q}=1.$ In what follows, we treat
the case $n\geq1.$ Here, we will exhibit reals $z_{0},z_{1},\ldots,z_{k}$
such that for all $k$-dimensional subspaces $A,B\subseteq\mathbb{F}_{q}^{n}$,
\begin{equation}
\sum_{X\subseteq\mathbb{F}_{q}^{n}:\;\dim X=m}z_{\dim(A\cap X)}(J_{k}^{n,m,k})_{X,B}=\delta_{\dim(A\cap B),k}.\label{eq:Jz}
\end{equation}
Put differently, this means that every vector of the standard basis
$e_{1},e_{2},\ldots$ can be obtained as a linear combination of the
rows of $J_{k}^{n,m,k},$ immediately implying~\eqref{eq:rk-bound-J-mkk}.
Rewriting~\eqref{eq:Jz},
\begin{equation}
\sum_{i=0}^{k}z_{i}\sum_{\substack{X\subseteq\mathbb{F}_{q}^{n}:\;\dim X=m,\\
\dim(A\cap X)=i
}
}(J_{k}^{n,m,k})_{X,B}=\delta_{\dim(A\cap B),k}\qquad\qquad\forall A,B.\label{eq:Jz-1}
\end{equation}
The inner summation equals the number of $m$-dimensional subspaces
$X$ with $B\subseteq X\subseteq\mathbb{F}_{q}^{n}$ and $\dim(A\cap X)=i.$
Applying Lemma~\ref{lem:external-subspaces-gen}, we find that~\eqref{eq:Jz-1}
is equivalent to
\begin{equation}
\sum_{i=0}^{k}z_{i}q^{(k-i)(m-i-k+r)}\binom{n-2k+r}{m-i-k+r}_{q}\binom{k-r}{i-r}_{q}=\delta_{r,k},\qquad\qquad r=0,1,\ldots,k,\label{eq:Jz-2}
\end{equation}
where $r$ corresponds to $\dim(A\cap B)$ in~\eqref{eq:Jz-1}. Write~\eqref{eq:Jz-2}
in matrix form as 
\begin{equation}
Mz=\begin{bmatrix}0 & 0 & \cdots & 0 & 1\end{bmatrix}\tr,\label{eq:Mz}
\end{equation}
where $M=[M_{r,i}]$ is the real matrix of order $k+1$ given by
\[
M_{r,i}=q^{(k-i)(m-i-k+r)}\binom{n-2k+r}{m-i-k+r}_{q}\binom{k-r}{i-r}_{q}
\]
for $r,i\in\{0,1,\ldots,k\}.$ All entries of $M$ below the diagonal
are zero because $\binom{k-r}{i-r}_{q}=0$ for $r>i.$ The diagonal
entries, on the other hand, are
\[
M_{r,r}=q^{(k-r)(m-k)}\binom{n-2k+r}{m-k}_{q},
\]
which is nonzero because $n-2k+r\geq m-k$ by the hypothesis that
$n\geq m+k$. This makes $M$ an upper triangular matrix with nonzero
entries on the diagonal. Then $M$ is invertible, and a solution $z$
to~\eqref{eq:Mz} is guaranteed to exist. 
\end{proof}
We will recover the $k$-th eigenspace of $J_{\phi}^{n,m,m}$ as the
image of $\ker J_{k-1}^{n,k-1,k}$ under the linear map $J_{k}^{n,m,k}$.
The first step is to understand how the map $J_{i}^{n,m,k}$ acts
on $\ker J_{k-1}^{n,k-1,k}$ for different values of $i$.
\begin{lem}
\label{lem:increase-overlap}Let $n\geq m\geq k$ be positive integers.
Then for all $i=0,1,\ldots,k-1$ and $x\in\ker J_{k-1}^{n,k-1,k},$
\begin{equation}
J_{i}^{n,m,k}x=-q^{k-i-1}\cdot\frac{q^{i+1}-1}{q^{k-i}-1}J_{i+1}^{n,m,k}x.\label{eq:Jii-plus-1}
\end{equation}
\end{lem}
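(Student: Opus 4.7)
The plan is to reformulate the kernel hypothesis combinatorially and then exploit it via a double-counting argument. First, I would note that since $\dim C = k-1$ and $\dim B = k$, the intersection $C\cap B$ has dimension $k-1$ precisely when $C\subseteq B$. Consequently, the condition $x\in\ker J_{k-1}^{n,k-1,k}$ is equivalent to the statement
\[
\sum_{\substack{B\supseteq C\\ \dim B = k}} x_B = 0 \qquad \text{for every $(k-1)$-dimensional subspace $C\subseteq\mathbb{F}_q^n$.}
\]

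Next, I would fix an $m$-dimensional subspace $A$ and sum this identity over all $(k-1)$-dimensional subspaces $C$ satisfying $\dim(C\cap A)=i$. Swapping the order of summation gives
\[
\sum_{\dim B = k} x_B \cdot N(A,B) \;=\; 0, \qquad \text{where}\quad N(A,B) = \bigl|\{\,C\subseteq B : \dim C = k-1,\ \dim(C\cap A)=i\,\}\bigr|.
\]
Since $C\subseteq B$, we have $C\cap A = C\cap(A\cap B)$, so $N(A,B)$ counts $(k-1)$-dimensional subspaces of $B$ whose intersection with the fixed subspace $W := A\cap B\subseteq B$ has dimension $i$. Writing $j=\dim(A\cap B)$ and applying Corollary~\ref{cor:internal-subspaces-gen} with $S=B$, $S'=W$, $d=k-1$, $t=i$, I would obtain
\[
N(A,B) \;=\; q^{(j-i)(k-1-i)}\binom{k-j}{k-1-i}_q\binom{j}{i}_q.
\]

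The key observation is that the Gaussian binomial factors force $j\in\{i,i+1\}$: the factor $\binom{j}{i}_q$ vanishes for $j<i$, and $\binom{k-j}{k-1-i}_q$ vanishes for $j>i+1$ (recalling that $j\le\min\{m,k\}=k$). Direct evaluation yields
\[
N(A,B) \;=\; \begin{cases} \dfrac{q^{k-i}-1}{q-1} & \text{if }j=i,\\[2pt] q^{k-i-1}\cdot\dfrac{q^{i+1}-1}{q-1} & \text{if }j=i+1,\\[2pt] 0 & \text{otherwise.}\end{cases}
\]
Substituting into the summed identity gives
\[
\frac{q^{k-i}-1}{q-1}\,(J_i^{n,m,k}x)_A \;+\; q^{k-i-1}\cdot\frac{q^{i+1}-1}{q-1}\,(J_{i+1}^{n,m,k}x)_A \;=\; 0,
\]
which rearranges to \eqref{eq:Jii-plus-1}. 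Since this holds entrywise for every $m$-dimensional $A$, it holds as a vector identity.

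I do not anticipate a serious obstacle: the only delicate point is ensuring the two-term support $j\in\{i,i+1\}$ coming from Corollary~\ref{cor:internal-subspaces-gen}, which is what makes the argument work so cleanly. The constants in the statement match because $q^{k-i-1}(q^{i+1}-1)/(q^{k-i}-1)$ is exactly the ratio of the two coefficients produced by $N(A,B)$.
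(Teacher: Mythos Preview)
Your proof is correct and is essentially the same argument as the paper's: the paper phrases your double-counting step as the matrix identity $J_i^{n,m,k-1}J_{k-1}^{n,k-1,k}=\sum_r q^{(r-i)(k-1-i)}\binom{k-r}{k-1-i}_q\binom{r}{i}_q J_r^{n,m,k}$, applies the same Corollary~\ref{cor:internal-subspaces-gen} to compute the coefficients, makes the same observation that only $r\in\{i,i+1\}$ survive, and then applies both sides to $x\in\ker J_{k-1}^{n,k-1,k}$. Your entrywise formulation is just the unfolded version of this matrix product.
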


\begin{proof}
Consider the matrix product $M=J_{i}^{n,m,k-1}J_{k-1}^{n,k-1,k}$.
Let us compute the generic entry $M_{A,B},$ where $A,B\subseteq\mathbb{F}_{q}^{n}$
are subspaces of dimension $m$ and $k$, respectively. By definition,
$M_{A,B}$ is the number of $(k-1)$-dimensional subspaces $X\subseteq B$
such that $\dim(A\cap X)=i$. Invoking Corollary~\ref{cor:internal-subspaces-gen}
with $S=B$ and $S'=A\cap B$, we obtain
\[
(J_{i}^{n,m,k-1}J_{k-1}^{n,k-1,k})_{A,B}=q^{(r-i)(k-1-i)}\binom{k-r}{k-1-i}_{q}\binom{r}{i}_{q},
\]
where $r=\dim(A\cap B).$ Rewriting this equation in matrix form,
\[
J_{i}^{n,m,k-1}J_{k-1}^{n,k-1,k}=\sum_{r=0}^{k}q^{(r-i)(k-1-i)}\binom{k-r}{k-1-i}_{q}\binom{r}{i}_{q}J_{r}^{n,m,k}.
\]
In this equation, the product of the $q$-binomial coefficients vanishes
whenever $r>i+1$ or $r<i.$ Therefore, the above summation contains
only two nonzero terms, namely,
\[
J_{i}^{n,m,k-1}J_{k-1}^{n,k-1,k}=\sum_{r\in\{i,i+1\}}q^{(r-i)(k-1-i)}\binom{k-r}{k-1-i}_{q}\binom{r}{i}_{q}J_{r}^{n,m,k}.
\]
Simplifying, 
\[
J_{i}^{n,m,k-1}J_{k-1}^{n,k-1,k}=\frac{q^{k-i}-1}{q-1}J_{i}^{n,m,k}+q^{k-1-i}\cdot\frac{q^{i+1}-1}{q-1}J_{i+1}^{n,m,k}.
\]
Applying this matrix equation to a vector $x\in\ker J_{k-1}^{n,k-1,k}$
gives
\[
0=\frac{q^{k-i}-1}{q-1}J_{i}^{n,m,k}x+q^{k-1-i}\cdot\frac{q^{i+1}-1}{q-1}J_{i+1}^{n,m,k}x,
\]
which directly implies~\eqref{eq:Jii-plus-1}.
\end{proof}
\begin{cor}
\label{cor:increase-overlap}Let $n\geq m\geq k$ be positive integers.
Then for all $r=0,1,\ldots,k$ and $x\in\ker J_{k-1}^{n,k-1,k},$
\begin{equation}
J_{r}^{n,m,k}x=(-1)^{k-r}q^{\binom{k-r}{2}}\binom{k}{r}_{q}J_{k}^{n,m,k}x.\label{eq:base-case-r-to-m}
\end{equation}
\end{cor}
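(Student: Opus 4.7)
The plan is to prove the corollary by a straightforward downward induction on $r$, with \prettyref{lem:increase-overlap} providing the inductive step. The base case $r = k$ is trivial, since $(-1)^0 q^0 \binom{k}{k}_q = 1$, making both sides equal to $J_k^{n,m,k} x$. For the inductive step, suppose the identity holds at some $r+1 \le k$; applying \prettyref{lem:increase-overlap} with $i = r$ yields
\[
J_r^{n,m,k}x = -q^{k-r-1}\cdot\frac{q^{r+1}-1}{q^{k-r}-1}\,J_{r+1}^{n,m,k}x,
\]
and substituting the inductive hypothesis on the right reduces the claim to a purely algebraic identity on scalars, which I would then verify.

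Equivalently, one can simply iterate the recurrence of \prettyref{lem:increase-overlap} all the way from $i = r$ up to $i = k-1$, obtaining
\[
J_r^{n,m,k}x = \left[\prod_{i=r}^{k-1}\left(-q^{k-i-1}\cdot\frac{q^{i+1}-1}{q^{k-i}-1}\right)\right] J_k^{n,m,k}x.
\]
It remains to simplify the scalar factor. The signs contribute $(-1)^{k-r}$. The powers of $q$ contribute $q^{\sum_{i=r}^{k-1}(k-i-1)} = q^{\binom{k-r}{2}}$ after the change of index $j = k-i-1$. The numerator of the rational part is $\prod_{i=r}^{k-1}(q^{i+1}-1) = \prod_{j=r+1}^{k}(q^j-1)$, and the denominator is $\prod_{i=r}^{k-1}(q^{k-i}-1) = \prod_{j=1}^{k-r}(q^j-1)$. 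Comparing with the defining formula \prettyref{eq:gbinom-def2}, this ratio is precisely $\binom{k}{k-r}_q = \binom{k}{r}_q$, where the last equality uses the symmetry \prettyref{eq:gbinom-symmetry}.

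There is no substantive obstacle here; the corollary is essentially a telescoping calculation built on top of the lemma. The only care needed is bookkeeping: matching the index shifts in the numerator and denominator products with the appropriate form of the Gaussian binomial coefficient and invoking its symmetry. I would present the iterated-product version rather than the inductive version, since it makes the emergence of $q^{\binom{k-r}{2}}$ and $\binom{k}{r}_q$ more transparent.
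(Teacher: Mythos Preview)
Your proposal is correct and matches the paper's approach: the paper proves the corollary by induction on $k-r$, using \prettyref{lem:increase-overlap} at each step, which is exactly your first plan (your iterated-product version is simply the unrolled form of the same induction). Your scalar bookkeeping is accurate.
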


\begin{proof}
The proof is by induction on $k-r$ for fixed integers $n,m,k.$ For
the base case $r=k,$ the equality in~\eqref{eq:base-case-r-to-m}
is trivial. For the inductive step with $k-r>0,$ we have
\begin{align*}
J_{r}^{n,m,k}x & =-q^{k-r-1}\cdot\frac{q^{r+1}-1}{q^{k-r}-1}J_{r+1}^{n,m,k}x\\
 & =-q^{k-r-1}\cdot\frac{q^{r+1}-1}{q^{k-r}-1}\cdot(-1)^{k-r-1}q^{\binom{k-r-1}{2}}\binom{k}{r+1}_{q}J_{k}^{n,m,k}x\\
 & =(-1)^{k-r}q^{\binom{k-r}{2}}\binom{k}{r}_{q}J_{k}^{n,m,k}x,
\end{align*}
where the first step uses Lemma~\ref{lem:increase-overlap}, and
the second step applies the inductive hypothesis.
\end{proof}
Let $A,B\subseteq\mathbb{F}_{q}^{n}$ be arbitrary subspaces of dimension
$m$ and $\ell,$ respectively. Recall from \prettyref{fact:Aperp-intersect-Bperp}
that for fixed $n,m,\ell$, the dimension of $A\cap B$ is uniquely
determined by the dimension of $A^{\perp}\cap B^{\perp}$. This makes
the subspace matrix $J_{\phi}^{n,m,\ell}$ identical, up to a permutation
of the rows and columns, to the subspace matrix $J_{\phi'}^{n,n-m,n-\ell}$
for an appropriate function $\phi'$. We record this fact as our next
lemma. Its role in our work will be to simplify the calculation of
the singular values of $J_{\phi}^{n,m.\ell}$ and the eigenvalues
of $J_{\phi}^{n,m,m}$ by reducing the general case to the case $m+\ell\leq n$
and $m\leq n/2$, respectively.
\begin{lem}
\label{lem:J-orthog-complement}Let $n,m,\ell$ be nonnegative integers
with $\max\{m,\ell\}\leq n.$ Let $\phi\colon\mathbb{Z}\to\Re$ be
given. Then:
\end{lem}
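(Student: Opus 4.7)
The plan is to prove this via the orthogonal-complementation bijection on subspaces, which the paragraph preceding the lemma already signals. Orthogonal complementation $A \mapsto A^\perp$ is an involution on the subspaces of $\mathbb{F}_q^n$ that maps $\Scal(\mathbb{F}_q^n, m)$ bijectively onto $\Scal(\mathbb{F}_q^n, n-m)$. Together with the analogous bijection on $\ell$-dimensional subspaces, this induces a pair of permutation matrices $P$ and $Q$ (of orders $\binom{n}{m}_q$ and $\binom{n}{\ell}_q$, respectively) such that left-multiplication by $P$ and right-multiplication by $Q\tr$ reindexes rows of subspace matrices by $A \mapsto A^\perp$ and columns by $B \mapsto B^\perp$.

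The key identity is \prettyref{fact:Aperp-intersect-Bperp}\eqref{eq:AperpBperp-AplusB}, which gives
\[
\dim(A \cap B) \;=\; m + \ell - n + \dim(A^\perp \cap B^\perp)
\]
for every $A \in \Scal(\mathbb{F}_q^n, m)$ and $B \in \Scal(\mathbb{F}_q^n, \ell)$. Consequently, for every integer $r$, the reindexed matrix $P J_r^{n,n-m,n-\ell} Q\tr$ has a $1$ in position $(A, B)$ precisely when $\dim(A^\perp \cap B^\perp) = r$, which is to say when $\dim(A \cap B) = r + m + \ell - n$. In other words,
\[
J_{r+m+\ell-n}^{n,m,\ell} \;=\; P J_r^{n,n-m,n-\ell} Q\tr.
\]

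Summing this over $r$ with appropriate coefficients, I would conclude that $J_\phi^{n,m,\ell} = P J_{\phi'}^{n,n-m,n-\ell} Q\tr$, where $\phi'(s) = \phi(s + m + \ell - n)$. From this matrix identity the expected items of the lemma follow at once: since $P$ and $Q$ are orthogonal, $J_\phi^{n,m,\ell}$ and $J_{\phi'}^{n,n-m,n-\ell}$ share the same singular values (and in particular the same spectral norm, trace norm, and Frobenius norm); and in the symmetric case $m = \ell$ one has $P = Q$, so the two matrices are orthogonally similar and share the same eigenvalues as well. The range of $r$ being summed over changes from $\{0,\ldots,\min\{m,\ell\}\}$ to $\{0,\ldots,\min\{n-m,n-\ell\}\}$, but this is consistent because $J_{r+m+\ell-n}^{n,m,\ell} = 0$ whenever $r + m + \ell - n \notin \{\max\{0,m+\ell-n\}, \ldots, \min\{m,\ell\}\}$.

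There is no real obstacle here — the statement is essentially a bookkeeping consequence of \prettyref{fact:Aperp-intersect-Bperp} together with the definition of subspace matrices. The only care needed is to ensure that the argument shift $\phi \mapsto \phi'$ is written with the correct sign and that indices outside the meaningful range $[\max\{0,m+\ell-n\}, \min\{m,\ell\}]$ are handled consistently. The payoff, as discussed in the surrounding text, is that whenever $m + \ell > n$ one may pass from $J_\phi^{n,m,\ell}$ to $J_{\phi'}^{n,n-m,n-\ell}$ with $n - m + n - \ell = 2n - m - \ell \le n$, so subsequent spectral analyses can assume without loss of generality that $m + \ell \le n$ (and, in the symmetric case, that $m \le n/2$).
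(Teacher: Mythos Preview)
Your proposal is correct and follows essentially the same approach as the paper: both use the orthogonal-complementation bijection $A\mapsto A^{\perp}$ together with \prettyref{fact:Aperp-intersect-Bperp} to relate $\dim(A\cap B)$ and $\dim(A^{\perp}\cap B^{\perp})$, yielding the identity $J_{\phi}^{n,m,\ell}=PJ_{\phi'}^{n,n-m,n-\ell}Q$ for permutation matrices $P,Q$ (the paper works directly at the level of the entries $(J_{\phi}^{n,m,\ell})_{A,B}$ rather than summing over the building blocks $J_{r}$, but this is only a cosmetic difference). Your handling of part~(ii) via $P=Q$ in the symmetric case also matches the paper.
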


\begin{enumerate}
\item \label{enu:m-ell-reorder}$J_{\phi}^{n,m,\ell}=PJ_{\phi'}^{n,n-m,n-\ell}Q$,
where $P,Q$ are permutation matrices and $\phi'\colon\mathbb{Z}\to\Re$
is defined by $\phi'(t)=\phi(t+m+\ell-n);$
\item \label{enu:m-m-reorder}$J_{\phi}^{n,m,m}=PJ_{\phi''}^{n,n-m,n-m}P^{-1},$
where $P$ is a permutation matrix and $\phi''\colon\mathbb{Z}\to\Re$
is defined by $\phi''(t)=\phi(t+2m-n).$
\end{enumerate}
\begin{proof}
Recall that for any $d\in\{0,1,\ldots,n\},$ the map $S\mapsto S^{\perp}$
is a bijection between the subspaces of $\mathbb{F}_{q}^{n}$ of dimension
$d$ and those of dimension $n-d.$ For subspaces $A,B\subseteq\mathbb{F}_{q}^{n}$
of dimension $m$ and $\ell,$ respectively, we have
\begin{align*}
(J_{\phi}^{n,m,\ell})_{A,B} & =\phi(\dim(A\cap B))\\
 & =\phi(\dim(A^{\perp}\cap B^{\perp})+m+\ell-n)\\
 & =\phi'(\dim(A^{\perp}\cap B^{\perp}))\\
 & =(J_{\phi'}^{n,n-m,n-\ell})_{A^{\perp},B^{\perp}},
\end{align*}
where the second step uses Fact~\ref{fact:Aperp-intersect-Bperp}.
Rewriting this conclusion in matrix form,
\[
J_{\phi}^{n,m,\ell}=[(J_{\phi'}^{n,n-m,n-\ell})_{A^{\perp},B^{\perp}}]_{A,B},
\]
where $A,B$ range over all subspaces of dimension $m$ and $\ell,$
respectively. The matrix on the right-hand side is clearly $J_{\phi'}^{n,n-m,n-\ell},$
up to a reordering of the rows and columns. This settles~\ref{enu:m-ell-reorder}.

An argument analogous to the above yields 
\[
J_{\phi}^{n,m,m}=[(J_{\phi''}^{n,n-m,n-m})_{A^{\perp},B^{\perp}}]_{A,B},
\]
where $A,B$ range over all subspaces of dimension $m.$ The matrix
on the right-hand side is the result of permuting the rows and columns
of $J_{\phi''}^{n,n-m,n-m}$ according to the same permutation, which
is another way of phrasing~\ref{enu:m-m-reorder}.
\end{proof}

\subsection{Eigenvalues and eigenvectors of subspace matrices\label{subsec:Eigenvalues-and-eigenvectors}}

Our description of the spectrum of each $J_{\phi}^{n,m,\ell}$ is
in terms of a function which we now define.
\begin{defn}
\label{def:Lambda}For nonnegative integers $n,m,\ell,r,k$ with $\max\{m,\ell\}\leq n$
and $k\leq\min\{m,\ell\},$ define
\[
\Lambda_{r}^{n,m,\ell}(k)=\sum_{i=0}^{k}(-1)^{i}\binom{k}{i}_{q}q^{\binom{i}{2}+(m-r)(\ell-r-i)}\binom{n-m-i}{\ell-r-i}_{q}\binom{m-k+i}{r-k+i}_{q}.
\]
More generally, for any $\phi\colon\mathbb{Z}\to\Re,$ define
\[
\Lambda_{\phi}^{n,m,\ell}(k)=\sum_{r=0}^{\min\{m,\ell\}}\phi(r)\Lambda_{r}^{n,m,\ell}(k).
\]
\end{defn}

As part of our analysis of the eigenvalues of $J_{\phi}^{n,m,m},$
we will determine its eigenspaces and show that they are pairwise
orthogonal. The orthogonality will follow from the pairwise distinctness
of the corresponding eigenvalues, with the following lemma playing
a crucial role.
\begin{lem}
\label{lem:lambdas-distinct}Let $n,m$ be nonnegative integers with
$m\leq n/2$. Then the numbers $\Lambda_{0}^{n,m,m}(k)$ for $k=0,1,\ldots,m$
are pairwise distinct.
\end{lem}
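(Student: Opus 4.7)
The plan is to obtain a clean closed form for $\Lambda_0^{n,m,m}(k)$ and then show that the absolute values are strictly monotone in $k$, which immediately implies pairwise distinctness. First, I would inspect the defining sum
\[
\Lambda_0^{n,m,m}(k) = \sum_{i=0}^{k}(-1)^{i}\binom{k}{i}_{q}q^{\binom{i}{2}+m(m-i)}\binom{n-m-i}{m-i}_{q}\binom{m-k+i}{i-k}_{q}
\]
and notice that the Gaussian binomial $\binom{m-k+i}{i-k}_{q}$ vanishes unless $i\geq k$, while the summation range forces $i\leq k$. Hence only the term $i=k$ survives, giving
\[
\Lambda_{0}^{n,m,m}(k) = (-1)^{k}q^{\binom{k}{2}+m(m-k)}\binom{n-m-k}{m-k}_{q}.
\]

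Next, I would change variables to $s=m-k$ and write $d=n-2m$, which is nonnegative by the hypothesis $m\leq n/2$. A short arithmetic check shows that
\[
\binom{k}{2}+m(m-k) \;=\; \binom{m}{2}+\binom{s+1}{2},
\]
so that $|\Lambda_{0}^{n,m,m}(k)| = q^{\binom{m}{2}}\cdot a_{s}$, where $a_{s}=q^{\binom{s+1}{2}}\binom{d+s}{s}_{q}$ for $s=0,1,\ldots,m$.

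The final step is to prove that $a_{0}<a_{1}<\cdots<a_{m}$. Using the standard identity $\binom{d+s+1}{s+1}_{q}=\binom{d+s}{s}_{q}\cdot\frac{q^{d+s+1}-1}{q^{s+1}-1}$, one computes
\[
\frac{a_{s+1}}{a_{s}} \;=\; q^{s+1}\cdot\frac{q^{d+s+1}-1}{q^{s+1}-1},
\]
which exceeds $1$ because $d\geq 0$ makes the fraction at least $1$ and the prefactor satisfies $q^{s+1}\geq 2$. Hence the $a_{s}$ are strictly increasing, the absolute values $|\Lambda_0^{n,m,m}(k)|$ are pairwise distinct, and therefore so are the $\Lambda_0^{n,m,m}(k)$ themselves.

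I do not anticipate any real obstacle here: once the sum collapses to a single term (step one), the remainder is a direct monotonicity estimate. The only mildly subtle point is verifying the exponent identity $\binom{k}{2}+m(m-k)=\binom{m}{2}+\binom{s+1}{2}$, which I would handle by expanding both sides as polynomials in $m$ and $s$.
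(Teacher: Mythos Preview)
Your proof is correct and follows essentially the same approach as the paper: both collapse the defining sum to the single term $i=k$ and then show that $|\Lambda_0^{n,m,m}(k)|$ is strictly monotone in $k$. The paper argues directly that the exponent $\binom{k}{2}+m(m-k)$ is strictly decreasing and the binomial $\binom{n-m-k}{n-2m}_q$ is nonincreasing, whereas you substitute $s=m-k$ and compute the ratio $a_{s+1}/a_s$; these are minor presentational differences, not a genuinely different route.
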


\begin{proof}
For $r=0$, the $q$-binomial coefficient $\binom{m-k+i}{r-k+i}_{q}$
in Definition~\ref{def:Lambda} vanishes unless $i=k$. As a result,
\begin{align*}
\Lambda_{0}^{n,m,m}(k) & =(-1)^{k}q^{\binom{k}{2}+m(m-k)}\binom{n-m-k}{m-k}_{q}\\
 & =(-1)^{k}q^{\binom{k}{2}+m(m-k)}\binom{n-m-k}{n-2m}_{q}.
\end{align*}
For $k\in\{0,1,\ldots,m\},$ the $q$-binomial coefficient in the
last expression is clearly positive and a nonincreasing function of
$k$, whereas the exponent of $q$ is a strictly decreasing function
of $k.$ It follows that the numbers $|\Lambda_{0}^{n,m,m}(k)|$ for
$k=0,1,\ldots,m$ form a strictly decreasing sequence. 
\end{proof}
As in \cite{knuth01discreet}, we treat the all-ones eigenvector separately.
\begin{prop}
\label{prop:J-normalization}Let $n,m,\ell,r$ be nonnegative integers
with $\max\{m,\ell\}\leq n$. Then 
\begin{align}
J_{r}^{n,m,\ell}\,\1 & =q^{(m-r)(\ell-r)}\binom{n-m}{\ell-r}_{q}\binom{m}{r}_{q}\1\label{eq:1-eigenvector}\\
 & =\Lambda_{r}^{n,m,\ell}(0)\,\1,\label{eq:eq:1-eigenvector-Lambda0}\\
\|J_{r}^{n,m,\ell}\|_{1} & =q^{(m-r)(\ell-r)}\binom{n-m}{\ell-r}_{q}\binom{m}{r}_{q}\binom{n}{m}_{q}.\label{eq:Jnml-r-ell1-norm}
\end{align}
More generally, for $\phi\colon\mathbb{Z}\to\Re,$
\begin{align}
J_{\phi}^{n,m,\ell}\,\1 & =\Lambda_{\phi}^{n,m,\ell}(0)\,\1,\label{eq:1-eigenvector-phi}\\
\|J_{\phi}^{n,m,\ell}\|_{1} & =\sum_{r=0}^{\min\{m,\ell\}}|\phi(r)|\,q^{(m-r)(\ell-r)}\binom{n-m}{\ell-r}_{q}\binom{m}{r}_{q}\binom{n}{m}_{q}.\label{eq:Jnml-phi-ell1-norm}
\end{align}
\end{prop}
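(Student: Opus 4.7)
My plan is to verify the four claimed identities directly, moving from the basic counting formula to the general $\phi$ version by linearity.

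First, I would fix an arbitrary $m$-dimensional subspace $A \subseteq \mathbb{F}_q^n$ and compute the $A$-th entry of $J_r^{n,m,\ell}\mathbf{1}$. By Definition~\ref{def:Jr}, this entry equals the number of $\ell$-dimensional subspaces $B \subseteq \mathbb{F}_q^n$ with $\dim(A\cap B)=r$. Applying Corollary~\ref{cor:internal-subspaces-gen} with ambient space $S=\mathbb{F}_q^n$, distinguished subspace $S'=A$, target dimension $d=\ell$, and intersection dimension $t=r$ gives the count
\[
q^{(m-r)(\ell-r)}\binom{n-m}{\ell-r}_q\binom{m}{r}_q.
\]
Since this count does not depend on the choice of $A$, we conclude that $J_r^{n,m,\ell}\mathbf{1}$ is a scalar multiple of $\mathbf{1}$, which establishes~\eqref{eq:1-eigenvector}. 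Equation~\eqref{eq:eq:1-eigenvector-Lambda0} is then a direct inspection of Definition~\ref{def:Lambda}: when $k=0$, the only nonzero term in the sum defining $\Lambda_r^{n,m,\ell}(0)$ is the one with $i=0$ (the $q$-binomial $\binom{k}{i}_q$ forces $i\le k=0$), producing exactly the expression in~\eqref{eq:1-eigenvector}.

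For~\eqref{eq:Jnml-r-ell1-norm}, I would note that $J_r^{n,m,\ell}$ is a $0/1$ matrix, so its $\ell_1$ norm equals the number of ordered pairs $(A,B)$ of subspaces of the appropriate dimensions with $\dim(A\cap B)=r$. Equivalently, $\|J_r^{n,m,\ell}\|_1 = \mathbf{1}^{\mathsf T} J_r^{n,m,\ell}\mathbf{1}$; combined with~\eqref{eq:1-eigenvector} and the fact that there are exactly $\binom{n}{m}_q$ dimension-$m$ subspaces of $\mathbb{F}_q^n$ (Fact~\ref{fact:qbinprop}), this yields the claimed formula.

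Finally, the two statements involving a general function $\phi$ follow immediately by linearity from Definition~\ref{def:J-phi}. For~\eqref{eq:1-eigenvector-phi}, linearity of matrix-vector multiplication and~\eqref{eq:eq:1-eigenvector-Lambda0} give
\[
J_\phi^{n,m,\ell}\mathbf{1} = \sum_{r=0}^{\min\{m,\ell\}}\phi(r)J_r^{n,m,\ell}\mathbf{1} = \sum_{r=0}^{\min\{m,\ell\}}\phi(r)\Lambda_r^{n,m,\ell}(0)\mathbf{1} = \Lambda_\phi^{n,m,\ell}(0)\mathbf{1}.
\]
For~\eqref{eq:Jnml-phi-ell1-norm}, the key observation is that for any fixed pair $(A,B)$ the value $\dim(A\cap B)$ is a single integer, so the matrices $J_0^{n,m,\ell}, J_1^{n,m,\ell},\dots$ have pairwise disjoint supports. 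Hence $|(J_\phi^{n,m,\ell})_{A,B}|=|\phi(\dim(A\cap B))|$, and summing over all $(A,B)$ while grouping by the value of $\dim(A\cap B)$ gives $\|J_\phi^{n,m,\ell}\|_1 = \sum_r |\phi(r)|\cdot\|J_r^{n,m,\ell}\|_1$, which combined with~\eqref{eq:Jnml-r-ell1-norm} yields~\eqref{eq:Jnml-phi-ell1-norm}. No step is genuinely difficult here: the proposition is essentially a bookkeeping consequence of Corollary~\ref{cor:internal-subspaces-gen} together with the observation that $k=0$ collapses the sum defining $\Lambda_r^{n,m,\ell}$ to a single term.
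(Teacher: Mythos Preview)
Your proposal is correct and follows essentially the same approach as the paper: apply Corollary~\ref{cor:internal-subspaces-gen} with $S=\mathbb{F}_q^n$ and $S'=A$ to get~\eqref{eq:1-eigenvector}, read off~\eqref{eq:eq:1-eigenvector-Lambda0} from the $k=0$ case of Definition~\ref{def:Lambda}, multiply by the number $\binom{n}{m}_q$ of rows for~\eqref{eq:Jnml-r-ell1-norm}, and then invoke linearity and the disjoint-support observation for the $\phi$ versions.
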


\begin{proof}
Let $A\subseteq\mathbb{F}_{q}^{n}$ be a subspace of dimension $m.$
By definition, $(J_{r}^{n,m,\ell}\,\1)_{A}$ is the number of $\ell$-dimensional
subspaces $X\subseteq\mathbb{F}_{q}^{n}$ with $\dim(A\cap X)=r.$
Taking $S'=A$ and $S=\mathbb{F}_{q}^{n}$ in Corollary~\ref{cor:internal-subspaces-gen},
we obtain
\[
(J_{r}^{n,m,\ell}\,\1)_{A}=q^{(m-r)(\ell-r)}\binom{n-m}{\ell-r}_{q}\binom{m}{r}_{q}.
\]
This settles~\eqref{eq:1-eigenvector}, which in turn implies~\eqref{eq:eq:1-eigenvector-Lambda0}
by Definition~\ref{def:Lambda}. Since there are exactly $\binom{n}{m}_{q}$
subspaces $A\subseteq\mathbb{F}_{q}^{n}$ of dimension $m,$ equation~\eqref{eq:Jnml-r-ell1-norm}
is immediate from~\eqref{eq:1-eigenvector}. Equation~\eqref{eq:1-eigenvector-phi}
follows by linearity from~\eqref{eq:J-phi-def} and~\eqref{eq:eq:1-eigenvector-Lambda0}.
Analogously, \eqref{eq:Jnml-phi-ell1-norm} follows from~\eqref{eq:J-phi-def}
and~\eqref{eq:Jnml-r-ell1-norm} since the matrices $J_{r}^{n,m,\ell}$
for $r\geq0$ have disjoint support.
\end{proof}
The following lemma is the cornerstone of our analysis of the spectrum
of subspace matrices $J_{\phi}^{n,m,\ell}$. It generalizes Knuth's
work from sets to subspaces ($m=\ell$) and further to the asymmetric
case of interest to us ($m\ne\ell$).
\begin{lem}
\label{lem:eigenvectors}Let $n,m,\ell$ be positive integers with
$n\geq m+\ell.$ Let $k\in\{1,2,\ldots,\min\{m,\ell\}\}$ and $x\in\ker J_{k-1}^{n,k-1,k}$
be given. Then for all integers $t\geq0,$ 
\begin{equation}
J_{t}^{n,m,\ell}J_{k}^{n,\ell,k}x=\Lambda_{t}^{n,m,\ell}(k)J_{k}^{n,m,k}x.\label{eq:eigenvector-t}
\end{equation}
More generally, for all $\phi\colon\mathbb{Z}\to\Re,$
\begin{equation}
J_{\phi}^{n,m,\ell}J_{k}^{n,\ell,k}x=\Lambda_{\phi}^{n,m,\ell}(k)J_{k}^{n,m,k}x.\label{eq:eigenvector-phi}
\end{equation}
\end{lem}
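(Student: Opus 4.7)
The plan is to compute the matrix product $J_t^{n,m,\ell}J_k^{n,\ell,k}$ entry-by-entry, expand it as a linear combination of the matrices $J_r^{n,m,k}$ for $r \in \{0,1,\ldots,k\}$, apply Corollary \ref{cor:increase-overlap} to collapse each $J_r^{n,m,k}x$ to a scalar multiple of $J_k^{n,m,k}x$, and finally match the resulting scalar against the defining sum for $\Lambda_t^{n,m,\ell}(k)$ via a change of summation variable. The generalization to arbitrary $\phi$ is then immediate by linearity of~\eqref{eq:J-phi-def}.

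For the first step, fix subspaces $A \subseteq \mathbb{F}_q^n$ of dimension $m$ and $B \subseteq \mathbb{F}_q^n$ of dimension $k$, and let $r = \dim(A \cap B)$. By definition of matrix multiplication,
\[
(J_t^{n,m,\ell}J_k^{n,\ell,k})_{A,B} = |\{X \subseteq \mathbb{F}_q^n : \dim X = \ell,\ B \subseteq X,\ \dim(A \cap X) = t\}|,
\]
where we have used the fact that a $k$-dimensional subspace $B$ intersects an $\ell$-dimensional subspace $X$ in a $k$-dimensional subspace precisely when $B \subseteq X$. Invoking Lemma~\ref{lem:external-subspaces-gen} with the lemma's parameters $(A,B,d)$ set to $(A,B,\ell)$ (this is legitimate since $n \geq m+\ell$ ensures the ambient space is large enough), this count equals
\[
q^{(m-t)(\ell-t-k+r)} \binom{n-m-k+r}{\ell-t-k+r}_{q} \binom{m-r}{t-r}_{q}.
\]
Since this quantity depends on $A$ and $B$ only through $r = \dim(A\cap B)$, we may write
\[
J_t^{n,m,\ell}J_k^{n,\ell,k} = \sum_{r=0}^{k} q^{(m-t)(\ell-t-k+r)} \binom{n-m-k+r}{\ell-t-k+r}_{q} \binom{m-r}{t-r}_{q}\; J_r^{n,m,k}.
\]

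For the second step, apply this identity to $x \in \ker J_{k-1}^{n,k-1,k}$ and use Corollary~\ref{cor:increase-overlap}, which gives $J_r^{n,m,k}x = (-1)^{k-r}q^{\binom{k-r}{2}}\binom{k}{r}_{q} J_k^{n,m,k}x$, to obtain
\[
J_t^{n,m,\ell}J_k^{n,\ell,k}x = \left(\sum_{r=0}^{k} (-1)^{k-r} q^{\binom{k-r}{2}+(m-t)(\ell-t-k+r)} \binom{n-m-k+r}{\ell-t-k+r}_{q} \binom{m-r}{t-r}_{q} \binom{k}{r}_{q}\right) J_k^{n,m,k}x.
\]
Substituting $i = k-r$ and using $\binom{k}{k-i}_q = \binom{k}{i}_q$ rewrites the parenthesized scalar as
\[
\sum_{i=0}^{k}(-1)^{i}\binom{k}{i}_{q}q^{\binom{i}{2}+(m-t)(\ell-t-i)}\binom{n-m-i}{\ell-t-i}_{q}\binom{m-k+i}{t-k+i}_{q},
\]
which is precisely $\Lambda_{t}^{n,m,\ell}(k)$ by Definition~\ref{def:Lambda}. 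This establishes~\eqref{eq:eigenvector-t}, and multiplying by $\phi(t)$ and summing over $t \in \{0,1,\ldots,\min\{m,\ell\}\}$ yields~\eqref{eq:eigenvector-phi}.

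There is no real obstacle beyond bookkeeping: the only substantive inputs are Lemma~\ref{lem:external-subspaces-gen} to count the $\ell$-dimensional subspaces with the prescribed overlap with $A$ and containment of $B$, and Corollary~\ref{cor:increase-overlap} to collapse the expansion onto a single image vector $J_k^{n,m,k}x$. The change of variable $i = k-r$ is the only step requiring care, since it is what aligns the formula with the definition of $\Lambda_t^{n,m,\ell}(k)$.
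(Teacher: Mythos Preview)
Your proof is correct and follows essentially the same approach as the paper's own proof: you compute the entries of $J_t^{n,m,\ell}J_k^{n,\ell,k}$ via Lemma~\ref{lem:external-subspaces-gen}, expand the product as a linear combination of the $J_r^{n,m,k}$, collapse each term using Corollary~\ref{cor:increase-overlap}, and then make the change of variable $i=k-r$ to match Definition~\ref{def:Lambda}. The only cosmetic difference is that the paper displays the intermediate $\binom{k}{k-r}_q$ step explicitly before reindexing, whereas you fold it into the substitution; your parenthetical about $n\ge m+\ell$ is harmless but not actually needed for the invocation of Lemma~\ref{lem:external-subspaces-gen}.
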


\begin{proof}
Fix an arbitrary integer $t\geq0$ and define $M=J_{t}^{n,m,\ell}J_{k}^{n,\ell,k}.$
Let us compute the generic entry $M_{A,B}$, where $A,B$ are subspaces
of $\mathbb{F}_{q}^{n}$ of dimension $m$ and $k$, respectively.
By definition, $M_{A,B}$ is the number of $\ell$-dimensional subspaces
$X$ such that $\dim(A\cap X)=t$ and $B\subseteq X\subseteq\mathbb{F}_{q}^{n}.$
Lemma~\ref{lem:external-subspaces-gen} implies that 
\[
(J_{t}^{n,m,\ell}J_{k}^{n,\ell,k})_{A,B}=q^{(m-t)(\ell-t-k+r)}\binom{n-m-k+r}{\ell-t-k+r}_{q}\binom{m-r}{t-r}_{q},
\]
where $r=\dim(A\cap B).$ Rewriting this equation in matrix form,
we obtain
\[
J_{t}^{n,m,\ell}J_{k}^{n,\ell,k}=\sum_{r=0}^{k}q^{(m-t)(\ell-t-k+r)}\binom{n-m-k+r}{\ell-t-k+r}_{q}\binom{m-r}{t-r}_{q}J_{r}^{n,m,k}.
\]
Applying this matrix identity to a vector $x\in\ker J_{k-1}^{n,k-1,k}$,
we find
\begin{align*}
J_{t}^{n,m,\ell}J_{k}^{n,\ell,k}x & =\sum_{r=0}^{k}q^{(m-t)(\ell-t-k+r)}\binom{n-m-k+r}{\ell-t-k+r}_{q}\binom{m-r}{t-r}_{q}J_{r}^{n,m,k}x\\
 & =\sum_{r=0}^{k}q^{(m-t)(\ell-t-k+r)}\binom{n-m-k+r}{\ell-t-k+r}_{q}\binom{m-r}{t-r}_{q}\cdot(-1)^{k-r}q^{\binom{k-r}{2}}\binom{k}{r}_{q}J_{k}^{n,m,k}x\\
 & =\sum_{r=0}^{k}q^{(m-t)(\ell-t-k+r)}\binom{n-m-k+r}{\ell-t-k+r}_{q}\binom{m-r}{t-r}_{q}\cdot(-1)^{k-r}q^{\binom{k-r}{2}}\binom{k}{k-r}_{q}J_{k}^{n,m,k}x\\
 & =\sum_{i=0}^{k}q^{(m-t)(\ell-t-i)}\binom{n-m-i}{\ell-t-i}_{q}\binom{m-k+i}{t-k+i}_{q}\cdot(-1)^{i}q^{\binom{i}{2}}\binom{k}{i}_{q}J_{k}^{n,m,k}x\\
 & =\Lambda_{t}^{n,m,\ell}(k)\,J_{k}^{n,m,k}x,
\end{align*}
where the second step uses Corollary~\ref{cor:increase-overlap},
the fourth step is a change of variable, and the last step is immediate
by Definition~\ref{def:Lambda}. This settles~\eqref{eq:eigenvector-t}.
Now for any $\phi\colon\mathbb{Z}\to\Re$,
\[
J_{\phi}^{n,m,\ell}J_{k}^{n,\ell,k}x=\sum_{t=0}^{\min\{m,\ell\}}\phi(t)J_{t}^{n,m,\ell}J_{k}^{n,\ell,k}x=\sum_{t=0}^{\min\{m,\ell\}}\phi(t)\Lambda_{t}^{n,m,\ell}(k)J_{k}^{n,m,k}x=\Lambda_{\phi}^{n,m,\ell}(k)J_{k}^{n,m,k}x,
\]
where the first step uses~\eqref{eq:J-phi-def}, the second step
applies~\eqref{eq:eigenvector-t}, and the last step is valid by
Definition~\ref{def:Lambda}.
\end{proof}
We are now in a position to describe the eigenvalues of every symmetric
subspace matrix.
\begin{thm}[Eigenvalues of $J_{\phi}^{n,m,m}$]
\label{thm:J-eigenvalues-sparse} Let $n\geq m\geq0$ be given integers.
\begin{enumerate}
\item \label{enu:eigen-m-small}If $m\leq n/2,$ then the eigenvalues of
$J_{\phi}^{n,m,m}$ for a given function $\phi\colon\mathbb{Z}\to\Re$
are $\Lambda_{\phi}^{n,m,m}(k)$ for $k=0,1,\ldots,m,$ with corresponding
multiplicities $\binom{n}{k}_{q}-\binom{n}{k-1}_{q}$ for $k=0,1,\ldots,m.$
\item \label{enu:eigen-m-large}If $m\geq n/2,$ then the eigenvalues of
$J_{\phi}^{n,m,m}$ for a given function $\phi\colon\mathbb{Z}\to\Re$
are $\Lambda_{\psi}^{n,n-m,n-m}(k)$ for $k=0,1,\ldots,n-m,$ with
corresponding multiplicities $\binom{n}{k}_{q}-\binom{n}{k-1}_{q}$
for $k=0,1,\ldots,n-m,$ with $\psi\colon\mathbb{Z}\to\Re$ given
by $\psi(t)=\phi(t+2m-n).$
\end{enumerate}
\end{thm}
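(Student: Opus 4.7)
The plan is to handle part~\ref{enu:eigen-m-large} first as an immediate consequence of Lemma~\ref{lem:J-orthog-complement}\ref{enu:m-m-reorder}, which gives $J_{\phi}^{n,m,m} = P J_{\phi''}^{n,n-m,n-m} P^{-1}$ with $\phi''(t) = \phi(t+2m-n)$; since $n-m \leq n/2$, part~\ref{enu:eigen-m-small} applies to $J_{\phi''}^{n,n-m,n-m}$, and similar matrices share their eigenvalues and multiplicities, so the conclusion for part~\ref{enu:eigen-m-large} follows. The substantive work is in proving part~\ref{enu:eigen-m-small}.

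For part~\ref{enu:eigen-m-small}, I would define the candidate eigenspaces
\[
V_0 = \Span\{\1\}, \qquad V_k = J_k^{n,m,k}\bigl(\ker J_{k-1}^{n,k-1,k}\bigr) \text{ for } k=1,2,\ldots,m.
\]
By Proposition~\ref{prop:J-normalization}, $\1$ is an eigenvector of $J_\phi^{n,m,m}$ with eigenvalue $\Lambda_\phi^{n,m,m}(0)$, while Lemma~\ref{lem:eigenvectors} (applied with $\ell = m$) shows that every nonzero vector in $V_k$ is an eigenvector of $J_\phi^{n,m,m}$ with eigenvalue $\Lambda_\phi^{n,m,m}(k)$, provided that the auxiliary factor $J_k^{n,m,k}x$ is nonzero. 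To compute $\dim V_k$ for $k \geq 1$, I would invoke Lemma~\ref{lem:J-rank} twice: first, noting that the transpose of $J_{k-1}^{n,k-1,k}$ equals $J_{k-1}^{n,k,k-1}$, the lemma gives $\rk J_{k-1}^{n,k-1,k} = \binom{n}{k-1}_q$ (the hypothesis $n \geq 2k-1$ is satisfied since $k \leq m \leq n/2$), so $\dim \ker J_{k-1}^{n,k-1,k} = \binom{n}{k}_q - \binom{n}{k-1}_q$; second, the lemma applied directly to $J_k^{n,m,k}$ (using $m+k \leq 2m \leq n$) shows it has full column rank $\binom{n}{k}_q$, hence trivial kernel. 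Consequently, $J_k^{n,m,k}$ restricts injectively to $\ker J_{k-1}^{n,k-1,k}$, and
\[
\dim V_k = \binom{n}{k}_q - \binom{n}{k-1}_q.
\]

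The key step is pairwise orthogonality of the $V_k$'s, which I would establish by passing to the \emph{symmetric} matrix $J_0^{n,m,m}$. By Definition~\ref{def:J-phi} applied to the indicator function $\phi_0 = \delta_{r,0}$ together with Proposition~\ref{prop:J-normalization} and Lemma~\ref{lem:eigenvectors}, each $V_k$ lies inside the $\Lambda_0^{n,m,m}(k)$-eigenspace of $J_0^{n,m,m}$. Lemma~\ref{lem:lambdas-distinct} guarantees that the scalars $\Lambda_0^{n,m,m}(k)$ for $k=0,1,\ldots,m$ are pairwise distinct, so Fact~\ref{fact:diff-eigens-orthog} forces the $V_k$'s to be pairwise orthogonal. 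The total dimension telescopes as $1 + \sum_{k=1}^m \bigl(\binom{n}{k}_q - \binom{n}{k-1}_q\bigr) = \binom{n}{m}_q$, which matches the ambient dimension, so $V_0 \oplus V_1 \oplus \cdots \oplus V_m$ is the entire space. This yields a complete orthogonal eigendecomposition of $J_\phi^{n,m,m}$ with eigenvalues and multiplicities as claimed. The main technical subtlety is ensuring that the rank computations from Lemma~\ref{lem:J-rank} apply with the correct parameters—this is exactly where the hypothesis $m \leq n/2$ is used, and it is the reason one must separate parts~\ref{enu:eigen-m-small} and~\ref{enu:eigen-m-large} and reduce the latter to the former via duality.
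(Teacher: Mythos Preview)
Your proposal is correct and follows essentially the same argument as the paper: the paper defines the identical subspaces $S_k = J_k^{n,m,k}(\ker J_{k-1}^{n,k-1,k})$ (with $S_0 = \Span\{\1\}$), computes their dimensions via Lemma~\ref{lem:J-rank}, identifies them as eigenspaces via Proposition~\ref{prop:J-normalization} and Lemma~\ref{lem:eigenvectors}, and proves pairwise orthogonality by specializing to $\phi = \1_{\{0\}}$ and invoking Lemma~\ref{lem:lambdas-distinct} together with Fact~\ref{fact:diff-eigens-orthog}; part~\ref{enu:eigen-m-large} is likewise reduced to part~\ref{enu:eigen-m-small} via Lemma~\ref{lem:J-orthog-complement}.
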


\begin{proof}
We first show that~\ref{enu:eigen-m-small} implies~\ref{enu:eigen-m-large}.
Recall from Lemma~\ref{lem:J-orthog-complement} that $J_{\phi}^{n,m,m}$
is permutation-similar to $J_{\psi}^{n,n-m,n-m}$ with $\psi\colon\mathbb{Z}\to\Re$
given by $\psi(t)=\phi(t+2m-n).$ The eigenvalues of $J_{\psi}^{n,n-m,n-m}$
are, by part~\ref{enu:eigen-m-small} of this theorem, $\Lambda_{\psi}^{n,n-m,n-m}(k)$
for $k=0,1,\ldots,n-m,$ with corresponding multiplicities $\binom{n}{k}_{q}-\binom{n}{k-1}_{q}$
for $k=0,1,\ldots,n-m$. It follows that these are also the eigenvalues
of $J_{\phi}^{n,m,m}$ because a similarity transformation preserves
the eigenvalues and their multiplicities. This settles~\ref{enu:eigen-m-large}.

It remains to prove~\ref{enu:eigen-m-small}, where by hypothesis
\begin{equation}
m\leq\frac{n}{2}.\label{eq:m-n-half}
\end{equation}
Define subspaces $S_{0},S_{1},\ldots,S_{m}$ of the $\binom{n}{m}_{q}$-dimensional
real vector space by 
\begin{align*}
S_{0} & =\Span\{\1\},\\
S_{k} & =\{J_{k}^{n,m,k}x:x\in\ker J_{k-1}^{n,k-1,k}\}, &  & k=1,2,\ldots,m.
\end{align*}
\begin{claim}
\label{claim:dimensions}Let $k\in\{0,1,\ldots,m\}$. Then $\dim S_{k}=\binom{n}{k}_{q}-\binom{n}{k-1}_{q}.$
\end{claim}

\begin{proof}
We need only consider $k\geq1,$ the claim being trivial otherwise.
Observe from~\eqref{eq:m-n-half} and Lemma~\ref{lem:J-rank} that
$J_{k}^{n,m,k}$ has rank $\binom{n}{k}_{q}.$ Put another way, its
columns are linearly independent. Since $S_{k}$ is the image of $\ker J_{k-1}^{n,k-1,k}$
under $J_{k}^{n,m,k},$ it follows that
\begin{equation}
\dim S_{k}=\dim\ker J_{k-1}^{n,k-1,k}.\label{eq:dim-S_k-claim}
\end{equation}
Another appeal to~\eqref{eq:m-n-half} and Lemma~\ref{lem:J-rank}
reveals that the columns of $J_{k-1}^{n,k,k-1}$ are linearly independent.
This makes $J_{k-1}^{n,k-1,k}=(J_{k-1}^{n,k,k-1})^{\tr}$ a matrix
of order $\binom{n}{k-1}_{q}\times\binom{n}{k}_{q}$ with linearly
independent rows, whence $\dim\ker J_{k-1}^{n,k-1,k}=\binom{n}{k}_{q}-\binom{n}{k-1}_{q}$.
In view of~\eqref{eq:dim-S_k-claim}, the proof is complete.
\end{proof}
\begin{claim}
\label{claim:eigens}Let $k\in\{0,1,\ldots,m\}$. Then every vector
of $S_{k}$ is an eigenvector of $J_{\phi}^{n,m,m}$ with eigenvalue
$\Lambda_{\phi}^{n,m,m}(k).$
\end{claim}

\begin{proof}
For $k=0$, the claim is immediate from~\eqref{eq:1-eigenvector-phi}
of Proposition~\ref{prop:J-normalization}. Consider now the complementary
case $k\in\{1,2,\ldots,m\}.$ Here, $n$ and $m$ are positive integers.
Invoking Lemma~\ref{lem:eigenvectors} with $\ell=m$ and \eqref{eq:m-n-half}
yields $J_{\phi}^{n,m,m}v=\Lambda_{\phi}^{n,m,m}(k)v$ for all $v\in S_{k},$
as desired. 
\end{proof}
\begin{claim}
\label{claim:ortho}For any $k,k'\in\{0,1,\ldots,m\}$ with $k\ne k',$
the subspaces $S_{k}$ and $S_{k'}$ are orthogonal.
\end{claim}

\begin{proof}
Taking $\phi=\1_{\{0\}}$ in Claim~\ref{claim:eigens} shows that
$S_{0},S_{1},\ldots,S_{m}$ are eigenspaces of the symmetric matrix
$J_{0}^{n,m,m}$ with eigenvalues $\Lambda_{0}^{n,m,m}(0),\Lambda_{0}^{n,m,m}(1),\ldots,\Lambda_{0}^{n,m,m}(m),$
respectively. But these $m+1$ numbers are pairwise distinct by \eqref{eq:m-n-half}
and~Lemma~\ref{lem:lambdas-distinct}. It now follows from Fact~\ref{fact:diff-eigens-orthog}
that $S_{0},S_{1},\ldots,S_{m}$ are pairwise orthogonal. 
\end{proof}
As we just established with Claim~\ref{claim:ortho}, the subspaces
$S_{0},S_{1},\ldots,S_{m}$ are pairwise orthogonal. Since they are
subspaces over the \emph{reals,} we infer that $\dim(\sum_{k=0}^{m}S_{k})=\sum_{k=0}^{m}\dim S_{k}.$
Using $\dim S_{k}=\binom{n}{k}_{q}-\binom{n}{k-1}_{q}$ from Claim~\ref{claim:dimensions},
we arrive at 
\[
\dim\left(\sum_{k=0}^{m}S_{k}\right)=\sum_{k=0}^{m}\left(\binom{n}{k}_{q}-\binom{n}{k-1}_{q}\right)=\binom{n}{m}_{q}-\binom{n}{-1}_{q}=\binom{n}{m}_{q}.
\]
In other words, a basis for the vector space in question can be obtained
by concatenating bases for $S_{0},S_{1},\ldots,S_{m}.$ Lastly, recall
from Claim~\ref{claim:eigens} that $S_{k}$ (for $k=0,1,\ldots,m$)
is an eigenspace of $J_{\phi}^{n,m,m}$ with eigenvalue $\Lambda_{\phi}^{n,m,m}(k).$
This settles~\ref{enu:eigen-m-small} and completes the proof of
the theorem.
\end{proof}
At last, we adapt the previous proof to the asymmetric case ($m\ne\ell$)
and determine the singular values of every subspace matrix $J_{\phi}^{n,m,\ell}$.
\begin{thm}[Singular values of $J_{\phi}^{n,m,\ell}$]
\label{thm:J-singular-values} Let $n,m,\ell$ be nonnegative integers
with $\max\{m,\ell\}\leq n$. 
\begin{enumerate}
\item \label{enu:singular-m-ell-small}If $m+\ell\leq n,$ then the singular
values of $J_{\phi}^{n,m,\ell}$ for a given function $\phi\colon\mathbb{Z}\to\Re$
are 
\begin{align*}
\sqrt{\Lambda_{\phi}^{n,m,\ell}(k)\,\Lambda_{\phi}^{n,\ell,m}(k)}, &  & k=0,1,\ldots,\min\{m,\ell\},
\end{align*}
with corresponding multiplicities $\binom{n}{k}_{q}-\binom{n}{k-1}_{q}$
for $k=0,1,\ldots,\min\{m,\ell\}.$
\item \label{enu:singular-m-ell-large}If $m+\ell\geq n,$ then the singular
values of $J_{\phi}^{n,m,\ell}$ for a given function $\phi\colon\mathbb{Z}\to\Re$
are 
\begin{align*}
\sqrt{\Lambda_{\psi}^{n,n-m,n-\ell}(k)\,\Lambda_{\psi}^{n,n-\ell,n-m}(k)}, &  & k=0,1,\ldots,\min\{n-m,n-\ell\},
\end{align*}
with corresponding multiplicities $\binom{n}{k}_{q}-\binom{n}{k-1}_{q}$
for $k=0,1,\ldots,\min\{n-m,n-\ell\},$ where $\psi\colon\mathbb{Z}\to\Re$
is given by $\psi(t)=\phi(t+m+\ell-n).$
\end{enumerate}
\end{thm}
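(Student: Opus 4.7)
The plan is to compute the eigenvalues of the positive-semidefinite matrix $J_\phi^{n,m,\ell}\,(J_\phi^{n,m,\ell})^{\mathsf T}$ and take square roots. A quick inspection of entries shows $(J_\phi^{n,m,\ell})^{\mathsf T} = J_\phi^{n,\ell,m}$, since the $(A,B)$ entry depends only on $\dim(A \cap B) = \dim(B \cap A)$. Moreover, part~\ref{enu:singular-m-ell-large} will reduce to part~\ref{enu:singular-m-ell-small} via \prettyref{lem:J-orthog-complement}\ref{enu:m-ell-reorder}, which exhibits $J_\phi^{n,m,\ell}$ as a product $P\,J_{\phi'}^{n,n-m,n-\ell}\,Q$ of permutation matrices around $J_{\phi'}^{n,n-m,n-\ell}$ with $\phi'(t) = \phi(t+m+\ell-n)$; since singular values are invariant under such unitary sandwiching, and $(n-m)+(n-\ell) \leq n$ exactly when $m+\ell \geq n$, this transfers the burden to part~\ref{enu:singular-m-ell-small}. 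For that part, I will assume without loss of generality that $m \leq \ell$ (otherwise replace $J_\phi^{n,m,\ell}$ by its transpose), so that $2m \leq m+\ell \leq n$ and in particular $m \leq n/2$.

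The heart of the argument is the eigenvalue identity
\[
J_\phi^{n,m,\ell}\, J_\phi^{n,\ell,m}\, w \;=\; \Lambda_\phi^{n,m,\ell}(k)\,\Lambda_\phi^{n,\ell,m}(k)\, w,
\]
valid for every $k \in \{1,2,\ldots,m\}$, every $x \in \ker J_{k-1}^{n,k-1,k}$, and $w = J_k^{n,m,k} x$. To prove it, I apply \prettyref{lem:eigenvectors} twice. The first application, with the roles of $m$ and $\ell$ exchanged, gives $J_\phi^{n,\ell,m}\, J_k^{n,m,k} x = \Lambda_\phi^{n,\ell,m}(k)\, J_k^{n,\ell,k} x$; the hypothesis $n \geq m+\ell$ is the standing assumption of case~\ref{enu:singular-m-ell-small}. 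The second application, now fed the vector $J_k^{n,\ell,k} x$ with the \emph{same} $x \in \ker J_{k-1}^{n,k-1,k}$, gives $J_\phi^{n,m,\ell}\, J_k^{n,\ell,k} x = \Lambda_\phi^{n,m,\ell}(k)\, J_k^{n,m,k} x = \Lambda_\phi^{n,m,\ell}(k)\, w$. Chaining the two produces the displayed identity. For $k = 0$, \prettyref{prop:J-normalization} gives $\mathbf{1}$ as a common eigenvector of $J_\phi^{n,m,\ell}$ and $J_\phi^{n,\ell,m}$ with eigenvalues $\Lambda_\phi^{n,m,\ell}(0)$ and $\Lambda_\phi^{n,\ell,m}(0)$, respectively, so the same formula holds.

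To finish, I invoke the symmetric case already established. Setting $S_0 = \Span\{\mathbf 1\}$ and $S_k = J_k^{n,m,k}(\ker J_{k-1}^{n,k-1,k})$ for $k \geq 1$, the proof of \prettyref{thm:J-eigenvalues-sparse}\ref{enu:eigen-m-small} (whose hypothesis $m \leq n/2$ is now in force) identifies $S_0, S_1, \ldots, S_m$ with the eigenspaces of the symmetric matrix $J_0^{n,m,m}$ associated with the pairwise distinct eigenvalues $\Lambda_0^{n,m,m}(k)$ from \prettyref{lem:lambdas-distinct}; hence these subspaces are pairwise orthogonal, have dimensions $\binom{n}{k}_q - \binom{n}{k-1}_q$, and together span $\mathbb{R}^{\binom{n}{m}_q}$. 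Combined with the eigenvalue identity above, this yields a complete orthogonal diagonalization of the PSD matrix $J_\phi^{n,m,\ell}\, J_\phi^{n,\ell,m}$, so each product $\Lambda_\phi^{n,m,\ell}(k)\,\Lambda_\phi^{n,\ell,m}(k)$ is automatically nonnegative and its square root is a bona fide singular value. Finally, the total multiplicity $\sum_{k=0}^{m}\bigl(\binom{n}{k}_q - \binom{n}{k-1}_q\bigr) = \binom{n}{m}_q$ matches the number of singular values of a $\binom{n}{m}_q \times \binom{n}{\ell}_q$ matrix, since $m \leq \ell \leq n-m$ forces $\binom{n}{m}_q \leq \binom{n}{\ell}_q$ by \prettyref{fact:q-binomial-monotone}. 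The only delicate bookkeeping point is that the two invocations of \prettyref{lem:eigenvectors} must chain cleanly: the intermediate vector $J_k^{n,\ell,k} x$ produced by the first invocation is again of the precise form required by the second, with the same $x \in \ker J_{k-1}^{n,k-1,k}$.
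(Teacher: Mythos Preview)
Your proposal is correct and follows essentially the same approach as the paper: reduce part~\ref{enu:singular-m-ell-large} to part~\ref{enu:singular-m-ell-small} via \prettyref{lem:J-orthog-complement}, assume $m\leq\ell$ without loss of generality, chain two applications of \prettyref{lem:eigenvectors} to show that each $S_k$ is an eigenspace of $J_\phi^{n,m,\ell}J_\phi^{n,\ell,m}$ with eigenvalue $\Lambda_\phi^{n,m,\ell}(k)\Lambda_\phi^{n,\ell,m}(k)$, and then invoke the orthogonal decomposition established in the proof of \prettyref{thm:J-eigenvalues-sparse}\ref{enu:eigen-m-small}. The paper's argument is organized the same way, with your observations about the chaining of \prettyref{lem:eigenvectors} and the use of \prettyref{fact:q-binomial-monotone} matching the paper's Claim~\ref{claim:singulars} and its concluding paragraph.
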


\begin{proof}
We first show that~\ref{enu:singular-m-ell-small} implies~\ref{enu:singular-m-ell-large}.
Recall from Lemma~\ref{lem:J-orthog-complement} that the matrix
$J_{\phi}^{n,m,\ell}$ is the same, up to a reordering of the rows
and columns, as $J_{\psi}^{n,n-m,n-\ell}$ with $\psi\colon\mathbb{Z}\to\Re$
given by $\psi(t)=\phi(t+m+\ell-n).$ The singular values of $J_{\psi}^{n,n-m,n-\ell}$
are, by part~\ref{enu:eigen-m-small} of this theorem, $\sqrt{\Lambda_{\psi}^{n,n-m,n-\ell}(k)\,\Lambda_{\psi}^{n,n-\ell,n-m}(k)}$
for $k=0,1,\ldots,\min\{n-m,n-\ell\},$ with corresponding multiplicities
$\binom{n}{k}_{q}-\binom{n}{k-1}_{q}$ for $k=0,1,\ldots,\min\{n-m,n-\ell\}$.
It follows that these are also the singular values of $J_{\phi}^{n,m,\ell}$
because reordering the columns or rows does not change the singular
values or their multiplicities. This establishes~\ref{enu:singular-m-ell-large}.

It remains to settle~\ref{enu:singular-m-ell-small}, where by hypothesis
\begin{equation}
m+\ell\leq n.\label{eq:m+ell-leq-n}
\end{equation}
We may further assume that 
\begin{equation}
m\leq\ell,\label{eq:m-leq-ell}
\end{equation}
for otherwise we can work with the transposed matrix $(J_{\phi}^{n,m,\ell})^{\tr}=J_{\phi}^{n,\ell,m}$,
the singular values being invariant under matrix transposition. By~\eqref{eq:m+ell-leq-n},
\eqref{eq:m-leq-ell}, and Fact~\ref{fact:q-binomial-monotone},
\begin{equation}
\binom{n}{m}_{q}\leq\binom{n}{\ell}_{q}.\label{eq:choose-m-leq-choose-el}
\end{equation}
Another consequence of \eqref{eq:m+ell-leq-n} and~\eqref{eq:m-leq-ell}
is that $m\leq n/2,$ which makes it possible to define subspaces
$S_{0},S_{1},\ldots,S_{m}$ of the $\binom{n}{m}_{q}$-dimensional
real vector space as in the proof of part~\ref{enu:eigen-m-small}
of Theorem~\ref{thm:J-eigenvalues-sparse}. In particular, Claims~\ref{claim:dimensions}\textendash \ref{claim:ortho}
apply as before. 
\begin{claim}
\label{claim:singulars}Let $k\in\{0,1,\ldots,m\}$. Then every vector
of $S_{k}$ is an eigenvector of $J_{\phi}^{n,m,\ell}J_{\phi}^{n,\ell,m}$
with eigenvalue $\Lambda_{\phi}^{n,m,\ell}(k)\Lambda_{\phi}^{n,\ell,m}(k).$
\end{claim}

\begin{proof}
For $k=0$, a double application of Proposition~\ref{prop:J-normalization}
yields $J_{\phi}^{n,m,\ell}J_{\phi}^{n,\ell,m}\,\1=J_{\phi}^{n,m,\ell}\Lambda_{\phi}^{n,\ell,m}(0)\,\1=\Lambda_{\phi}^{n,m,\ell}(0)\Lambda_{\phi}^{n,\ell,m}(0)\,\1,$
as desired. Consider now $k\in\{1,2,\ldots,m\}.$ In this case, due
to~\eqref{eq:m-leq-ell}, the integers $n,m,\ell$ are positive and
satisfy $\min\{m,\ell\}=m.$ Then for any $x\in\ker J_{k-1}^{n,k-1,k},$
\begin{align*}
(J_{\phi}^{n,m,\ell}J_{\phi}^{n,\ell,m})J_{k}^{n,m,k}x & =J_{\phi}^{n,m,\ell}(J_{\phi}^{n,\ell,m}J_{k}^{n,m,k}x)\\
 & =J_{\phi}^{n,m,\ell}(\Lambda_{\phi}^{n,\ell,m}(k)J_{k}^{n,\ell,k}x)\\
 & =\Lambda_{\phi}^{n,\ell,m}(k)J_{\phi}^{n,m,\ell}J_{k}^{n,\ell,k}x\\
 & =\Lambda_{\phi}^{n,\ell,m}(k)\Lambda_{\phi}^{n,m,\ell}(k)J_{k}^{n,m,k}x,
\end{align*}
where the second and fourth steps apply Lemma~\ref{lem:eigenvectors}
with~\eqref{eq:m+ell-leq-n} (note that the roles of $m$ and $\ell$
are reversed in the first application). We have shown that for each
$x\in\ker J_{k-1}^{n,k-1,k},$ its image under $J_{k}^{n,m,k}$ is
an eigenvector of $J_{\phi}^{n,m,\ell}J_{\phi}^{n,\ell,m}$ with eigenvalue
$\Lambda_{\phi}^{n,\ell,m}(k)\Lambda_{\phi}^{n,m,\ell}(k).$ Since
$S_{k}$ is by definition the image of $\ker J_{k-1}^{n,k-1,k}$ under
$J_{k}^{n,m,k},$ the claim is proved.
\end{proof}
Recall from Claims~\ref{claim:dimensions} and~\ref{claim:ortho}
that the subspaces $S_{0},S_{1},\ldots,S_{m}$ are pairwise orthogonal,
with $\dim S_{k}=\binom{n}{k}_{q}-\binom{n}{k-1}_{q}$. As in the
proof of Theorem~\ref{thm:J-eigenvalues-sparse}, this implies that
the real vector space in question is a direct sum of $S_{0},S_{1},\ldots,S_{m}.$
In view of Claim~\ref{claim:singulars}, we conclude that the eigenvalues
of $J_{\phi}^{n,m,\ell}J_{\phi}^{n,\ell,m}$ are $\Lambda_{\phi}^{n,m,\ell}(k)\Lambda_{\phi}^{n,\ell,m}(k)$
for $k=0,1,\ldots,m,$ with corresponding multiplicities $\binom{n}{k}_{q}-\binom{n}{k-1}_{q}$
for $k=0,1,\ldots,m.$ This completes the proof since the singular
values of $J_{\phi}^{n,m,\ell}$ are, by~\eqref{eq:choose-m-leq-choose-el}
and Fact~\ref{fact:singular-values-eigenvalues}, the square roots
of the eigenvalues of $J_{\phi}^{n,m,\ell}(J_{\phi}^{n,m,\ell})^{\tr}=J_{\phi}^{n,m,\ell}J_{\phi}^{n,\ell,m},$
counting multiplicities.
\end{proof}

\subsection{Normalized subspace matrices\label{subsec:Normalized-subspace-matrices}}

To study the communication complexity of the subspace intersection
problem, we now define normalized versions of subspace matrices.
\begin{defn}
\label{def:J-bar}Let $n,m,\ell$ be nonnegative integers with $m+\ell\leq n,$
and let $\mathbb{F}$ be a finite field. Define
\begin{align}
\overline{J}_{r}^{\mathbb{F},n,m,\ell} & =\frac{1}{\|J_{r}^{\mathbb{F},n,m,\ell}\|_{1}}\cdot J_{r}^{\mathbb{F},n,m,\ell}, &  & r=0,1,2,\ldots,\min\{m,\ell\}.\label{eq:Jr-bar}
\end{align}
 For any function $\phi\colon\mathbb{Z}\to\Re,$ define
\begin{equation}
\overline{J}_{\phi}^{\mathbb{F},n,m,\ell}=\sum_{r=0}^{\min\{m,\ell\}}\phi(r)\overline{J}_{r}^{\mathbb{F},n,m,\ell}.\label{eq:J-phi-bar}
\end{equation}
\end{defn}

The requirement $m+\ell\leq n$ in Definition~\prettyref{def:J-bar}
serves to ensure that $J_{r}^{\mathbb{F},n,m,\ell}\ne0$ for each
$r=0,1,\ldots,\min\{m,\ell\},$ so that the normalization in~\prettyref{eq:Jr-bar}
is meaningful. As elsewhere in this manuscript, we will work with
the generic field $\mathbb{F}=\mathbb{F}_{q}$ and will henceforth
write $\overline{J}_{r}^{n,m,\ell}$ and $\overline{J}_{\phi}^{n,m,\ell}$
instead of $\overline{J}_{r}^{\mathbb{F},n,m,\ell}$ and $\overline{J}_{\phi}^{\mathbb{F},n,m,\ell}$. 

The following lemma relates the metric properties of a normalized
subspace matrix $\overline{J}_{\phi}^{n,m,\ell}$ to the corresponding
univariate function $\phi$.
\begin{lem}[Metric properties of $\overline{J}_{\phi}^{n,m,\ell}$]
\label{lem:J-bar-metric} Let $n,m,\ell$ be nonnegative integers
with $m+\ell\leq n.$ Let $\phi\colon\mathbb{Z}\to\Re$ be a given
function. Then:
\begin{align}
\sum_{\substack{S\in\Scal(\mathbb{F}_{q}^{n},m),\\
T\in\Scal(\mathbb{F}_{q}^{n},\ell):\\
\dim(S\cap T)=r
}
}(\overline{J}_{\phi}^{n,m,\ell})_{S,T} & =\phi(r), &  & r=0,1,\dots,\min\{m,\ell\}.\label{eq:J-bar-sum-dim-r}
\end{align}
Moreover, 
\begin{equation}
\|\overline{J}_{\phi}^{n,m,\ell}\|_{1}=\sum_{r=0}^{\min\{m,\ell\}}|\phi(r)|.\label{eq:J-bar-ell1}
\end{equation}
\end{lem}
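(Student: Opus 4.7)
The plan is to exploit the fact that the matrices $J_r^{n,m,\ell}$ for distinct values of $r$ have disjoint supports: for any pair $(S,T)\in\Scal(\mathbb{F}_q^n,m)\times\Scal(\mathbb{F}_q^n,\ell)$, the dimension $\dim(S\cap T)$ takes exactly one value in $\{0,1,\ldots,\min\{m,\ell\}\}$, and $(J_r^{n,m,\ell})_{S,T}=1$ precisely for that value of $r$ and zero otherwise. This will cause the proofs of both assertions to decouple into a single term per value of $r$.

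First I would establish the normalization identity. By Definition~\ref{def:J-bar}, for any $r\in\{0,1,\ldots,\min\{m,\ell\}\}$,
\[
\sum_{\substack{S,T:\\ \dim(S\cap T)=r}} (\overline{J}_r^{n,m,\ell})_{S,T}=\frac{1}{\|J_r^{n,m,\ell}\|_1}\sum_{\substack{S,T:\\ \dim(S\cap T)=r}} 1=1,
\]
since the number of pairs contributing is precisely $\|J_r^{n,m,\ell}\|_1$ (this is well-defined and nonzero thanks to the hypothesis $m+\ell\leq n$, by \prettyref{prop:possible-intersections-and-sums}\ref{enu:set-of-intersections}, which ensures every $r\in\{0,\ldots,\min\{m,\ell\}\}$ is attained). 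For $r'\neq r$, the entries of $\overline{J}_{r'}^{n,m,\ell}$ vanish on pairs with $\dim(S\cap T)=r$. Hence
\[
\sum_{\substack{S,T:\\ \dim(S\cap T)=r}}(\overline{J}_\phi^{n,m,\ell})_{S,T}=\sum_{r'=0}^{\min\{m,\ell\}}\phi(r')\sum_{\substack{S,T:\\ \dim(S\cap T)=r}}(\overline{J}_{r'}^{n,m,\ell})_{S,T}=\phi(r),
\]
settling~\eqref{eq:J-bar-sum-dim-r}.

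For the $\ell_1$ norm identity, I would partition the entries of $\overline{J}_\phi^{n,m,\ell}$ according to $\dim(S\cap T)$. Since on pairs with $\dim(S\cap T)=r$ only the summand $\phi(r)\overline{J}_r^{n,m,\ell}$ contributes to $(\overline{J}_\phi^{n,m,\ell})_{S,T}$, each such entry equals $\phi(r)/\|J_r^{n,m,\ell}\|_1$, whose absolute value is $|\phi(r)|/\|J_r^{n,m,\ell}\|_1$. Summing,
\[
\|\overline{J}_\phi^{n,m,\ell}\|_1=\sum_{r=0}^{\min\{m,\ell\}}\sum_{\substack{S,T:\\ \dim(S\cap T)=r}}\frac{|\phi(r)|}{\|J_r^{n,m,\ell}\|_1}=\sum_{r=0}^{\min\{m,\ell\}}|\phi(r)|,
\]
where the last step uses $|\{(S,T):\dim(S\cap T)=r\}|=\|J_r^{n,m,\ell}\|_1$ once more.

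There is no real obstacle here; the entire lemma is a bookkeeping consequence of the disjoint-support structure of the $J_r^{n,m,\ell}$ together with the chosen normalization. The only point worth flagging is to confirm that the hypothesis $m+\ell\leq n$ ensures $\|J_r^{n,m,\ell}\|_1>0$ for every $r\in\{0,\ldots,\min\{m,\ell\}\}$, so that the normalization~\eqref{eq:Jr-bar} is well-posed; this follows directly from \prettyref{prop:possible-intersections-and-sums}\ref{enu:set-of-intersections}, whose range of attainable intersection dimensions is exactly $\{0,1,\ldots,\min\{m,\ell\}\}$ under the assumption $m+\ell\leq n$.
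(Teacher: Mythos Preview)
Your proof is correct and follows essentially the same approach as the paper: both arguments exploit the disjoint supports of the matrices $J_r^{n,m,\ell}$ to collapse the sum over $r'$ to a single term, then use the normalization $\|J_r^{n,m,\ell}\|_1$ to obtain $\phi(r)$ and $|\phi(r)|$, respectively. Your explicit justification that $\|J_r^{n,m,\ell}\|_1>0$ via \prettyref{prop:possible-intersections-and-sums} is a nice touch that the paper leaves implicit.
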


\begin{proof}
We have
\begin{align}
\sum_{S,T:\dim(S\cap T)=r}(\overline{J}_{\phi}^{n,m,\ell})_{S,T} & =\sum_{S,T:\dim(S\cap T)=r}\;\sum_{i=0}^{\min\{m,\ell\}}\frac{\phi(i)}{\|J_{i}^{n,m,\ell}\|_{1}}\cdot(J_{i}^{n,m,\ell})_{S,T}\nonumber \\
 & =\sum_{S,T:\dim(S\cap T)=r}\frac{\phi(r)}{\|J_{r}^{n,m,\ell}\|_{1}}\cdot(J_{r}^{n,m,\ell})_{S,T}\nonumber \\
 & =\frac{\phi(r)}{\|J_{r}^{n,m,\ell}\|_{1}}\sum_{S,T:\dim(S\cap T)=r}(J_{r}^{n,m,\ell})_{S,T}\nonumber \\
 & =\phi(r),\label{eq:A+B-rank-sum-1}
\end{align}
where the second step uses $(J_{i}^{n,m,\ell})_{S,T}=0$ for $i\ne r,$
and the final step is valid because $(J_{r}^{n,m,\ell})_{S,T}$ equals
$1$ if $\dim(S\cap T)=r$ and $0$ otherwise. This proves~\prettyref{eq:J-bar-sum-dim-r}.
An analogous argument yields
\[
\sum_{S,T:\dim(S\cap T)=r}|(\overline{J}_{\phi}^{n,m,\ell})_{S,T}|=|\phi(r)|.
\]
Summing this equation over $r$ gives \prettyref{eq:J-bar-ell1}. 
\end{proof}
To describe the singular values of a normalized subspace matrix $\overline{J}_{\phi}^{n,m,\ell}$,
we introduce a normalized counterpart of the function~$\Lambda$
from Definition~\prettyref{def:Lambda}.
\begin{defn}
\label{def:Lambda-bar}Let $n,m,\ell,k$ be nonnegative integers with
$m+\ell\leq n$ and $k\leq\min\{m,\ell\}$. Define
\begin{align*}
\overline{\Lambda}_{r}^{\,n,m,\ell}(k) & =\frac{1}{\|J_{r}^{n,m,\ell}\|_{1}}\Lambda_{r}^{n,m,\ell}(k), &  & r=0,1,\ldots,\min\{m,\ell\}.
\end{align*}
More generally, for any $\phi\colon\mathbb{Z}\to\Re,$ define
\begin{align*}
\overline{\Lambda}_{\phi}^{\,n,m,\ell}(k) & =\sum_{r=0}^{\min\{m,\ell\}}\phi(r)\overline{\Lambda}_{r}^{\,n,m,\ell}(k).
\end{align*}

\noindent With this notation, we obtain the following counterpart
of \prettyref{thm:J-singular-values} for normalized subspace matrices.
\end{defn}

\begin{thm}
\label{thm:J-bar-singular-values}Let $n,m,\ell$ be nonnegative integers
with $m+\ell\leq n$. Let $\phi\colon\mathbb{Z}\to\Re$ be given.
Then the singular values of $\overline{J}_{\phi}^{n,m,\ell}$ are:
$\sqrt{\overline{\Lambda}_{\phi}^{\,n,m,\ell}(k)\,\overline{\Lambda}_{\phi}^{\,n,\ell,m}(k)}$
with multiplicity $\binom{n}{k}_{q}-\binom{n}{k-1}_{q},$ where $k=0,1,\ldots,\min\{m,\ell\}.$
\end{thm}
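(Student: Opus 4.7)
The proof is essentially a reduction to \prettyref{thm:J-singular-values}\prettyref{enu:singular-m-ell-small}, which already computes the singular values of an unnormalized subspace matrix $J_{\phi}^{n,m,\ell}$ in the regime $m+\ell\leq n$. My plan is to rewrite $\overline{J}_{\phi}^{n,m,\ell}$ as an unnormalized subspace matrix $J_{\phi'}^{n,m,\ell}$ for a rescaled function $\phi'$, and then check that the $\Lambda_{\phi'}$ values emerging from \prettyref{thm:J-singular-values} are exactly the $\overline{\Lambda}_{\phi}$ values appearing in the statement.

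Concretely, by Definitions~\ref{def:J-phi} and~\ref{def:J-bar}, we have
\[
\overline{J}_{\phi}^{n,m,\ell}=\sum_{r=0}^{\min\{m,\ell\}}\frac{\phi(r)}{\|J_{r}^{n,m,\ell}\|_{1}}J_{r}^{n,m,\ell}=J_{\phi'}^{n,m,\ell},
\]
where $\phi'\colon\mathbb{Z}\to\Re$ is defined by $\phi'(r)=\phi(r)/\|J_{r}^{n,m,\ell}\|_{1}$ for $r\in\{0,1,\ldots,\min\{m,\ell\}\}$ and $\phi'(r)=0$ otherwise. Since $m+\ell\leq n$, \prettyref{thm:J-singular-values}\prettyref{enu:singular-m-ell-small} applies and yields that the singular values of $J_{\phi'}^{n,m,\ell}$ are $\sqrt{\Lambda_{\phi'}^{n,m,\ell}(k)\,\Lambda_{\phi'}^{n,\ell,m}(k)}$ with multiplicities $\binom{n}{k}_{q}-\binom{n}{k-1}_{q}$ for $k=0,1,\ldots,\min\{m,\ell\}$.

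The one point that requires a small observation is the evaluation of $\Lambda_{\phi'}^{n,\ell,m}(k)$, in which the roles of $m$ and $\ell$ are swapped relative to the normalization in $\phi'$. Here I would invoke the identity $J_{r}^{n,\ell,m}=(J_{r}^{n,m,\ell})^{\tr}$, which is immediate from Definition~\ref{def:Jr} (the entry of either matrix depends only on $\dim(S\cap T)$). This identity forces $\|J_{r}^{n,\ell,m}\|_{1}=\|J_{r}^{n,m,\ell}\|_{1}$, so the normalizing scalars used in $\overline{\Lambda}_{r}^{\,n,\ell,m}$ and in $\phi'$ agree. Consequently
\[
\Lambda_{\phi'}^{n,m,\ell}(k)=\sum_{r}\frac{\phi(r)\,\Lambda_{r}^{n,m,\ell}(k)}{\|J_{r}^{n,m,\ell}\|_{1}}=\overline{\Lambda}_{\phi}^{\,n,m,\ell}(k),
\]
and similarly $\Lambda_{\phi'}^{n,\ell,m}(k)=\overline{\Lambda}_{\phi}^{\,n,\ell,m}(k)$, which gives exactly the claimed singular values and multiplicities.

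There is no serious obstacle here; the proof is essentially bookkeeping. The only thing worth flagging explicitly is the symmetry $\|J_{r}^{n,m,\ell}\|_{1}=\|J_{r}^{n,\ell,m}\|_{1}$, without which the normalization built into Definition~\ref{def:J-bar} would not interact cleanly with both $\Lambda_{\phi}^{n,m,\ell}$ and $\Lambda_{\phi}^{n,\ell,m}$. Once that symmetry is noted, the theorem reduces to a one-line substitution into \prettyref{thm:J-singular-values}.
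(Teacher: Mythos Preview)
Your proposal is correct and follows essentially the same approach as the paper: define $\phi'(r)=\phi(r)/\|J_{r}^{n,m,\ell}\|_{1}$, apply \prettyref{thm:J-singular-values}\prettyref{enu:singular-m-ell-small}, and identify $\Lambda_{\phi'}^{n,m,\ell}(k)$ and $\Lambda_{\phi'}^{n,\ell,m}(k)$ with $\overline{\Lambda}_{\phi}^{\,n,m,\ell}(k)$ and $\overline{\Lambda}_{\phi}^{\,n,\ell,m}(k)$ via Definition~\ref{def:Lambda-bar}. You are in fact slightly more explicit than the paper in flagging the symmetry $\|J_{r}^{n,m,\ell}\|_{1}=\|J_{r}^{n,\ell,m}\|_{1}$ needed for the second identification, which the paper leaves implicit.
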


\begin{proof}
By definition, $\overline{J}_{\phi}^{n,m,\ell}=J_{\phi'}^{n,m,\ell}$
with $\phi'\colon\mathbb{Z}\to\Re$ given by 
\[
\phi'(r)=\begin{cases}
\phi(r)/\|J_{r}^{n,m,\ell}\|_{1} & \text{if }r\in\{0,1,\ldots,\min\{m,\ell\}\},\\
0 & \text{otherwise.}
\end{cases}
\]
Recall from \prettyref{thm:J-singular-values} that the singular values
of $J_{\phi'}^{n,m,\ell}$ are $\sqrt{\Lambda_{\phi'}^{n,m,\ell}(k)\,\Lambda_{\phi'}^{n,\ell,m}(k)}$
with multiplicity $\binom{n}{k}_{q}-\binom{n}{k-1}_{q},$ where $k=0,1,\ldots,\min\{m,\ell\}.$
Since $\Lambda_{\phi'}^{n,m,\ell}(k)=\overline{\Lambda}_{\phi}^{\,n,m,\ell}(k)$
and $\Lambda_{\phi'}^{n,\ell,m}(k)=\overline{\Lambda}_{\phi}^{\,n,\ell,m}(k)$
by Definition~\prettyref{def:Lambda-bar}, the proof is complete.
\end{proof}
With our next lemma, we establish key algebraic and analytic properties
of $\overline{\Lambda}_{r}^{\,n,m,\ell}(k)$. 
\begin{lem}
\label{lem:Lambda-r-bar}Let $n,m,\ell,k$ be nonnegative integers
with $m+\ell\leq n$ and $k\leq\min\{m,\ell\}.$ Then$:$
\begin{enumerate}
\item \label{enu:Lambda-bar-poly}for $n,m,\ell,k$ fixed, $\overline{\Lambda}_{r}^{\,n,m,\ell}(k)$
as a function of $r\in\{0,1,\ldots,\min\{m,\ell\}\}$ is a polynomial
in $q^{r}$ of degree at most $k;$
\item \label{enu:Lambda-bar-bound}$|\overline{\Lambda}_{r}^{\,n,m,\ell}(k)|\leq8\binom{n}{m}_{q}^{-1}q^{-k(m-r)/2}$
for $r=0,1,\ldots,\min\{m,\ell\}$.
\end{enumerate}
\end{lem}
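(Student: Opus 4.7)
The plan is to unify the $r$-dependence of $\overline{\Lambda}_r^{\,n,m,\ell}(k)$ by first simplifying Definition~\ref{def:Lambda} using the $q$-analog of the classical identity $\binom{a}{b}\binom{b}{i}=\binom{a}{i}\binom{a-i}{b-i}$, namely $\binom{a}{b}_q\binom{b}{i}_q = \binom{a}{i}_q\binom{a-i}{b-i}_q$. Applied with $(a,b)=(n-m,\ell-r)$ and $(a,b)=(m,r)$ (taking $i'=i$ and $i'=k-i$ respectively) together with the expression for $\|J_r^{n,m,\ell}\|_1$ from Proposition~\ref{prop:J-normalization}, this yields
\[
\overline{\Lambda}_r^{\,n,m,\ell}(k) = \frac{1}{\binom{n}{m}_q}\sum_{i=0}^{k}(-1)^i\binom{k}{i}_q\,q^{\binom{i}{2}-i(m-r)}\,\frac{\binom{\ell-r}{i}_q\binom{r}{k-i}_q}{\binom{n-m}{i}_q\binom{m}{k-i}_q}.
\]
Here the denominators $\binom{n-m}{i}_q$ and $\binom{m}{k-i}_q$ are constants in $r$, so the $r$-dependence is isolated to the three factors $q^{-i(m-r)}$, $\binom{\ell-r}{i}_q$, and $\binom{r}{k-i}_q$.

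For part~\ref{enu:Lambda-bar-poly}, I will rewrite each $q$-binomial factor in a form that makes the degree in $q^r$ transparent. Using $q^{r-j}-1 = q^{-j}(q^r-q^j)$, one sees that $\binom{r}{k-i}_q$ equals $q^{-\binom{k-i}{2}}$ times a monic polynomial in $q^r$ of degree $k-i$ (divided by an $r$-independent constant). Using $q^{\ell-r-j}-1 = q^{-r-j}(q^\ell-q^{r+j})$, one sees that $\binom{\ell-r}{i}_q$ equals $q^{-ir-\binom{i}{2}}$ times a polynomial in $q^r$ of degree $i$. Multiplying by $q^{-i(m-r)}=q^{ir-im}$ absorbs the $q^{-ir}$ factor, so $q^{-i(m-r)}\binom{\ell-r}{i}_q$ is a polynomial in $q^r$ of degree $i$. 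The product of these two polynomials has degree $i+(k-i)=k$, and the overall sum has degree at most $k$ in $q^r$.

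For part~\ref{enu:Lambda-bar-bound}, I will bound each summand pointwise via Corollary~\ref{cor:qbinbound} (applying $\binom{a}{b}_q \leq 4q^{b(a-b)}$ to the three numerator factors $\binom{k}{i}_q$, $\binom{\ell-r}{i}_q$, $\binom{r}{k-i}_q$, and $\binom{a}{b}_q \geq q^{b(a-b)}$ to the two denominator factors). Combining the exponents and using $-i(m-r)-(k-i)(m-r)=-k(m-r)$ produces the bound $|\text{term}_i|\leq 64\,q^{E_i}$ with
\[
E_i \;=\; i(k+\ell+m-n-r)\;-\;\tfrac{i(i+1)}{2}\;-\;k(m-r).
\]
Since $\ell+m\leq n$ by hypothesis, the first summand is at most $i(k-r)$. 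Moreover, the term vanishes unless $k-r\leq i\leq \ell-r$ (as $\binom{r}{k-i}_q$ or $\binom{\ell-r}{i}_q$ is zero otherwise), so the function $g(i)=i(k-r)-\tfrac{i(i+1)}{2}$ is maximized on its support at $i=k-r$ with value $(k-r)(k-r-1)/2$. The inequality $(k-r)(k-r-1)\leq k(m-r)$, which follows from $k\leq m$, then gives $\max_i E_i \leq -k(m-r)/2$.

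The main obstacle will be obtaining the explicit constant $8$: the pointwise bound gives a factor of $64$, and naively summing over $i$ would multiply by $k+1$. The fix will be to exploit the quadratic decay of $-i(i+1)/2$ around the peak $i=k-r$: one checks that $g(k-r)-g(k-r\pm j)\geq \binom{j}{2}$, so $\sum_i q^{E_i-\max_i E_i}\leq 1+1+\sum_{j\geq 2}q^{-\binom{j}{2}}\cdot O(1)$, which is bounded by a small absolute constant for $q\geq 2$. Combining this geometric sum with slack from $(k-r)(k-r-1)\leq k(m-r)$ (which has at least a factor of $2$ of slack when $k<m$, giving an additional factor of $q^{-k/2}$ that can be used to absorb lower-order constants) will give the claimed bound $8\binom{n}{m}_q^{-1}q^{-k(m-r)/2}$.
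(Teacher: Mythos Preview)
Your derivation is essentially the paper's: the paper writes the same formula with the ratios $\binom{\ell-r}{i}_q/\binom{n-m}{i}_q$ and $\binom{r}{k-i}_q/\binom{m}{k-i}_q$ expanded as products $\prod_{j=0}^{i-1}\frac{q^{\ell-r}-q^j}{q^{n-m}-q^j}$ and $\prod_{j=0}^{k-i-1}\frac{q^r-q^j}{q^m-q^j}$, proves part~\ref{enu:Lambda-bar-poly} by the same factoring, and reaches the same exponent $A(i)=i(k-r)-i(i+1)/2-k(m-r)$ in part~\ref{enu:Lambda-bar-bound}.

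Where your plan falls short is the constant~$8$. Bounding the numerator and denominator binomials separately via Corollary~\ref{cor:qbinbound} costs $4^3=64$ per term, and the slack you propose to harvest from $\binom{k-r}{2}\le k(m-r)/2$ is not always enough. For instance when $k\le r$ the maximum is at $i=0$ with $A(0)=-k(m-r)$, so the available slack over the target $-k(m-r)/2$ is exactly $k(m-r)/2$; with $q=2$, $k=1$, $m-r=1$ this gives only $2^{-1/2}$, far short of the factor $128/8=16$ you need to absorb. The paper avoids the problem by bounding each \emph{ratio} directly: since $\ell-r\le n-m$ and $r\le m$, each fraction satisfies $\frac{q^a-q^j}{q^b-q^j}\le q^{a-b}$, so the two ratios contribute exactly $q^{i(\ell-r-n+m)}$ and $q^{(k-i)(r-m)}$ with no constant factor at all. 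Only $\binom{k}{i}_q\le 4q^{i(k-i)}$ remains; since $A$ is integer-valued and strictly decreasing on $[\max\{0,k-r\},\infty)$, the sum is at most $q/(q-1)\le 2$ times its peak term, giving $4\cdot 2=8$ cleanly.
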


\begin{proof}
Let $r\in\{0,1,\ldots,\min\{m,\ell\}\}$ be given. Then
\begin{align}
\overline{\Lambda}_{r}^{\,n,m,\ell}(k) & =\frac{1}{\|J_{r}^{n,m,\ell}\|_{1}}\Lambda_{r}^{n,m,\ell}(k)\nonumber \\
 & =\frac{1}{\|J_{r}^{n,m,\ell}\|_{1}}\sum_{i=0}^{k}(-1)^{i}\binom{k}{i}_{q}q^{\binom{i}{2}+(m-r)(\ell-r-i)}\binom{n-m-i}{\ell-r-i}_{q}\binom{m-k+i}{r-k+i}_{q}\nonumber \\
 & =\binom{n}{m}_{q}^{-1}\sum_{i=0}^{k}(-1)^{i}\binom{k}{i}_{q}\frac{q^{\binom{i}{2}+(m-r)(\ell-r-i)}}{q^{(m-r)(\ell-r)}}\cdot\frac{\binom{n-m-i}{\ell-r-i}_{q}}{\binom{n-m}{\ell-r}_{q}}\cdot\frac{\binom{m-k+i}{r-k+i}_{q}}{\binom{m}{r}_{q}},\label{eq:Lambda-bar-begin}
\end{align}
where the first step restates Definition~\prettyref{def:Lambda-bar},
the second step applies Definition~\prettyref{def:Lambda}, and the
final step uses~\prettyref{prop:J-normalization}. To simplify~\prettyref{eq:Lambda-bar-begin},
observe that
\begin{equation}
\frac{\binom{n-m-i}{\ell-r-i}_{q}}{\binom{n-m}{\ell-r}_{q}}=\frac{q^{\ell-r}-1}{q^{n-m}-1}\cdot\frac{q^{\ell-r}-q}{q^{n-m}-q}\cdot\cdots\cdot\frac{q^{\ell-r}-q^{i-1}}{q^{n-m}-q^{i-1}}.\label{eq:gaussian-ratio1}
\end{equation}
Indeed, if $\ell-r-i<0,$ then the left-hand side is zero by definition,
and the right-hand side also evaluates to zero. In the complementary
case $\ell-r-i\geq0,$ one obtains~\prettyref{eq:gaussian-ratio1}
directly from the definition of Gaussian binomial coefficients. One
analogously verifies that
\begin{equation}
\frac{\binom{m-k+i}{r-k+i}_{q}}{\binom{m}{r}_{q}}=\frac{q^{r}-1}{q^{m}-1}\cdot\frac{q^{r}-q}{q^{m}-q}\cdot\cdots\cdot\frac{q^{r}-q^{k-i-1}}{q^{m}-q^{k-i-1}},\label{eq:gaussian-ratio2}
\end{equation}
by considering the cases $r-k+i<0$ and $r-k+i\geq0.$ Substituting~\prettyref{eq:gaussian-ratio1}
and~\prettyref{eq:gaussian-ratio2} into~\prettyref{eq:Lambda-bar-begin}
gives
\begin{align}
\overline{\Lambda}_{r}^{\,n,m,\ell}(k) & =\binom{n}{m}_{q}^{-1}\sum_{i=0}^{k}(-1)^{i}\binom{k}{i}_{q}q^{\binom{i}{2}-(m-r)i}\cdot\frac{q^{\ell-r}-1}{q^{n-m}-1}\cdot\frac{q^{\ell-r}-q}{q^{n-m}-q}\cdot\cdots\cdot\frac{q^{\ell-r}-q^{i-1}}{q^{n-m}-q^{i-1}}\nonumber \\
 & \qquad\qquad\qquad\qquad\qquad\qquad\qquad\times\frac{q^{r}-1}{q^{m}-1}\cdot\frac{q^{r}-q}{q^{m}-q}\cdot\cdots\cdot\frac{q^{r}-q^{k-i-1}}{q^{m}-q^{k-i-1}}.\label{eq:Lambda-bar-simplified}
\end{align}

To verify the algebraic property~\prettyref{enu:Lambda-bar-poly},
rewrite~\prettyref{eq:Lambda-bar-simplified} to obtain
\begin{multline*}
\overline{\Lambda}_{r}^{\,n,m,\ell}(k)=\binom{n}{m}_{q}^{-1}\sum_{i=0}^{k}(-1)^{i}\binom{k}{i}_{q}q^{\binom{i}{2}-mi}\cdot\frac{q^{\ell}-q^{r}}{q^{n-m}-1}\cdot\frac{q^{\ell}-q^{r+1}}{q^{n-m}-q}\cdot\cdots\cdot\frac{q^{\ell}-q^{r+i-1}}{q^{n-m}-q^{i-1}}\\
\times\frac{q^{r}-1}{q^{m}-1}\cdot\frac{q^{r}-q}{q^{m}-q}\cdot\cdots\cdot\frac{q^{r}-q^{k-i-1}}{q^{m}-q^{k-i-1}}.\qquad
\end{multline*}
The $i$-th summand in this expression is, for fixed values of $n,m,\ell,k,$
clearly a polynomial in $q^{r}$ of degree at most $i+(k-i)=k.$ This
settles~\prettyref{enu:Lambda-bar-poly}.

We now turn to the analytic property,~\prettyref{enu:Lambda-bar-bound}.
Dropping the zero terms from the summation in~\prettyref{eq:Lambda-bar-simplified},
and applying the triangle inequality,
\begin{align*}
|\overline{\Lambda}_{r}^{\,n,m,\ell}(k)| & \leq\binom{n}{m}_{q}^{-1}\sum_{i=\max\{0,k-r\}}^{\min\{k,\ell-r\}}\binom{k}{i}_{q}q^{\binom{i}{2}-(m-r)i}\cdot\frac{q^{\ell-r}-1}{q^{n-m}-1}\cdot\frac{q^{\ell-r}-q}{q^{n-m}-q}\cdot\cdots\cdot\frac{q^{\ell-r}-q^{i-1}}{q^{n-m}-q^{i-1}}\\
 & \qquad\qquad\qquad\qquad\qquad\qquad\qquad\times\frac{q^{r}-1}{q^{m}-1}\cdot\frac{q^{r}-q}{q^{m}-q}\cdot\cdots\cdot\frac{q^{r}-q^{k-i-1}}{q^{m}-q^{k-i-1}}.
\end{align*}
The first $i$ fractions on the right-hand side are each bounded by
$q^{\ell-r}/q^{n-m},$ whereas the other $k-i$ fractions are each
bounded by $q^{r}/q^{m}$. Using these estimates leads to
\begin{align}
|\overline{\Lambda}_{r}^{\,n,m,\ell}(k)| & \leq\binom{n}{m}_{q}^{-1}\sum_{i=\max\{0,k-r\}}^{\min\{k,\ell-r\}}\binom{k}{i}_{q}q^{\binom{i}{2}-(m-r)i}\cdot q^{(\ell-r-n+m)i}\cdot q^{-(m-r)(k-i)}\nonumber \\
 & \leq\binom{n}{m}_{q}^{-1}\sum_{i=\max\{0,k-r\}}^{\min\{k,\ell-r\}}4q^{i(k-i)+\binom{i}{2}-(m-r)i+(\ell-r-n+m)i-(m-r)(k-i)}\nonumber \\
 & \leq\binom{n}{m}_{q}^{-1}\sum_{i=\max\{0,k-r\}}^{\min\{k,\ell-r\}}4q^{i(k-i)+\binom{i}{2}-(m-r)i-ri-(m-r)(k-i)}\nonumber \\
 & =\binom{n}{m}_{q}^{-1}\sum_{i=\max\{0,k-r\}}^{\min\{k,\ell-r\}}4q^{i(k-r-i)+\binom{i}{2}-k(m-r)},\label{eq:Lambda-bound-closer}
\end{align}
where the second step uses~\prettyref{cor:qbinbound}, and the third
step is valid since $n\geq m+\ell$ by hypothesis. Let $A(i)$ denote
the exponent of $q$ in~\prettyref{eq:Lambda-bound-closer}. Then
$A(i)$ is an integer-valued function of $i$ that strictly decreases
on $[k-r,\infty)$. As a result, \prettyref{eq:Lambda-bound-closer}
yields
\begin{align*}
|\overline{\Lambda}_{r}^{\,n,m,\ell}(k)| & \leq\binom{n}{m}_{q}^{-1}\sum_{t=0}^{\infty}4q^{A(\max\{0,k-r\})-t}\\
 & \leq8\binom{n}{m}_{q}^{-1}q^{A(\max\{0,k-r\})},
\end{align*}
where the second step uses a geometric series along with $q\geq2.$
Therefore, the proof of~\prettyref{enu:Lambda-bar-bound} will be
complete once we show that
\begin{equation}
A(\max\{0,k-r\})\leq-\frac{k(m-r)}{2}.\label{eq:A-max}
\end{equation}
There are two cases to consider. If $k\leq r,$ then $A(\max\{0,k-r\})=A(0)=-k(m-r)\leq-k(m-r)/2,$
where the last step uses the hypothesis that $r\leq m.$ If $k>r,$
then $A(\max\{0,k-r\})=A(k-r)=\binom{k-r}{2}-k(m-r)\leq k(k-r)/2-k(m-r)\leq-k(m-r)/2$,
where the last step uses the hypothesis that $k\leq m.$ This settles~\prettyref{eq:A-max}
and completes the proof of the lemma.
\end{proof}

\subsection{Approximate trace norm of the subspace problem\label{subsec:Approximate-trace-norm-subspace-problem}}

We have reached a pivotal point in our study of the subspace intersection
problem, where we analyze the approximate trace norm of its characteristic
matrix. As in our analysis of the rank problem (\prettyref{sec:rank-problem}),
we start by constructing a suitable univariate dual object.
\begin{lem}
\label{lem:phi-subspace}Let $\Delta,R,d_{1},d_{2}$ be nonnegative
integers with $0<R\leq\Delta-d_{1}-d_{2}$. Then there is a function
$\psi\colon\{0,1,\ldots,\Delta\}\to\Re$ such that:
\end{lem}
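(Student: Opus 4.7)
The plan is to mirror the construction of $\phi$ in Lemma~\ref{lem:phi}, adapted to two key features of the subspace setting. First, by Lemma~\ref{lem:Lambda-r-bar}\ref{enu:Lambda-bar-poly}, the function $\overline{\Lambda}_{r}^{\,n,m,\ell}(k)$ is a polynomial in $q^{r}$ (rather than in $q^{-r}$, as was the case for $\Gamma_{n}(s,t)$ in the rank problem). So I will need $\psi$ to be orthogonal to low-degree polynomials in the variable $q^{r}$. Second, by Theorem~\ref{thm:J-bar-singular-values}, the spectral norm of $\overline{J}_{\psi}^{\,n,m,\ell}$ is the geometric mean of two quantities governed by $\overline{\Lambda}_{\psi}^{\,n,m,\ell}$ and $\overline{\Lambda}_{\psi}^{\,n,\ell,m}$, which is why the construction now carries two tail parameters $d_{1},d_{2}$ playing symmetric roles on the $m$ and $\ell$ sides.

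First, I will derive an ``inverted'' Cauchy identity by applying the substitution $r\mapsto\Delta-r$ in Corollary~\ref{cor:qbinompoly} and using $\binom{\Delta}{r}_{q}=\binom{\Delta}{\Delta-r}_{q}$. This yields, for every polynomial $g$ of degree less than $\Delta$, an identity of the form
\[
\sum_{r=0}^{\Delta}(-1)^{r}q^{\binom{\Delta-r}{2}}\binom{\Delta}{r}_{q}g(q^{r})=0,
\]
which gives the orthogonality template in the variable $q^{r}$. Next, I will introduce a ``correcting'' polynomial $\zeta$, analogous to the one in Lemma~\ref{lem:zeta}, whose roots are the points $q^{r}$ for $r$ in a carefully chosen set that forces $\psi$ to vanish outside $\{0,R\}\cup I$ for a small ``bad'' interval $I$, and whose degree is $\Delta-1-(R-d_{1}-d_{2})$ (or similar), so that $\deg(\zeta g)<\Delta$ whenever $\deg g$ is below the desired orthogonality threshold. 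The roots of $\zeta$ will be split into two blocks: a block of $d_{1}$ roots near $r=0$ and a block of $d_{2}$ roots near $r=R$, so that the pointwise decay of $|\zeta(q^{r})|$ away from $\{0,R\}$ is controlled symmetrically on both sides.

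With $\zeta$ in hand, I will define
\[
\psi(r)=c\cdot(-1)^{r}q^{\binom{\Delta-r}{2}}\binom{\Delta}{r}_{q}\zeta(q^{r}),
\]
where $c$ is chosen so that $\psi(0)=-1$ (forcing $\psi(R)>0$ by the sign analysis of $\zeta$, exactly as for Lemma~\ref{lem:phi}\ref{enu:phi-at-k}). The verification then proceeds in four parts: (i)~the prescribed values $\psi(0)=-1$ and $\psi(R)>0$ and the support condition follow directly from $\zeta(q^{0})$, $\zeta(q^{R})$, and the root pattern of $\zeta$; (ii)~orthogonality of $\psi$ to every polynomial in $q^{r}$ of degree at most $R-d_{1}-d_{2}$ is immediate from Step~1 applied to $\zeta\cdot g$; (iii)~the pointwise estimate $|\zeta(q^{r})|$ on the ``bad'' interval $I$ is obtained by bounding each factor using Proposition~\ref{prop:prodprop}, just as in Lemma~\ref{lem:zeta}\ref{enu:zeta-small-weight}; and (iv)~combining this estimate with Corollary~\ref{cor:qbinbound} for $\binom{\Delta}{r}_{q}$ and a geometric-series argument gives the desired bound on $\sum_{r\notin\{0,R\}}|\psi(r)|$.

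The main obstacle, relative to Lemma~\ref{lem:phi}, will be splitting the single ``tail parameter'' $m$ of that construction into two parameters $d_{1},d_{2}$ and distributing the roots of $\zeta$ on both sides of the support so that the resulting pointwise bound decays as roughly $q^{-d_{1}}+q^{-d_{2}}$ (or the product thereof, depending on the exact statement). This is what will ultimately produce the asymmetric $(\log_{q}\lceil q^{m-R}\gamma\rceil+1)(\log_{q}\lceil q^{\ell-R}\gamma\rceil+1)$ form of the bound in Theorem~\ref{thm:MAIN-subspace-intersection-general}. Everything else is a careful but essentially mechanical transposition of the rank-problem argument, once the inverted Cauchy identity is in place.
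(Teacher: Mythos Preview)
Your plan is correct in spirit and would succeed, but it reconstructs from scratch what the paper obtains by a one-line reduction. The paper simply invokes Lemma~\ref{lem:phi} as a black box with the parameter assignment $(n,k,\ell,m)\leftarrow(\Delta,\Delta-R,d_{2},d_{1})$, obtaining a function $\phi$ with the five properties of that lemma, and then sets $\psi(r)=-\phi(\Delta-r)$. The reflection $r\mapsto\Delta-r$ automatically converts orthogonality to polynomials in $q^{-r}$ into orthogonality to polynomials in $q^{r}$, since $\xi(q^{r})=\xi(q^{\Delta}\cdot q^{-(\Delta-r)})$ is again a polynomial of the same degree in $q^{-(\Delta-r)}$; the support, sign, and tail conditions transfer immediately under the reflection and sign flip.

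In particular, your stated ``main obstacle''---splitting a single tail parameter into two---is not actually an obstacle: Lemma~\ref{lem:phi} already carries two support-shaping parameters $\ell$ and $m$, which map directly to $d_{2}$ and $d_{1}$, respectively. The tail bound in the subspace lemma involves only $d_{1}$ (not a symmetric $q^{-d_{1}}+q^{-d_{2}}$ expression), matching Lemma~\ref{lem:phi}\ref{enu:phi-small-tail} under $m\leftarrow d_{1}$. So your proposed reworking of $\zeta$ with two root-blocks is unnecessary; the existing construction already has exactly this structure, and the only new ingredient is the reflection.
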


\begin{enumerate}
\item $\psi(0)=-1;$\label{enu:phi-at-0-subspace}
\item $\psi(R)>0;$\label{enu:phi-at-R-subspace}
\item $\psi(r)=0$ for $r\in\{0,1,\ldots,\Delta\}\setminus(\{R+d_{1}+1,R+d_{1}+2,\ldots,\Delta-d_{2}\}\cup\{0,R\});$\label{enu:phi-vanish-subspace}
\item \label{enu:phi-orthog-subspace}$\sum_{r=0}^{\Delta}\psi(r)\xi(q^{r})=0$
for every polynomial $\xi$ of degree at most $\Delta-R-d_{1}-d_{2};$
\item $\sum_{r\in\{0,\ldots,\Delta\}\setminus\{0,R\}}|\psi(r)|\leq32q^{-d_{1}-1}$.\label{enu:phi-small-tail-subspace}
\end{enumerate}
\begin{proof}
By hypothesis, $d_{1}+d_{2}\leq\Delta-R<\Delta.$ As a result, we
may invoke \prettyref{lem:phi} with parameters $n,k,\ell,m$ set
to $\Delta,\Delta-R,d_{2,}d_{1}$, respectively, to obtain a function
$\phi\colon\{0,1,\ldots,\Delta\}\to\Re$ such that:
\begin{enumerate}[label=(\roman*$'$)]
\item $\phi(\Delta)=1;$\label{enu:_psi-Delta}
\item $\phi(\Delta-R)<0;$\label{enu:_psi-Delta-R}
\item $\phi(r)=0$ for $r\in\{0,1,\ldots,\Delta\}\setminus(\{d_{2},d_{2}+1,\ldots,\Delta-R-d_{1}-1\}\cup\{\Delta-R,\Delta\});$
\label{enu:_psi-vanish}
\item $\sum_{r=0}^{\Delta}\phi(r)\xi(q^{-r})=0$ for every polynomial $\xi$
of degree at most $\Delta-R-d_{1}-d_{2};$\label{enu:_psi-orthog}
\item $\sum_{r\in\{0,\ldots,\Delta\}\setminus\{\Delta-R,\Delta\}}|\phi(r)|\leq32q^{-d_{1}-1}$.
\label{enu:_psi-tail}
\end{enumerate}
Define $\psi\colon\{0,1,\ldots,\Delta\}\to\Re$ by $\psi(r)=-\phi(\Delta-r).$
Then \prettyref{enu:phi-at-0-subspace}, \prettyref{enu:phi-at-R-subspace},
\prettyref{enu:phi-vanish-subspace}, and \prettyref{enu:phi-small-tail-subspace}
are immediate from \prettyref{enu:_psi-Delta}, \prettyref{enu:_psi-Delta-R},
\prettyref{enu:_psi-vanish}, and \prettyref{enu:_psi-tail}, respectively.
The remaining item~\prettyref{enu:phi-orthog-subspace} follows from~\prettyref{enu:_psi-orthog}
via
\[
\sum_{r=0}^{\Delta}\psi(r)\xi(q^{r})=-\sum_{r=0}^{\Delta}\phi(\Delta-r)\xi(q^{r-\Delta}\cdot q^{\Delta})=-\sum_{i=0}^{\Delta}\phi(i)\xi(q^{-i}\cdot q^{\Delta})=0.\qedhere
\]

\end{proof}
With the univariate dual object $\psi$ now constructed, we will use
the associated subspace matrix $\Phi=\overline{J}_{\psi}^{\,n,m,\ell}$
as a dual witness to prove our sought lower bound on the approximate
trace norm. The theorem below only treats a canonical case of the
subspace intersection problem. However, we will see shortly that this
result allows us to tackle all parameter settings.
\begin{thm}
\label{thm:approx-norm-subspace}Let $n,m,\ell,R$ be given integers
with $0<R\leq\min\{m,\ell\}$ and $m+\ell\leq n$. Let $F$ be the
characteristic matrix of $\INTERSECT_{0,R}^{\mathbb{F}_{q},n,m,\ell}$.
Then for all reals $\delta\geq0$ and all nonnegative integers $d_{1},d_{2}$
with $d_{1}+d_{2}\leq\min\{m,\ell\}-R,$
\begin{align}
\|F\|_{\Sigma,\delta} & \geq\frac{1}{8}\left(1-\delta-\frac{64}{q^{d_{1}+1}}\right)\binom{n}{m}_{q}^{1/2}\binom{n}{\ell}_{q}^{1/2}q^{(\min\{m,\ell\}-R-d_{1}-d_{2}+1)(m+\ell-2\min\{m,\ell\}+2d_{2})/4},\label{eq:F-m-l-trace-norm}\\
\|F\|_{\Sigma,\delta} & \geq\frac{1-\delta}{8}\binom{n}{m}_{q}^{1/2}\binom{n}{\ell}_{q}^{1/2}q^{(m+\ell-2R)/4}.\label{eq:F-m-l-trace-norm-alternate}
\end{align}
\end{thm}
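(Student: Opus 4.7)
The plan is to mirror the proof of \prettyref{thm:approx-norm-rank-problem} for the matrix rank problem, with subspace matrices playing the role of the matrices $E_r$. Specifically, I will invoke \prettyref{lem:phi-subspace} with parameters $\Delta=\min\{m,\ell\}$ and $R,d_1,d_2$ as given (the hypothesis $d_1+d_2\le \min\{m,\ell\}-R$ guarantees admissibility) to obtain a univariate function $\psi\colon\{0,1,\ldots,\min\{m,\ell\}\}\to\Re$. I will then use the associated normalized subspace matrix $\Phi=\overline{J}_\psi^{\,n,m,\ell}$ as the dual witness in \prettyref{prop:approxtracelower}.

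The contribution from the domain of $F$ is handled using \prettyref{lem:J-bar-metric}. Since $F(S,T)=-1$ on pairs with $\dim(S\cap T)=0$ and $F(S,T)=1$ on pairs with $\dim(S\cap T)=R$, summing over $\dom F$ yields $-\psi(0)+\psi(R)=|\psi(0)|+|\psi(R)|$ by \prettyref{lem:phi-subspace}\prettyref{enu:phi-at-0-subspace}\textendash\prettyref{enu:phi-at-R-subspace}, while $\|\Phi\|_1=\|\psi\|_1$ and $\sum_{\overline{\dom F}}|\Phi_{S,T}|=\sum_{r\notin\{0,R\}}|\psi(r)|$. Combining these and using $\|\psi\|_1\geq |\psi(0)|=1$, the numerator in \prettyref{prop:approxtracelower} simplifies to at least $\bigl(1-\delta-2\sum_{r\notin\{0,R\}}|\psi(r)|\bigr)\|\psi\|_1$, and the tail estimate \prettyref{lem:phi-subspace}\prettyref{enu:phi-small-tail-subspace} bounds this below by $(1-\delta-64q^{-d_1-1})\|\psi\|_1$.

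The main technical step is bounding $\|\Phi\|$. By \prettyref{thm:J-bar-singular-values}, $\|\Phi\|=\max_k\sqrt{|\overline{\Lambda}_\psi^{\,n,m,\ell}(k)|\,|\overline{\Lambda}_\psi^{\,n,\ell,m}(k)|}$ over $k\in\{0,1,\ldots,\min\{m,\ell\}\}$. I split into two ranges. For $k\le\min\{m,\ell\}-R-d_1-d_2$, \prettyref{lem:Lambda-r-bar}\prettyref{enu:Lambda-bar-poly} says $\overline{\Lambda}_r(k)$ is a polynomial in $q^r$ of degree at most $k$, so the orthogonality property \prettyref{lem:phi-subspace}\prettyref{enu:phi-orthog-subspace} forces $\overline{\Lambda}_\psi^{\,n,m,\ell}(k)=\overline{\Lambda}_\psi^{\,n,\ell,m}(k)=0$. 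For larger $k$, I use the analytic bound \prettyref{lem:Lambda-r-bar}\prettyref{enu:Lambda-bar-bound}: the support of $\psi$ is contained in $\{0,R\}\cup\{R+d_1+1,\ldots,\min\{m,\ell\}-d_2\}$, so for every $r$ in the support, $m-r\ge m-\min\{m,\ell\}+d_2$ and $\ell-r\ge\ell-\min\{m,\ell\}+d_2$. This yields
\[
\sqrt{|\overline{\Lambda}_\psi^{\,n,m,\ell}(k)|\,|\overline{\Lambda}_\psi^{\,n,\ell,m}(k)|}\le 8\binom{n}{m}_q^{-1/2}\binom{n}{\ell}_q^{-1/2}q^{-k(m+\ell-2\min\{m,\ell\}+2d_2)/4}\|\psi\|_1,
\]
and the worst case $k=\min\{m,\ell\}-R-d_1-d_2+1$ gives the spectral norm bound that, after substitution into \prettyref{prop:approxtracelower}, produces exactly \prettyref{eq:F-m-l-trace-norm}.

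The alternate bound \prettyref{eq:F-m-l-trace-norm-alternate} will follow from the same framework by specializing the parameters to $d_1=0$ and $d_2=\min\{m,\ell\}-R$: here the intermediate support set $\{R+d_1+1,\ldots,\Delta-d_2\}$ is empty, so $\psi$ is supported exactly on $\{0,R\}$, the tail $\sum_{r\notin\{0,R\}}|\psi(r)|$ vanishes, and the exponent $(\min\{m,\ell\}-R-d_1-d_2+1)(m+\ell-2\min\{m,\ell\}+2d_2)/4$ simplifies to $(m+\ell-2R)/4$. The hardest step is the spectral-norm bound for large $k$; while the algebraic orthogonality is a direct consequence of the Cauchy binomial theorem embedded in \prettyref{lem:phi-subspace}, getting a clean analytic estimate that uses both $\binom{n}{m}_q^{-1/2}$ and $\binom{n}{\ell}_q^{-1/2}$ symmetrically requires careful bookkeeping of where $\min\{m,\ell\}$ enters through the support of $\psi$.
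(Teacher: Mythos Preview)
Your proposal is correct and follows essentially the same approach as the paper's own proof: the same dual witness $\overline{J}_\psi^{\,n,m,\ell}$ with $\psi$ from \prettyref{lem:phi-subspace}, the same split of the singular-value maximum into the orthogonality range $k\le\min\{m,\ell\}-R-d_1-d_2$ and the analytic range, and the same specialization $d_1=0,\ d_2=\min\{m,\ell\}-R$ for \prettyref{eq:F-m-l-trace-norm-alternate}. The constants and the exponent bookkeeping you describe match the paper exactly.
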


\begin{proof}
Structurally, the proof is similar to that of~\prettyref{thm:approx-norm-rank-problem}.
Let $\Delta=\min\{m,\ell\}.$ Then $0<R\leq\Delta-d_{1}-d_{2}$ by
hypothesis. Let $\psi\colon\{0,1,\ldots,\Delta\}\to\Re$ be the function
constructed in \prettyref{lem:phi-subspace}, and extend $\psi$ to
all of $\mathbb{Z}$ by defining $\psi(r)=0$ for $r\notin\{0,1,\ldots,\Delta\}.$
Then
\begin{align}
\sum_{\dom F}F_{S,T} & (\overline{J}_{\psi}^{\,n,m,\ell})_{S,T}-\delta\|\overline{J}_{\psi}^{\,n,m,\ell}\|_{1}-\sum_{\overline{\dom F}}|(\overline{J}_{\psi}^{\,n,m,\ell})_{S,T}|\nonumber \\
 & =-\sum_{\dim(S\cap T)=0}(\overline{J}_{\psi}^{\,n,m,\ell})_{S,T}+\sum_{\dim(S\cap T)=R}(\overline{J}_{\psi}^{\,n,m,\ell})_{S,T}-\delta\|\overline{J}_{\psi}^{\,n,m,\ell}\|_{1}\nonumber \\
 & \qquad\qquad\qquad\qquad\qquad\qquad\qquad\qquad\qquad-\sum_{\dim(S\cap T)\notin\{0,R\}}|(\overline{J}_{\psi}^{\,n,m,\ell})_{S,T}|\nonumber \\
 & =-\psi(0)+\psi(R)-\delta\|\psi\|_{1}-\sum_{r\notin\{0,R\}}|\psi(r)|\nonumber \\
 & =|\psi(0)|+|\psi(R)|-\delta\|\psi\|_{1}-\sum_{r\notin\{0,R\}}|\psi(r)|\nonumber \\
 & =(1-\delta)\|\psi\|_{1}-2\sum_{r\notin\{0,R\}}|\psi(r)|\nonumber \\
 & \geq\left(1-\delta-2\sum_{r\notin\{0,R\}}|\psi(r)|\right)\|\psi\|_{1},\label{eq:Jbar-F-correlation}
\end{align}
where the second step uses \prettyref{lem:J-bar-metric}, the third
step is valid by \prettyref{lem:phi-subspace}\prettyref{enu:phi-at-0-subspace}\textendash \prettyref{enu:phi-at-R-subspace},
and the fifth step is justified by \prettyref{lem:phi-subspace}\prettyref{enu:phi-at-0-subspace}.

We now analyze the spectral norm of $\overline{J}_{\psi}^{\,n,m,\ell}$.
Recall from \prettyref{lem:Lambda-r-bar}\prettyref{enu:Lambda-bar-poly}
that for fixed $n,m,\ell$ and fixed $k\in\{0,1,\ldots,\Delta\},$
the quantity $\overline{\Lambda}_{r}^{\,n,m,\ell}(k)$ as a function
of $r\in\{0,1,\ldots,\Delta\}$ is a polynomial in $q^{r}$ of degree
at most $k.$ As a result,
\begin{align}
\max_{k\in\{0,1,\ldots,\Delta-R-d_{1}-d_{2}\}} & \sqrt{\overline{\Lambda}_{\psi}^{\,n,m,\ell}(k)\,\overline{\Lambda}_{\psi}^{\,n,\ell,m}(k)}\nonumber \\
 & =\max_{k\in\{0,1,\ldots,\Delta-R-d_{1}-d_{2}\}}\;\sqrt{\overline{\Lambda}_{\psi}^{\,n,\ell,m}(k)\cdot\sum_{r=0}^{\Delta}\psi(r)\overline{\Lambda}_{r}^{\,n,m,\ell}(k)}\nonumber \\
 & =\max_{k\in\{0,1,\ldots,\Delta-R-d_{1}-d_{2}\}}\;\sqrt{\overline{\Lambda}_{\psi}^{\,n,\ell,m}(k)\cdot0}\nonumber \\
 & =0,\label{eq:sing-value-k-small}
\end{align}
where the first step applies Definition~\prettyref{def:Lambda-bar},
and the second step uses~\prettyref{lem:phi-subspace}\prettyref{enu:phi-orthog-subspace}.
Next, for all $k\in\{0,1,\ldots,\Delta\}$,
\begin{align*}
|\overline{\Lambda}_{\psi}^{\,n,m,\ell}(k)\,\overline{\Lambda}_{\psi}^{\,n,\ell,m}(k)| & =\left|\sum_{r=0}^{\Delta}\psi(r)\overline{\Lambda}_{r}^{\,n,m,\ell}(k)\right|\cdot\left|\sum_{r=0}^{\Delta}\psi(r)\overline{\Lambda}_{r}^{\,n,\ell,m}(k)\right|\\
 & =\left|\sum_{r=0}^{\Delta-d_{2}}\psi(r)\overline{\Lambda}_{r}^{\,n,m,\ell}(k)\right|\cdot\left|\sum_{r=0}^{\Delta-d_{2}}\psi(r)\overline{\Lambda}_{r}^{\,n,\ell,m}(k)\right|\\
 & \leq\|\psi\|_{1}\max_{r=0,1,\ldots,\Delta-d_{2}}|\overline{\Lambda}_{r}^{\,n,m,\ell}(k)|\cdot\|\psi\|_{1}\max_{r=0,1,\ldots,\Delta-d_{2}}|\overline{\Lambda}_{r}^{\,n,\ell,m}(k)|\\
 & \leq\|\psi\|_{1}^{2}\cdot8\binom{n}{m}_{q}^{-1}q^{-k(m-\Delta+d_{2})/2}\cdot8\binom{n}{\ell}_{q}^{-1}q^{-k(\ell-\Delta+d_{2})/2},
\end{align*}
where the first step applies Definition~\prettyref{def:Lambda-bar},
the second step uses \prettyref{lem:phi-subspace}\prettyref{enu:phi-vanish-subspace},
and the last step invokes \prettyref{lem:Lambda-r-bar}\prettyref{enu:Lambda-bar-bound}
to bound $\overline{\Lambda}_{r}^{\,n,m,\ell}(k)$ and then again
(with the roles of $m$ and $\ell$ interchanged) to bound $\overline{\Lambda}_{r}^{\,n,\ell,m}(k)$.
It follows that
\begin{align}
 & \max_{k\in\{\Delta-R-d_{1}-d_{2}+1,\ldots,\Delta-1,\Delta\}}\sqrt{\overline{\Lambda}_{\psi}^{\,n,m,\ell}(k)\,\overline{\Lambda}_{\psi}^{\,n,\ell,m}(k)}\nonumber \\
 & \qquad\qquad\leq\max_{k\in\{\Delta-R-d_{1}-d_{2}+1,\ldots,\Delta-1,\Delta\}}\;8\|\psi\|_{1}\left(\binom{n}{m}_{q}\binom{n}{\ell}_{q}q^{k(m-\Delta+d_{2})/2}q^{k(\ell-\Delta+d_{2})/2}\right)^{-1/2}\nonumber \\
 & \qquad\qquad=8\|\psi\|_{1}\left(\binom{n}{m}_{q}\binom{n}{\ell}_{q}q^{(\Delta-R-d_{1}-d_{2}+1)(m+\ell-2\Delta+2d_{2})/2}\right)^{-1/2}.\label{eq:sing-value-k-large}
\end{align}
As a result,
\begin{align}
\|\overline{J}_{\psi}^{\,n,m,\ell}\| & =\max_{k\in\{0,1,\ldots,\Delta\}}\sqrt{\overline{\Lambda}_{\psi}^{\,n,m,\ell}(k)\,\overline{\Lambda}_{\psi}^{\,n,\ell,m}(k)}\nonumber \\
 & \leq8\|\psi\|_{1}\left(\binom{n}{m}_{q}\binom{n}{\ell}_{q}q^{(\Delta-R-d_{1}-d_{2}+1)(m+\ell-2\Delta+2d_{2})/2}\right)^{-1/2},\label{eq:Jbar-spectral-norm-bound}
\end{align}
where the first step appeals to~\prettyref{thm:J-bar-singular-values},
and the second step substitutes the upper bounds from~\prettyref{eq:sing-value-k-small}
and \prettyref{eq:sing-value-k-large}. 

We are now in a position to complete the proof of the theorem. \prettyref{prop:approxtracelower}
with $\Phi=\overline{J}_{\psi}^{\,n,m,\ell}$ implies, in view of~\prettyref{eq:Jbar-F-correlation}
and~\prettyref{eq:Jbar-spectral-norm-bound}, that
\[
\|F\|_{\Sigma,\delta}\geq\frac{1}{8}\left(1-\delta-2\sum_{r\notin\{0,R\}}|\psi(r)|\right)\binom{n}{m}_{q}^{1/2}\binom{n}{\ell}_{q}^{1/2}q^{(\Delta-R-d_{1}-d_{2}+1)(m+\ell-2\Delta+2d_{2})/4}.
\]
Since $\sum_{r\notin\{0,R\}}|\psi(r)|\leq32q^{-d_{1}-1}$ by \prettyref{lem:phi-subspace}\prettyref{enu:phi-small-tail-subspace},
this settles~\prettyref{eq:F-m-l-trace-norm}. In the special case
$d_{1}=0$ and $d_{2}=\Delta-R,$ we have $\sum_{r\notin\{0,R\}}|\psi(r)|=0$
from \prettyref{lem:phi-subspace}\prettyref{enu:phi-vanish-subspace},
whence~\prettyref{eq:F-m-l-trace-norm-alternate}.
\end{proof}

\subsection{\label{subsec:Communication-lower-bounds-subspace}Communication
lower bounds}

We will now prove an optimal lower bound on the communication complexity
of the subspace intersection problem. To simplify the exposition,
we will first consider the canonical case where Alice and Bob need
to determine whether the intersection of their subspaces has dimension
$0$ versus dimension $R$, corresponding to the approximate trace
norm result that we just obtained. The general lower bound for all
parameter settings will then follow using the reduction of \prettyref{prop:INTERSECT-reductions}.
\begin{lem}
\label{lem:INTERSECT-cc-lower-bound-canonical}Let $\mathbb{F}$ be
a finite field with $q=|\mathbb{F}|$ elements. Let $n,m,\ell,R$
be nonnegative integers with
\begin{align}
 & 0<R\leq\min\{m,\ell\},\label{eq:rmell}\\
 & R<\max\{m,\ell\},\label{eq:intersect-not-EQ}\\
 & m+\ell\leq n.\label{eq:m-ell-leq-n}
\end{align}
Then
\begin{equation}
Q_{(1-\gamma)/2}^{*}(\INTERSECT_{0,R}^{\mathbb{F},n,m,\ell})\geq c(\log_{q}\lceil q^{m-R}\gamma\rceil+1)(\log_{q}\lceil q^{\ell-R}\gamma\rceil+1)\log q\label{eq:INTERSECT-cc-lower-bound-canonical}
\end{equation}
for all $\gamma\in[\frac{1}{3}q^{-(m+\ell-2R)/5},1],$ where $c>0$
is an absolute constant independent of $\mathbb{F},n,m,\ell,R,\gamma.$
\end{lem}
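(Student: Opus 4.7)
The plan is to combine our approximate trace norm lower bound (\prettyref{thm:approx-norm-subspace}) with the approximate trace norm method (\prettyref{thm:approx-trace-norm-method}). A pleasant feature of the setup is that the factor $\binom{n}{m}_{q}^{1/2}\binom{n}{\ell}_{q}^{1/2}$ in \prettyref{thm:approx-norm-subspace} cancels precisely against $\sqrt{|X||Y|}$ in \prettyref{thm:approx-trace-norm-method} (the rows and columns of the characteristic matrix being indexed by sets of sizes $\binom{n}{m}_{q}$ and $\binom{n}{\ell}_{q}$), so the resulting communication lower bound will come entirely from the $q^{(\mathrm{exponent})}$ factor together with the logarithm of the error prefactor $\gamma - 64/q^{d_{1}+1}$ obtained by setting $\delta = 1-\gamma$. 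By the symmetry of $\dim(S \cap T)$ in its two arguments, I may swap Alice and Bob and thereby assume without loss of generality that $m \leq \ell$; since the target bound is symmetric in $m$ and $\ell$, this is free. Abbreviate $a = \log_{q}\lceil q^{m-R}\gamma\rceil$ and $b = \log_{q}\lceil q^{\ell-R}\gamma\rceil$, so that $0 \leq a \leq b$.

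The main case is when $d_{1} := \max\{0,\lceil\log_{q}(256/\gamma)\rceil - 1\}$ satisfies $d_{1} \leq m-R$. I then take $d_{2} = \lfloor(m-R-d_{1})/2\rfloor$, which is nonnegative with $d_{1}+d_{2}\leq m-R$, as required by \prettyref{thm:approx-norm-subspace}. The choice of $d_{1}$ ensures that the error prefactor is at least $3\gamma/4$, and a direct calculation using the estimates $m-R-d_{1} = a + O(1)$ and $\ell-R-d_{1} = b + O(1)$ (which follow from $q^{d_{1}+1} = \Theta(1/\gamma)$ together with the definitions of $a,b$) shows that the first factor of the exponent in \prettyref{thm:approx-norm-subspace} is $\Theta(a+1)$ and the second factor $\ell-m+2d_{2} = \ell-R-d_{1}+O(1)$ is $\Theta(b+1)$. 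Plugging into \prettyref{thm:approx-trace-norm-method} yields $Q_{(1-\gamma)/2}^{*} = \Omega((a+1)(b+1)\log q)$; the logarithmic contribution from the prefactor $3\gamma/4$ is absorbed using the hypothesis $\gamma \geq q^{-(m+\ell-2R)/5}/3$, which combined with $m+\ell-2R = a + b + 2d_{1}+ O(1)$ implies $\log_{q}(1/\gamma) \leq (a+b)/3 + O(1) = O((a+1)(b+1))$.

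The remaining regime $d_{1} > m-R$ forces $q^{m-R}\gamma < 256$ and hence $a$ to be bounded by an absolute constant, so the target reduces to $\Omega((b+1)\log q)$. Here I fall back on the alternate bound \eqref{eq:F-m-l-trace-norm-alternate} of \prettyref{thm:approx-norm-subspace}, which combined with \prettyref{thm:approx-trace-norm-method} and $\gamma \geq q^{-(m+\ell-2R)/5}/3$ yields $Q_{(1-\gamma)/2}^{*} \geq (m+\ell-2R)/40 \cdot \log_{2}q - O(1)$. Since $m+\ell-2R \geq \ell-R \geq b - O(1)$, this gives the required $\Omega((b+1)\log q)$ bound once $b\log q$ is at least a suitable absolute constant; in the complementary subregime both $b$ and $\ell-R$ are bounded, which together with the lower bound on $\gamma$ and the hypothesis $R<\ell$ forces $q$ itself to be bounded by an absolute constant, so the target $(a+1)(b+1)\log q = O(1)$ is then handled by the trivial one-bit lower bound. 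The main technical obstacle will be the careful accounting across these subregimes so that a single absolute constant $c > 0$ works uniformly.
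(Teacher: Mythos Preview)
Your overall strategy matches the paper's: choose $d_1,d_2$ to plug into \prettyref{thm:approx-norm-subspace}, split into two regimes, and fall back on the alternate bound \eqref{eq:F-m-l-trace-norm-alternate} in the second. But there is a genuine gap in your main case. You claim that since the exponent $E$ is $\Theta((a+1)(b+1))$ and $\log_q(1/\gamma)\le (a+b)/3+O(1)=O((a+1)(b+1))$, the prefactor contribution is ``absorbed.'' That inference is invalid: it requires the \emph{lower} implied constant in $E$ to exceed the \emph{upper} implied constant in $\log_q(1/\gamma)$, and here it does not. Concretely, take $q=2$, $m-R=M$, $\ell-R=4M$, and $\gamma\in[2^{7-M},2^{8-M})$; then the hypothesis $\gamma\ge \tfrac{1}{3}q^{-(m+\ell-2R)/5}$ holds, and your $d_1$ equals $M$, so $u:=m-R-d_1=0$, $v:=\ell-R-d_1=3M$, $d_2=0$. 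Your exponent is $E=(\lceil u/2\rceil+1)\cdot v/4=3M/4$, while $\log_2(1/\gamma)\approx M-8$, so $E-\log_q(1/\gamma)\approx -M/4\to-\infty$. Meanwhile $a\in[7,8]$ and $b\approx 3M+7$, so the target $(a+1)(b+1)\log q\approx 27M\to\infty$; averaging with the trivial one-bit bound cannot save a uniform constant.

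The paper avoids this by placing the case boundary not at $d_1\le \min\{m,\ell\}-R$ but at $q^{\min\{m,\ell\}-R}\gamma\gtrless q^{23}$, which guarantees $u\ge 16$ in the main case. With $u/8\ge 2$ one can then fold $\log_q(1/\gamma)$ into the \emph{second} factor rather than subtracting it additively: $\tfrac{1}{8}uv-\log_q(1/\gamma)\ge \tfrac{1}{8}u\bigl(v-\tfrac{1}{2}\log_q(1/\gamma)\bigr)$, and the $\gamma$-constraint combined with $m\le\ell$ keeps $v-\tfrac{1}{2}\log_q(1/\gamma)$ a positive fraction of $b+1$ (this is Claim~\ref{claim:ell-R-m-R}). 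The fix to your argument is routine---shift your threshold by an absolute constant (main case $d_1\le m-R-C_0$) and enlarge the ``remaining regime'' accordingly; $a$ stays bounded there, so your alternate-bound analysis still goes through.
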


\begin{proof}
Due to the symmetry between $m$ and $\ell$ in the statement of the
lemma, we may assume that
\begin{equation}
m\geq\ell,\label{eq:m-ge-ell}
\end{equation}
corresponding to the mnemonic ``$m$ for more, $\ell$ for less.''
The hypotheses~\prettyref{eq:m-ell-leq-n} ensures that there is
a pair of subspaces in $\Scal(\mathbb{F}^{n},m)\times\Scal(\mathbb{F}^{n},\ell)$
whose intersection has dimension $0$; analogously, \prettyref{eq:rmell}
and~\prettyref{eq:m-ell-leq-n} ensure that there is a pair of subspaces
in $\Scal(\mathbb{F}^{n},m)\times\Scal(\mathbb{F}^{n},\ell)$ whose
intersection has dimension $R.$ This makes $\INTERSECT_{0,R}^{\mathbb{F},n,m,\ell}$
a nonconstant function, with the trivial lower bound
\begin{equation}
Q_{(1-\gamma)/2}^{*}(\INTERSECT_{0,R}^{\mathbb{F},n,m,\ell})\geq1.\label{eq:INTERSECT-trivial}
\end{equation}
It suffices to prove that the characteristic matrix $F$ of this communication
problem satisfies
\begin{equation}
\|F\|_{\Sigma,1-\gamma}\geq c'\binom{n}{m}_{q}^{1/2}\binom{n}{\ell}_{q}^{1/2}q^{c'(\log_{q}\lceil q^{m-R}\gamma\rceil+1)(\log_{q}\lceil q^{\ell-R}\gamma\rceil+1)}\label{eq:INTERSECT-trace-norm}
\end{equation}
for some absolute constant $c'>0.$ Indeed, once this lower bound
is established, an appeal to \prettyref{thm:approx-trace-norm-method}
yields
\begin{align}
Q_{(1-\gamma)/2}^{*} & (\INTERSECT_{0,R}^{\mathbb{F},n,m,\ell})\nonumber \\
 & \qquad\geq\frac{1}{2}\log\frac{c'q^{c'(\log_{q}\lceil q^{m-R}\gamma\rceil+1)(\log_{q}\lceil q^{\ell-R}\gamma\rceil+1)}}{3}\nonumber \\
 & \qquad=\frac{c'}{2}(\log_{q}\lceil q^{m-R}\gamma\rceil+1)(\log_{q}\lceil q^{\ell-R}\gamma\rceil+1)\log q-\frac{1}{2}\log\frac{3}{c'}.\label{eq:Q-subspace-lower-bound}
\end{align}
Taking a weighted arithmetic average of~\prettyref{eq:INTERSECT-trivial}
and~\prettyref{eq:Q-subspace-lower-bound} settles~\prettyref{eq:INTERSECT-cc-lower-bound-canonical}.

In what follows, we prove~\prettyref{eq:INTERSECT-trace-norm}. We
first examine the case $\gamma\leq q^{-\ell+R+23}.$ Equation~\prettyref{eq:F-m-l-trace-norm-alternate}
of~\prettyref{thm:approx-norm-subspace} yields
\begin{align}
\|F\|_{\Sigma,1-\gamma} & \geq\frac{\gamma}{8}\binom{n}{m}_{q}^{1/2}\binom{n}{\ell}_{q}^{1/2}q^{(m+\ell-2R)/4}\nonumber \\
 & \geq\frac{1}{24}\binom{n}{m}_{q}^{1/2}\binom{n}{\ell}_{q}^{1/2}q^{(m+\ell-2R)/20},\label{eq:trace-large-case}
\end{align}
where the second step uses the lemma hypothesis that $\gamma\geq\frac{1}{3}q^{-(m+\ell-2R)/5}$.
Moreover,
\begin{align}
m+\ell-2R & \geq m-R\nonumber \\
 & \geq\frac{1}{2}(m-R+1)\nonumber \\
 & \geq\frac{1}{2}(m-R+1)\cdot\frac{1}{24}(\log_{q}\lceil q^{\ell-R}\gamma\rceil+1)\nonumber \\
 & \ge\frac{1}{48}(\log_{q}\lceil q^{m-R}\gamma\rceil+1)(\log_{q}\lceil q^{\ell-R}\gamma\rceil+1),\label{eq:mell-intermed}
\end{align}
where the first step uses~\prettyref{eq:rmell}, the second step
is valid by~\prettyref{eq:intersect-not-EQ} and~\prettyref{eq:m-ge-ell},
the third step is legitimate because $\gamma\leq q^{-\ell+R+23}$
in the case under consideration, and the last step uses the lemma
hypothesis that $\gamma\leq1.$ Equations~\prettyref{eq:trace-large-case}
and~\prettyref{eq:mell-intermed} imply \prettyref{eq:INTERSECT-trace-norm}
for $c'=1/960.$

We now examine the complementary case, $\gamma\geq q^{-\ell+R+23}.$
This assumption on $\gamma$, along with the lemma hypothesis that
$\gamma\leq1,$ implies that the integer $d_{1}=\lfloor\log_{q}(128/\gamma)\rfloor$
is an element of $\{0,1,2,\ldots,\ell-R\}.$ This in turn means that
the integer $d_{2}=\lceil(\ell-R-d_{1})/2\rceil$ is also an element
of $\{0,1,2,\ldots,\ell-R\}.$ We have $d_{1}+d_{2}=d_{1}+\lceil(\ell-R-d_{1})/2\rceil\leq d_{1}+(\ell-R-d_{1})=\ell-R=\min\{m,\ell\}-R,$
where the last step uses~\prettyref{eq:m-ge-ell}. As a result, \prettyref{thm:approx-norm-subspace}
is applicable with parameters $d_{1}$ and $d_{2}$, and equation~\prettyref{eq:F-m-l-trace-norm}
yields 
\begin{align}
\|F\|_{\Sigma,1-\gamma} & \geq\frac{1}{8}\left(\gamma-\frac{64}{q^{d_{1}+1}}\right)\binom{n}{m}_{q}^{1/2}\binom{n}{\ell}_{q}^{1/2}q^{(\min\{m,\ell\}-R-d_{1}-d_{2}+1)(m+\ell-2\min\{m,\ell\}+2d_{2})/4}\nonumber \\
 & \geq\frac{1}{8}\left(\gamma-\frac{64}{q^{d_{1}+1}}\right)\binom{n}{m}_{q}^{1/2}\binom{n}{\ell}_{q}^{1/2}q^{(\ell-R-d_{1}-d_{2}+1)(m-\ell+2d_{2})/4}\nonumber \\
 & \geq\frac{\gamma}{16}\binom{n}{m}_{q}^{1/2}\binom{n}{\ell}_{q}^{1/2}q^{(\ell-R-d_{1}-d_{2}+1)(m-\ell+2d_{2})/4}\nonumber \\
 & \geq\frac{\gamma}{16}\binom{n}{m}_{q}^{1/2}\binom{n}{\ell}_{q}^{1/2}q^{(\ell-R-d_{1})(m-R-d_{1})/8}\nonumber \\
 & =\frac{1}{16}\binom{n}{m}_{q}^{1/2}\binom{n}{\ell}_{q}^{1/2}q^{(\ell-R-\lfloor\log_{q}(128/\gamma)\rfloor)(m-R-\lfloor\log_{q}(128/\gamma)\rfloor)/8-\log_{q}(1/\gamma)},\label{eq:F-m-ell-R-intermed}
\end{align}
where the second step applies~\prettyref{eq:m-ge-ell}, the third
and fifth steps use the definition of $d_{1},$ and the fourth step
uses the definition of $d_{2}$. Recall that $\gamma\geq q^{-\ell+R+23}$
in the case under consideration, and $\gamma\in[\frac{1}{3}q^{-(m+\ell-2R)/5},1]$
by the lemma hypothesis. We may therefore use Claim~\ref{claim:ell-R-m-R},
stated and proved below, to simplify to the right-hand side of~\prettyref{eq:F-m-ell-R-intermed}
as follows:
\[
\|F\|_{\Sigma,1-\gamma}\geq\frac{1}{16}\binom{n}{m}_{q}^{1/2}\binom{n}{\ell}_{q}^{1/2}q^{(\log_{q}\lceil q^{m-R}\gamma\rceil+1)(\log_{q}\lceil q^{\ell-R}\gamma\rceil+1)/160}.
\]
This establishes~\prettyref{eq:INTERSECT-trace-norm} with $c'=1/160,$
completing the proof of the lemma.
\end{proof}
\begin{claim}
\label{claim:ell-R-m-R}For any $\gamma$ with $\max\{q^{-\ell+R+23},\frac{1}{3}q^{-(m+\ell-2R)/5}\}\leq\gamma\leq1,$
\begin{multline}
\frac{1}{8}\left(\ell-R-\left\lfloor \log_{q}\frac{128}{\gamma}\right\rfloor \right)\left(m-R-\left\lfloor \log_{q}\frac{128}{\gamma}\right\rfloor \right)-\log_{q}\frac{1}{\gamma}\\
\geq\frac{1}{160}(\log_{q}\lceil q^{m-R}\gamma\rceil+1)(\log_{q}\lceil q^{\ell-R}\gamma\rceil+1).\qquad\label{eq:ell-R-m-R}
\end{multline}
\end{claim}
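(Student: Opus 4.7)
I would introduce abbreviations that make the arithmetic transparent: set $L = \log_q(1/\gamma) \geq 0$, $a = m - R - L$, and $b = \ell - R - L$. By the standing assumption $m \geq \ell$ inherited from the enclosing proof of \prettyref{lem:INTERSECT-cc-lower-bound-canonical}, one has $a \geq b$. The first hypothesis $\gamma \geq q^{-\ell+R+23}$ is equivalent to $b \geq 23$, while the second hypothesis $\gamma \geq \frac{1}{3} q^{-(m+\ell-2R)/5}$ rearranges, after substituting $m+\ell-2R = a+b+2L$ and using $\log_q 3 < 5/3$ for $q \geq 2$, to $L \leq (a+b)/3 + 3$.

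Next I would bound both sides of the target inequality in terms of $a$, $b$, $L$. For the left-hand side, the key fact is $\log_q 128 \leq 7$ for every $q \geq 2$, so $\lfloor \log_q(128/\gamma) \rfloor \leq L + 7$, and hence the left-hand side is at least $\frac{1}{8}(a-7)(b-7) - L$. For the right-hand side, the estimate $\lceil q^x \rceil \leq 2 q^x$, valid for $x \geq 0$, gives $\log_q \lceil q^x \rceil + 1 \leq x + 2$; applying this at $x = a$ and $x = b$ (both $\geq 23 > 0$) yields that the right-hand side is at most $\frac{1}{160}(a+2)(b+2)$.

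It remains to verify the reduced inequality $\frac{1}{8}(a-7)(b-7) - L \geq \frac{1}{160}(a+2)(b+2)$ on the region $\{a \geq b \geq 23,\ L \leq (a+b)/3 + 3\}$. The observation that $(x+2)/(x-7) = 1 + 9/(x-7) \leq 25/16$ for $x \geq 23$ upper-bounds $(a+2)(b+2) \leq (625/256)(a-7)(b-7)$, after which the coefficient of $(a-7)(b-7)$ on the left exceeds that on the right by at least $\frac{1}{8} - \frac{625}{40960} > \frac{1}{10}$. The claim thus reduces to the concrete two-variable inequality $\frac{1}{10}(a-7)(b-7) \geq (a+b)/3 + 3$ on $\{a,b \geq 23\}$. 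At the corner $a = b = 23$ the left side exceeds $25$ while the right is less than $19$; and both partial derivatives of the gap are positive throughout the region (since $\frac{1}{10} \cdot 16 > \frac{1}{3}$), so the gap is minimized at $(23,23)$ and is strictly positive everywhere.

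The main obstacle, such as it is, is purely numerical: the constants $23$, $128$, and $160$ in the lemma statement are finely tuned so that this chain of estimates closes with a small but strictly positive margin, so each step has to preserve enough slack. Conceptually, once the right substitutions $L = \log_q(1/\gamma)$, $a = m-R-L$, $b = \ell-R-L$ are in place and the two hypotheses are translated into $b \geq 23$ and $L \leq (a+b)/3 + 3$, the claim becomes a routine verification from a handful of elementary inequalities.
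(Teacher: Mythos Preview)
Your argument is correct. Both proofs are elementary computations, but the organization differs. You introduce the substitutions $L=\log_q(1/\gamma)$, $a=m-R-L$, $b=\ell-R-L$, translate the hypotheses into the clean constraints $b\geq 23$ and $L\leq (a+b)/3+3$, crudely bound both sides of the target inequality, and then close with a two-variable calculus argument on the region $\{a\geq b\geq 23\}$. The paper instead keeps the floor and ceiling expressions throughout: it rewrites the left-hand side as $\frac{1}{8}\lceil\log_q(q^{\ell-R}\gamma/128)\rceil\cdot(m-R-\lfloor\log_q(128/\gamma)\rfloor)$, absorbs the stray $\log_q(1/\gamma)$ term into the second factor (using that the first factor is at least $2$), and then bounds the two factors separately against $\frac{1}{2}(\log_q\lceil q^{\ell-R}\gamma\rceil+1)$ and $\frac{1}{10}(\log_q\lceil q^{m-R}\gamma\rceil+1)$; the second of these bounds uses both hypotheses on $\gamma$ simultaneously via a weighted average of the two resulting upper bounds on $\log_q(1/\gamma)$. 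Your route is arguably more transparent once the substitutions are made, while the paper's route avoids the change of variables and stays closer to the quantities appearing in the claim; either way the computation is routine once the right organization is chosen.
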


\begin{proof}
The proof is somewhat tedious but straightforward. To begin with,
\begin{align}
\frac{1}{8}\left(\ell-R-\left\lfloor \log_{q}\frac{128}{\gamma}\right\rfloor \right) & \left(m-R-\left\lfloor \log_{q}\frac{128}{\gamma}\right\rfloor \right)-\log_{q}\frac{1}{\gamma}\nonumber \\
 & =\frac{1}{8}\left\lceil \log_{q}\frac{q^{\ell-R}\gamma}{128}\right\rceil \left(m-R-\left\lfloor \log_{q}\frac{128}{\gamma}\right\rfloor \right)-\log_{q}\frac{1}{\gamma}\nonumber \\
 & \geq\frac{1}{8}\left\lceil \log_{q}\frac{q^{\ell-R}\gamma}{128}\right\rceil \left(m-R-\left\lfloor \log_{q}\frac{128}{\gamma}\right\rfloor -\frac{1}{2}\log_{q}\frac{1}{\gamma}\right),\label{eq:m-R-ell-R-exponent}
\end{align}
where the last step uses the fact that $\frac{1}{8}\lceil\log_{q}(q^{\ell-R}\gamma/128)\rceil\geq2$
due to the hypothesis $\gamma\geq q^{-\ell+R+23}.$ We now bound from
below the factors in~\prettyref{eq:m-R-ell-R-exponent}. We have
\begin{equation}
\log_{q}\frac{q^{\ell-R}\gamma}{128}\geq\log_{q}\frac{q^{\ell-R}\gamma}{q^{7}}\geq\log_{q}\frac{\lceil q^{\ell-R}\gamma\rceil}{q^{8}}=\log_{q}\lceil q^{\ell-R}\gamma\rceil-8\geq\frac{1}{2}(\log_{q}\lceil q^{\ell-R}\gamma\rceil+1),\label{eq:m-R-ell-R-first-factor}
\end{equation}
where the second and fourth steps are valid because $q^{\ell-R}\gamma\geq q^{23}$
by hypothesis. The other factor in~\prettyref{eq:m-R-ell-R-exponent}
can be bounded as follows:
\begin{align}
m-R-\left\lfloor \log_{q}\frac{128}{\gamma}\right\rfloor -\frac{1}{2}\log_{q}\frac{1}{\gamma} & \geq m-R-\log_{q}\frac{q^{7}}{\gamma^{3/2}}\nonumber \\
 & \geq m-R-\log_{q}\frac{q^{7}}{(\max\{q^{-\ell+R+23},\frac{1}{3}q^{-(m+\ell-2R)/5}\})^{3/2}}\nonumber \\
 & \geq m-R-7-\frac{3}{2}\min\left\{ \ell-R-23,\frac{1}{5}(m+\ell-2R)+2\right\} \nonumber \\
 & \geq m-R-7-\frac{3}{2}\min\left\{ m-R-23,\frac{2}{5}(m-R)+2\right\} \nonumber \\
 & \geq m-R-7-\frac{3}{2}\left(\frac{1}{3}(m-R-23)+\frac{2}{3}\left(\frac{2}{5}(m-R)+2\right)\right)\nonumber \\
 & =\frac{1}{10}(m-R)+\frac{5}{2}\nonumber \\
 & \geq\frac{1}{10}(\log_{q}\lceil q^{m-R}\gamma\rceil+1),\label{eq:m-R-ell-R-second-factor}
\end{align}
where the first step applies the bound $128\leq q^{7}$ and drops
the floor operator, the second step uses the hypothesis for $\gamma,$
the fourth step is valid by~\prettyref{eq:m-ge-ell}, the fifth step
replaces the minimum by a weighted average, and the last step is legitimate
because $\gamma\leq1$ by hypothesis. Now~\prettyref{eq:ell-R-m-R}
follows from~\prettyref{eq:m-R-ell-R-exponent}\textendash \prettyref{eq:m-R-ell-R-second-factor}.
\end{proof}
We now extend the previous lemma to all possible parameter settings,
thus obtaining the desired communication lower bound for subspace
intersection.
\begin{thm}
\label{thm:INTERSECT-cc-lower-bound-general}Let $c>0$ be the absolute
constant from Lemma~\emph{\ref{lem:INTERSECT-cc-lower-bound-canonical}}.
Let $\mathbb{F}$ be a finite field with $q=|\mathbb{F}|$ elements,
and let $n,m,\ell,r,R$ be integers with $\max\{0,m+\ell-n\}\leq r<R\leq\min\{m,\ell\}.$
Then $\max\{m,\ell\}\leq n.$ Furthermore, for all $\gamma\in[\frac{1}{3}q^{-(m+\ell-2R)/5},1],$
\[
Q_{\frac{1-\gamma}{2}}^{*}(\INTERSECT_{r,R}^{\mathbb{F},n,m,\ell})\geq\begin{cases}
1 & \text{if }R=m=\ell,\\
c(\log_{q}\lceil q^{m-R}\gamma\rceil+1)(\log_{q}\lceil q^{\ell-R}\gamma\rceil+1)\log q & \text{otherwise.}
\end{cases}
\]
\end{thm}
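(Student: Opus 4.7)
The plan is to derive Theorem~\ref{thm:INTERSECT-cc-lower-bound-general} essentially as a corollary of Lemma~\ref{lem:INTERSECT-cc-lower-bound-canonical} via the padding-style reduction recorded in Proposition~\ref{prop:INTERSECT-reductions}. First, I would verify the auxiliary claim $\max\{m,\ell\}\leq n$. By hypothesis $r\geq m+\ell-n$ and $r<R\leq\min\{m,\ell\}$, so $n\geq m+\ell-r>m+\ell-R\geq m+\ell-\min\{m,\ell\}=\max\{m,\ell\}$, as desired. In particular, all $\INTERSECT$ problems appearing below are well-defined.

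For the main bound, I would split into two cases according to whether $R=m=\ell.$ In the degenerate case $R=m=\ell,$ Proposition~\ref{prop:possible-intersections-and-sums}\ref{enu:set-of-intersections} guarantees a pair $(S,T)\in\Scal(\mathbb{F}^n,m)\times\Scal(\mathbb{F}^n,\ell)$ with $\dim(S\cap T)=r$ (which exists since $r\in[\max\{0,m+\ell-n\},\min\{m,\ell\}]$ by hypothesis), while $S=T$ witnesses a pair with $\dim(S\cap T)=R$. Hence $\INTERSECT_{r,R}^{\mathbb{F},n,m,\ell}$ is nonconstant, and so any protocol achieving error strictly less than $1/2$ (which is the case since $\gamma>0$) must use at least one bit of communication, yielding the trivial lower bound of $1$.

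In the main case where $m,\ell,R$ are not all equal, we have $R<\max\{m,\ell\}$. Apply Proposition~\ref{prop:INTERSECT-reductions}, which gives a communication-free reduction
\[
\INTERSECT_{r,R}^{\mathbb{F},n,m,\ell}\;\succeq\;\INTERSECT_{0,R-r}^{\mathbb{F},n-r,m-r,\ell-r},
\]
so it suffices to invoke Lemma~\ref{lem:INTERSECT-cc-lower-bound-canonical} on the right-hand problem with parameters $n'=n-r,\ m'=m-r,\ \ell'=\ell-r,\ R'=R-r.$ I would then verify the hypotheses of the lemma one by one: $0<R'\leq\min\{m',\ell'\}$ follows from $r<R\leq\min\{m,\ell\}$; the condition $R'<\max\{m',\ell'\}$ is equivalent to $R<\max\{m,\ell\}$, which holds in this case; and $m'+\ell'\leq n'$ is equivalent to $r\geq m+\ell-n$, which is the theorem's hypothesis. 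Finally, because $m'+\ell'-2R'=m+\ell-2R$, the admissible range of $\gamma$ is preserved, and the quantities $\lceil q^{m'-R'}\gamma\rceil$ and $\lceil q^{\ell'-R'}\gamma\rceil$ equal $\lceil q^{m-R}\gamma\rceil$ and $\lceil q^{\ell-R}\gamma\rceil$, respectively. Combining these observations gives exactly the claimed lower bound
\[
Q_{(1-\gamma)/2}^{*}(\INTERSECT_{r,R}^{\mathbb{F},n,m,\ell})\geq c(\log_{q}\lceil q^{m-R}\gamma\rceil+1)(\log_{q}\lceil q^{\ell-R}\gamma\rceil+1)\log q.
\]

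There is no real obstacle beyond bookkeeping in this proof, since all the heavy lifting is in Lemma~\ref{lem:INTERSECT-cc-lower-bound-canonical} and Proposition~\ref{prop:INTERSECT-reductions}. The only step worth writing carefully is the arithmetic check that the padded hypotheses of the canonical lemma coincide precisely with the general hypotheses here, together with the observation that $m+\ell-2R$ is invariant under the padding, so that the threshold on $\gamma$ transfers verbatim.
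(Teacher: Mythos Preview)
Your proposal is correct and follows essentially the same approach as the paper: verify $\max\{m,\ell\}\leq n$ from the hypotheses, handle the degenerate case $R=m=\ell$ via nonconstancy, and in the main case apply the padding reduction of Proposition~\ref{prop:INTERSECT-reductions} followed by Lemma~\ref{lem:INTERSECT-cc-lower-bound-canonical} on the shifted parameters, checking that all hypotheses and the $\gamma$-range transfer because $m'-R'=m-R$, $\ell'-R'=\ell-R$, and $m'+\ell'-2R'=m+\ell-2R$. The paper's proof is organized identically, with the shifted hypotheses written out explicitly as a displayed list before invoking the lemma.
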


\begin{proof}
The hypothesis $\max\{0,m+\ell-n\}\leq r<R\leq\min\{m,\ell\}$ implies
that $m+\ell-n\leq\min\{m,\ell\}$, which is equivalent to $\max\{m,\ell\}\leq n.$ 

Recall from \prettyref{prop:possible-intersections-and-sums} that
for each integer $d\in[\max\{0,m+\ell-n\},\min\{m,\ell\}],$ there
are subspaces $S\in\Scal(\mathbb{F}^{n},m)$ and $T\in\Scal(\mathbb{F}^{n},\ell)$
with $\dim(S\cap T)=d.$ This makes $\INTERSECT_{r,R}^{\mathbb{F},n,m,\ell}$
a nonconstant function, which means that its $\epsilon$-error quantum
communication complexity for each $\epsilon\in[0,1/2)$ is at least
$1$ bit. This settles the claimed communication lower bounds in the
case $R=m=\ell.$ 

In what follows, we focus on the complementary case when $R,m,\ell$
are not all equal. In view of $R\leq\min\{m,\ell\},$ we infer that
$R<\max\{m,\ell\}.$ This new inequality, and the theorem hypotheses
that $m+\ell-n\leq r<R\leq\min\{m,\ell\}$ and $\gamma\in[\frac{1}{3}q^{-(m+\ell-2R)/5},1]$,
can be equivalently stated as 
\begin{align}
 & R-r<\max\{m-r,\ell-r\},\label{eq:R-prime-max}\\
 & 0<R-r\leq\min\{m-r,\ell-r\},\\
 & (m-r)+(\ell-r)\leq n-r,\\
 & \gamma\in[{\textstyle \frac{1}{3}}q^{-((m-r)+(\ell-r)-2(R-r))/5},1].\label{eq:gamma-prime}
\end{align}
Now
\begin{align}
Q_{(1-\gamma)/2}^{*}(\INTERSECT_{r,R}^{\mathbb{F},n,m,\ell}) & \geq Q_{(1-\gamma)/2}^{*}(\INTERSECT_{0,R-r}^{\mathbb{F},n-r,m-r,\ell-r})\nonumber \\
 & \geq c(\log_{q}\lceil q^{m-R}\gamma\rceil+1)(\log_{q}\lceil q^{\ell-R}\gamma\rceil+1)\log q,\label{eq:intersect-intermed-lower}
\end{align}
where the first step uses~\prettyref{prop:INTERSECT-reductions},
and the second step is valid by~\prettyref{lem:INTERSECT-cc-lower-bound-canonical}
whose application is in turn justified by~\prettyref{eq:R-prime-max}\textendash \prettyref{eq:gamma-prime}.
\end{proof}
\prettyref{thm:INTERSECT-cc-lower-bound-general} settles the \emph{quantum}
communication lower bound of~\prettyref{thm:MAIN-subspace-intersection-general}
for the \emph{promise} subspace intersection problem, and hence also
the \emph{randomized} communication lower bound for the \emph{total}
subspace intersection problem.

\subsection{\label{subsec:Communication-upper-bounds-subspace-small-error}Communication
upper bounds for small error}

In this section and the next, we prove communication upper bounds
matching our lower bound for the subspace intersection problem. We
start with a technical lemma.
\begin{lem}
\label{lem:S-T-XS-XT}Let $n,m,\ell,r,\Delta$ be nonnegative integers
with $r\leq\min\{m,\ell\}$ and $\max\{m,\ell\}\leq n.$ Fix a finite
field $\mathbb{F},$ and let $S\in\Scal(\mathbb{F}^{n},m)$ and $T\in\Scal(\mathbb{F}^{n},\ell)$
be given subspaces. Let $X\in\mathbb{F}^{(m+\ell-r+\Delta)\times n}$
and $Y\in\mathbb{F}^{(m+\ell-2r+3\Delta)\times(m+\ell-r+\Delta)}$
be uniformly random matrices. Then with probability at least $1-16|\mathbb{F}|^{-\Delta-1},$
one has
\begin{align}
 & \dim(X(S))=\dim(S),\label{eq:preserve-S}\\
 & \dim(X(T))=\dim(T),\label{eq:preserve-T}\\
 & \dim(Y((X(S))^{\perp}))=\dim((X(S))^{\perp}),\label{eq:preserve-XS}\\
 & \dim(Y((X(T))^{\perp}))=\dim((X(T))^{\perp}).\label{eq:preserve-XT}
\end{align}
Assuming~\emph{\prettyref{eq:preserve-S}\textendash \prettyref{eq:preserve-XT}},
the subspaces $S'=Y((X(S))^{\perp})$ and $T'=Y((X(T))^{\perp})$
satisfy
\begin{align}
 & \dim(S')=\ell-r+\Delta,\label{eq:Sprime}\\
 & \dim(T')=m-r+\Delta,\label{eq:Tprime}\\
 & \dim(S'\cap T')=m+\ell-r+\Delta-\dim(X(S+T))\nonumber \\
 & \qquad\qquad\qquad\qquad+\dim((X(S))^{\perp}+(X(T))^{\perp})-\dim(Y((X(S))^{\perp}+(X(T))^{\perp})).\label{eq:Sprime-Tprime}
\end{align}
\end{lem}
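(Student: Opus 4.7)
The plan is to apply \prettyref{lem:random-proj-subspaces} four times to bound the failure probabilities of \prettyref{eq:preserve-S}--\prettyref{eq:preserve-XT}, and then derive the dimension formulas \prettyref{eq:Sprime}--\prettyref{eq:Sprime-Tprime} algebraically from the dimension-preservation assumptions. Throughout, abbreviate $d=m+\ell-r+\Delta$ (the destination dimension of $X$) and $d'=m+\ell-2r+3\Delta$ (the destination dimension of $Y$). Since $r\leq\min\{m,\ell\}$, both $\dim(S)=m$ and $\dim(T)=\ell$ are at most $d$, so \prettyref{lem:random-proj-subspaces} applies to $X$ with $t=m-1$ and $t=\ell-1$, giving
\[
\Prob[\neg\prettyref{eq:preserve-S}]\leq 4q^{-(d-m+1)}=4q^{-(\ell-r+\Delta+1)}\leq 4q^{-\Delta-1},
\]
and analogously $\Prob[\neg\prettyref{eq:preserve-T}]\leq 4q^{-\Delta-1}$.

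Next I would condition on $X$ satisfying \prettyref{eq:preserve-S} and \prettyref{eq:preserve-T}, so that $(X(S))^{\perp}$ and $(X(T))^{\perp}$ are subspaces of $\mathbb{F}^{d}$ of dimensions $d-m=\ell-r+\Delta$ and $d-\ell=m-r+\Delta$, respectively. Applying \prettyref{lem:random-proj-subspaces} to $Y$ (independent of $X$) with $t=(\ell-r+\Delta)-1$ and $t=(m-r+\Delta)-1$ yields
\[
\Prob[\neg\prettyref{eq:preserve-XS}\mid X]\leq 4q^{-(m-r+2\Delta+1)}\leq 4q^{-2\Delta-1},
\]
and likewise for \prettyref{eq:preserve-XT}. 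A union bound gives total failure probability at most $4q^{-\Delta-1}+4q^{-\Delta-1}+4q^{-2\Delta-1}+4q^{-2\Delta-1}\leq 16q^{-\Delta-1}$, matching the claim.

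Assuming \prettyref{eq:preserve-S}--\prettyref{eq:preserve-XT} all hold, \prettyref{eq:Sprime} and \prettyref{eq:Tprime} are immediate: $\dim(S')=\dim((X(S))^{\perp})=d-m=\ell-r+\Delta$, and symmetrically $\dim(T')=m-r+\Delta$. For \prettyref{eq:Sprime-Tprime}, I would combine $\dim(S'\cap T')=\dim(S')+\dim(T')-\dim(S'+T')$ with the linearity identity $S'+T'=Y((X(S))^{\perp}+(X(T))^{\perp})$. Writing $U=(X(S))^{\perp}+(X(T))^{\perp}$, inclusion-exclusion and \prettyref{fact:Aperp-intersect-Bperp} (which gives $(X(S))^{\perp}\cap(X(T))^{\perp}=(X(S)+X(T))^{\perp}=(X(S+T))^{\perp}$, of dimension $d-\dim(X(S+T))$) yield
\[
\dim(U)=(\ell-r+\Delta)+(m-r+\Delta)-(d-\dim(X(S+T)))=\Delta-r+\dim(X(S+T)).
\]
Substituting $\dim(S'+T')=\dim(Y(U))=\dim(U)-(\dim(U)-\dim(Y(U)))$ and simplifying gives exactly the claimed expression for $\dim(S'\cap T')$.

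This lemma is essentially a routine application of \prettyref{lem:random-proj-subspaces}; the only real care needed is the bookkeeping of the four probabilistic events (two of which are conditional on $X$) and the symbolic verification that the expression in \prettyref{eq:Sprime-Tprime} is consistent with $\dim(S'+T')=Y(U)$. Neither step is truly hard, but ensuring that the application of \prettyref{lem:random-proj-subspaces} is legitimate (i.e., that the dimensions $t$ used are nonnegative and at most the relevant source and destination dimensions) requires checking the inequalities $\ell-r\geq 0$, $m-r\geq 0$, and $\ell-r+\Delta\leq d$, all of which follow from $r\leq\min\{m,\ell\}$ and $\Delta\geq 0$.
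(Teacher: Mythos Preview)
Your proposal is correct and follows essentially the same approach as the paper: four applications of \prettyref{lem:random-proj-subspaces} (two for $X$, two for $Y$ conditioned on $E_1\wedge E_2$) combined via a union bound, followed by the same algebraic derivation using \prettyref{fact:Aperp-intersect-Bperp} and inclusion--exclusion to compute $\dim((X(S))^{\perp}+(X(T))^{\perp})=-r+\Delta+\dim(X(S+T))$ and reduce \prettyref{eq:Sprime-Tprime} to the identity $\dim(S'\cap T')=\dim(S')+\dim(T')-\dim(S'+T')$. The paper's writeup of the union bound is marginally more explicit about separating conditional from unconditional probabilities, but the argument is the same.
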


\begin{proof}
Abbreviate $q=|\mathbb{F}|.$ Let $E_{1},E_{2},E_{3},E_{4}$ be the
events that correspond to~\prettyref{eq:preserve-S}\textendash \prettyref{eq:preserve-XT},
respectively. Applying \prettyref{lem:random-proj-subspaces} with
$t=m-1$ and $d=m+\ell-r+\Delta$ gives 
\begin{align}
\Prob[\neg E_{1}] & \leq4q^{-(\ell-r+\Delta+1)}\leq4q^{-\Delta-1}.\label{eq:E1-fails}
\end{align}
Analogously, applying \prettyref{lem:random-proj-subspaces} with
$t=\ell-1$ and $d=m+\ell-r+\Delta$ shows that
\begin{align}
\Prob[\neg E_{2}] & \leq4q^{-(m-r+\Delta+1)}\leq4q^{-\Delta-1}.\label{eq:E2-fails}
\end{align}
Conditioned on $E_{1}\wedge E_{2},$ we have 
\begin{align}
 & \dim((X(S))^{\perp})=m+\ell-r+\Delta-\dim(X(S))=\ell-r+\Delta,\label{eq:XS-perp-exact}\\
 & \dim((X(T))^{\perp})=m+\ell-r+\Delta-\dim(X(T))=m-r+\Delta.\label{eq:XT-perp-exact}
\end{align}
As a result, invoking \prettyref{lem:random-proj-subspaces} with
$t=\ell-r+\Delta-1$ and $d=m+\ell-2r+3\Delta$ shows that 
\begin{equation}
\Prob[\neg E_{3}\mid E_{1}\wedge E_{2}]\leq4q^{-(m+\ell-2r+3\Delta)+(\ell-r+\Delta-1)}\leq4q^{-2\Delta-1}.\label{eq:E3-fails}
\end{equation}
Analogously, invoking \prettyref{lem:random-proj-subspaces} with
$t=m-r+\Delta-1$ and $d=m+\ell-2r+3\Delta$ shows that 
\begin{equation}
\Prob[\neg E_{4}\mid E_{1}\wedge E_{2}]\leq4q^{-(m+\ell-2r+3\Delta)+(m-r+\Delta-1)}\leq4q^{-2\Delta-1}.\label{eq:E4-fails}
\end{equation}
Now
\begin{align*}
\Prob[E_{1}\wedge E_{2}\wedge E_{3}\wedge E_{4}] & =\Prob[E_{1}\wedge E_{2}]\Prob[E_{3}\wedge E_{4}\mid E_{1}\wedge E_{2}]\\
 & \geq\Prob[E_{1}\wedge E_{2}]-\Prob[\neg(E_{3}\wedge E_{4})\mid E_{1}\wedge E_{2}]\\
 & \geq1-\Prob[\neg E_{1}]-\Prob[\neg E_{2}]-\Prob[\neg E_{3}\mid E_{1}\wedge E_{2}]-\Prob[\neg E_{4}\mid E_{1}\wedge E_{2}]\\
 & \geq1-16q^{-\Delta-1},
\end{align*}
where the last step uses \prettyref{eq:E1-fails}\textendash \prettyref{eq:E4-fails}.
This settles the first part of the lemma.

In what follows, we assume~\prettyref{eq:preserve-S}\textendash \prettyref{eq:preserve-XT}.
Then \prettyref{eq:Sprime} follows from $\dim(S')=\dim((X(S))^{\perp})=\ell-r+\Delta,$
where the last step uses~\prettyref{eq:XS-perp-exact}. Analogously,
\prettyref{eq:Tprime} follows from $\dim(T')=\dim((X(T))^{\perp})=m-r+\Delta,$
where the last step uses~\prettyref{eq:XT-perp-exact}. Toward the
remaining equation~\prettyref{eq:Sprime-Tprime}, we have
\begin{align*}
\dim((X(S))^{\perp}+(X(T))^{\perp}) & =\dim(((X(S))\cap(X(T)))^{\perp})\\
 & =m+\ell-r+\Delta-\dim((X(S))\cap(X(T)))\\
 & =m+\ell-r+\Delta-(\dim(X(S))+\dim(X(T))-\dim(X(S)+X(T)))\\
 & =-r+\Delta+\dim(X(S+T)),
\end{align*}
where the first step uses \prettyref{fact:Aperp-intersect-Bperp},
and the last step uses~\prettyref{eq:preserve-S}, \prettyref{eq:preserve-T},
and the linearity of $X.$ With this substitution, \prettyref{eq:Sprime-Tprime}
is equivalent to 
\begin{equation}
\dim(S'\cap T')=m+\ell-2r+2\Delta-\dim(Y((X(S))^{\perp}+(X(T))^{\perp})).\label{eq:Sprime-Tprime-simplfied}
\end{equation}
Due to \prettyref{eq:Sprime}, \prettyref{eq:Tprime}, and $Y((X(S))^{\perp}+(X(T))^{\perp})=Y((X(S))^{\perp})+Y((X(T))^{\perp})=S'+T'$,
equation~\prettyref{eq:Sprime-Tprime-simplfied} is a restatement
of $\dim(S'\cap T')=\dim(S')+\dim(T')-\dim(S'+T'),$ which is a well-known
identity valid for any subspaces $S',T'.$
\end{proof}
We are now ready to prove our communication upper bound for subspace
intersection in the regime where the error probability is a small
constant or tends to $0$. In the next section, we will generalize
this result to the more challenging regime where the error tends to
$1/2$.
\begin{thm}[Small error]
\label{thm:subspace-upper-bound-small-error} Let $\mathbb{F}$ be
a finite field with $q=|\mathbb{F}|$ elements. Let $n,m,\ell,R$
be integers with $0<R\leq\min\{m,\ell\}$ and $\max\{m,\ell\}\leq n$.
Then for each $0<\epsilon\leq1/3,$
\begin{equation}
R_{\epsilon}(\INTERSECT_{R}^{\mathbb{F},n,m,\ell})=O\left(\left(m-R+\left\lceil \log_{q}\frac{1}{\epsilon}\right\rceil \right)\left(\ell-R+\left\lceil \log_{q}\frac{1}{\epsilon}\right\rceil \right)\log q\right).\label{eq:subspace-upper-bound-high-accuracy}
\end{equation}
If in addition $m=\ell=R,$ then for each $0<\epsilon\leq1/3,$
\begin{equation}
R_{\epsilon}(\INTERSECT_{R}^{\mathbb{F},n,m,\ell})=O\left(\log\frac{1}{\epsilon}\right).\label{eq:intersect-equality}
\end{equation}
\end{thm}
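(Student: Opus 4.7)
The plan is to combine the random-projection machinery of \prettyref{lem:S-T-XS-XT} with an efficient one-way transmission of the projected subspace. First I will dispose of the degenerate case $m=\ell=R$ separately: here $\dim(S\cap T)\ge R$ forces $S=T$ (both subspaces have dimension $R$), so $\INTERSECT_{R}^{\mathbb{F},n,m,\ell}$ is just subspace equality, which reduces to string equality on the reduced row-echelon (RREF) encoding of $S$ and $T$ and hence admits an $O(\log(1/\epsilon))$-bit public-coin fingerprinting protocol. This settles \eqref{eq:intersect-equality}.

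For the main bound \eqref{eq:subspace-upper-bound-high-accuracy}, I set $\Delta=\lceil\log_{q}(C/\epsilon)\rceil$ for a large absolute constant $C$ and invoke \prettyref{lem:S-T-XS-XT} with $r=R$: using shared randomness, the players sample $X\in\mathbb{F}^{(m+\ell-R+\Delta)\times n}$ and $Y\in\mathbb{F}^{(m+\ell-2R+3\Delta)\times(m+\ell-R+\Delta)}$, and Alice (resp.\ Bob) forms $S'=Y((X(S))^{\perp})$ (resp.\ $T'=Y((X(T))^{\perp})$) in $\mathbb{F}^{m+\ell-2R+3\Delta}$ of dimension $\ell-R+\Delta$ (resp.\ $m-R+\Delta$). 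Combining \eqref{eq:Sprime-Tprime} with one additional random-projection event, namely that $Y$ preserves the dimension of $(X(S))^{\perp}+(X(T))^{\perp}$ (which holds with probability $1-O(q^{-\Delta-1})$ by \prettyref{lem:random-proj-subspaces}, since this sum has dimension at most $m+\ell-2R+2\Delta$ inside the codomain of $Y$ of dimension $m+\ell-2R+3\Delta$), gives the clean identity $\dim(S'\cap T')=m+\ell-R+\Delta-\dim(X(S+T))$. In the $-1$ case $\dim(S+T)\le m+\ell-R$ this yields $\dim(S'\cap T')\ge\Delta$ deterministically; in the $+1$ case $\dim(S+T)\ge m+\ell-R+1$, one further application of \prettyref{lem:random-proj-subspaces} gives $\dim(X(S+T))\ge m+\ell-R+1$ with probability $1-O(q^{-\Delta})$, whence $\dim(S'\cap T')<\Delta$.

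The protocol is therefore: Alice sends Bob the RREF of $S'$, Bob computes $T'$ locally and outputs $-1$ iff $\dim(S'\cap T')\ge\Delta$. A $k$-dimensional subspace of $\mathbb{F}^{N}$ has an RREF encoding of length $O(N+k(N-k)\log q)$ bits, so Alice's message costs
\[
O\bigl(m+\ell-2R+3\Delta+(\ell-R+\Delta)(m-R+2\Delta)\log q\bigr).
\]
The main bookkeeping obstacle, and the step I expect to require the most care, is showing this is $O((m-R+\Delta)(\ell-R+\Delta)\log q)$ whenever $(m,\ell,R)$ are not all equal: the product term $(\ell-R+\Delta)(m-R+2\Delta)$ is easily bounded by $2(\ell-R+\Delta)(m-R+\Delta)$ using $\Delta\le m-R+\Delta$, while the additive term $m+\ell-2R+3\Delta$ must be absorbed using the hypothesis $\max(m-R,\ell-R)\ge 1$, which makes $(m-R+\Delta)(\ell-R+\Delta)\log q$ at least a constant multiple of $(m-R+\Delta)+(\ell-R+\Delta)+\Delta$. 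Finally, a union bound over the $O(1)$ random-projection failure events, each of probability $O(q^{-\Delta-1})$, yields error at most $\epsilon$ once $C$ is chosen large enough.
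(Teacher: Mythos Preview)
Your proposal is correct and follows essentially the same route as the paper, which invokes \prettyref{lem:S-T-XS-XT} with $r=R-1$ rather than $r=R$, sends the \emph{smaller} of $S',T'$ as a raw basis rather than the RREF of $S'$, and first exchanges two bits to verify \eqref{eq:preserve-S}--\eqref{eq:preserve-XT} before transmitting any subspace. The one detail you should add is precisely that preliminary verification step: without it, $\dim(S')$ can exceed $\ell-R+\Delta$ on the failure events, inflating the \emph{worst-case} message length beyond the target bound; having Alice send one abort bit when $\dim(S')\neq\ell-R+\Delta$ (which she can check locally) fixes this at negligible cost.
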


\begin{proof}
Define $r=R-1.$ For an integer $\Delta\geq0$ to be set later, consider
the following protocol $\Pi.$ On input a pair of subspaces $S\in\Scal(\mathbb{F}^{n},m)$
for Alice and $T\in\Scal(\mathbb{F}^{n},\ell)$ for Bob, the parties
use their shared randomness to pick independent and uniformly random
matrices $X\in\mathbb{F}^{(m+\ell-r+\Delta)\times n}$ and $Y\in\mathbb{F}^{(m+\ell-2r+3\Delta)\times(m+\ell-r+\Delta)}$.
Next, they verify the four conditions~\prettyref{eq:preserve-S}\textendash \prettyref{eq:preserve-XT}.
This can be done using only two bits of communication, with Alice
and Bob verifying the conditions pertaining to their respective inputs.
If any of these conditions fail, they output a uniformly random value
in $\{-1,1\}.$ In the complementary case, Alice and Bob compute
\begin{align*}
S' & =Y((X(S))^{\perp}),\\
T' & =Y((X(T))^{\perp}),
\end{align*}
respectively. The owner of the smaller of the subspaces $S'$ and
$T'$ sends it to the other party in the form of a basis, who then
computes $\dim(S'\cap T')$ and outputs $1$ if and only if $\dim(S'\cap T')\leq\Delta.$

We first analyze the communication cost of $\Pi.$ If any of the conditions~\prettyref{eq:preserve-S}\textendash \prettyref{eq:preserve-XT}
fail, the communication cost is $2$ bits. If all four conditions
hold, then $\dim(S')=\ell-r+\Delta$ and $\dim(T')=m-r+\Delta$ by
\prettyref{lem:S-T-XS-XT}. As a result, a basis for the smaller of
the subspaces $S'$ and $T'$ can be communicated using $(m+\ell-2r+3\Delta)(\min\{m,\ell\}-r+\Delta)\lceil\log q\rceil$
bits, where the first factor is the dimension of the ambient space.
Altogether, the communication cost is at most
\begin{multline}
2+(2\max\{m,\ell\}-2r+3\Delta)(\min\{m,\ell\}-r+\Delta)\lceil\log q\rceil+1\\
=O((m-r+\Delta)(\ell-r+\Delta)\log q).\qquad\label{eq:comm-cost}
\end{multline}

We now analyze the correctness probability. To this end, we prove
the following claim.
\begin{claim}
\label{claim:correct-when-XY-well-behaved}The output of the protocol
is correct whenever the matrices $X,Y$ satisfy \prettyref{eq:preserve-S}\textendash \prettyref{eq:preserve-XT}
as well as the additional conditions
\begin{align}
 & \dim(X(S+T))\geq\min\{\dim(S+T),m+\ell-r\},\label{eq:preserve-S+T}\\
 & \dim(Y((X(S))^{\perp}+(X(T))^{\perp}))=\dim((X(S))^{\perp}+(X(T))^{\perp}).\label{eq:preserve-XS+XT}
\end{align}
\end{claim}

\begin{proof}
Recall from \prettyref{lem:S-T-XS-XT} that~\prettyref{eq:preserve-S}\textendash \prettyref{eq:preserve-XT}
force \prettyref{eq:Sprime-Tprime}, which in view of~\prettyref{eq:preserve-XS+XT}
simplifies to
\begin{align}
 & \dim(S'\cap T')=m+\ell-r+\Delta-\dim(X(S+T)).\label{eq:Sprime-Tprime-simpler}
\end{align}
We first consider the case $\dim(S\cap T)\leq r.$ Here $\dim(S+T)\geq m+\ell-r,$
which along with \prettyref{eq:preserve-S+T} implies that $\dim(X(S+T))\geq m+\ell-r.$
Substituting this lower bound into \prettyref{eq:Sprime-Tprime-simpler}
gives $\dim(S'\cap T')\leq\Delta$. As a result, $\Pi$ outputs the
correct value in this case. 

In the complementary case $\dim(S\cap T)\geq r+1,$ we have $\dim(S+T)\leq m+\ell-r-1$
and therefore also $\dim(X(S+T))\leq m+\ell-r-1.$ Substituting this
upper bound into~\prettyref{eq:Sprime-Tprime-simpler} gives $\dim(S'\cap T')\geq\Delta+1,$
showing that the output of $\Pi$ is correct in this case as well.
\end{proof}
Condition~\prettyref{eq:preserve-S+T} fails with probability at
most $4q^{-\Delta-1}$, by \prettyref{lem:random-proj-subspaces}
with $d=m+\ell-r+\Delta$ and $t=\min\{\dim(S+T),m+\ell-r\}-1$. Moreover,
conditioned on \prettyref{eq:preserve-S} and~\prettyref{eq:preserve-T},
one has 
\begin{align*}
\dim((X(S))^{\perp}+(X(T))^{\perp}) & \leq\dim((X(S))^{\perp})+\dim((X(T))^{\perp})\\
 & =2(m+\ell-r+\Delta)-\dim(X(S))-\dim(X(T))\\
 & \leq m+\ell-2r+2\Delta
\end{align*}
and hence \prettyref{eq:preserve-XS+XT} fails with probability at
most $4q^{-(m+\ell-2r+3\Delta)+(m+\ell-2r+2\Delta-1)}\leq4q^{-\Delta-1}$,
by \prettyref{lem:random-proj-subspaces} with $d=m+\ell-2r+3\Delta$
and $t=\dim((X(S))^{\perp}+(X(T))^{\perp})-1$. Since \prettyref{eq:preserve-S}\textendash \prettyref{eq:preserve-XT}
are simultaneously true with probability at least $1-16q^{-\Delta-1}$
(by \prettyref{lem:S-T-XS-XT}), we conclude that the six conditions
\prettyref{eq:preserve-S}\textendash \prettyref{eq:preserve-XT},
\prettyref{eq:preserve-S+T}, \prettyref{eq:preserve-XS+XT} hold
simultaneously with probability at least $1-16q^{-\Delta-1}-4q^{-\Delta-1}-4q^{-\Delta-1}=1-24q^{-\Delta-1}.$
Now Claim~\prettyref{claim:correct-when-XY-well-behaved} implies
that the described protocol $\Pi$ has error probability at most $24q^{-\Delta-1}.$
Since we calculated $\Pi$'s cost to be~\prettyref{eq:comm-cost},
we conclude that
\[
R_{24/q^{\Delta+1}}(\INTERSECT_{R}^{\mathbb{F},n,m,\ell})=O((m-r+\Delta)(\ell-r+\Delta)\log q).\qquad
\]
Taking $\Delta=\lfloor\log_{q}(24/\epsilon)\rfloor$ now settles~\prettyref{eq:subspace-upper-bound-high-accuracy}.
For the additional upper bound~\prettyref{eq:intersect-equality},
observe that $\INTERSECT_{R}^{\mathbb{F},n,m,\ell}$ for $m=\ell=R$
is the equality problem with domain $\Scal(\mathbb{F}^{n},m)\times\Scal(\mathbb{F}^{n},m).$
The claimed upper bound now follows because it is well-known~\cite[Chapter~3.3]{ccbook}
that the equality problem over any domain has $\epsilon$-error randomized
communication complexity $O(\log(1/\epsilon))$.
\end{proof}

\subsection{Communication upper bounds for large error\label{subsec:Communication-upper-bounds-large-error}}

To study the large-error regime, we recall a basic fact on vector
spaces.

\begin{prop}
\label{prop:relative-dimensions}Let $A,A'$ be subspaces such that
$A'\subseteq A.$ Then for any subspace $B,$
\begin{equation}
\dim(A\cap B)-\dim(A'\cap B)\leq\dim(A)-\dim(A').\label{eq:relative-dimensions}
\end{equation}
\end{prop}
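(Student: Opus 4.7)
The plan is to reduce the claim to the standard inclusion–exclusion identity $\dim(X+Y) = \dim(X) + \dim(Y) - \dim(X \cap Y)$, applied to a well-chosen pair of subspaces. Specifically, I would set $X = A'$ and $Y = A \cap B$, both of which sit inside $A$ (the first by hypothesis, the second by definition). The identity then reads
\[
\dim(A' + (A \cap B)) = \dim(A') + \dim(A \cap B) - \dim(A' \cap (A \cap B)).
\]

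The key observation is that the final intersection simplifies: since $A' \subseteq A$, one has $A' \cap (A \cap B) = A' \cap B$. Substituting this and using that $A' + (A \cap B)$ is a subspace of $A$, so that $\dim(A' + (A \cap B)) \leq \dim(A)$, I obtain
\[
\dim(A') + \dim(A \cap B) - \dim(A' \cap B) \leq \dim(A),
\]
which after rearranging is exactly the desired inequality \eqref{eq:relative-dimensions}.

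There is no serious obstacle here; the only thing to verify carefully is the identity $A' \cap (A \cap B) = A' \cap B$, which is immediate from $A' \subseteq A$. An alternative, equally short approach would be to consider the linear map $A \cap B \to A/A'$ obtained by restricting the quotient map $A \to A/A'$, observe that its kernel is $A' \cap B$, and apply rank–nullity together with $\dim(A/A') = \dim(A) - \dim(A')$; but the inclusion–exclusion route avoids quotient spaces and fits more naturally with the identities already used throughout the paper.
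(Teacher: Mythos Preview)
Your proof is correct and essentially identical to the paper's own argument: the paper likewise observes that $A\cap B + A'$ is a subspace of $A$, applies the dimension formula to obtain $\dim(A\cap B)+\dim(A')-\dim(A\cap B\cap A')\leq\dim(A)$, and simplifies $A\cap B\cap A'=A'\cap B$ to reach the conclusion.
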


\begin{proof}
Since $A\cap B+A'$ is a subspace of $A,$ we have $\dim(A\cap B+A')\leq\dim(A).$
Expanding the left-hand side yields $\dim(A\cap B)+\dim(A')-\dim(A\cap B\cap A')\leq\dim(A),$
which is clearly equivalent to~\prettyref{eq:relative-dimensions}.
\end{proof}
We now revisit the subspaces $S'$ and $T'$ in \prettyref{lem:S-T-XS-XT}
and study the distribution of $\dim(S'\cap T')$.
\begin{lem}
\label{lem:low-bias-expectation}Let $n,m,\ell,r,\Delta$ be nonnegative
integers with $r<\min\{m,\ell\}$ and $\max\{m,\ell\}\leq n.$ Fix
a finite field $\mathbb{F}$ with $q=|\mathbb{F}|$ elements, and
let $S\in\Scal(\mathbb{F}^{n},m)$ and $T\in\Scal(\mathbb{F}^{n},\ell)$
be given subspaces. Let $X\in\mathbb{F}^{(m+\ell-r+\Delta)\times n}$
and $Y\in\mathbb{F}^{(m+\ell-2r+3\Delta)\times(m+\ell-r+\Delta)}$
be uniformly random matrices. Let $Z$ be the indicator random variable
for the event that~\emph{\prettyref{eq:preserve-S}\textendash \prettyref{eq:preserve-XT}}
hold. Define $S'=Y((X(S))^{\perp})$ and $T'=Y((X(T))^{\perp})$.
Then:
\begin{enumerate}
\item \label{enu:intersection-small}$\Exp[Zq^{\dim(S'\cap T')}]\leq q^{\Delta}(1+8q^{-\Delta})^{2}$
whenever $\dim(S\cap T)\leq r;$
\item \label{enu:intersection-large}$\Exp[Zq^{\min\{\dim(S'\cap T'),\Delta+1\}}]\geq q^{\Delta+1}(1-16q^{-\Delta-1})$
whenever $\dim(S\cap T)\geq r+1.$
\end{enumerate}
\end{lem}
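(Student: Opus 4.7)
The plan starts from the identity
\[
\dim(S'\cap T')=m+\ell-2r+2\Delta-\dim(Y(U)),\qquad U:=(X(S))^{\perp}+(X(T))^{\perp},
\]
valid whenever $Z=1$. This is a direct consequence of the derivation inside the proof of \prettyref{lem:S-T-XS-XT}: under $Z$ one has $\dim(S')=\ell-r+\Delta$ and $\dim(T')=m-r+\Delta$, while $\dim(S'+T')=\dim(Y(U))$ since $Y$ is linear; inclusion\textendash exclusion gives the formula. The same derivation also shows the auxiliary identity $\dim(U)=-r+\Delta+\dim(X(S+T))$ whenever $X$ preserves the dimensions of $S$ and $T$; in particular $\dim(U)\leq m+\ell-2r+2\Delta$ on that event.

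For part~(ii) the argument is almost immediate. When $\dim(S\cap T)\geq r+1$, we have $\dim(S+T)\leq m+\ell-r-1$, so $\dim(X(S+T))\leq m+\ell-r-1$ for every $X$, giving $\dim(Y(U))\leq\dim(U)\leq m+\ell-2r+\Delta-1$ whenever $Z=1$. Plugging into the identity yields $\dim(S'\cap T')\geq\Delta+1$ whenever $Z=1$, so $Z\,q^{\min\{\dim(S'\cap T'),\Delta+1\}}=Zq^{\Delta+1}$ pointwise. The desired lower bound then follows from $\Prob[Z=1]\geq1-16q^{-\Delta-1}$, which is the first part of \prettyref{lem:S-T-XS-XT}.

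For part~(i) the strategy is to peel off the two expectations successively, invoking \prettyref{eq:subspace-shrink-expectation} each time. Condition first on $X$, and let $Z_{X}$ denote the event that $X$ preserves $\dim(S)$ and $\dim(T)$; then $Z\leq Z_{X}$. Writing $D=m+\ell-2r+2\Delta$ and $b(X)=\dim(U)\leq D$, the identity bounds $\Exp[Zq^{\dim(S'\cap T')}]$ by $\Exp_{X}[\mathbf{1}(Z_{X})\,q^{D-b(X)}\,\Exp_{Y}q^{b(X)-\dim(Y(U))}]$. Applying \prettyref{eq:subspace-shrink-expectation} to $U$ with parameter $b(X)$ and target dimension $m+\ell-2r+3\Delta$ bounds the inner expectation by $1+8q^{-(m+\ell-2r+3\Delta-b(X))}\leq1+8q^{-\Delta}$. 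Using the relation $D-b(X)=T_{0}-\dim(X(S+T))$ with $T_{0}=m+\ell-r+\Delta$, what remains is an upper bound on $\Exp_{X}q^{T_{0}-\dim(X(S+T))}$. Since $k:=\dim(S+T)\geq m+\ell-r$, a second application of \prettyref{eq:subspace-shrink-expectation}, to the subspace $S+T$ with parameter $\min\{T_{0},k\}$ and target dimension $T_{0}$, yields the clean bound $q^{\max\{0,T_{0}-k\}}+8\leq q^{\Delta}+8=q^{\Delta}(1+8q^{-\Delta})$. Multiplying the two factors produces the claimed bound $q^{\Delta}(1+8q^{-\Delta})^{2}$.

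The main technical subtlety lies in this last step: depending on whether $\dim(S\cap T)\leq r-\Delta$ or not, $T_{0}$ may be smaller or larger than $k$, so \prettyref{eq:subspace-shrink-expectation} must be applied at a parameter that adapts to both regimes. The choice $\min\{T_{0},k\}$ handles both cases uniformly, but a short case split is needed to verify that the resulting bounds both fit under the single envelope $q^{\Delta}+8$, exploiting that $T_{0}-k=\dim(S\cap T)-r+\Delta\leq\Delta$ throughout part~(i).
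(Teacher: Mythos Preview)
Your proof is correct and follows essentially the same approach as the paper: use the identity for $\dim(S'\cap T')$ from \prettyref{lem:S-T-XS-XT}, relax $Z$ to the indicator $Z_X$ of \eqref{eq:preserve-S}--\eqref{eq:preserve-T}, and apply the random-projection expectation bound~\eqref{eq:subspace-shrink-expectation} once for $Y$ and once for $X$. The only difference is that the paper applies~\eqref{eq:subspace-shrink-expectation} for $X$ at the truncation parameter $m+\ell-r$ (which is at most both $k$ and $T_0$ under the hypothesis $\dim(S\cap T)\leq r$), thereby avoiding your case split on whether $k\geq T_0$.
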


\begin{proof}
\prettyref{enu:intersection-small} Consider the random variables
\begin{align*}
A & =m+\ell-r-\min\{\dim(X(S+T)),m+\ell-r\},\\
B & =\dim((X(S))^{\perp}+(X(T))^{\perp})-\dim(Y((X(S))^{\perp}+(X(T))^{\perp})).
\end{align*}
Then the inequality
\begin{equation}
Zq^{\dim(S'\cap T')}\leq Zq^{A+B+\Delta}\label{eq:ZABDelta}
\end{equation}
is trivially true for $Z=0$ and follows from equation \prettyref{eq:Sprime-Tprime}
of~\prettyref{lem:S-T-XS-XT} for $Z=1.$ The hypothesis $\dim(S\cap T)\leq r$
implies that $\dim(S+T)=\dim(S)+\dim(T)-\dim(S\cap T)\geq m+\ell-r.$
As a result, applying \prettyref{lem:random-proj-subspaces} with
$d=m+\ell-r+\Delta$ and $T=m+\ell-r$ gives
\begin{equation}
\Exp_{X}[q^{A}]\leq1+8q^{-\Delta}.\label{eq:qA}
\end{equation}
Now, let $Z'$ be the indicator random variable for the event that~\eqref{eq:preserve-S}
and~\eqref{eq:preserve-T} hold. Then $Z'=1$ implies that 
\begin{align*}
\dim((X(S))^{\perp}+(X(T))^{\perp}) & \leq\dim((X(S))^{\perp})+\dim((X(T))^{\perp})\\
 & =2(m+\ell-r+\Delta)-\dim(X(S))-\dim(X(T))\\
 & \leq m+\ell-2r+2\Delta.
\end{align*}
As a result, \prettyref{lem:random-proj-subspaces} is applicable
with $d=m+\ell-2r+3\Delta$ and $T=\dim((X(S))^{\perp}+(X(T))^{\perp})$
and gives 
\begin{equation}
Z'\Exp_{Y}[q^{B}\mid X]\leq Z'(1+8q^{-\Delta}).\label{eq:qB}
\end{equation}
It remains to put these ingredients together:
\begin{align*}
\Exp[Zq^{\dim(S'\cap T')}] & \leq\Exp[Zq^{A+B+\Delta}]\\
 & \leq\Exp[Z'q^{A+B+\Delta}]\\
 & =q^{\Delta}\Exp_{X}q^{A}Z'\Exp_{Y}[q^{B}\mid X]\\
 & \leq q^{\Delta}\Exp_{X}q^{A}Z'(1+8q^{-\Delta})\\
 & \leq q^{\Delta}\Exp_{X}q^{A}(1+8q^{-\Delta})\\
 & \leq q^{\Delta}(1+8q^{-\Delta})^{2},
\end{align*}
where the first step uses~\prettyref{eq:ZABDelta}, the second step
is justified by~$Z\leq Z',$ the fourth step applies~\prettyref{eq:qB},
and the last step uses~\prettyref{eq:qA}.

\prettyref{enu:intersection-large} Assume now that $\dim(S\cap T)\geq r+1$.
For $Z=1,$ equation \prettyref{eq:Sprime-Tprime} of~\prettyref{lem:S-T-XS-XT}
gives
\begin{align*}
\dim(S'\cap T') & \geq m+\ell-r+\Delta-\dim(X(S+T))\\
 & \geq m+\ell-r+\Delta-\dim(S+T)\\
 & =m+\ell-r+\Delta-\dim(S)-\dim(T)+\dim(S\cap T)\\
 & =-r+\Delta+\dim(S\cap T)\\
 & \geq\Delta+1.
\end{align*}
Now 
\begin{align*}
\Exp[Zq^{\min\{\dim(S'\cap T'),\Delta+1\}}] & \geq q^{\Delta+1}\Exp[Z]\\
 & \geq q^{\Delta+1}(1-16q^{-\Delta-1}),
\end{align*}
where the second step uses~\prettyref{lem:S-T-XS-XT}.
\end{proof}
At last, we are in a position to prove our claimed communication upper
bound for the subspace intersection problem.
\begin{thm}[Large error]
\label{thm:subspace-upper-bound-large-error} Let $\mathbb{F}$ be
a finite field with $q=|\mathbb{F}|$ elements, and let $n,m,\ell,R$
be integers with $\max\{0,m+\ell-n\}<R\leq\min\{m,\ell\}.$ Then $\max\{m,\ell\}\leq n$
and 
\begin{equation}
R_{\frac{1}{2}-\frac{1}{16q^{m+\ell-2R+16}}}(\INTERSECT_{R}^{\mathbb{F},n,m,\ell})\leq2.\label{eq:large-error-subspace-upper-4bit}
\end{equation}
Furthermore, for each $\gamma\in[\frac{1}{3}q^{-(m+\ell-2R)/3},\frac{1}{3}],$
\begin{equation}
R_{(1-\gamma)/2}(\INTERSECT_{R}^{\mathbb{F},n,m,\ell})=O((\log_{q}\lceil\gamma q^{m-R}\rceil+1)(\log_{q}\lceil\gamma q^{\ell-R}\rceil+1)\log q).\label{eq:large-error-subspace-upper-gamma}
\end{equation}
If in addition $m=\ell=R,$ then
\begin{equation}
R_{1/3}(\INTERSECT_{R}^{\mathbb{F},n,m,\ell})=O(1).\label{eq:intersect-equality-1}
\end{equation}
\end{thm}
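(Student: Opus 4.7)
The plan is to prove the three claims in sequence. The inequality $\max\{m,\ell\}\leq n$ is immediate: $\max\{m,\ell\}=m+\ell-\min\{m,\ell\}\leq m+\ell-R<n$, using the hypothesis $\max\{0,m+\ell-n\}<R\leq\min\{m,\ell\}$. The degenerate case $m=\ell=R$, yielding~\prettyref{eq:intersect-equality-1}, reduces to the equality problem on $\mathcal{S}(\mathbb{F}^{n},m)^{2}$, because $\dim(S\cap T)\geq R=\dim(S)=\dim(T)$ iff $S=T$; standard fingerprinting gives randomized communication complexity $O(1)$ at error $1/3$.

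For~\prettyref{eq:large-error-subspace-upper-4bit}, set $r=R-1$ and fix an absolute constant $\Delta$ large enough (uniformly in $q\geq 2$; concretely $\Delta=13$ suffices). Using shared randomness, draw $X\in\mathbb{F}^{(m+\ell-r+\Delta)\times n}$ and $Y\in\mathbb{F}^{(m+\ell-2r+3\Delta)\times(m+\ell-r+\Delta)}$ as in Lemma~\ref{lem:S-T-XS-XT} together with a pairwise-independent hash $H\colon\mathbb{F}^{m+\ell-2r+3\Delta}\to\{-1,1\}$. Each party locally verifies the half of the conditions~\prettyref{eq:preserve-S}--\prettyref{eq:preserve-XT} that depends only on its own subspace; if the check succeeds, it samples a uniform element of its subspace ($u\in S'$ for Alice, $v\in T'$ for Bob) via private randomness and broadcasts $b_{A}=H(u)$ or $b_{B}=H(v)$; if the check fails, it broadcasts a uniformly random $\pm 1$ bit from its \emph{private} randomness. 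The output is $-b_{A}b_{B}$, costing exactly $2$ bits. The independence of the private-randomness branches across parties annihilates the $Z=0$ contribution to $\Exp[b_{A}b_{B}]$, while pairwise independence of $H$ gives $\Exp_{H}[H(u)H(v)\mid u,v]=\mathbf{1}[u=v]$ and hence
\[
\Exp[-b_{A}b_{B}]=-q^{-(m+\ell-2r+2\Delta)}\Exp[Z\,q^{\dim(S'\cap T')}],
\]
using $\dim(S')=\ell-r+\Delta$ and $\dim(T')=m-r+\Delta$ on $Z=1$. Lemma~\ref{lem:low-bias-expectation}\prettyref{enu:intersection-small} upper-bounds the right-hand side on inputs with $\dim(S\cap T)\leq R-1$, and Lemma~\ref{lem:low-bias-expectation}\prettyref{enu:intersection-large} (together with $q^{\dim(S'\cap T')}\geq q^{\min\{\dim(S'\cap T'),\Delta+1\}}$) lower-bounds it on inputs with $\dim(S\cap T)\geq R$, yielding
\[
\alpha=-q^{-(m+\ell-2R+1+\Delta)}(1-16q^{-\Delta-1}),\qquad\beta=-q^{-(m+\ell-2R+2+\Delta)}(1+8q^{-\Delta})^{2}.
\]
For the chosen $\Delta$, a direct calculation shows $\beta-\alpha\geq\tfrac{1}{2}q^{-(m+\ell-2R+16)}$, so Proposition~\ref{prop:shift-probab} converts this two-bit distinguisher into a two-bit protocol of error at most $\tfrac{1}{2}-\tfrac{1}{16}q^{-(m+\ell-2R+16)}$.

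For~\prettyref{eq:large-error-subspace-upper-gamma}, the same blueprint is used, but with $\Delta$ and the hash lengths tuned to $\gamma$. Specifically, pick $\Delta$ so that $q^{\Delta+1}$ is comparable to $1/\gamma$, and replace the single-bit hash with a short field-valued exchange whose length is $\Theta(\log_{q}\lceil\gamma q^{\ell-R}\rceil)$ on Alice's side and $\Theta(\log_{q}\lceil\gamma q^{m-R}\rceil)$ on Bob's side, chosen so that the "collision probability" on the $Z=1$ branch inherits the factor-of-$q$ gap of Lemma~\ref{lem:low-bias-expectation} and the product of the two lengths appears multiplicatively in the expectation gap. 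Propagating through Proposition~\ref{prop:shift-probab} gives a protocol of error $(1-\gamma)/2$ at the claimed cost. The main technical obstacle is precisely this asymmetric tuning: in the two-bit proof the private-randomness trick cleanly kills the $Z=0$ contribution, but in the general-$\gamma$ regime the longer hashes amplify both the signal and the noise, and one must show that the $(1\pm o(1))$ corrections in Lemmas~\ref{lem:S-T-XS-XT} and~\ref{lem:low-bias-expectation} preserve the $\Theta(\gamma)$-sized expectation gap uniformly across the range $\gamma\in[\tfrac{1}{3}q^{-(m+\ell-2R)/3},\tfrac{1}{3}]$ and every $q\geq 2$.
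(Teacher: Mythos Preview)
Your treatment of $\max\{m,\ell\}\leq n$ and of~\prettyref{eq:intersect-equality-1} is fine. Your proof of~\prettyref{eq:large-error-subspace-upper-4bit} is correct and essentially equivalent to the paper's: you use private samples $u\in S'$, $v\in T'$ together with a shared pairwise-independent hash, whereas the paper uses a shared random vector $v$ in the ambient space and checks $v\in S'\cap T'$. Both produce a two-bit distinguisher whose conditional expectation is $Z\cdot q^{\dim(S'\cap T')}$ divided by the appropriate normalizing power of $q$, and both finish via Lemma~\ref{lem:low-bias-expectation} and Proposition~\ref{prop:shift-probab}.

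The gap is in~\prettyref{eq:large-error-subspace-upper-gamma}. Your ``longer hashes'' sketch is not a proof, and as stated it cannot give the right cost: if Alice and Bob send hashes of respective lengths $a=\Theta(\log_q\lceil\gamma q^{\ell-R}\rceil)$ and $b=\Theta(\log_q\lceil\gamma q^{m-R}\rceil)$, the cost is $(a+b)\log q$, not $ab\log q$; and the collision probability $\Prob[u=v]$ is still governed by $q^{\dim(S'\cap T')-\dim(S')-\dim(T')}$, not by $a$ or $b$. The product structure in the cost does \emph{not} arise from two independent hash lengths. The paper's protocol is fundamentally different and not hash-based. After the same preprocessing (random $X,Y$, subspaces $S',T'$, indicator $Z$), Bob sends a subspace $U\supseteq T'$ of dimension $\dim(T')+k$ by transmitting a basis for $U^{\perp}$, at cost $(\ell-r+2\Delta-k)(m+\ell-2r+3\Delta)\lceil\log q\rceil$ bits. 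Alice then uses $\dim(S'\cap U)$, which by Proposition~\ref{prop:relative-dimensions} is within $k$ of $\dim(S'\cap T')$, to either output a biased bit or send a single vector $v$ from a $(k+\Delta)$-dimensional subspace $S''$ with $S'\cap U\subseteq S''\subseteq S'$, letting Bob test $v\in T'$. This yields a distinguisher with expectation gap $\Theta(q^{-k})$ and cost $O((\ell-R-k+1)(m-R+1)\log q)$. Tuning $k\approx\log_q(1/\gamma)$ gives error $(1-\gamma)/2$, and the second factor $m-R+1$ is absorbed into $\log_q\lceil\gamma q^{m-R}\rceil+1$ using $m\geq\ell$ and the hypothesis $\gamma\geq\tfrac{1}{3}q^{-(m+\ell-2R)/3}$. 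The key idea you are missing is this ``send a slightly-larger superspace of $T'$'' step; it is what makes the cost factor as a product rather than a sum.
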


\begin{proof}
The hypothesis $\max\{0,m+\ell-n\}<R\leq\min\{m,\ell\}$ implies that
$m+\ell-n\leq\min\{m,\ell\}$, which is equivalent to $\max\{m,\ell\}\leq n.$
The bound~\prettyref{eq:intersect-equality-1} is immediate from~\prettyref{thm:subspace-upper-bound-small-error}. 

In the rest of the proof, define $r=R-1$. We will first settle~\prettyref{eq:large-error-subspace-upper-4bit}.
Let $\Delta$ be a nonnegative integer to be chosen later. Consider
the following protocol $\Pi'$. On input a pair of subspaces $S\in\Scal(\mathbb{F}^{n},m)$
for Alice and $T\in\Scal(\mathbb{F}^{n},\ell)$ for Bob, the parties
use their shared randomness to pick independent and uniformly random
matrices $X\in\mathbb{F}^{(m+\ell-r+\Delta)\times n}$ and $Y\in\mathbb{F}^{(m+\ell-2r+3\Delta)\times(m+\ell-r+\Delta)}$.
Alice and Bob compute $S'=Y((X(S))^{\perp})$ and $T'=Y((X(T))^{\perp}),$
respectively. Note that $S'$ and $T'$ are subspaces in an ambient
vector space $V$ of dimension $m+\ell-2r+3\Delta.$ Let $Z$ be the
indicator random variable for the event that the four conditions~\prettyref{eq:preserve-S}\textendash \prettyref{eq:preserve-XT}
hold. Alice and Bob use shared randomness to pick a uniformly random
vector $v\in V$. They output $1$ in the event that $Z=1$ and $v\in S'\cap T',$
and output a uniformly random $\pm1$ value otherwise. The communication
cost of this protocol is $2$ bits since Alice can privately verify
the conditions \prettyref{eq:preserve-S}, \prettyref{eq:preserve-XS},
and $v\in S',$ and likewise Bob can privately verify the conditions
\prettyref{eq:preserve-T}, \prettyref{eq:preserve-XT}, and $v\in T'$.
Observe further that
\[
\Exp[\Pi'(S,T)\mid X,Y]=Zq^{\dim(S'\cap T')-(m+\ell-2r+3\Delta)}.
\]
Passing to expectations over $X$ and $Y,$ we arrive at
\[
\Exp\Pi'(S,T)=q^{-(m+\ell-2r+3\Delta)}\Exp[Zq^{\dim(S'\cap T')}].
\]
Applying \prettyref{lem:low-bias-expectation}, we find that $\Pi'(S,T)$
has expectation at most $\alpha'=q^{-m-\ell+2r-2\Delta}(1+8q^{-\Delta})^{2}$
if $\dim(S\cap T)\leq r,$ and at least $\beta'=q^{-m-\ell+2r-2\Delta}q(1-16q^{-\Delta-1})$
if $\dim(S\cap T)\geq r+1.$ Taking $\Delta=7,$ one calculates that
$\beta'-\alpha'\geq q^{-m-\ell+2r-14}/2$. Now~\prettyref{prop:shift-probab}
implies that
\[
R_{\frac{1}{2}-\frac{1}{16q^{m+\ell-2r+14}}}(\neg\INTERSECT_{R}^{\mathbb{F},n,m,\ell})\leq2,
\]
which is equivalent to \prettyref{eq:large-error-subspace-upper-4bit}.

In what follows, we prove the remaining upper bound~\prettyref{eq:large-error-subspace-upper-gamma}.
Due to the symmetry between $m$ and $\ell$, we may assume without
loss of generality that
\begin{equation}
m\geq\ell.\label{eq:m-geq-ell}
\end{equation}
Let $k$ and $\Delta$ be nonnegative integers to be set later, where
\begin{equation}
1\leq k\leq\ell-r.\label{eq:k-restriction}
\end{equation}
We will adapt $\Pi'$ to obtain a new protocol $\Pi''$ that satisfies
the following inequalities for all subspaces $S,T\subseteq\mathbb{F}^{n}$
of dimension $m$ and $\ell,$ respectively:
\begin{align}
 & \Exp[\Pi''(S,T)\mid X,Y]\geq q^{-k-\Delta}Zq^{\min\{\dim(S'\cap T'),\Delta+1\}},\label{eq:ZPi-lower}\\
 & \Exp[\Pi''(S,T)\mid X,Y]\leq q^{-k-\Delta}Zq^{\dim(S'\cap T')}.\label{eq:ZPi-upper}
\end{align}
On input $S$ and $T,$ Alice and Bob in $\Pi''$ choose uniformly
random matrices $X$ and $Y$ as before. They then compute the indicator
random variable $Z,$ which is a function of $X,Y,S,T.$ If $Z=0,$
they output a uniformly random $\pm1$ value. Clearly,~\prettyref{eq:ZPi-lower}
and~\prettyref{eq:ZPi-upper} hold in this case.

In the complementary case $Z=1,$ Alice and Bob compute $S'=Y((X(S))^{\perp})$
and $T'=Y((X(T))^{\perp}),$ respectively. \prettyref{lem:S-T-XS-XT}
with $Z=1$ implies that $S'$ and $T'$ are subspaces of dimension
$\ell-r+\Delta$ and $m-r+\Delta$, respectively, in an ambient vector
space of dimension $m+\ell-2r+3\Delta.$ This makes it possible for
Bob to find a subspace $U$ of dimension $m-r+\Delta+k$ such that
$T'\subseteq U$, and send $U$ to Alice. An application of \prettyref{prop:relative-dimensions}
yields
\begin{equation}
\dim(S'\cap U)-\dim(S'\cap T')\leq\dim(U)-\dim(T')=k.\label{eq:mask-U}
\end{equation}
What Alice does next depends on the dimension of $S'\cap U.$ 
\begin{enumerate}
\item If $\dim(S'\cap U)\geq k+\Delta+1,$ then~\prettyref{eq:mask-U}
implies that $\dim(S'\cap T')\geq\Delta+1.$ Therefore, \prettyref{eq:ZPi-lower}
and~\prettyref{eq:ZPi-upper} amount to the requirement that the
protocol's output have expectation at least $q^{-k+1}$ and at most
$q^{-k-\Delta+\dim(S'\cap T')}\in[q^{-k+1},\infty).$ To meet this
requirement, Alice simply outputs a random $\pm1$ value with expectation
$q^{-k+1}$. 
\item If $\dim(S'\cap U)\leq k+\Delta,$ Alice identifies a $(k+\Delta)$-dimensional
subspace $S''$ with the property that $S'\cap U\subseteq S''\subseteq S'$,
which exists because $k+\Delta\leq\ell-r+\Delta$ due to~\prettyref{eq:k-restriction}.
She then picks a uniformly random vector $v\in S''$ and sends it
to Bob, who outputs $1$ if $v\in T'$ and a uniformly random $\pm1$
value otherwise. In this case, Alice and Bob's expected output is
\[
\frac{q^{\dim(S''\cap T')}}{q^{\dim(S'')}}=\frac{q^{\dim(S''\cap T')}}{q^{k+\Delta}}=\frac{q^{\dim(S''\cap U\cap T')}}{q^{k+\Delta}}=\frac{q^{\dim(S'\cap U\cap T')}}{q^{k+\Delta}}=\frac{q^{\dim(S'\cap T')}}{q^{k+\Delta}},
\]
where the second step uses $T'\subseteq U,$ the third step uses the
defining property $S'\cap U\subseteq S''\subseteq S'$ of the set
$S'',$ and the last step is valid due to $T'\subseteq U.$ This agrees
with~\prettyref{eq:ZPi-lower} and~\prettyref{eq:ZPi-upper}, which
require that the protocol's output have expectation between $q^{-k-\Delta}q^{\min\{\dim(S'\cap T'),\Delta+1\}}$
and $q^{-k-\Delta}q^{\dim(S'\cap T')}.$
\end{enumerate}
The proof of~\prettyref{eq:ZPi-lower} and~\prettyref{eq:ZPi-upper}
is now complete. 

Since $U$ has co-dimension $\ell-r+2\Delta-k$, it can be communicated
in the form of a basis for $U^{\perp}$ using $(\ell-r+2\Delta-k)(m+\ell-2r+3\Delta)\lceil\log q\rceil$
bits. The vector $v$ takes $(m+\ell-2r+3\Delta)\lceil\log q\rceil$
bits to send. In view of~\prettyref{eq:m-geq-ell}, we conclude that
\begin{equation}
\cost(\Pi'')=O((\ell-r+2\Delta-k+1)(m-r+\Delta)\lceil\log q\rceil+1).\label{eq:Pi-prime-prime-cost}
\end{equation}
Lastly, we will show that $\Pi''$ is a distinguisher for the subspace
intersection problem. For this, pass to expectations with respect
to $X$ and $Y$ in ~\prettyref{eq:ZPi-lower} and~\prettyref{eq:ZPi-upper}
to obtain
\begin{align}
 & \Exp\Pi''(S,T)\geq q^{-k-\Delta}\Exp[Zq^{\min\{\dim(S'\cap T'),\Delta+1\}}],\label{eq:ZPi-lower-1}\\
 & \Exp\Pi''(S,T)\leq q^{-k-\Delta}\Exp[Zq^{\dim(S'\cap T')}].\label{eq:ZPi-upper-1}
\end{align}
Now \prettyref{lem:low-bias-expectation} implies that $\Pi''(S,T)$
has expectation at most $\alpha''=q^{-k}(1+8q^{-\Delta})^{2}$ if
$\dim(S\cap T)\leq r,$ and at least $\beta''=q^{-k+1}(1-16q^{-\Delta-1})$
if $\dim(S\cap T)\geq r+1.$ Taking $\Delta=7,$ one calculates that
$\beta''-\alpha''\geq q^{-k}/2$. Now~\prettyref{eq:Pi-prime-prime-cost}
and \prettyref{prop:shift-probab} imply that
\begin{align}
R_{\frac{1}{2}-\frac{1}{16q^{k}}}(\neg\INTERSECT_{R}^{\mathbb{F},n,m,\ell}) & =O((\ell-r-k+1)(m-r+1)\log q)\label{eq:small-bias-bound-subspace-k}
\end{align}
for every positive integer $k\leq\ell-r.$

Let $\gamma\in[\frac{1}{3}q^{-(m+\ell-2R)/3},\frac{1}{3}]$ be given.
For $\gamma\in[q^{-4},\frac{1}{3}],$ one obtains \prettyref{eq:large-error-subspace-upper-gamma}
from the bound $R_{1/3}(\INTERSECT_{R}^{\mathbb{F},n,m,\ell})=O((\ell-R+1)(m-R+1)\log q)$
of \prettyref{thm:subspace-upper-bound-small-error}. For $\gamma\in[\frac{1}{3}q^{-(m+\ell-2R)/3},q^{-4}],$
setting $k=\min\{\lfloor\log_{q}(1/\gamma)\rfloor-3,\ell-r\}$ in~\prettyref{eq:small-bias-bound-subspace-k}
gives
\begin{align*}
R_{(1-\gamma)/2}(\neg\INTERSECT_{R}^{\mathbb{F},n,m,\ell}) & =O((\ell-r-\min\{\lfloor\log_{q}(1/\gamma)\rfloor-3,\ell-r\}+1)(m-r+1)\log q)\\
 & =O((\max\{\lceil\log_{q}(\gamma q^{\ell-r})\rceil,0\}+4)(m-r+1)\log q)\\
 & =O((\log_{q}\lceil\gamma q^{\ell-r}\rceil+1)(m-r+1)\log q)\\
 & =O((\log_{q}\lceil\gamma q^{\ell-R}\rceil+1)(m-R+1)\log q)\\
 & =O((\log_{q}\lceil\gamma q^{\ell-R}\rceil+1)(\log_{q}\lceil\gamma q^{m-R}\rceil+1)\log q),
\end{align*}
where the last step uses \prettyref{eq:m-geq-ell} and $\gamma\geq\frac{1}{3}q^{-(m+\ell-2R)/3}.$
This completes the proof of \prettyref{eq:large-error-subspace-upper-gamma}.
\end{proof}
\prettyref{thm:subspace-upper-bound-large-error} settles the \emph{randomized}
communication upper bounds of~\prettyref{thm:MAIN-subspace-intersection-general}
for the \emph{total} subspace intersection problem, and hence also
the \emph{quantum} communication upper bound for the \emph{promise}
subspace intersection problem.

\section*{Acknowledgments}

The authors are thankful to Alan Joel, Xiaoming Sun, Chengu Wang,
and David Woodruff for their feedback on an earlier version of this
manuscript. We are also grateful to Alan for useful discussions during
this work, and to David for suggesting the application of our communication
lower bounds to bilinear query complexity.

\bibliographystyle{siamplain}
\bibliography{refs,new-refs}

\appendix

\section{\label{sec:nonsingularity}Fourier spectrum of nonsingularity}

Throughout this section, the underlying field is $\mathbb{F}_{q}$
for an arbitrary prime power $q.$ The root of unity $\omega$ and
the notation $\omega^{x}$ for $x\in\mathbb{F}_{q}$ are as defined
in \prettyref{sec:Fourier}. The objective of this appendix is to
prove \prettyref{lem:sunwang}. Our proof is shorter and simpler than
the approach of Sun and Wang~\cite{sunwang12communication-linear},
who proved \prettyref{lem:sunwang} for fields of prime order. What
our proofs have in common is the following proposition~\cite{sunwang12communication-linear}.
\begin{prop}[Sun and Wang]
For any integers $n\geq1$ and $r\in\{0,1,\ldots,n\},$ 
\[
\Gamma_{n}(n,r)=\Exp_{X\in\Mcal_{n}}\omega^{X_{1,1}+\cdots+X_{r,r}}.
\]
\label{prop:gammatrace-n}
\end{prop}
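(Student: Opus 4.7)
The plan is to derive this proposition directly from the definition of $\Gamma_n(n,r)$ together with the invariance properties of the uniform distribution on $\Mcal_n$ established in Proposition~\ref{prop:random-matrices}. Unwinding the definition,
\[
\Gamma_n(n,r) = \Exp_{\substack{A \in \Mcal_n \\ B \in \Mcal_r^{n,n}}} \omega^{\langle A, B\rangle}.
\]
The key observation is that $I_r \in \Mcal_r^{n,n}$, so by part~\ref{enu:XAY} of Proposition~\ref{prop:random-matrices}, the random matrix $U I_r V$ for independent uniform $U, V \in \Mcal_n$ is distributed uniformly on $\Mcal_r^{n,n}$. Hence one can rewrite
\[
\Gamma_n(n,r) = \Exp_{A, U, V \in \Mcal_n} \omega^{\langle A, U I_r V\rangle}.
\]

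Next I would use Fact~\ref{fact:inner-product-vs-trace}\ref{enu:inner-product-transfer} to move $U$ and $V$ off of $I_r$:
\[
\langle A, U I_r V\rangle = \langle U\tr A V\tr, I_r\rangle.
\]
For any fixed nonsingular $U, V$, part~\ref{enu:XA-AX} of Proposition~\ref{prop:random-matrices} (applied twice, first to $A \mapsto U\tr A$ and then to $A \mapsto A V\tr$) shows that $U\tr A V\tr$ is uniformly distributed on $\Mcal_n$ when $A$ is. Therefore, the $U, V$ averaging becomes vacuous and
\[
\Gamma_n(n,r) = \Exp_{X \in \Mcal_n} \omega^{\langle X, I_r\rangle}.
\]

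Finally, since $I_r$ is the diagonal matrix with ones in the first $r$ diagonal positions and zeros elsewhere, $\langle X, I_r\rangle = X_{1,1} + X_{2,2} + \cdots + X_{r,r}$, which yields the claimed identity. The proof is essentially mechanical; the only nontrivial ingredient is the double-sided invariance of the uniform distribution on $\Mcal_n$, which is already packaged in Proposition~\ref{prop:random-matrices}, so there is no real obstacle here.
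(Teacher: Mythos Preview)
Your proof is correct and takes essentially the same approach as the paper's: both replace the rank-$r$ matrix by $UI_rV$ via Proposition~\ref{prop:random-matrices}\ref{enu:XAY}, transfer $U,V$ across the inner product using Fact~\ref{fact:inner-product-vs-trace}\ref{enu:inner-product-transfer}, and absorb them into the nonsingular matrix via Proposition~\ref{prop:random-matrices}\ref{enu:XA-AX}. The only differences are cosmetic (variable names).
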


\begin{proof}[Proof \emph{(due to Sun and Wang)}]
 Let $X,Y,Z$ be independent and uniformly random nonsingular matrices
of order $n$. Then by \prettyref{prop:random-matrices}\prettyref{enu:XAY},
the product $XI_{r}Y$ is a uniformly random matrix of rank $r$.
Therefore,
\begin{align*}
\Gamma_{n}(n,r) & =\Exp\omega^{\langle Z,XI_{r}Y\rangle}\\
 & =\Exp\omega^{\langle X\tr ZY\tr,I_{r}\rangle}\\
 & =\Exp\omega^{\langle X,I_{r}\rangle}\\
 & =\Exp\omega^{X_{1,1}+X_{2,2}+\cdots+X_{r,r}},
\end{align*}
where the second step uses~\prettyref{fact:inner-product-vs-trace}\prettyref{enu:inner-product-transfer},
and the third step is legitimate because $X\tr ZY\tr$ is a uniformly
random nonsingular matrix by \prettyref{prop:random-matrices}\prettyref{enu:XA-AX}.
\end{proof}
We are now ready to establish our result of interest.
\begin{lem*}[restatement of \prettyref{lem:sunwang}]
For any integers $n\geq1$ and $r\in\{0,1,\ldots,n\},$ 
\begin{align*}
\Gamma_{n}(n,r) & =\frac{(-1)^{r}q^{\binom{r}{2}}}{(q^{n}-1)(q^{n}-q)\cdots(q^{n}-q^{r-1})}.
\end{align*}
\end{lem*}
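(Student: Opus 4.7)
The plan is to start from Proposition~\ref{prop:gammatrace-n}, which rewrites $\Gamma_n(n,r)$ as the expected value of $\omega^{X_{1,1}+\cdots+X_{r,r}}$ over uniformly random $X\in\Mcal_n$, and then reduce the computation to a sum over linearly independent row tuples, which can be evaluated by a short induction. Concretely, define
\[
T_r^{(n)} = \sum_{\substack{R_1,\dots,R_r\in\mathbb{F}_q^n \\ \text{linearly independent}}} \omega^{R_{1,1}+R_{2,2}+\cdots+R_{r,r}},
\]
with the convention $T_0^{(n)}=1$. Parameterizing a matrix in $\Mcal_n$ by its sequence of rows and noting that the last $n-r$ rows only need to complete $R_1,\dots,R_r$ to a basis of $\mathbb{F}_q^n$ (which contributes the factor $\prod_{i=r}^{n-1}(q^n-q^i)$ independently of $R_{1,1},\dots,R_{r,r}$), one obtains
\[
\Gamma_n(n,r) \;=\; \frac{T_r^{(n)}}{(q^n-1)(q^n-q)\cdots(q^n-q^{r-1})}.
\]
Thus the lemma will follow once I show $T_r^{(n)}=(-1)^r q^{\binom{r}{2}}$.

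To establish this, I would derive the recursion $T_r^{(n)}=-q^{r-1}\,T_{r-1}^{(n-1)}$. Fix linearly independent $R_1,\dots,R_{r-1}$ and compute the inner sum over admissible $R_r$, namely $R_r\notin\Span(R_1,\dots,R_{r-1})$. Writing this inner sum as (full sum over $R_r\in\mathbb{F}_q^n$) minus (sum over $R_r\in\Span$), the first piece vanishes because $\sum_{x\in\mathbb{F}_q}\omega^x=0$, and the second piece factors as $\prod_{i=1}^{r-1}\sum_{c_i\in\mathbb{F}_q}\omega^{c_i R_{i,r}}$, which equals $q^{r-1}$ if $R_{1,r}=\cdots=R_{r-1,r}=0$ and is zero otherwise. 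Hence
\[
T_r^{(n)} = -q^{r-1}\!\!\!\sum_{\substack{R_1,\dots,R_{r-1}\in\mathbb{F}_q^n \text{ lin.\ indep.}\\ R_{1,r}=\cdots=R_{r-1,r}=0}} \omega^{R_{1,1}+\cdots+R_{r-1,r-1}}.
\]
Since $r-1<r$, the diagonal entries $R_{i,i}$ for $i\le r-1$ do not involve the $r$-th coordinate, so deleting that coordinate identifies the remaining sum with $T_{r-1}^{(n-1)}$ under the obvious isomorphism $\{v\in\mathbb{F}_q^n:v_r=0\}\cong\mathbb{F}_q^{n-1}$. Unwinding the recursion gives $T_r^{(n)}=\prod_{i=0}^{r-1}(-q^i)\cdot T_0^{(n-r)}=(-1)^r q^{\binom{r}{2}}$, and substituting into the displayed formula for $\Gamma_n(n,r)$ finishes the proof.

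The main subtlety to watch is the identification at the reduction step: I need to verify carefully that, after imposing $R_{i,r}=0$, the ``$(r-1)\times n$ diagonal subsum'' really becomes the ``$(r-1)\times(n-1)$ diagonal subsum'' in the reduced ambient space, i.e., that the coordinates appearing in the exponent are among those retained and that linear independence is preserved under the identification. This is a purely bookkeeping point and does not require any new ideas beyond the observation that the deleted coordinate index $r$ strictly exceeds all indices $1,\dots,r-1$ showing up on the diagonal. Everything else is a routine application of the orthogonality identity $\sum_{x\in\mathbb{F}_q}\omega^{ax}=q\,\delta_{a,0}$ recorded in \prettyref{eq:omegasum}.
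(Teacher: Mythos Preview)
Your proof is correct and takes a genuinely different route from the paper's. Both arguments start from Proposition~\ref{prop:gammatrace-n}, but then diverge. The paper introduces an auxiliary uniformly random nonsingular \emph{lower-triangular} matrix $L$ and uses the fact that $XL$ is again uniform on $\Mcal_n$ to rewrite $\Gamma_n(n,r)=\Exp\,\omega^{\sum_{i\le r}(XL)_{i,i}}$; conditioning on $X$, the expectation over $L$ factors over columns and vanishes unless the first $r$ rows of $X$ are lower-triangular (``nice''), in which case each factor equals $-1/(q-1)$. The answer then follows from a direct count of nice matrices. No induction is used.

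Your argument instead peels off the last row and reduces $T_r^{(n)}$ to $T_{r-1}^{(n-1)}$ via the orthogonality relation~\eqref{eq:omegasum}. This is more elementary in that it requires no auxiliary random matrix and makes the origin of the factor $(-1)^r q^{\binom{r}{2}}$ completely explicit as the product $\prod_{i=0}^{r-1}(-q^i)$. The paper's trick, on the other hand, is non-inductive and arguably more conceptual: it explains the result as a consequence of the multiplicative structure of $\GL$. Both proofs are short; yours is closer in spirit to a standard exponential-sum computation, while the paper's exploits the group action.

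The bookkeeping point you flag---that deleting coordinate $r$ preserves both the diagonal entries $R_{i,i}$ for $i\le r-1$ and linear independence---is indeed the only place requiring care, and your justification (that $r$ strictly exceeds the relevant diagonal indices, and that the deleted coordinate is zero in every $R_i$) is exactly right.
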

\begin{proof}
Consider independent random matrices $X$ and $L$ of order $n$,
where $X$ is a uniformly random nonsingular matrix and $L$ is a
uniformly random nonsingular \emph{lower-diagonal} matrix. By~\prettyref{prop:random-matrices}\prettyref{enu:XA-AX},
the product $XL$ is a uniformly random nonsingular matrix. Therefore,
\prettyref{prop:gammatrace-n} implies that
\begin{align*}
\Gamma_{n}(n,r) & =\Exp\omega^{\sum_{i=1}^{r}(XL)_{i,i}}.
\end{align*}

We will say that $X$ is \emph{nice} if $X_{i,j}=0$ for all $(i,j)$
pairs such that $i\in\{1,2,\ldots,r\}$ and $j>i.$ 
\begin{claim}
\label{claim:Gamma-nice-not-nice}One has
\[
\Exp_{L}\left[\omega^{\sum_{i=1}^{r}(XL)_{i,i}}\mid X\right]=\begin{cases}
(-1)^{r}(q-1)^{-r} & \text{if \ensuremath{X} is nice,}\\
0 & \text{otherwise.}
\end{cases}
\]
\end{claim}

This claim, to be proved shortly, implies that
\begin{equation}
\Gamma_{n}(n,r)=\frac{(-1)^{r}}{(q-1)^{r}}\Prob[X\text{ is nice}].\label{eq:Gamma-nice}
\end{equation}
The probability of the nonsingular matrix $X$ being nice is straightforward
to calculate: there are $q-1$ choices for the first row, $q(q-1)$
choices for the second row, $q^{2}(q-1)$ choices for the third row,
and so on up to row $r$, whence 
\begin{align*}
\Prob_{X}[X\text{ is nice}] & =\frac{\prod_{i=1}^{r}q^{i-1}(q-1)}{(q^{n}-1)(q^{n}-q)\cdots(q^{n}-q^{r-1})}.
\end{align*}
Making this substitution in~\prettyref{eq:Gamma-nice} completes
the proof.
\end{proof}
\begin{proof}[Proof of Claim~\emph{\prettyref{claim:Gamma-nice-not-nice}}.]
Conditioned on $X,$ the columns of $XL$ are independent random
variables. Therefore,
\begin{align}
\Exp_{L}\left[\omega^{\sum_{i=1}^{r}(XL)_{i,i}}\mid X\right] & =\prod_{i=1}^{r}\Exp_{L}\left[\omega^{(XL)_{i,i}}\mid X\right]\nonumber \\
 & =\prod_{i=1}^{r}\Exp_{L}\left[\omega^{\sum_{j=i}^{n}X_{i,j}L_{j,i}}\mid X\right].\label{eq:claim-nice-not-nice-proof-product}
\end{align}
The entries of $L$ are independent random variables, with the diagonal
entries distributed uniformly on $\mathbb{F}_{q}\setminus\{0\}$ and
the subdiagonal entries distributed uniformly on $\mathbb{F}_{q}.$
If $X$ is not nice, then $X_{i,k}\ne0$ for some $i\in\{1,2,\ldots,r\}$
and $k>i,$ which means that the corresponding summation $\sum_{j=i}^{n}X_{i,j}L_{j,i}$
is a uniformly random field element. This forces~\prettyref{eq:claim-nice-not-nice-proof-product}
to vanish, due to~\prettyref{eq:omegasum}. When $X$ is nice, on
the other hand, \prettyref{eq:claim-nice-not-nice-proof-product}
simplifies as follows:
\begin{align*}
\prod_{i=1}^{r}\Exp_{L}\left[\omega^{\sum_{j=i}^{n}X_{i,j}L_{j,i}}\mid X\right] & =\prod_{i=1}^{r}\Exp_{L}\left[\omega^{X_{i,i}L_{i,i}}\mid X\right]\\
 & =\prod_{i=1}^{r}\Exp_{a_{i}\in\mathbb{F}_{q}\setminus\{0\}}\omega^{a_{i}}\\
 & =\left(\frac{\sum_{a\in\mathbb{F}_{q}}\omega^{a}-1}{q-1}\right)^{r}\\
 & =\frac{(-1)^{r}}{(q-1)^{r}},
\end{align*}
where the second step is legitimate because $X_{i,i}$ is nonzero
and $L_{i,i}$ is a uniformly random nonzero field element, and the
last step uses~\prettyref{eq:omegasum}.
\end{proof}

\section{\label{sec:2party-to-multiparty}Multiparty lower bounds via symmetrization}

The purpose of this appendix is to prove \prettyref{prop:2party-multiparty},
which gives a generic method for transforming two-party communication
lower bounds for a class of problems into corresponding multiparty
lower bounds. Recall that we adopt the number-in-hand blackboard model
of multiparty communication, reviewed in the introduction. The notation
$R_{\epsilon}(F)$ stands for the $\epsilon$-error randomized communication
complexity of the two-party or multiparty problem $F.$ The \emph{cost}
of a protocol $\Pi,$ denoted $\cost(\Pi)$, is the total number of
bits written to the blackboard in the worst-case execution of $\Pi.$
\begin{prop*}[restatement of \prettyref{prop:2party-multiparty}]
 Let $(X,+)$ be a finite Abelian group, and let $f\colon X\to\{-1,1,*\}$
be a given function. For $t\geq2,$ let $F_{t}\colon X^{t}\to\{-1,1,*\}$
be the $t$-party communication problem given by $F_{t}(x_{1},x_{2},\ldots,x_{t})=f(x_{1}+x_{2}+\cdots+x_{t}).$
Then for all $t\geq2,$
\[
R_{1/6}(F_{t})\geq\frac{1}{12}tR_{1/3}(F_{2}).
\]
\end{prop*}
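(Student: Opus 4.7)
The plan is a symmetrization reduction from $t$-party to $2$-party communication in the spirit of Phillips, Verbin, and Zhang. I will start from any $t$-party blackboard protocol $\Pi$ for $F_t$ of cost $c := R_{1/6}(F_t)$ and error $1/6$, and write $T_i(\vec z)$ for the number of bits written by player $i$ on input $\vec z$, so that $\sum_{i=1}^t T_i(\vec z) \le c$ for every $\vec z$. Next I will symmetrize: define a new protocol $\Pi'$ that first uses shared randomness to pick a uniformly random permutation $\sigma \in S_t$ and then runs $\Pi$ with the players' roles permuted by $\sigma$ (each player in $\Pi'$ assumes the role of some player in $\Pi$ as dictated by $\sigma$, forwarding her own input accordingly). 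Since $F_t(\vec z) = f(\sum_i z_i)$ is invariant under permutations of its arguments, $\Pi'$ is still a $1/6$-error protocol with worst-case cost at most $c$. The key point is that averaging over $\sigma$ makes each player's expected cost equal: for any fixed $\vec z$ and any $i$, the expected number of bits written by player $i$ in $\Pi'$ equals $\tfrac{1}{t}\,\Exp_\sigma \sum_j T_j(\sigma \vec z) \le c/t$, so the expected combined cost of players $1$ and $2$ in $\Pi'$ is at most $2c/t$.

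The second step is to turn $\Pi'$ into a $2$-party protocol for $F_2$. Given inputs $x$ for Alice and $y$ for Bob, the parties jointly simulate $\Pi'$ on the padded vector $(x, y, 0, \ldots, 0)$: Alice plays player $1$, Bob plays player $2$, and both parties independently simulate players $3, \ldots, t$. This is where the savings materialize: each dummy player's input is the known constant $0$, the shared randomness is common to Alice and Bob, and the blackboard contents are also common to both, so every dummy message is a deterministic function of jointly known data and can be computed identically by both parties without any communication. The total communication in the simulation is therefore exactly the number of bits written by players $1$ and $2$ in $\Pi'$. Correctness follows from $F_t(x, y, 0, \ldots, 0) = f(x+y) = F_2(x, y)$, so the resulting protocol computes $F_2$ with error $\le 1/6$ and expected cost $\le 2c/t$.

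Finally, I will convert the expected bound into a worst-case bound by truncation. Set $M := \lceil 6c/t \rceil$ and have both parties abort and output a uniformly random bit as soon as the blackboard grows past $M$ bits --- a decision they can take without further communication since the blackboard is common. By Markov's inequality, $\Prob[\text{abort}] \le (2c/t)/M \le 1/3$, contributing at most $1/6$ to the total error, for a combined error $\le 1/3$ at worst-case cost $M$. Hence $R_{1/3}(F_2) \le \lceil 6c/t \rceil$, and for $R_{1/3}(F_2) \ge 2$ this rearranges as $c > t(R_{1/3}(F_2) - 1)/6 \ge tR_{1/3}(F_2)/12$, yielding the desired inequality; the remaining cases are handled by the trivial bound $R_{1/6}(F_t) \ge 1$ whenever $F_2$, and hence $F_t$, is non-constant.

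The pivotal observation --- and really the only subtle point in the argument --- is that the $t-2$ dummy players can be simulated at zero communication cost in the blackboard model. This is what converts the factor-$t$ averaging from the symmetrization step into a genuine factor-$t$ drop in the two-party cost, and it is the reason the reduction is naturally stated in the blackboard model; the resulting lower bound then transfers automatically to the weaker message-passing and coordinator models, in which blackboard protocols can be simulated at no extra cost.
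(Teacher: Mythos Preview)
Your overall plan---symmetrize roles, simulate dummies for free, truncate via Markov---is exactly the paper's, but the symmetrization step has a real gap. The claimed equality ``the expected number of bits written by player $i$ in $\Pi'$ equals $\tfrac{1}{t}\Exp_\sigma \sum_j T_j(\sigma\vec z)$'' is false for an asymmetric input like $\vec z=(x,y,0,\ldots,0)$: permuting the \emph{roles} does not decouple the role a player takes from the input seen at that role, because the nonzero entries of $\sigma\cdot\vec z$ sit precisely at positions $\sigma(1),\sigma(2)$. Concretely, take $X=\mathbb{Z}_2$, $t=3$, $f$ parity, and let $\Pi$ be: each player writes her bit, and then, if exactly one bit equals $1$, that player writes ten further junk bits. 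This is a valid zero-error blackboard protocol with worst-case cost $c=13$. On the input $(1,0,0)$, for every $\sigma$ the input fed to $\Pi$ is the unit vector $e_{\sigma(1)}$, so the player in role $\sigma(1)$---that is, Alice---always writes $11$ bits and Bob always writes $1$, giving Alice and Bob a combined cost of $12>2c/t=26/3$.

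The missing ingredient is the paper's random shift. Alice and Bob pick $r_1,\ldots,r_{t-1}\in X$ uniformly, set $\mathbf r=(r_1,\ldots,r_{t-1},-\sum r_k)$, and feed $\Pi$ the input $\mathbf r$ with $a$ added at a random position $i$ and $b$ at a random position $j$. The resulting input is uniform on the coset $\{\mathbf w:\sum w_k=a+b\}$ and, crucially, \emph{independent} of $(i,j)$; only with this independence can one write $\Exp[T_i(\mathbf w)+T_j(\mathbf w)]=\Exp_{\mathbf w}\Exp_{(i,j)}[T_i(\mathbf w)+T_j(\mathbf w)]=\tfrac{2}{t}\Exp_{\mathbf w}\sum_k T_k(\mathbf w)\le 2c/t$. (A minor secondary point: truncating at $\lceil 6c/t\rceil$ and then appealing to $R_{1/6}(F_t)\ge 1$ does not cover the case $R_{1/3}(F_2)=1$ when $t>12$; the paper truncates at $\lfloor 12c/t\rfloor$, which yields $R_{1/3}(F_2)\le 12c/t$ directly.)
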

\begin{proof}
The proof uses the symmetrization technique of Phillips, Verbin, and
Zhang~\cite{NIH12symmetrization}. Let $\Pi$ be a randomized protocol
for $F_{t}$ with error probability $1/6$. We will use $\Pi$ to
construct a protocol for the two-party problem $F_{2}$ with error
probability $1/3$ and communication cost at most $\cost(\Pi)\cdot12/t$. 

The protocol for $F_{2}$ is as follows. On input $(a,b)\in X\times X,$
Alice and Bob use their shared randomness to pick uniformly random
elements $r_{1},r_{2},\ldots,r_{t-1}\in X$ and uniformly random integers
$i,j$ with $1\leq i<j\leq n.$ Let
\[
\mathbf{x}=(r_{1},r_{2},\ldots,r_{t-1},-r_{1}-r_{2}-\cdots-r_{t-1})+(0,\ldots,0,a,0,\ldots,0,b,0,\ldots,0),
\]
where the rightmost tuple has $a$ in the $i$-th component, $b$
in the $j$-th component, and zeroes everywhere else. Since the components
of \textbf{$\mathbf{x}$} sum to $a+b,$ we have
\begin{equation}
F_{t}(\mathbf{x})=F_{2}(a,b).\label{eq:F2-Ft}
\end{equation}
Moreover, $\mathbf{x}$ is a \emph{uniformly random} tuple whose components
sum to $a+b$ because the first $t-1$ components are distributed
independently and uniformly at random on $X$, whereas the sum of
the components is $a+b$. In particular, $\mathbf{x}$ and $(i,j)$
are independent random variables.

By construction, Alice knows all the components of $\mathbf{x}$ except
for the $j$-th, and Bob knows all the components except for the $i$-th.
This makes it possible for them to run $\Pi$ on $\mathbf{x},$ with
Alice simulating all the parties other than the $j$-th, and Bob simulating
all the parties other than the $i$-th. When $\Pi$ requires the $i$-th
party to speak, Alice sends his message to Bob, and analogously for
the $j$-th party. For $k=1,2,\ldots,t,$ consider the random variable
$C(\mathbf{x},k)$ defined as the total number of bits sent in $\Pi$
by the $k$-th party on input $\mathbf{x}$. Then the number of bits
exchanged by Alice and Bob is $C(\mathbf{x},i)+C(\mathbf{x},j).$
Using the independence of $\mathbf{x}$ and $(i,j),$ we can now bound
Alice and Bob's expected communication cost on input $(a,b)$ as follows:
\begin{equation}
\Exp[C(\mathbf{x},i)+C(\mathbf{x},j)]=\frac{2}{t}\Exp[C(\mathbf{x},1)+\cdots+C(\mathbf{x},t)]\leq\frac{2}{t}\cost(\Pi).\label{eq:Ft-cost}
\end{equation}

By~\prettyref{eq:F2-Ft}, the described two-party protocol computes
$F_{2}$ with the same error probability that $\Pi$ computes $F_{t}$,
namely, $1/6.$ Furthermore, by~\prettyref{eq:Ft-cost}, the \emph{expected
}communication cost of the two-party protocol on any given input is
at most $\cost(\Pi)\cdot2/t.$ By Markov's inequality, the probability
of Alice and Bob exchanging at least $\cost(\Pi)\cdot12/t$ bits is
at most $1/6.$ Therefore, one can obtain a protocol for $F_{2}$
with error $1/6+1/6=1/3$ by terminating the described protocol as
soon as $\lfloor\cost(\Pi)\cdot12/t\rfloor$ bits have been communicated.
\end{proof}

\end{document}